\theoremstyle{plain}
\newtheorem{thm}{Theorem}[section]
\newtheorem{lem}[thm]{Lemma}
\newtheorem*{lems}{Lemma}
\newtheorem{prop}[thm]{Proposition}
\newtheorem{conjecture}[thm]{Conjecture}
\newtheorem*{prop2}{Proposition}
\theoremstyle{remark}
\newtheorem{rmk}{Remark}
\theoremstyle{definition}
\newtheorem{defn}{Definition}
\newtheorem{hyp}{Assumption}
\numberwithin{equation}{section}
\begin{document}
	\title{Stability and instability of the sub-extremal Reissner-Nordstr\"{o}m  black hole interior for the Einstein-Maxwell-Klein-Gordon equations in spherical symmetry}

	\author[1]{Maxime Van de Moortel  \thanks{E-mail : mcrv2@cam.ac.uk}}
	
	\affil[1]{University of Cambridge, Department of Pure Mathematics and Mathematical Statistics, Wilberforce Road, Cambridge CB3 0WA, United Kingdom}
	
	\date{\vspace{-8ex}}
	\maketitle
	\abstract
	
	We show non-linear stability and instability results in spherical symmetry for the interior of a charged black hole -approaching  a sub-extremal Reissner-Nordstr\"{o}m background fast enough- in presence of a massive and charged  scalar field, motivated by the strong cosmic censorship conjecture in that setting :

	\begin{enumerate}
		\item \textit{Stability} :	We prove that spherically symmetric characteristic initial data to the Einstein-Maxwell-Klein-Gordon equations approaching a Reissner-Nordstr\"{o}m background with a sufficiently decaying polynomial decay rate on the event horizon gives rise to a space-time possessing a Cauchy horizon in a neighbourhood of time-like infinity. Moreover if the decay is even stronger, we prove that the space-time metric admits a continuous extension to the Cauchy horizon. This generalizes the celebrated stability result of Dafermos for Einstein-Maxwell-real-scalar-field in spherical symmetry. 
		
		\item  \textit{Instability} : We prove that for the class of space-times considered in the stability part, whose scalar field in addition obeys a polynomial averaged-$L^{2}$ (consistent) lower bound on the event horizon, the scalar field obeys an integrated lower bound transversally to the Cauchy horizon. As a consequence we prove that the non-degenerate energy is infinite on any null surface crossing the Cauchy horizon and the curvature of a geodesic vector field blows up at the Cauchy horizon near time-like infinity. This generalizes an instability result due to Luk and Oh for Einstein-Maxwell-real-scalar-field in spherical symmetry.  
	\end{enumerate}
	
	This instability of the black hole interior can also be viewed as a step towards the resolution of the $C^2$ strong cosmic censorship conjecture for one-ended asymptotically flat initial data.

	\tableofcontents

	\section{Introduction}
	
	In this paper, we study the stability and instability of the Reissner-Nordstr\"{o}m Cauchy horizon for the Einstein-Maxwell-Klein-Gordon equations in spherical symmetry  :	\begin{equation} \label{1} Ric_{\mu \nu}(g)- \frac{1}{2}R(g)g_{\mu \nu}= \mathbb{T}^{EM}_{\mu \nu}+  \mathbb{T}^{KG}_{\mu \nu} ,    \end{equation} 
		\begin{equation} \label{2} \mathbb{T}^{EM}_{\mu \nu}=2\left(g^{\alpha \beta}F _{\alpha \nu}F_{\beta \mu }-\frac{1}{4}F^{\alpha \beta}F_{\alpha \beta}g_{\mu \nu}\right),
	\end{equation}
	\begin{equation}\label{3}  \mathbb{T}^{KG}_{\mu \nu}= 2\left( \Re(D _{\mu}\phi \overline{D _{\nu}\phi}) -\frac{1}{2}(g^{\alpha \beta} D _{\alpha}\phi \overline{D _{\beta}\phi} + m ^{2}|\phi|^2  )g_{\mu \nu} \right), \end{equation} \begin{equation} \label{4}\nabla^{\mu} F_{\mu \nu}= \frac{ q_{0}}{2} i (\phi \overline{D_{\nu}\phi} -\overline{\phi} D_{\nu}\phi) , \; F=dA ,
	\end{equation} \begin{equation} \label{5} g^{\mu \nu} D_{\mu} D_{\nu}\phi = m ^{2} \phi ,
	\end{equation}
	
	where the constants $m^2$ and $q_0$ are respectively called the mass and the charge \footnote{This charge $q_0$ is also the constant that couples the electromagnetic and the scalar field tensors.} of the scalar field $\phi$.	
	
	This problem is motivated by Penrose's strong cosmic censorship conjecture (c.f section \ref{SCC}.) which claims that general relativity is a deterministic theory. The general strategy to address this question is to exhibit a singularity at the boundary of the maximal domain of predictability, which can be done with instability estimates. 
	
	We prove that assuming an upper and lower bound on the scalar field $\phi$ on the event horizon of the black hole, the Cauchy
	 horizon exhibits both stability and instability features, namely :

\begin{enumerate} \item Stability : the perturbed black hole still admits a Cauchy horizon -near time-like infinity- like the original unperturbed Reissner-Nordstr\"{o}m black hole, and in some cases we can even extend the metric continuously beyond this Cauchy horizon. 
	
	\item Instability : the curvature along the Cauchy horizon blows up, which represents an obstruction \footnote{Although an appropriate global setting - as opposed to the perturbative one that this paper is concerned with-  is necessary to formulate the $C^2$ inextendibility properly. } to a $C^2$ extension, at least near time-like infinity. As a by-product, we see that the metric is not $C^1$ for the constructed continuous extension\footnote{Although it does not give a general geometric impossibility to extend in $C^1$ the metric across the Cauchy horizon.}.
	
\end{enumerate}

Similar results are known in the special case $m^2=q_0=0$ see \cite{Mihalis1} and \cite{JonathanStab}. 
However, in our case the expected decay of the scalar field on the event horizon is much slower, which makes the stability part more difficult. The previous instability result depends strongly on the special structure of the equation in the absence of mass and charge of the scalar field \footnote{More precisely, in the work of Dafermos \cite{Mihalis1}, it relies on a special mononoticity property occuring only in that model. }. When $q_0 \neq 0$ but $m^2=0$, a previous work of Kommemi \cite{Kommemi} shows a stability result but his assumed decay on the event horizon is only expected to hold for a sub-range of the charge $q_0$ that depends on the black hole parameters.  In \cite{JonathanStab}, the key argument for the instability is to use an almost conservation law that exists only in the absence of mass and charge. This is the underlying reason why \cite{Kommemi} does not contain any instability result.

This work can also be viewed as a first step towards the understanding of the spherically symmetric charged black holes with one-ended initial data. This is because when the scalar field is uncharged, the total charge of the space-time arises completely from the topology. On the contrary, the model that we consider allows for a dynamical total charge which makes $\mathbb{R}^3$ type initial data possible.

The introduction is outlined as follows : in section \ref{context} we present the strong cosmic censorship conjecture and mention earlier works, then in section \ref{motivation} we explain the reasons to study a \textbf{charged} and \textbf{massive} scalar field and give the results of the present paper.
 We then sketch the methods of proof in the last section \ref{ideaproof}. Finally in section \ref{outline} we outline the rest of the paper.
	
		\subsection{Context of the problem and earlier works} \label{context}

\subsubsection{Strong cosmic censorship conjecture} \label{SCC}

	The study of self-gravitating isolated bodies relies crucially on the vacuum Einstein equation  : 
	\begin{equation}  \label{Einstein} Ric_{\mu \nu}(g)- \frac{1}{2}R(g)g_{\mu \nu}= 0 .\end{equation}

	
	 The simplest non-trivial solution, discovered by Schwarzschild is a spherically symmetric family of black holes, indexed by their mass.
These black holes exhibit a very strong singularity, as observers that fall into them experience \textbf{infinite tidal deformations}.


A more sophisticated family of solutions indexed by mass and angular momentum and which describes rotating black holes has been discovered by Kerr in 1963. Unfortunately, Kerr's black holes have the very undesirable feature that they break determinism : the maximal globally hyperbolic development of their initial data is future extendible as a smooth solution to the Einstein equation \eqref{Einstein} in many \textbf{non-unique} ways. In some sense, it represents a failure of global uniqueness of solutions.

One way to restore determinism which has been suggested by numerous heuristic and numerical works is that Kerr black holes feature of non-unique extendibility is \textbf{non-generic}, in other words whenever their initial data is slightly perturbed then the maximal globally hyperbolic development is actually future \textbf{inextendible} as a suitably regular Lorentzian manifold. 

The nature of this singularity was controversial though : it was widely debated in the physics community whether perturbations of Kerr black holes exhibit a Schwarzschild black hole like singularity and  observers experience infinite tidal deformations when they get close to it. One convenient way -although not exactly equivalent- to formulate this question geometrically is to study  $C^0$ inextendibility. 

The inextendibility question has been formulated by Penrose in the following conjecture :

	\begin{conjecture}[ Strong Cosmic Censorship, Penrose]
		Maximal  globally hyperbolic developments of asymptotically flat initial data are generically future inextendible as a suitably regular Lorentzian manifold . \end{conjecture}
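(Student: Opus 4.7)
The stated conjecture is Penrose's Strong Cosmic Censorship, one of the central open problems in mathematical general relativity; accordingly what follows is the broad programmatic strategy employed by the field (and which the present paper advances in a symmetric model) rather than a genuine proof sketch. The overall plan is to identify the future boundary of the maximal globally hyperbolic development (MGHD) of generic asymptotically flat initial data and to show that no Lorentzian extension of the prescribed regularity class can be glued across it, thereby reducing the conjecture to quantitative instability estimates at candidate extension boundaries.

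First I would fix a precise notion of \emph{suitably regular}: for $C^2$-SCC one needs a curvature invariant (e.g.\ a component of $Ric$ or of the Weyl tensor) to blow up, while for $C^0$-SCC one must rule out even continuous extensions of the metric across the MGHD boundary. Second, I would identify the locus at which non-unique extension can arise. For perturbations of the sub-extremal Kerr or Reissner-Nordstr\"{o}m families this locus is the Cauchy horizon $\mathcal{CH}^+$, and the argument is intrinsically two-sided: one must first prove \emph{stability}, i.e.\ that a Cauchy horizon persists under generic small perturbations with enough control on the geometry to make geometric quantities meaningful up to it, and then \emph{instability}, namely that at $\mathcal{CH}^+$ the appropriate curvature or transverse-derivative quantity diverges on a suitably generic subset of data. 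This bifurcated structure is exactly what is executed in the present paper in the charged-massive spherically symmetric setting, generalizing Dafermos and Luk--Oh.

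Third, the instability half depends on a \emph{lower} bound for the matter field along the event horizon (generic late-time tails, in the spirit of Price's law) which must then be propagated transversally across the black hole interior through the blue-shift region in order to force the required blow-up at $\mathcal{CH}^+$. Inputs of this form are currently available at the fully nonlinear level only in models with high symmetry or simple matter, which is why any realistic approach to SCC at present proceeds model by model: real scalar field (Dafermos, Luk--Oh), the charged-massive case (this paper), and then, one hopes, the vacuum Kerr case without symmetry.

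The principal obstacle will be simultaneously removing symmetry and giving the word \emph{generic} a precise mathematical meaning. Outside spherical symmetry the Einstein system loses the double-null reduction, monotonicity, and clean red-shift/blue-shift hierarchies that drive the symmetric proofs; the sharp polynomial decay on the event horizon for Kerr perturbations is only partially understood at the nonlinear level; and no general method exists for propagating pointwise lower bounds on a dynamical matter field across the black hole interior. Moreover, formulating \emph{generic} (openness-and-denseness in a topology on initial data, or measure-theoretic genericity) is itself delicate because the desired genericity is expected to fail precisely on the fine-tuned codimension-$\geq 1$ submanifolds of initial data producing anomalously fast decay along the horizon.
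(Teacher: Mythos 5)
This statement is Penrose's Strong Cosmic Censorship conjecture, which the paper states as an open conjecture and does not prove; there is no proof in the paper to compare against. Your proposal correctly recognizes this and your programmatic outline --- fixing the regularity class, locating the candidate extension boundary at the Cauchy horizon, and splitting the problem into a stability half and an instability half driven by generic lower bounds on the event horizon --- matches exactly the framing the paper itself gives in its introduction and implements (in the charged, massive, spherically symmetric model, near timelike infinity only) via Theorems \ref{Stabilitytheorem} and \ref{Instabilitytheorem}. No further comparison is possible or needed.
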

	
	In the case of $C^0$ inextendibility, suitably regular is to be understood as continuous.
	
	\begin{rmk} \label{RNSCC}		Without the word ``generically'', the conjecture is false since Kerr black holes would provide counter examples, in the sense that they have a Cauchy horizon over which the metric can be smoothly extended in a non-unique way.  Strong cosmic censorship claims that these counter examples are non-generic.	\end{rmk}

	Due to the complexity of the Kerr geometry, early works on this problem studied instead Reissner-Nordstr\"{o}m charged black holes. Although there are not solutions to the vacuum Einstein equation \eqref{Einstein}, they solve the Einstein-Maxwell equations : 	\begin{equation} \label{EinsteinMaxwell}  Ric_{\mu \nu}(g)- \frac{1}{2}R(g)g_{\mu \nu}= \mathbb{T}^{EM}_{\mu \nu} , \end{equation} 	
	\begin{equation}  \label{EinsteinMaxwell2}  \mathbb{T}^{EM}_{\mu \nu}=2\left(g^{\alpha \beta}F _{\alpha \nu}F_{\beta \mu }-\frac{1}{4}F^{\alpha \beta}F_{\alpha \beta}g_{\mu \nu}\right),
	\end{equation}
	\begin{equation}   \label{EinsteinMaxwell3}\nabla^{\mu} F_{\mu \nu}= 0  , \; dF=0. \end{equation}
	Reissner-Nordstr\"{o}m  black holes have the same global geometry as Kerr's but have the simplifying feature that they are spherically symmetric.

In their pioneering numerical work \cite{Penrose}, Penrose and Simpson  studied linear test fields on Reissner-Nordstr\"{o}m black holes and discovered an instability of the Cauchy horizon.

 Later Hiscock in \cite{Hiscock},  Poisson and Israel in \cite{Poisson}, \cite{Poisson2} exhibited - in a spherically symmetric but \textbf{non-linear} setting- a so-called weak null singularity with an expected curvature blow-up i.e a $C^{2}$ explosion of the metric, but \textbf{finite} tidal deformations allowing for a $C^{0}$ extension.
 
 They studied the Einstein-null-dust equations which model non self-interacting matter transported on null geodesics :  \footnote{This model can be thought of as a high frequency limit, away from $\{r=0\}$ of the Einstein-Scalar-Field model.}

 	\begin{equation}  Ric_{\mu \nu}(g)- \frac{1}{2}R(g)g_{\mu \nu}= \mathbb{T}_{\mu \nu},     \end{equation} 
 	\begin{equation}
\mathbb{T}_{\mu \nu}= f^2 \partial_{\mu}u \partial_{\nu}u+
h^2 \partial_{\mu}v \partial_{\nu}v,
 	\end{equation}
 	\begin{equation}
g^{\mu \nu}\partial_{\mu}u \partial_{\nu}u=0
 	\end{equation}
 	 	\begin{equation}
 	 	g^{\mu \nu}\partial_{\mu}v \partial_{\nu}v=0
 	 	\end{equation}
 	 	 	 \begin{equation}
 	 		g^{\mu \nu}\partial_{\mu}u \partial_{\nu}f +(\Box_{g}u)f=0
 	 		\end{equation}
 	 		 \begin{equation}
 	 		 g^{\mu \nu}\partial_{\mu}v \partial_{\nu}h +(\Box_{g}v)h=0
 	 		 \end{equation}
 	

 
 	

In his seminal work \cite{MihalisPHD}, \cite{Mihalis1}, Dafermos studied mathematically  the \textbf{non-linear} stability of Reissner-Nordstr\"{o}m black holes in spherical symmetry for the Einstein-Maxwell-Scalar-Field equations  : 

	\begin{equation}  \label{EMSF} Ric_{\mu \nu}(g)- \frac{1}{2}R(g)g_{\mu \nu}= \mathbb{T}^{EM}_{\mu \nu}+  \mathbb{T}^{SF}_{\mu \nu},     \end{equation} 
	\begin{equation}  \label{EMSF2} \mathbb{T}^{EM}_{\mu \nu}=2\left(g^{\alpha \beta}F _{\alpha \nu}F_{\beta \mu }-\frac{1}{4}F^{\alpha \beta}F_{\alpha \beta}g_{\mu \nu}\right),
	\end{equation}
	\begin{equation} \label{EMSF3}  \mathbb{T}^{SF}_{\mu \nu}= 2 \left( \partial_{\mu}\phi \partial_{\nu} \phi -\frac{1}{2}(g^{\alpha \beta} \partial_{\alpha}\phi \partial_{\beta}\phi)g_{\mu \nu} \right), \end{equation} \begin{equation} \label{EMSF4} \nabla^{\mu} F_{\mu \nu}= 0 , \; dF=0 ,
	\end{equation} \begin{equation}   \label{EMSF5} g^{\mu \nu} \nabla _{\mu} \nabla _{\nu}\phi = 0.
	\end{equation}


Dafermos studied the interior of the black hole and proved conditionally the existence of a Cauchy horizon near time-like infinity with a $C^0$ extension for the metric, but $C^1$ inextendibility of the $C^0$ extension which manifests itself by the blow-up of the (Hawking) mass, which partially confirmed the insights from the work of Poisson-Israel. \\

Later Dafermos and Rodnianski in \cite{PriceLaw} proved a stability result on the black hole exterior (c.f section \ref{Pricelaw}) which combined with \cite{Mihalis1} ruled out the $C^0$ inextendibility scenario : 

\begin{thm} [Dafermos \cite{Mihalis1}, Dafermos-Rodnianski \cite{PriceLaw} ]
	For the Einstein-Maxwell-Scalar-Field equations \eqref{EMSF},\eqref{EMSF2},\eqref{EMSF3}, \eqref{EMSF4}, \eqref{EMSF5} in spherical symmetry, the ${C^0}$ strong cosmic censorship is false. 
\end{thm}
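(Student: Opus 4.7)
The plan is to disprove the $C^0$ strong cosmic censorship by exhibiting an \emph{open} (hence in particular generic) family of smooth asymptotically flat initial data whose maximal globally hyperbolic development admits a continuous extension as a Lorentzian manifold beyond a Cauchy horizon. Since the conjecture asserts inextendibility for a generic class of data, producing a non-empty open set of counterexamples suffices.

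First, I would fix sub-extremal Reissner-Nordstr\"om parameters $(M,e)$ with $0<|e|<M$ and start from smooth spherically symmetric asymptotically flat initial data lying in a small neighbourhood (in a suitable weighted Sobolev topology) of the Reissner-Nordstr\"om data. By orbital stability of the exterior region for the Einstein-Maxwell-Scalar-Field system \eqref{EMSF}--\eqref{EMSF5} in spherical symmetry, the development contains a region isometric up to diffeomorphism to a perturbed sub-extremal Reissner-Nordstr\"om exterior whose future boundary is a regular event horizon $\mathcal{H}^{+}$ terminating at timelike infinity $i^{+}$.

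Second, I would apply the Dafermos-Rodnianski Price-law estimates of \cite{PriceLaw} in the exterior: these yield quantitative polynomial decay of the form $|\phi|_{\mathcal{H}^{+}}(v)\lesssim v^{-s}$ (with $s$ strictly greater than the threshold required below) along an advanced time coordinate $v$ on $\mathcal{H}^{+}$, together with pointwise decay of the tangential derivatives and of the perturbed geometry. This step converts an asymptotic flatness assumption on a Cauchy hypersurface into concrete integrable decay along $\mathcal{H}^{+}$, which is the only interface needed with the interior analysis.

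Third, with this decay rate as input, I would invoke the interior stability theorem of Dafermos \cite{MihalisPHD, Mihalis1}: in a characteristic neighbourhood of $i^{+}$ bounded by $\mathcal{H}^{+}$ and a small piece of the trapped region, one constructs a bifurcate Cauchy horizon $\mathcal{CH}^{+}$ and shows that the double-null metric components $\Omega^{2}$, $r$ and the scalar field $\phi$ extend continuously across $\mathcal{CH}^{+}$, with $r$ bounded away from zero. Gluing this interior extension to the perturbed exterior produces a genuine $C^{0}$ future extension of the maximal globally hyperbolic development, contradicting $C^{0}$ strong cosmic censorship for the class considered; openness of the construction in the initial data topology then upgrades this to genericity.

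The main obstacle is the third step, i.e.\ propagating the Price-law decay from $\mathcal{H}^{+}$ all the way to $\mathcal{CH}^{+}$ against the blueshift instability of the Reissner-Nordstr\"om Cauchy horizon. Concretely, one must close a bootstrap controlling the renormalised Hawking mass, the derivatives of $r$, and the null derivatives of $\phi$, in such a way that the blueshift exponential amplification is beaten by the assumed polynomial decay on $\mathcal{H}^{+}$; this is precisely why a quantitative, rather than merely qualitative, Price-law rate is indispensable, and it is the balance between these two competing mechanisms that sets the threshold for the decay exponent $s$ in the second step.
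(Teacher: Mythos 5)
Your proposal is correct and follows essentially the same route as the paper, which obtains this theorem precisely by combining the exterior Price-law decay of Dafermos--Rodnianski \cite{PriceLaw} (giving the quantitative polynomial decay of $\phi$ along $\mathcal{H}^{+}$) with the conditional interior stability result of Dafermos \cite{Mihalis1} (which, for $s>1$, yields a Cauchy horizon near timelike infinity across which $g$, $r$ and $\phi$ extend continuously), so that the maximal globally hyperbolic development is $C^{0}$-extendible for the class of data considered. Your identification of the blueshift-versus-decay balance as the crux of the interior step, and of the integrable decay rate $s>1$ (here $s=3$) as the threshold, matches the paper's discussion.
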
 

The question was finally settled in the work of Luk and Oh \cite{JonathanStab}, \cite{JonathanStabExt} : they confirmed the weak null singularity scenario, due to a curvature instability : 

\begin{thm} [Luk-Oh \cite{JonathanStab}, \cite{JonathanStabExt} ]
For the Einstein-Maxwell-Scalar-Field equations \eqref{EMSF},\eqref{EMSF2},\eqref{EMSF3},\eqref{EMSF4},\eqref{EMSF5} in spherical symmetry, the $C^2$ strong cosmic censorship conjecture is true. 
\end{thm}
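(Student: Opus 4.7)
The plan is to combine the conditional stability theorem of Dafermos, which produces a Cauchy horizon $\mathcal{CH}^+$ in a neighbourhood of time-like infinity, with a sharp instability argument showing that curvature blows up at $\mathcal{CH}^+$ for generic initial data. For generic data the maximal globally hyperbolic development then terminates in a weak null singularity that is $C^0$ extendible (consistent with the previous $C^0$ theorem) but $C^2$ inextendible, giving the conjecture. Note that such an argument is inherently global: one needs to combine an exterior result (Dafermos-Rodnianski Price law) with an interior instability, then pass to a genericity statement on asymptotically flat Cauchy data.

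The first step would be to establish a two-sided Price law on the event horizon: a pointwise upper bound $|\phi|\lesssim v^{-p}$ with $p$ large enough to activate the Dafermos stability theorem and yield $\mathcal{CH}^+$, together with an averaged $L^2$ lower bound of the form $\int_v^{2v} |\partial_v\phi|^2(\bar v)\, d\bar v \gtrsim v^{-2p-1}$ for generic data. The upper bound is inherited from the Dafermos-Rodnianski exterior theorem; the lower bound is a nonlinear refinement saying that the linear decay rate is saturated generically, and relies on a conservation-law structure peculiar to the uncharged and massless model.

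Second, I would propagate these bounds transversally, tracking $\phi$ through the red-shift, no-shift, and blue-shift regions of the black hole interior. The upper bound is used to keep the geometry close to sub-extremal Reissner-Nordstr\"{o}m throughout; the lower bound is transported through the blue-shift region using an almost-conservation law for the scalar-field energy and the already-known smallness of the background deviation. The output is that $\int |\partial_v\phi|^2\, dv$ diverges on any outgoing null segment crossing $\mathcal{CH}^+$, which is the sharp \emph{instability} statement that complements Dafermos' stability.

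Finally, feeding this divergence into the Raychaudhuri equation for the Hawking mass forces the Hawking mass, and hence the Kretschmann scalar, to blow up identically along $\mathcal{CH}^+$, ruling out any $C^2$ extension. The main obstacle is unquestionably the lower-bound propagation across the blue-shift region: the exponential amplification that drives the instability is also what makes quantitative lower bounds delicate, since the bulk error terms one needs to sign-control degenerate at the same rate that powers the upper bound. It is precisely this step that uses the uncharged, massless structure critically, via an almost-conservation law that will not survive the charged and massive generalisation treated in the body of the present paper.
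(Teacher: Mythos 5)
This theorem is quoted from Luk--Oh and is not proved in the present paper, so your proposal can only be checked against the description of their strategy given in the introduction and in section \ref{ideaproof}. At that level your architecture is the right one and matches what the paper reports: a two-sided Price law on the event horizon (pointwise upper bound from the exterior stability analysis, plus a generically valid \emph{averaged} $L^2$ lower bound on $\partial_v \phi$, which is the content of \cite{JonathanStabExt}); Dafermos-type interior stability keeping the geometry close to sub-extremal Reissner--Nordstr\"{o}m; propagation of the lower bound through the blue-shift region via an almost conservation law special to the uncharged massless model; and a resulting divergence of $\int \Omega^{-2}|\partial_v \phi|^2\, dv$ transversally to $\mathcal{CH}^+$. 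You also correctly identify the almost conservation law as the step that does not survive the charged and massive generalisation, which is exactly the point the present paper makes.

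The one step I would push back on is the last one. You conclude by ``feeding this divergence into the Raychaudhuri equation for the Hawking mass'' to get mass inflation and hence Kretschmann blow-up. That is Dafermos' route, and as Theorem \ref{CondInst} makes explicit it requires a \emph{pointwise} lower bound $v^{-3s+\epsilon}\lesssim |\partial_v \phi|_{|\mathcal{H}^+}$, which the paper notes has never been exhibited for any particular solution and which is not what the generic averaged lower bound of \cite{JonathanStabExt} provides. With only the $L^2$ hypothesis, mass inflation is not immediate; Luk--Oh instead derive the $C^2$ obstruction directly from the $L^2$ divergence, by showing that the curvature component $Ric(\Omega^{-2}\partial_v,\Omega^{-2}\partial_v)=\Omega^{-4}|\partial_v \phi|^2$ blows up along a geodesic null frame (equivalently $\phi\notin W^{1,2}_{loc}$ of the continuous extension) --- precisely the mechanism the present paper reuses in section \ref{proofinstab}. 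A second, smaller point: to get blow-up \emph{identically along} the whole Cauchy horizon, rather than only near time-like infinity, Luk--Oh need an additional monotonicity property, again special to $m^2=q_0=0$; your sketch is silent on how the local statement is globalised, and this globalisation is part of what makes the full $C^2$ strong cosmic censorship statement, as opposed to the perturbative one proved here, go through.
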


		\subsubsection{Earlier works relating to singularities at the Cauchy horizon}
		
		As sketched in the previous section, singularities are tightly related to extendibility question. For the stability of the Cauchy horizon, recent progress have been made in different directions c.f \cite{Franzen}, \cite{Hintz} for the linear stability, \cite{JonathanInstab}, \cite{KerrInstab} for the linear instability and \cite{Kommemi} for the non-linear problem.

		In this section, we review in more details stability and instability results in the black hole interior established in previous works leading to the proof of the $C^2$ strong cosmic censorship conjecture. These results should be compared to the  main theorems of this paper, stated in section \ref{results}.
		
			In \cite{Mihalis1}, Dafermos proves a ${C}^0$ stability and a ${C}^1$ instability result  of the Reissner-Nordstrom solution for an uncharged massless scalar field perturbation suitably decaying along the event horizon.
			
			The instability essentially relies on a blow-up of the modified mass $\varpi$  over the Cauchy horizon,as a consequence of a lower bound on the scalar field. Hence the metric is not ${C}^{1}$  extendible\footnote{It can also be proven that the mass blow-up implies also the blow-up of the Kretschmann scalar (c.f \cite{Kommemi}) which establishes $C^2$ inextendibility.} in spherical symmetry.

			\begin{thm}[$\mathcal{C}^0$ stability, ${C}^1$ instability, Dafermos \cite{Mihalis1}] \label{CondInst}
				
				Let $(M,g,\phi,F)$ be a solution of the Einstein-Maxwell-Scalar-Field equations in spherical symmetry such that  for some $s > \frac{1}{2}$, we have on the event horizon parametrized by the coordinate $v$ as defined by gauge \eqref{gauge2} of Theorem \ref{Stabilitytheorem} : 
				
				$$ |\phi|_{|\mathcal{H}^+}(0,v)+|\partial_v \phi|_{|\mathcal{H}^+}(0,v) \lesssim v^{-s},$$
				
				then : 
				
				\begin{enumerate}

					\item \textbf{Existence of a Cauchy horizon} : in a neighbourhood of time-like infinity, the space-time has the Penrose diagram of Figure 1.


					\item \textbf{Continuous extension}	: if moreover $s>1$ then the metric g and the scalar field $\phi$ extend as continuous functions along the Cauchy horizon $\mathcal{CH}^{+}$. Moreover, the extended metric can be chosen to be spherically symmetric.

					\item \textbf{Mass inflation and $C^1$ inextendibility} :	  coming back to general case $s>\frac{1}{2}$, 		  if we assume the following point-wise lower bound \footnote{			  	This lower bound -although supported by numerical evidences- has never been exhibited for any particular solution.} on the scalar field for some $\epsilon>0$ :  $$  v^{-3s+\epsilon}  \lesssim |\partial_{v} \phi |_{|\mathcal{H}^{+}} \lesssim   v^{-s},$$		then  ,  the modified mass blows up as one approaches the Cauchy horizon : $\varpi(u,v) \rightarrow_{v \rightarrow +\infty}+\infty$ hence it is impossible to extend the metric g to a \textbf{spherically symmetric} $C^{1}$ metric across the Cauchy horizon $\mathcal{CH}^{+}$. In particular the constructed $C^{0}$ extension is not $C^{1}$.							 \end{enumerate}			\end{thm}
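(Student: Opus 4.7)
The plan is to work in double null coordinates $(u,v)$, writing the spherically symmetric metric as $g = -\Omega^2(u,v)\,du\,dv + r^2(u,v)\,d\sigma_{\mathbb{S}^2}^{2}$, so that \eqref{EMSF}--\eqref{EMSF5} reduce to a coupled transport-plus-wave system for $r$, $\Omega^2$, the renormalized Hawking mass $\varpi$, and $\phi$. Following Dafermos, I would partition the black hole interior near timelike infinity along constant-$r$ hypersurfaces into three regions: a \emph{red-shift region} $\mathcal{R}$ near $\mathcal{H}^+$ where $r \ge r_+ - \delta$, a \emph{no-shift region} $\mathcal{N}$ of bounded $u$-extent, and a \emph{blue-shift region} $\mathcal{B}$ of the form $\{r \le r_- + \delta\}$ adjacent to the conjectured $\mathcal{CH}^+$. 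The quantity $1-\mu = 1 - 2\varpi/r + e^2/r^2$ is positive in $\mathcal{R}$, negative in $\mathcal{B}$, and its reciprocal controls the blueshift instability.

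For parts (1) and (2), the strategy is a region-by-region bootstrap. In $\mathcal{R}$, the favourable redshift sign allows standard transport estimates to convert the event-horizon decay $|\phi|, |\partial_v \phi| \lesssim v^{-s}$ into pointwise decay throughout, together with control of $r - r_+$ and $\Omega^2$. In $\mathcal{N}$, bounded $u$-length plus a Gr\"onwall argument suffice. In $\mathcal{B}$, where $\Omega^{-2}$ formally grows exponentially, the key is a weighted energy estimate for $\phi$ coupled to Raychaudhuri-type identities for $\partial_u r/\Omega^2$ and $\partial_v r/\Omega^2$; this shows that $r$ stays bounded below by a positive constant, $\varpi$ remains finite, and $r \to r_{\mathcal{CH}} \in (0, r_-]$ as $v \to \infty$, yielding the Penrose diagram of part (1). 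When $s > 1$, the event-horizon flux $\int v^{2s}|\partial_v \phi|^2\,dv$ is finite; propagating this integrability and introducing a rescaled coordinate $V = V(v)$ that compensates for the blueshift of $\Omega^2$ produces a gauge in which $r$, $\Omega^2$ and $\phi$ admit continuous limits at $V = V_\infty$, delivering the spherically symmetric $C^0$ extension of part (2).

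For part (3), the governing identity is the null transport equation for $\varpi$ on constant-$u$ slices, schematically $\partial_v \varpi = \tfrac{1}{2}(1-\mu)^{-1}\, r^2\, |\partial_v \phi|^2 \cdot (\text{positive factors})$. Because $(1-\mu)^{-1}$ grows exponentially in $v$ on $\mathcal{B}$, the integral defining $\varpi$ diverges as soon as $|\partial_v \phi|$ obeys an essentially polynomial lower bound on any null slice inside $\mathcal{B}$. The proof therefore reduces to propagating the assumed lower bound $|\partial_v \phi|_{|\mathcal{H}^+} \gtrsim v^{-3s+\epsilon}$ through $\mathcal{R} \cup \mathcal{N}$ by integrating the wave equation for $\phi$ along characteristics, and showing that the source terms, controlled by the upper bounds from part (1), are too small to destroy this lower bound; the strictly positive margin $\epsilon$ in the exponent is precisely what absorbs these errors.

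I expect the main obstacle to be exactly this propagation of the lower bound: $\partial_v \phi$ obeys a transport equation whose right-hand side is a priori of the same polynomial size as the leading term, so controlling potential cancellations requires a careful comparison of the $v^{-3s+\epsilon}$ lower bound against products such as $|\phi||\partial_u r|/r^2$ and iterated integrals of $|\partial_v \phi|^2$ coming from the Klein--Gordon source. Once $\varpi \to +\infty$ along $\mathcal{CH}^+$ is established, the identity $1 - \mu = -g(\nabla r, \nabla r)$ together with spherical symmetry forces the metric to fail to admit any spherically symmetric $C^1$ extension across $\mathcal{CH}^+$, and in particular rules out $C^1$ regularity of the $C^0$ extension built in part (2).
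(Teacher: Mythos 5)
This statement is Dafermos's theorem for the uncharged massless model; the paper only cites it from \cite{Mihalis1} and instead proves a generalization (Theorems \ref{Stabilitytheorem} and \ref{Instabilitytheorem}) by partly different methods, so the relevant comparison is with Dafermos's original argument. Your overall architecture (red-shift/no-shift/blue-shift decomposition, bootstrap, transport of the lower bound, divergence of $\int \partial_v \varpi \, dv$) is the right one, but two steps as written would fail. First, you claim the blue-shift estimates show that ``$\varpi$ remains finite.'' This is false in general and contradicts part (3) of the very theorem: mass inflation ($\varpi \to +\infty$ at $\mathcal{CH}^+$) coexists with the $C^0$ extension of part (2). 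The central difficulty of Dafermos's stability proof --- the point this paper stresses when explaining why his scheme cannot be recycled for $q_0 \neq 0$ or $m^2 \neq 0$ --- is that the geometric triple $(r,\phi,\varpi)$ does \emph{not} stay bounded, and the estimates must be closed with no upper bound on $\varpi$, using instead the monotonicities $\partial_u \varpi \geq 0$, $\partial_v \varpi \geq 0$ in the trapped region (the extra terms in \eqref{massUEinstein}--\eqref{massVEinstein} vanish when $m^2=q_0=0$), the Raychaudhuri monotonicity of $\kappa = -\Omega^2/(4\partial_u r)$, and the sign structure of $\theta = r\partial_v\phi$, $\zeta = r\partial_u\phi$. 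A bootstrap that assumes or concludes boundedness of $\varpi$ will not close.

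Second, your mass-inflation identity has the wrong blow-up factor. The correct equation is $\partial_v \varpi = \tfrac{1}{2}\kappa^{-1} r^2 |\partial_v \phi|^2$ with $\kappa^{-1} = (1-\mu)/\lambda = -4\partial_u r/\Omega^2$; the exponential growth comes from $\Omega^{-2}$ (with $\partial_u r$ bounded away from zero near $\mathcal{CH}^+$). The factor $(1-\mu)^{-1}$ you invoke does the opposite: in the trapped region $1-\mu < 0$, and under mass inflation $1-\mu \to -\infty$, so $(1-\mu)^{-1} \to 0$ and your integral would not diverge. Relatedly, the ingredient that actually tames the ``cancellations'' you flag when propagating $|\partial_v\phi| \gtrsim v^{-3s+\epsilon}$ inward is the pair of transport identities $\partial_u(r\partial_v\phi) = -\partial_v r\, \partial_u\phi$ and $\partial_v(r\partial_u\phi) = -\partial_u r\, \partial_v\phi$, whose error contribution integrates to $O(v^{-3s})$ (whence the threshold $3s$) and is beaten by the $\epsilon$ margin. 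This structure is special to $m^2=q_0=0$, and its failure is exactly why the present paper replaces the pointwise lower bound by an averaged $L^2$ one and replaces characteristic propagation by an energy identity for the Kodama vector field.
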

			
			\begin{figure}
				
				\begin{center}
					
					\includegraphics[width=65 mm, height=65 mm]{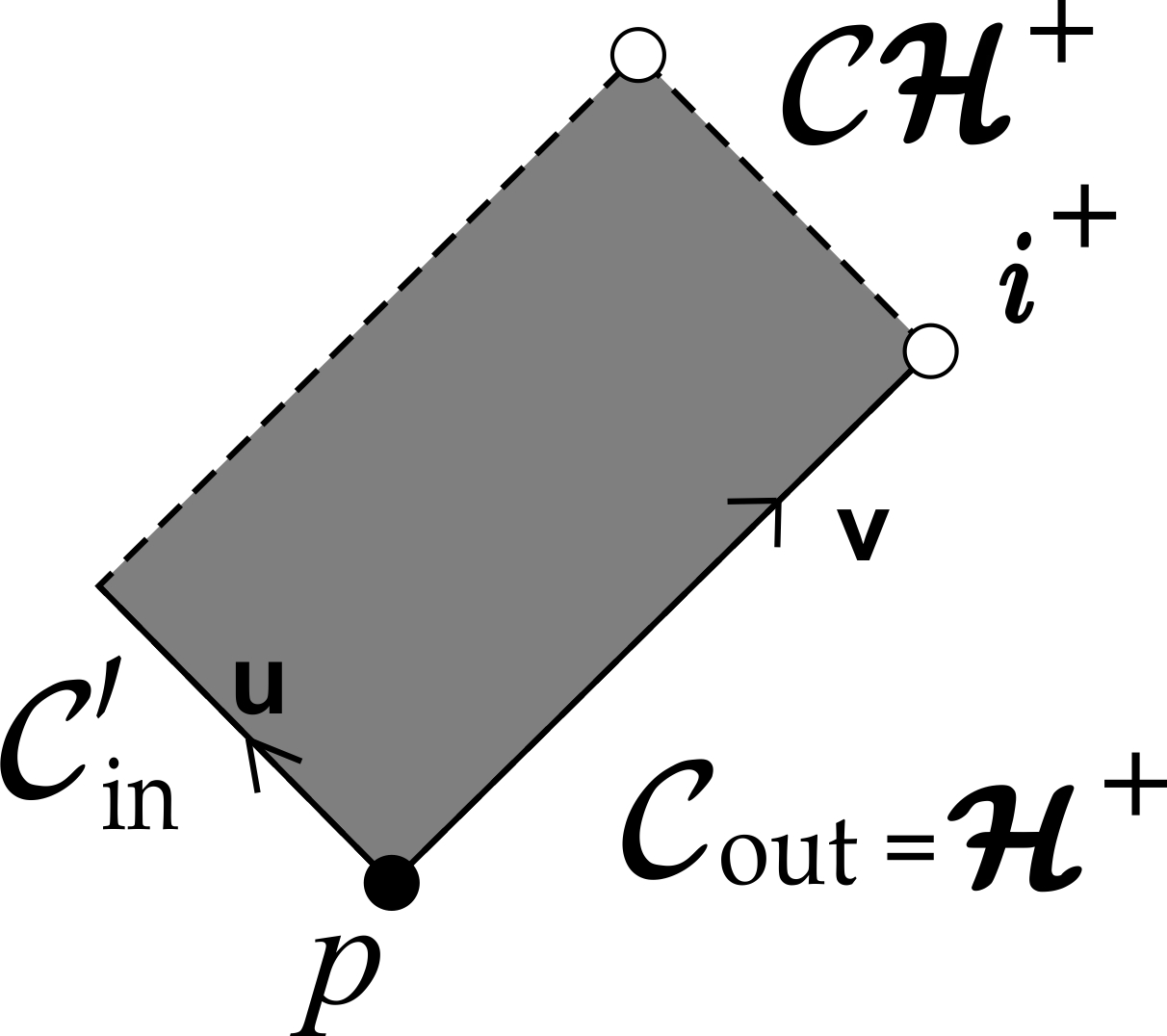}
					
				\end{center}
				
				\caption{Penrose diagram for the characteristic initial value problem appearing in \cite{Mihalis1} .}
				\label{MihalisPenrose}
			\end{figure}
			
		In contrast, the $C^2$ strong cosmic censorship conjecture paper dealing with the black hole interior \cite{JonathanStab} relies on an averaged polynomial decay, as opposed to point-wise and proves a curvature instability :

		\begin{thm}[ ${C}^2$ instability Luk-Oh \cite{JonathanStab}]
			
		Under the same hypothesis as Theorem \ref{CondInst}, we also assume that $s>2$ and the following lower bound holds for some $2s-1 \leq p<4s-2$ and some $C>0$ : 
		
		\begin{equation}
C v^{-p} \leq \int_{v}^{+\infty} |\partial_v \phi|^2_{|\mathcal{H}^+}(0,v')dv'
		\end{equation}

			The solution admits a continuous extension $\bar{M}$ across the Cauchy horizon.
			
			Then a component of the \textbf{curvature blows-up} identically along that Cauchy horizon.

			As a consequence, $(M,g,\phi,F)$	is  \bm{${C}^2$} \textbf{future-inextendible}.
			
		Moreover $\phi \notin W^{1,2}_{loc}(\bar{M})$ and the metric is not in $C^1$ for the constructed continuous extension $\bar{M}$.

		\end{thm}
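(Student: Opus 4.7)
The plan is to exploit the averaged $L^2$ lower bound along $\mathcal{H}^+$ by propagating it transversally to the Cauchy horizon and then converting the resulting energy blow-up into a pointwise curvature blow-up. Fix a double-null gauge $g = -\Omega^2 du\,dv + r^2\gamma_{S^2}$ so that $\mathcal{H}^+ = \{u=0\}$ and $\mathcal{CH}^+$ is parametrised by $v = +\infty$. From Theorem \ref{CondInst} (whose hypotheses with $s>2>1$ are inherited) we already possess the continuous extension $\bar M$ and pointwise upper bounds on $|\partial_v\phi|$, $|\partial_u\phi|$, $\Omega^2$ and $r - r_-$. The wave equation
$$\partial_u\partial_v\phi + \frac{\partial_u r}{r}\partial_v\phi + \frac{\partial_v r}{r}\partial_u\phi = 0$$
is treated as a first-order transport ODE in $u$ for $\partial_v\phi$ along outgoing null lines; this is the vehicle for transferring the lower bound.

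I would cut the interior into the three familiar zones: a \emph{redshift} region near $u=0$, a compact \emph{no-shift} region, and a \emph{blueshift} region in which $r\to r_-$. In the first two zones, the coefficient $\partial_u r/r$ remains bounded and $\partial_v\phi(u,v)$ is an $O(1)$-perturbation of $\partial_v\phi(0,v)$; the perturbation involves $\partial_u\phi$, whose stability-type decay gives an $L^2_v$ contribution of order $v^{-2s+1}$. Since $p \ge 2s-1$, this error is subordinate to the hypothesised lower bound, and $\int_v^\infty |\partial_v\phi|^2(u_b,v')\,dv' \gtrsim v^{-p}$ survives on an anchor hypersurface $u = u_b$ just inside the blueshift zone.

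The heart of the proof is propagation through the blueshift region, where $|\partial_u r|/\Omega^2$ blows up and naive Gronwall fails. Following Luk-Oh, I would renormalise by working with the quantity $\Omega^{-2}\partial_v(r\phi)$, for which the transport equation in $u$ has an effectively bounded coefficient, coupled to the Raychaudhuri equations
$$\partial_v(\Omega^{-2}\partial_v r) = -r\Omega^{-2}(\partial_v\phi)^2, \qquad \partial_u(\Omega^{-2}\partial_u r) = -r\Omega^{-2}(\partial_u\phi)^2,$$
and a weighted $L^2$ argument with a weight matched to the blueshift rate. The numerical range $2s-1 \le p < 4s-2$ is exactly the quantitative requirement that makes the nonlinear error cross-terms -- whose size is dictated by the upper bound of order $v^{-2s+1}$ from the stability theorem -- subdominant to the propagated lower bound of order $v^{-p}$. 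The output is $\int_{v_0}^\infty |\partial_v\phi|^2(u_1,v)\,dv = +\infty$ for every sufficiently small $u_1>0$.

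Finally, to convert this integrated divergence into identical curvature blow-up, rescale $\partial_v$ along an outgoing null geodesic $u = u_1$ to an affine parameter $V$ reaching $\mathcal{CH}^+$ at a finite value. Because $\Omega^2$ is bounded above and bounded away from $0$ in $\bar M$ near $\mathcal{CH}^+$, the Jacobian $dv/dV$ is comparable to a constant, so $\int |\partial_V\phi|^2\,dV = +\infty$; this immediately yields $\phi \notin W^{1,2}_{loc}(\bar M)$. Along the same geodesic, the Ricci component satisfies $\mathrm{Ric}(\partial_V,\partial_V) = 2(\partial_V\phi)^2$, so this component cannot be bounded on any one-sided neighbourhood of $\mathcal{CH}^+$ -- the claimed identical curvature blow-up, which obstructs any $C^2$ extension. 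The failure of $C^1$ regularity of the constructed continuous extension then follows from the same lower bound via the mass inflation mechanism already recorded in Theorem \ref{CondInst}. I expect the blueshift propagation step to be the principal obstacle, as it requires quantitatively balancing the degeneration of the transport equation's coefficients against the hypothesised decay rates, and is precisely where the numerology $2s-1 \le p < 4s-2$ becomes sharp.
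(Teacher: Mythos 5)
You should first note that the paper does not prove this theorem: it is quoted from \cite{JonathanStab}, and the natural benchmark inside the paper is its proof of the charged/massive analogue, Theorem \ref{Instabilitytheorem}. Measured against either proof, the decisive step of your argument --- transporting the averaged lower bound from $\mathcal{H}^+$ into the interior --- contains a genuine gap. You treat the wave equation as a transport ODE in $u$ for $\partial_v\phi$ and regard $\partial_v\phi(u_b,\cdot)$ as a perturbation of $\partial_v\phi(0,\cdot)$ with an $L^2_v$ error of order $v^{1-2s}$, asserting that ``since $p\ge 2s-1$ this error is subordinate to the hypothesised lower bound.'' The inequality runs the other way: $p\ge 2s-1$ means precisely $v^{-p}\le v^{1-2s}$, so an error of a fixed constant times $v^{1-2s}$ swamps $Cv^{-p}$ for every $p>2s-1$, and $\left(\int|\partial_v\phi(u_b)|^2\right)^{1/2}\ge\left(\int|\partial_v\phi(0)|^2\right)^{1/2}-\left(\int|E|^2\right)^{1/2}$ yields nothing. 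This is exactly why neither \cite{JonathanStab} nor the present paper argues perturbatively: one needs an energy identity in which the flux through the later hypersurface dominates the flux through $\mathcal{H}^+_v$ up to terms of favourable sign ($\lambda_{|\mathcal{H}^+}\ge 0$, $\nu\le 0$) and an error which is a quadratic flux over a single constant-$v$ slice --- of size $O(v^{-2s}\log v)$ in Proposition \ref{instab1}, or $O(v^{2-4s})$ in Luk--Oh's renormalised conservation law, which is where the threshold $p<4s-2$ actually comes from. Your blueshift paragraph correctly names the renormalisation $\Omega^{-2}\partial_v(r\phi)$ and the Raychaudhuri pairing, but the quantity you feed into it already carries an error that destroys the lower bound.

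Two further points. Your derivation of the failure of $C^1$ extendibility ``via the mass inflation mechanism already recorded in Theorem \ref{CondInst}'' is not available: that mechanism needs the pointwise lower bound $v^{-3s+\epsilon}\lesssim|\partial_v\phi|_{|\mathcal{H}^+}$, which is not assumed. The correct route, used in section \ref{proofinstab}, is to integrate the Raychaudhuri equation \eqref{RaychV2}, $\partial_v(\iota^{-1})=4r\Omega^{-2}|\partial_v\phi|^2$, whose right-hand side has infinite integral by the propagated bound, so $\iota^{-1}=-4\lambda/\Omega^2\to+\infty$ and $|\partial_V r|$ blows up in the regular $(u,V)$ system. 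Finally, be aware that the paper's own instability proof deliberately avoids the almost-conservation law you lean on (it is destroyed by the mass and charge terms) and instead runs an exact conservation law for the Kodama vector field $T=\kappa^{-1}\partial_v-\iota^{-1}\partial_u$ up to a curve $\gamma'$ at logarithmic distance from the Cauchy horizon, then propagates pointwise in its future; this different bookkeeping is why its admissible range is $p<\min\{2s,6s-3\}$ rather than the sharper $p<4s-2$ of the statement you are proving.
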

		
	\subsection{A first version of the main results} \label{motivation}
	
	In this paper we prove that the expected asymptotic decay of the scalar field on the event horizon -known as generalised Price's law- \footnote{Namely an polynomial decay for an initially compactly supported scalar field on the event horizon of the black hole.}   implies some stability and instability features for a more realistic and richer generalization of the charged space-time model of Dafermos in spherical symmetry. 
	
	Instead of studying this problem starting from Cauchy data, we will only consider characteristic initial data on the event horizon with the ``expected'' behaviour. This should be thought of as an analogue of the previous black hole interior studies \cite{Mihalis1} and \cite{JonathanStab}.

\subsubsection{Motivation to study a massive and charged field and the results of the present paper }

The goal of this paper is to generalize the known results for the Einstein-Maxwell-Scalar-Field equations near a Reissner-Nordstr\"{o}m background to the case of a massive and charged scalar field model called Einstein-Maxwell-Klein-Gordon. Since the charge and the mass are a priori two different issues, we give motivation for each of them. \\
\begin{enumerate}

	\item \textit{A charged scalar field} : The model of Dafermos is a good toy model which gave very good insight on the Kerr case but it suffers from a major disadvantage : the topology of the initial data -i.e the initial time slice which is a Riemannian manifold- is constrained to be that of $\mathbb{S}^2 \times \mathbb{R}$ i.e \textbf{two-ended} initial data like for the Reissner-Nordstr\"{o}m case. This does not seem so relevant to study isolated collapsing matter  : we would like to consider \textbf{one-ended} initial data, diffeomorphic to $\mathbb{R}^3$ , but it is not possible in that model where the radius cannot go to $0$ on a fixed time slice. 
	
	This fact is due to the topological character of the total charge of the space-time. This is better understood by the formula :	$$ F= \frac{Q}{2 r^{2}}  \Omega^{2} du \wedge dv, $$
	
	where $(u,v)$ are null coordinates built from the radius $r$ and the time $t$, $Q$ is the total charge of the space-time, $\Omega^2$ is the metric coefficient in $(u,v)$ coordinates (c.f section \ref{Coordinates}) and $F$ is the electromagnetic field 2-form.
	
	Heuristically we see that, if $Q \equiv e$ is fixed with $e \neq 0$, $r$ is not allowed to tend to 0 without a blow-up of $F$ (if the metric does not degenerate). For more details on these issues, c.f \cite{Kommemi}.
	
	It turns out that if we impose that the scalar field is uncharged then the charge of the space-time $Q$ is necessary fixed to be some $e \in \mathbb{R}$, as it will be seen in equations \eqref{chargeUEinstein} and \eqref{ChargeVEinstein} of section \ref{eqcoord}.
	
	As a conclusion, considering more natural $\mathbb{R}^3$ initial data imposes to study a generalisation of Dafermos' model namely the Einstein-Maxwell-\textbf{Charged}-Scalar-Field equations.\\
	
\item \textit{A massive scalar field} :	Another variant is to allow for the scalar field to carry a mass, independently of the presence or absence of charge : it now propagates according to the Klein-Gordon equation : 
	
	\begin{equation}  \label{KG} g^{\mu \nu} \nabla _{\mu} \nabla _{\nu}\phi = m ^{2} \phi .
	\end{equation}
	
	One reason to study the Klein-Gordon equation is to understand the effect of a different kind of matter on the results of mathematical general relativity and the strong cosmic censorship in particular.
	
	Klein-Gordon equation is also fruitful to study \textbf{boson stars}. These uncharged objects -already present in the simple framework of spherical symmetry- in addition to being interesting for theoretical physics, give an example of a non-black-hole new ``final state'' of gravitational collapse. 
	
	More importantly, they are soliton-like (even though the metric is static), in particular they are \textbf{non-perturbative solutions} which do not converge towards a Schwarzchild or Kerr background ! They even exhibit a new behaviour as the scalar field is time-periodic in contrast to vacuum where periodicity is impossible (all periodic vacuum space-time are actually stationary, c.f \cite{Volker}). If we let aside the fact that the scalar field is not stationary, boson stars are counter-examples to the generalized \textbf{no-hair conjectures} which broadly suggest that the set of stationary and asymptotically flat solutions to the Einstein equations coupled with any reasonable matter should reduce to a finite dimensional family indexed by physical parameters measured at infinity, like Kerr's black hole (indexed by mass and angular momentum) or Reissner-Nordstr\"{o}m's (indexed by mass and electric charge).	
	For more developments on boson stars, c.f \cite{Boson}.
	
	Outside of spherical symmetry \footnote{Getting rid of the spherical symmetry assumption allows for a new very important physical phenomenon to arise, namely superradiance. This instability feature results in the presence of exponentially growing modes as  discussed in \cite{Otis} and \cite{Yakov}. }, a recent work of Chodosh-Shlapentokh-Rothman \cite{Otis} constructs a continuous 1-parameter family of periodic space-times between a Kerr black hole and a boson star. Interestingly they exhibit solutions with exponentially growing modes, which is impossible in vacuum as proved (in the linear case) in \cite{KerrDaf} ! In contrast, LeFloch and Ma prove in \cite{Lefloch} that the Minkowski space-time is stable for the Einstein-Klein-Gordon equations.
	
	As a conclusion, the Klein-Gordon model enriches the dynamics of gravitational collapse and generates behaviours that are not present for a simple wave propagation. Despite these rich dynamics, the perturbative regime sometimes behave like the massless case as in \cite{Lefloch} or the present paper, and sometimes behaves rather differently as in the work \cite{Otis}.  \\
	
	In this paper, we are going to consider both problems simultaneously by studying a charged and massive field propagating according to the Klein-Gordon equation \eqref{KG}. The full problem is written in section \ref{EinsteinMaxKG}.

\item \textit{Mathematical differences with Dafermos' model} : After dealing with physical aspects, we want to emphasize the technical differences between our new model and the uncharged massless one. 

A first remark is that the monotonicity of the modified mass as defined in \eqref{electromass} and that of the scalar field which is strongly relied on in the instability argument of \cite{Mihalis1} are no longer available.

	 More importantly, the expected asymptotics (Price's law \eqref{Pricelaw}) of the field on the event horizon are different : in particular, the oscillations due to the charge should give only an averaged \footnote{Which does not make a difference to prove the $C^0$ stability because we only need an upper bound but does for the $C^1$ instability where point-wise estimates are no longer enough. } polynomial decay -as opposed to point-wise decay-  and in many cases, the decay is expected to be always much weaker than for the uncharged and massless case. In particular it should be often  \textbf{non-integrable}.
	
	Moreover, the charge is no longer a topological constraint but a dynamical quantity which obeys an evolutionary P.D.E and that should be controlled like the scalar field or the metric which is what renders \textbf{one-ended} asymptotically flat initial data possible.

	\end{enumerate}

\subsubsection{Price's law conjecture}	\label{Price}

We now state the expected asymptotics for the scalar field on the event horizon. This was first heuristically discovered by Price in \cite{Pricepaper} for the Schwarzschild solution, and proven rigorously by Dafermos and Rodnianski in \cite{PriceLaw} on Schwarzschild and Reissner-Nordstr\"{o}m perturbations for an uncharged and massless field. The statement that the tail of the scalar field decays polynomially - for all models - is now called generalized Price's law.

This conjecture is still an open problem for the charged and massive model of the present paper and requires a stability study of the black hole exterior. The statement is however supported by numerical studies of the black hole exterior, c.f \cite{Phycists2} and \cite{Physicists}.

\begin{conjecture} [Price's law decay] \label{Priceconj} Let $ (M,g,\phi,F)$ be a spherically symmetric solution of the Einstein-Maxwell-Klein-Gordon system which is a perturbation of a Reissner-Nordstr\"{o}m background of mass $M$ and charge $e$ satisfying $0<|e|<M$, with a massive charged field $\phi \in C_c^{\infty}(\Sigma)$ of charge $q_0$ -as appearing in equations \eqref{chargeUEinstein}, \eqref{ChargeVEinstein}- and of mass $m^2$ -as appearing in the Klein-Gordon equation \eqref{KG}- where $\Sigma$ is an asymptotically flat complete Riemannian manifold initial data slice.

Then on the event horizon of the black hole $\mathcal{H}^+$ parametrized by the coordinate $v$ as defined by gauge \eqref{gauge2} of Theorem \ref{Stabilitytheorem}, we have :

\begin{equation} \label{Pricelaw}
\phi _{|\mathcal{H}^+}(v) \simeq_1 f(v) v^{-s(e,q_0,m^2)},
\end{equation}

where $\simeq_1$ denotes the numerical equivalence relation of functions and their first derivatives when $v \rightarrow +\infty$, $f$ is a periodic function 
and $s$ is defined by : 

\begin{equation}
s(e,q_0,m^2)=\bigg \{\begin{array}{lr}
 
\frac{5}{6} & \text{for} \; m^2 \neq 0 , q_0\neq 0, \\
1 + \Re ( \sqrt{1-4e^2 q_0^2}) & \text{for} \; m^2 = 0 , q_0 \neq 0 ,\\
3&\text{for} \; m^2=q_0=0.
\end{array}\end{equation}

\end{conjecture}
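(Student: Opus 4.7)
The plan is to establish the conjectured decay rates through a combined exterior stability and asymptotic analysis. The overall strategy splits into three stages: (i) non-linear stability of the sub-extremal Reissner-Nordstr\"om exterior under the Einstein-Maxwell-Klein-Gordon flow; (ii) sharp decay estimates for $\phi$ on the perturbed exterior; and (iii) extraction of the precise asymptotic $v^{-s}$ tail along $\mathcal{H}^+$ from the global-in-time behaviour of $\phi$. Since the statement is restricted to spherical symmetry, all analysis reduces to $\ell=0$, which removes the angular derivative losses but does not simplify the non-trivial effective potential at the horizon or at infinity.

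For stage (i) I would extend the Dafermos-Rodnianski framework from \cite{PriceLaw} to the charged and massive setting. Setting up double null coordinates on the exterior, one runs a bootstrap on the area radius $r$, the metric coefficient $\Omega^2$, the charge function $Q(u,v)$, and the field $\phi$. The new ingredient compared with \cite{PriceLaw} is that $Q$ now evolves via the $U(1)$ current $\Im(\overline{\phi}D\phi)$, so the bootstrap must include a decay rate on $Q-e$ driven by the $\phi$-norms, and the redshift estimate at $\mathcal{H}^+$ must absorb the Klein-Gordon potential $m^2\phi$ as a lower-order term. Gauge-covariant derivatives $D_\mu = \nabla_\mu - iq_0 A_\mu$ replace $\nabla_\mu$ throughout, and the standard $T$-energy is augmented by the mass term, which plays a positive definite role away from $\mathcal{H}^+$.

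Stage (ii) is the core difficulty. For the massless charged case ($m^2=0$, $q_0\neq 0$) I would deploy the $r^p$-hierarchy of Dafermos-Rodnianski, modified to accommodate the charge: the current contribution shows up as a lower-order interaction that is absorbable under smallness. Combined with the integrated local energy decay for the charged wave operator on Reissner-Nordstr\"om (separation of variables in $\ell=0$), this yields polynomial decay on $\mathcal{H}^+$. The exponent $1+\Re(\sqrt{1-4e^2 q_0^2})$ then emerges from the indicial equation at the event horizon: the radial ODE near $r=r_+$ has characteristic exponents given by the roots of a quadratic whose discriminant is precisely $1-4e^2 q_0^2$, and the slower outgoing root governs the horizon tail. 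For the massive case, $r^p$-estimates degenerate because the mass creates a timelike effective potential at infinity, so the hierarchy stalls and must be replaced by a resonance analysis at the mass shell $\omega = \pm m$: the branch-point singularity of the resolvent there yields an oscillatory tail $f(v)v^{-5/6}$ with $f$ periodic in $v$ at frequency $m$, matching the Koyama-Tomimatsu heuristic.

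The main obstacle is precisely this resonance analysis in the massive case. The $v^{-5/6}$ decay is non-integrable, so the usual iteration of energy estimates into pointwise decay fails, and one cannot prove the conjecture by a soft bootstrap; instead a quantitative microlocal/spectral extraction of the mass-shell contribution is required, together with control of the non-linear error terms at this extremely slow rate. Transferring the linear asymptotic expansion to the non-linear evolution then demands that the exterior stability of stage (i) be proved with enough margin so that the perturbation of the background does not corrupt the leading-order resonant behaviour. This interplay --- between a genuinely non-perturbative late-time expansion and the non-linear Einstein-Maxwell-Klein-Gordon stability --- is what keeps the conjecture open, and any complete resolution would need to combine recent advances on linear Klein-Gordon scattering on black hole backgrounds with a new non-linear stability scheme tracking the precise $v^{-s}$ profile.
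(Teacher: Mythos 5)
This statement is labelled a \emph{conjecture} in the paper, and the paper explicitly states that it ``is still an open problem for the charged and massive model of the present paper and requires a stability study of the black hole exterior,'' being supported only by numerical work. There is therefore no proof in the paper to compare yours against, and nothing in the paper's body (which concerns only the black hole \emph{interior}, taking the Price-law-type decay on $\mathcal{H}^+$ as an \emph{assumption}) bears on establishing it.

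Your proposal is a reasonable research programme, but it is not a proof: every decisive step is asserted rather than carried out, and you concede as much in your final paragraph when you say the resonance analysis at the mass shell and its non-linear transfer ``is what keeps the conjecture open.'' A sketch that ends by explaining why the statement remains open cannot be accepted as a proof of it. Beyond that structural gap, there is a concrete technical slip: you attribute the exponent $1+\Re(\sqrt{1-4e^2q_0^2})$ to the indicial equation of the radial ODE \emph{at the event horizon} $r=r_+$. The horizon is a regular singular point whose exponents encode the surface gravity (redshift), not the late-time tail; the $q_0 e$-dependent exponent arises from the long-range Coulomb-type interaction at spatial/null \emph{infinity}, where the $1/r$ electromagnetic potential shifts the effective indicial roots of the $\ell=0$ radial equation (this is why the rate is discontinuous as $q_0 e\to 0$, jumping to $s=3$ only when the long-range coupling is exactly absent). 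Any serious attempt at stage (ii) would have to locate the mechanism correctly, since the entire $r^p$-hierarchy and the choice of commutators are dictated by where the obstruction to faster decay lives.
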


\begin{rmk}
	Notice that $s(e,q_0,m^2)> \frac{1}{2}$ always but that the integral decay $s>1$ holds \footnote{Note that the decay of the massless charged scalar field depends on the dimension-less quantity $q_0e$ only.} only for $m^2=0$, $|e| < \frac{1}{2|q_0|}$. Since integrability is the crucial point in the $C^0$ extendibility proof, it explains why we required the field to be massless and not too charged to claim the $C^0$ extendibility.
\end{rmk}

Dafermos and Rodnianski in \cite{PriceLaw} first proved rigorously and in the non-linear setting an upper bound for Price's law in the uncharged and massless case $m^2=q_0=0$.

	 	
	 	
	 	
	
	Later, Luk and Oh proved in \cite{JonathanStabExt} the sharpness of this upper bound, still in the non-linear setting, as a consequence of a $L^2$ averaged \footnote{Note that for the case $q_0=0$ it is expected that the function $f$ is constant i.e the oscillations should not arise when the scalar field is uncharged. Nonetheless, no point-wise lower bound has ever been established, even for a particular solution.} lower bound.

	\subsubsection{Statement of the main results} \label{theoremsketch}
	
	In this section we explain roughly the achievement of the present paper. 
	The stability result is very analogous to Dafermos' in \cite{Mihalis1} and the instability result is a local near time-like infinity version of Luk and Oh's interior instability of \cite{JonathanStab}.
	
	 More precisely, we establish the following result : 
	 
	 \begin{thm} We assume Price's law decay of conjecture \ref{Priceconj} on the event horizon for a solution of the Einstein-Maxwell-Klein-Gordon system of section \ref{EinsteinMaxKG}  in spherical symmetry.
	 	
	 	Then near time-like infinity, the solution remains regular \footnote{More precisely, the Penrose diagram -locally near timelike infinity- of the resulting black hole solution is the same as Reissner-Nordstr\"{o}m's as illustrated by Figure 1. } up to its Cauchy horizon \footnote{On the other hand in general the metric may not extend even continuously to that Cauchy horizon.}  , along which a $C^2$ invariant quantity \footnote{Namely $Ric(V,V)$ where $V$ is a radial null geodesic vector field that is transverse to the Cauchy horizon.} blows up.

	 	Furthermore, defining  $e \in \mathbb{R}$ to be the asymptotic charge of the space-time measured on the event horizon \footnote{It corresponds to the parameter $e$ of the sub-extremal Reissner-Nordstr\"{o}m background $(M,e)$ towards which our space-time converges on the event horizon.}- for $m^2=0$ and for $4q_0^2 e^2<1$ -  the metric is  $C^0$ extendible.
	 	
	 \end{thm}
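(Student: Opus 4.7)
\bigskip

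\noindent\textbf{Proof proposal.} The plan is to treat the statement as the assembly of three coupled analyses in the black hole interior, each of which is carried out in double-null coordinates $(u,v)$ in which the unknowns are the area radius $r$, the conformal factor $\Omega^2$, the charge $Q$, and the scalar field $\phi$. First I would reformulate the Einstein-Maxwell-Klein-Gordon system of \eqref{1}--\eqref{5} as a first-order system in these variables, noting in particular that $Q$ is now dynamical (unlike in Dafermos' setting) through equations of the form $\partial_u Q \sim r^2 \Im(\bar\phi D_u\phi)$ and $\partial_v Q \sim -r^2 \Im(\bar\phi D_v\phi)$, and that $\phi$ satisfies a Klein-Gordon equation with covariant derivatives $D_\mu=\partial_\mu+iq_0A_\mu$ and a mass term $m^2\phi$. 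The upstream data are characteristic data on the event horizon $\mathcal{H}^+$ obeying the Price's law bound of Conjecture \ref{Priceconj} together with compatible data on an ingoing null cone emanating from a point of $\mathcal{H}^+$ far enough toward timelike infinity.

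Next I would prove the stability (existence of a Cauchy horizon) by a bootstrap in three regions: a red-shift region near $\mathcal{H}^+$ where $r\approx r_+$ and the $\partial_v r$ term provides damping; a no-shift region of bounded extent where crude energy estimates suffice; and a blue-shift region where $r\to r_-$ and the redshift becomes a blueshift. The bootstrap assumptions would be $|Q-e|\le\epsilon$, $\Omega^2\lesssim (-\partial_u r)$ pointwise, and weighted decay of $\phi$, $D_u\phi$, $D_v\phi$ inherited from the $v^{-s}$ bound on $\mathcal{H}^+$, with $s>\tfrac12$. The Raychaudhuri equations would be integrated along $u$-constant and $v$-constant rays to control $\partial_u r$ and $\partial_v r$, the $Q$-evolution would be integrated to trap the charge, and the wave equation for $r\phi$ with nonlinear right-hand side $m^2\phi$, $iq_0 A\phi$, etc., would be treated as a perturbation. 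Crucially, since $s$ can be less than $1$ the decay of $\phi$ alone is not integrable, so one must close the bootstrap using an extra $v$-factor gained from $\Omega^2$ decay or from a Gr\"onwall-type argument on the twisted energy; this is the place where the massive/charged nature of the field makes the argument genuinely harder than in \cite{Mihalis1} and where, I expect, the bulk of the work lies.

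For the $C^0$ extension under $m^2=0$ and $4q_0^2e^2<1$, one checks that the Price exponent in this regime satisfies $s=1+\Re\sqrt{1-4e^2q_0^2}>1$, so $v\mapsto\phi_{|\mathcal{H}^+}$ and its $v$-derivative are integrable. I would then follow the Dafermos scheme: choose the gauge in which $-\partial_u r=1$ on an ingoing cone, show that $r$ extends continuously to a positive limit $r_{\mathcal{CH}}(u)>0$ along each outgoing null cone, propagate the bound $|\log\Omega^2|\lesssim 1$ uniformly up to the Cauchy horizon via the wave equation for $\log\Omega^2$ and the integrability of the sources $|D_v\phi|^2$ and $Q^2/r^4$, and conclude that $(r,\Omega^2,Q,\phi)$ admits continuous limits, whence the metric $g=-\Omega^2\,du\,dv+r^2\gamma_{\mathbb{S}^2}$ extends continuously.

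Finally, for the curvature blow-up I would use the assumed consistent averaged $L^2$ lower bound $Cv^{-p}\le\int_v^{+\infty}|D_v\phi|^2_{|\mathcal{H}^+}dv'$. The strategy, adapted from \cite{JonathanStab}, is to propagate this lower bound transversally: integrate the wave equation for $\phi$ along ingoing rays using the already-established upper bounds to show that a lower bound of comparable size persists on an ingoing null cone inside the blueshift region, and then transport it toward $\mathcal{CH}^+$ using the Klein-Gordon equation with the charge and mass terms handled as perturbations. Once a transverse integrated lower bound $\int|D_V\phi|^2\to\infty$ is in hand on a geodesic null vector field $V$, the identity $Ric(V,V)=2|D_V\phi|^2$ (which survives the charge/mass generalization since the Maxwell and mass parts of $\mathbb{T}_{\mu\nu}$ give no contribution when contracted twice with a null vector) yields the asserted $C^2$ curvature blow-up. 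The main obstacle throughout is the simultaneous loss of the mass monotonicity used in \cite{Mihalis1} and the loss of the $\phi$ conservation law used in \cite{JonathanStab}; both are recovered only in averaged form, and propagating averaged lower bounds through the nonlinear coupling with a dynamical charge is where the argument is most delicate.
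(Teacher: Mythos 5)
Your stability outline is broadly aligned with the paper's strategy, but it is vague precisely where the paper's novelty lies, and your instability strategy contains a step that would fail. On stability: a three-region decomposition (red-shift / no-shift / blue-shift) in the style of \cite{Mihalis1} is not enough once $s\leq 1$, because $\phi$, $r-r_{RN}$ and $\Omega^2-\Omega^2_{RN}$ are no longer integrable and one loses a power of $v$ at each integration. The paper's fix is concrete: it introduces two additional transition regions (in particular an early blue-shift region bounded by a curve at logarithmic distance $u+v+h(v)\sim \frac{2s}{2|K_-|}\log v$ from the no-shift region), and it systematically replaces $\partial_u r,\partial_v r$ by $\Omega^2$ via the Raychaudhuri equations so that the signed lower bounds on $\partial_u\log\Omega^2-2K$ and $\partial_v\log\Omega^2-2K$ let one integrate polynomial decay against an exponential weight without losing a power (Lemma \ref{calculuslemma}). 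Your phrase ``an extra $v$-factor gained from $\Omega^2$ decay or a Gr\"onwall-type argument'' gestures at this but does not supply the mechanism, and without the logarithmic-distance curve the bootstrap does not close for $\frac12<s\leq 1$.

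The genuine gap is in the instability part. You propose to ``integrate the wave equation for $\phi$ along ingoing rays'' to push the lower bound from $\mathcal{H}^+$ into the blue-shift region. But the hypothesis on $\mathcal{H}^+$ is only an \emph{averaged} $L^2$ lower bound, not a pointwise one, so integrating the transport equation for $\partial_v\phi$ along ingoing rays cannot produce a lower bound on an interior cone: pointwise propagation gives upper bounds, and the almost-conservation law that Luk--Oh use to transfer $L^2$ lower bounds is destroyed by both the mass term (wrong-sign contribution in $\mathbb{T}^{KG}$) and the charge (the scalar-field and Maxwell conservation laws become coupled) --- this is exactly the obstruction discussed in section \ref{methodinstab}. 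The paper's actual argument is an energy estimate with the Kodama vector field $T=\kappa^{-1}\partial_v-\iota^{-1}\partial_u$, exploiting the exact identity $\Pi^{(T)}_{\mu\nu}\mathbb{T}^{\mu\nu}=0$ even though $T$ is not Killing, the sign $\lambda\geq 0$ on $\mathcal{H}^+$ and the good sign of the charge terms in the red-shift region, together with the stability estimates to show the only wrong-sign flux term $\int\iota^{-1}|D_u\phi|^2$ is $o(v^{-p})$; this transfers the averaged lower bound to the flux through an intermediate curve $\gamma'=\{r-r_-=v^{1-2s+\eta}\}$, whose placement is delicate (close enough to $\mathcal{CH}^+$ that $\Omega^2\lesssim v^{-2s}$ there, close enough to $\mathcal{H}^+$ that $\kappa^{-1},\iota^{-1}$ remain bounded). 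Only to the future of $\gamma'$ does one propagate pointwise as you suggest. Without an $L^2$-flux argument of this kind your proof of \eqref{instabilityresult}, and hence of the curvature blow-up, does not go through. (Your final identity should also read $Ric(\Omega^{-2}\partial_v,\Omega^{-2}\partial_v)=\Omega^{-4}|\partial_v\phi|^2$, which is fine in the gauge $A_v\equiv 0$.)
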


	 The proof relies on a non-linear stability and instability study of the Reissner-Nordstr\"{o}m black hole interior. The $C^0$ extendibility was first proven by Dafermos in \cite{Mihalis1} in the uncharged and massless setting but it is really a direct adaptation of the methods of \cite{JonathanStab} that gives $C^0$ extendibility in the massless and charged (for $4q_0^2 e^2<1$ only) scalar field setting.

	 \begin{rmk}
	 One actually needs a much weaker assumption than conjecture \ref{Priceconj} : only a point-wise upper bound on the scalar field and its derivative is needed and an averaged $L^2$ lower bound on the derivative (c.f section \ref{results} for a precise statement).
	 \end{rmk}
	 
	 \begin{rmk}
	 	It is remarkable that the instability part relies only on an (averaged) lower bound on the scalar field but that no lower bound is required for the charge of the space-time.
	 \end{rmk}
	 
	  \begin{rmk}
	  	We do not prove $C^0$ extendibility in the case $4q_0^2 e^2 \geq 1$, which remains an open problem.
	  \end{rmk}
	  
	   \begin{rmk}
	   Even though we show that a $C^2$ invariant blows up, we do not show that given characteristic initial data on both event horizon satisfying our assumptions, the maximal globally hyperbolic development is (future) $C^2$ inextendible. This is because our result only applies in a neighbourhood of time-like infinity, in contrast\footnote{In \cite{JonathanStab} a special monotonicity property is exploited to propagate the curvature blow-up along the whole Cauchy horizon. Such a property is absent when $q_0 \neq 0$ or $m^2 \neq 0$. } with \cite{JonathanStab}, \cite{JonathanStabExt}. Nevertheless, it is likely that if one assumes that the data are everywhere close to  Reissner-Nordstr\"{o}m then one can use the methods of \cite{JonathanStab} to conclude $C^2$ inextendibility. We will however not pursue this.
	   \end{rmk}
	 
	 

	\subsection{Ideas of proof and methods employed} \label{ideaproof}
	
	In this last introductory section, we describe the techniques that we use to prove our main results as stated in section \ref{results} later. 
	Some methods are adapted and modified from the work \cite{JonathanStab} for the stability part and \cite{KerrInstab} for the instability part.
	
	\subsubsection{Methods for the stability part}
	
	In the $m^2=q_0=0$ case, stability was first proven by the seminal work of Dafermos \cite{Mihalis1} in the case $s > \frac{1}{2}$. His work considers geometric quantities $(r,\phi,\varpi)$ where $\varpi$ is the modified mass defined in \eqref{electromass}, $r$ is the area-radius and $\phi$ is the scalar field. However, these quantities do not decay - in particular $\varpi$ blows-up. Remarkably, this was overcome using the very special structure of the equation. This structure is not exhibited when the mass or charge of the scalar field are present.
	
	In contrast, the approach of Luk and Oh in \cite{JonathanStab} controls a non geometric coordinate dependent quantity $ \Omega^2$ namely the metric coefficient (c.f section \ref{Coordinates} for a definition). They actually compare $(\Omega^2,r,\phi)$ to their counterpart $(\Omega^2_{RN},r_{RN},0)$ on the Reissner-Nordstr\"{o}m background to which the space-time converges.
	
	This has the advantage that the \textbf{difference} of these quantities and their degenerate derivatives are bounded and in fact decay towards infinity, allowing for a $C^0$ stability statement.
	
	They establish this decay using the \textbf{non-linear wave structure in a null foliation} \bm{$(u,v)$} -as illustrated by Figure \ref{Figure2}- of the equation. 
 They integrate  the difference along the wave characteristics with the help of a \textbf{bootstrap} method after splitting the space-time into smaller regions.
	 
	The result of Luk and Oh is therefore more quantitative but on the other hand it relies crucially on the hypothesis $s>1$ giving an initial \textbf{integrable decay} 
	of $\Omega^2-\Omega^2_{RN}$, $r-r_{RN}$ and $\phi$.

	
This is why -although the method can be easily adapted in the presence of a charged and massive field- the proof fails \footnote{Essentially because $\Omega^2-\Omega^2_{RN}$, $r-r_{RN}$ and $\phi$ are no longer integrable. } for $ s \leq 1$ which is unfortunately the expectation in many interesting cases as claimed by Price's law of conjecture \ref{Priceconj}. \\

In our proof, we will again control the non-geometric coordinate dependent metric coefficients $\Omega^2$ but since the decay is so weak we cannot consider directly the difference with the background value.

Instead, we consider new natural combinations of these quantities -adapted to the geometry- which obey better estimates, notably those involving the degenerate derivatives $\partial_u$ and $\partial_v$.

In all previous work\footnote{ Notably in Dafermos' proof, the gauge derivatives of the scalar field $\frac{\partial_u \phi}{\partial_u r}$ and $\frac{\partial_v \phi}{\partial_v r}$ decay in the red-shift region and grow in the blue-shift region.}, the proof proceeds in splitting the 
space-time into a \textbf{red-shift region} near the event horizon which is very stable and a \textbf{blue-shift region} near the Cauchy horizon where many quantities can blow-up. This is illustrated by Figure \ref{Figure2}.	\begin{figure}
	
	\begin{center}
		
		\includegraphics[width=65 mm, height=65 mm]{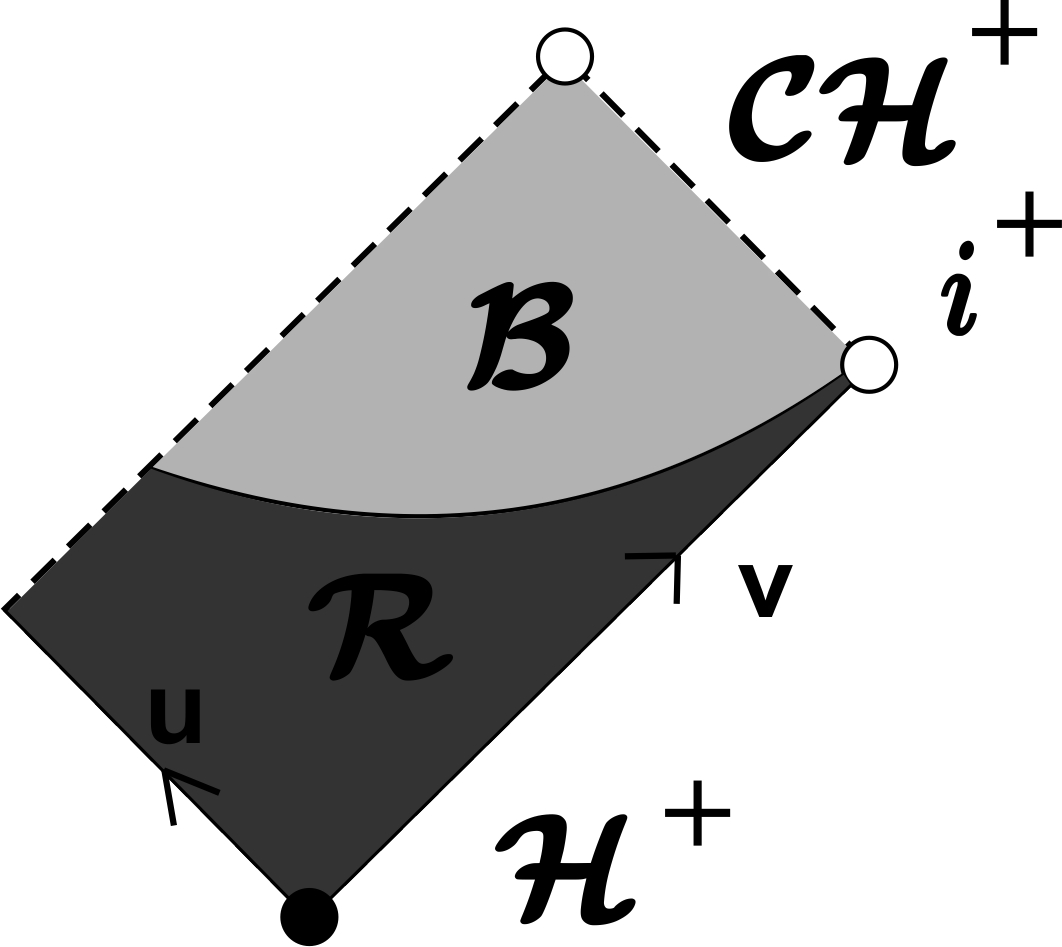}
		
	\end{center}
	
	\caption{Penrose diagram illustrating the division between a red-shift $\mathcal{R}$ and a blue-shift region $\mathcal{B}$.}
	\label{Figure2}
\end{figure}

In our case, we follow a similar philosophy although we need to further divide the space-time into more regions in view of the slow decay of the scalar field c.f Figure \ref{Figure3}.


In the red-shift region, decay is proven using that \bm{$|\frac{-4 \partial_u r}{\Omega^2}-1|$} and \bm{$|\frac{-4 \partial_v r}{\Omega^2}-1|$} decay polynomially \footnote{Note that on Reissner-Nordstr\"{o}m, these quantities are zero.}, thanks to the Raychaudhuri equations, which allows us to \textbf{replace} \bm{$\partial_v r$} \textbf{and} \bm{$\partial_u r$} \textbf{by} \bm{$\Omega^2 \backsimeq e^{2K_+ \cdot (u+v)}$}   which enjoys an exponential structure. This avoids to lose one power when we integrate a polynomial decay on a large region c.f Lemma \ref{calculuslemma}.
	
In the blue-shift region,  we essentially use the \textbf{polynomial decay of} \bm{$\partial_v r$}, \bm{$\partial_u r$} and the \textbf{exponential decay of} \bm{$\Omega^2$} to propagate the estimates. 
	
	Another important point is that we are able to find two \textbf{decaying quantities} \footnote{	These two quantities are zero on a Reissner-Nordstr\"{o}m background so we can expect them to be small in the perturbative setting. } which capture the red and blue shift effect : 
	\bm{$\partial_u \log(\Omega^2) -2K$} and \bm{$\partial_v \log(\Omega^2) -2K$} -where $K$ is a geometric quantity defined by \eqref{Kdef} -
and we control the sign of $K$ : positive in the red-shift region,  negative \footnote{Except maybe close to the Cauchy horizon where $K$ may blow-up like the Hawking mass.} in the blue-shift region. 
	
	In particular the good control of $\partial_v \log(\Omega^2) -2K$ can be fruitfully integrated to control \textbf{the smallness of \bm{$\Omega^2$} according to the different regions} but requires a bit of care close to the Cauchy horizon where $\partial_v \log(\Omega^2) -2K$ is no longer integrable in general. \\
	
	To sum up, unlike the strategy of \cite{JonathanStab} which purely deals with differences whose decay is propagated like a wave, we mainly use propagative arguments for the scalar field only and rely on the geometry of the space-time and on the Raychaudhuri equations \eqref{RaychU}, \eqref{RaychV} to prove our estimates.

		\subsubsection{Methods for the instability part} \label{methodinstab}
		
		The first instability result is due to Dafermos in \cite{Mihalis1}. Like its stability counterpart, it relies crucially on the special structure of the equation and notably a very specific monotonicity property that does not hold in the presence of a massive or charged scalar field.
		
		The work \cite{JonathanStab} also proves an instability statement. Nevertheless both the presence of the mass or of the charge also destroy the main argument. Indeed the argument makes use of an almost conservation law for the scalar field stress-energy tensor $\mathbb{T}^{SF}$. With a non-zero mass, a new term appears (c.f \eqref{3}) which has the wrong sign and cannot be easily controlled. 
		If the field is charged, this time the two conservation laws -previously independent- coming from $\mathbb{T}^{SF}$ and $\mathbb{T}^{EM}$ are now coupled and therefore Luk and Oh's method does not apply.
		
	Instead, we borrow ideas from a paper of Luk and Sbierski \cite{KerrInstab} in which the authors prove the linear instability of Kerr's interior. They simplify their methods and adapt them to the Reissner-Nordstr\"{o}m case \footnote{For a scalar field that is not necessarily spherically symmetric, unlike in the present paper.  } in an introductory section. The point is essentially to prove the blow-up of $\partial_V \phi$ on a constant $u$ hypersurface close to the Cauchy horizon, where $(u,V)$ is a regular coordinate system near the Cauchy horizon thanks to a \textbf{polynomial lower bound on} \bm{$\int_{v}^{+\infty} |\partial_v \phi|^2(u,v') dv'$ }.
		
		For this they use an integrated $L^2$ stability estimate coupled with a vector field method\footnote{For an introduction to the vector field method and interesting applications c.f \cite{BH}. } - namely an energy estimate- using the Killing vector field $\partial_t=\partial_v-\partial_u $ -which boils down to the \textbf{conservation of the energy}. They manage to control the integral of $\partial_v \phi$ on the event horizon by its values on an intermediate curve $\gamma_{\sigma}$ (which marks the limit between their red-shift and their blue-shift region)  \textbf{on which} \bm{$\Omega^2$} \textbf{decays polynomially} like $v^{-\sigma}$ for a very large power $\sigma>0$.
		
		After they control this value by the integral of $\partial_v \phi$ on a constant $u$ hypersurface close to the Cauchy horizon using again a vector field method with the vector field $\partial_v$. They conclude using \textbf{the positivity of the energy} which  allows for the $\partial_v$ terms to control the $\partial_t=\partial_v-\partial_u $ ones on $\gamma_{\sigma}$. \\

		Their approach relies on the linearity of the problem and in particular the use of a Killing vector field of the Reissner-Nordstr\"{o}m background , which does not exist any more in the non-linear setting that we consider.
		
		Another important difference is the existence -in the uncharged field case- of two independent (approximate) conservation laws, namely one for the scalar field $\mathbb{T}^{SF}$ -which the authors of \cite{KerrInstab} use- and one for the electromagnetic field $\mathbb{T}^{EM}$ - which they ignore. In our case the charged field interacts with the charge of the black hole coupling the Klein-Gordon and the Maxwell equation. This gives a single (approximate) conservation law involving $\mathbb{T} = \mathbb{T}^{KG} + \mathbb{T}^{EM}$.

		Moreover, the use of a vector field method in a blue-shift region for a charged and massive scalar field generates terms which do not decay, in particular those related to the charge \footnote{Which is expected to tend to a constant $e$ so that we cannot hope for decay, unlike for $\phi$ which is zero on the underlying Reissner-Nordstr\"{o}m background.} of the black hole $Q$  and which have the inadequate sign. \\
		
		Fortunately \textbf{in the red-shift region the charge terms have a good sign}  and the estimates of our stability part are strong enough to prove \textbf{decay of the scalar field terms} having the wrong sign.
		
		Moreover, despite Killing vector fields do not exist in general, the Kodama vector field $T$ -which is the non linear analog of $\partial_t$- induces a conservation law, which renders possible the use of a \textbf{vector field method in the red-shift region}. 
		
		There is however a difficulty : the coefficients of the Kodama vector field, unlike $\partial_t$, are expected to blow-up near the Cauchy horizon in general so the limiting curve $\gamma'$ between the red-shift and the blue-shift region -unlike in \cite{KerrInstab}- must be \textbf{close enough to the Cauchy horizon} so that we enjoy a sufficient decay of $\Omega^2$ in the future to propagate the decay of the wave equations but must also be \textbf{close enough to the event horizon} so that the Kodama vector field does not blow-up ! Compared to \cite{KerrInstab} where the limiting curve was chosen to be as far as possible in the future, this is a completely different strategy.
		
		This challenge is addressed using fine \textbf{stability estimates}, notably the quantities $\frac{-4 \partial_u r}{\Omega^2}$ and $\frac{-4 \partial_v r}{\Omega^2}$ which are precisely the coefficients of $T$ and that are controlled in the vicinity of $\gamma'$.
		
		In the blue-shift region, since vector field methods are now hard to use, we simply \textbf{propagate point-wise} $\partial_v \phi$ using the wave equation and the sufficient decay of $\Omega^2$ in the future of $\gamma'$. We strongly rely on the stability estimates proven in the first part.

		Lastly, once this lower bound is proven, we use exactly and without modifications the techniques employed in \cite{JonathanStab} to prove the blow-up of a $C^2$ geometric invariant quantity for any $s > \frac{1}{2}$ and the $H^1$ blow up of the scalar field if $s>1$, leading to the $C^1$ inextendibility of the $C^0$ extension constructed in the stability part.
		
\subsection{Outline of the paper} \label{outline}
		
		We conclude this introduction by presenting the rest of the article.  
		
		Section \ref{Section2} is devoted to preliminaries : we notably define the main notations, introduce the equations and express them in the form that we use later. A brief review of the Reissner-Nordstr\"{o}m background is also presented.

	In section \ref{results}, we phrase the main results of the paper precisely, namely the stability and the instability theorems. They are preceded by a reminder on the characteristic initial value problem and the coordinate dependency.

	In section \ref{proofstab}, the proof of the stability theorem is carried on. The proof of one minor proposition is deferred to appendix \ref{appendixproof} and a simple local existence lemma is proven in appendix \ref{appendixlemma}
	
		 In section \ref{proofinstab}, the proof of the instability theorem is carried on.
		
		Finally, in the appendix \ref{appendixapparent}, we use our stability framework to ``localise'' in coordinates the part of the apparent horizon that is close to time-like infinity.

	\subsection{Acknowledgements}
I would like to express my deepest gratitude to my PhD advisor Jonathan Luk for suggesting this problem, for his continuous enlightening guidance, for his precious advice, his patience and for his invaluable help to work in good conditions.
	
My special thanks go to Hayd\'{e}e Pacheco for her crucial graphical contribution, namely drawing the Penrose diagrams.

I also would like to thank two anonymous referees for valuable suggestions. 

I gratefully acknowledge the financial support of the EPSRC, grant reference number EP/L016516/1.

This work was completed while I was a visiting student in Stanford university and I gratefully acknowledge their financial support.

	\section{Geometric framework and equations} \label{Section2}
	
	\subsection{The equations in geometric form} \label{EinsteinMaxKG}

	We look for solutions to  the Einstein-Maxwell equations coupled with a  charged and massive scalar field $\phi$ of constant mass \footnote{$m^2 \geq 0$  ensures that the dominant energy condition is satisfied. It does not play a role for the proof of the stability estimates but is crucial for the instability part.} $m^2  \geq 0$ and constant charge $q_{0} \neq 0$  propagating according to the Klein-Gordon equation in curved space-time \footnote{One important difference compared to real scalar field models is that the Maxwell and the wave equations are now coupled because the field is charged.} : 
	
	A solution is described by a quadruplet $(M,g,\phi, F)$ - where $(M,g)$ is a Lorentzian manifold of dimension $3+1$, $\phi$ is a complex-valued \footnote{The second important difference with the uncharged case is that it is not no longer possible to take a real scalar field :  $\phi$ must be complex-valued.} function on $M$ and $F$ is a real-valued 2-form on $M$ - which satisfies the following equations :


	\begin{equation} \label{1.1}   Ric_{\mu \nu}(g)- \frac{1}{2}R(g)g_{\mu \nu}= \mathbb{T}^{EM}_{\mu \nu}+  \mathbb{T}^{KG}_{\mu \nu} ,    \end{equation} 
	
	\begin{equation}  \mathbb{T}^{EM}_{\mu \nu}=2\left(g^{\alpha \beta}F _{\alpha \nu}F_{\beta \mu }-\frac{1}{4}F^{\alpha \beta}F_{\alpha \beta}g_{\mu \nu}\right),
	\end{equation}
	\begin{equation}  \mathbb{T}^{KG}_{\mu \nu}= 2\left( \Re(D _{\mu}\phi \overline{D _{\nu}\phi}) -\frac{1}{2}(g^{\alpha \beta} D _{\alpha}\phi \overline{D _{\beta}\phi} + m ^{2}|\phi|^2  )g_{\mu \nu} \right), \end{equation} \begin{equation} \label{4.1} \nabla^{\mu} F_{\mu \nu}= \frac{ q_{0} }{2}i (\phi \overline{D_{\nu}\phi} -\overline{\phi} D_{\nu}\phi) , \; F=dA ,
	\end{equation} \begin{equation} \label{5.1} g^{\mu \nu} D_{\mu} D_{\nu}\phi = m ^{2} \phi ,
	\end{equation}
	
	where  $D:= \nabla+  iq_{0}A$ is the gauge derivative, $\nabla$ is the Levi-Civita connection of $g$ and $A$ is the potential one-form \footnote{ $F=dA$ is to be interpreted as `` there exists real-valued a one-form $A$ such that $F=dA$ ''. This determines $A$ up to a closed form only. It means that there is a gauge freedom, c.f section \ref{Coordinates}.}. $\mathbb{T}^{EM}_{\mu \nu}$ and  $\mathbb{T}^{KG}_{\mu \nu}$ are the electromagnetic and the Klein-Gordon stress-energy tensor respectively.

	\eqref{1.1} is the Einstein equation, \eqref{4.1} is the Maxwell equation and \eqref{5.1} is the Klein-Gordon equation. Note that they are all coupled one to another.
	
	\subsection{Metric in null coordinates, mass, charge and main notations} \label{Coordinates}
	
	Let $(M,g,\phi,F)$ be a spherically symmetric solution of the Einstein-Maxwell-Klein-Gordon equations. By this we mean that  $SO(3)$ acts on  $(M,g)$ by isometry with spacelike orbits and for all $R_{0} \in SO(3)$, the pull-back of $F$ and $\phi$ by $R_{0}$ coincides with itself. 
	
	We define $\mathcal{Q}=M/SO(3)$, the quotient 2-dimensional manifold induced by the action of $SO(3)$.
	
	$\Pi : M \rightarrow  \mathcal{Q} $ is the canonical projection taking a point of $M$ into its spherical orbit.

	The metric on $M$ is then given by $g= g_{\mathcal{Q}}+ r^{2}d \sigma_{\mathbb{S}^{2}}$  where $g_{\mathcal{Q}}$ is the push-forward of $g$ by $\Pi$ and $d\sigma_{\mathbb{S}^{2}}$ the standard metric on the sphere.
	
	$g_{\mathcal{Q}}$  as a general Lorentzian metric over a 2-dimensional manifold, can be written in null coordinates $(u,v)$ as a conformally flat metric :
	
	$$ g_{\mathcal{Q}}:= - \frac{\Omega^{2}}{2} (du  \otimes dv+dv  \otimes du).$$ 
	
	We define the area-radius function $r$ over $\mathcal{Q}$ by $r(p)= \sqrt{\frac{Area(\Pi^{-1}(p))}{4 \pi}}$.
	
	We can then define $\kappa$ and $\iota$ as : 
	
	\begin{equation} \label{kappa}
	\kappa = \frac{-\Omega^2}{4\partial_u r}\in \mathbb{R} \cup \{ \pm \infty\} ,
	\end{equation}
		\begin{equation} \label{iota}
	\iota = \frac{-\Omega^2}{4\partial_v r} \in  \mathbb{R} \cup \{ \pm \infty\}. 
	\end{equation}
	
	\begin{rmk}
		Notice that $\kappa$ is invariant under $u$-coordinate change : if $du'=f(u)du$, then in the new coordinate system $(u',v)$, $\kappa(u',v)=\kappa(u,v)$. 
		Similarly , $\iota$ is invariant under $v$-coordinate change. \footnote{Note however that rescaling $v$ also rescales $\kappa$ and rescaling $u$  rescales $\iota$.}
	\end{rmk}

	We can also define the Hawking mass and mass ratio as geometric quantities, at least in spherical symmetry :
	
	$$ \rho := \frac{r}{2}(1- g_{\mathcal{Q}} (\nabla r, \nabla r )),$$
	$$ \mu := \frac{2\rho}{r}.$$

	In what follows, we will abuse notation and denote by $F$ the 2-form over $\mathcal{Q}$ that is the push-forward by $\Pi$ of the electromagnetic 2-form originally on $M$, and same for $\phi$.
	
	It turns out that spherical symmetry allows us to set : 
	
	$$ F= \frac{Q}{2 r^{2}}  \Omega^{2} du \wedge dv, $$
	
	where $Q$ is a scalar function that we call the electric charge.
	
	\begin{rmk}
		It should be noted that in the Einstein-Maxwell-Scalar-Field of \cite{Mihalis1} and \cite{JonathanStab}, $Q \equiv e$ was forced to be a constant because it was coupled with vacuum Maxwell's equation $ div \; F=0$. \\
	\end{rmk}
	
	$F=dA$ also allows us to chose a spherically symmetric potential 
	$A$  written as : 
	
	$$ A = A_u du + A_v dv .$$
	
	The equations of section \ref{EinsteinMaxKG} are invariant under the following gauge transformation : 
	
	$$ \phi \rightarrow  e^{-i q_0 f } \phi ,$$
		$$ A \rightarrow  A+ d f. $$
	
	where $f$ is a smooth real-valued function. 
	
	Therefore we can choose the following gauge for some constant $v_0$ and for all $(u,v)$ :  
	
	\begin{equation} \label{gaugeAv}
	A_{v}(u,v) \equiv 0,
	\end{equation}
	\begin{equation} \label{gaugeAponctuelle}
	A_{u}(u,v_{0})=0.
	\end{equation}
	
	\begin{rmk}
		Notice that this gauge depends only on the null foliation and therefore is invariant if $u$ or $v$ is re-parametrized.
	\end{rmk}
	
	This gauge will be used in the rest of the paper, for $v_0$ to be specified in the statement of Theorem \ref{Stabilitytheorem}.

	For a  more justified and complete discussion of the Einstein-Maxwell-Klein-Gordon setting, c.f \cite{Kommemi}. \\

	Now we introduce the modified mass $\varpi$ that takes the charge $Q$ into account : 
	
	\begin{equation} \label{electromass}
\varpi := \rho + \frac{Q^2}{2r}= \frac{\mu r}{2} + \frac{Q^2}{2r} .
	\end{equation} 
	
	An elementary computation relates coordinate-dependent quantities to geometric \footnote{Notice that $1-\mu$ and $K$ do \underline{not} depend on the coordinate choice $(u,v)$.} ones :
	
	\begin{equation} \label{mu}
	1 - \mu = \frac{-4 \partial_u r \partial_v r}{\Omega^2} = \frac{-\Omega^2 }{4 \iota \kappa}= 1- \frac{2 \varpi}{r}+ \frac{Q^2}{r^2}.
	\end{equation}

	We then define the geometric quantity\footnote{On Reissner-Nordstr\"{o}m,  $2K= \partial_u \log |1-\mu|  =\partial_v \log |1-\mu|$. } $2K$ :

	\begin{equation} \label{Kdef}
	2K = \frac{2}{r^2}(\varpi- \frac{Q^2}{r}).
	\end{equation} 
	
	We will also denote, for fixed constants $M$ and $e$ :
	
	$$ 2K_{M,e}(r)=\frac{2}{r^2}(M- \frac{e^2}{r}).$$

Finally we introduce the following notation, first used by Christodoulou : 
	
	$$ \lambda = \partial_v r ,$$
	$$ \nu = \partial_u r. $$

	\subsection{The Reissner-Nordstr\"{o}m solution} \label{RNsolution}
In this section we present the sub-extremal Reissner-Nordstr\"{o}m solution. Because the space-time that we consider converges at late time towards a member of the Reissner-Nordstr\"{o}m family and that we aim at proving stability estimates, it is important to recall their main qualitative features to see which are conserved in the presence of a perturbation.
	\subsubsection{The Reissner-Nordstr\"{o}m interior metric}
	
	The Reissner-Nordstr\"{o}m black hole is a 2-parameter family of spherically symmetric and static space-times indexed by the charge and the mass $(e,M)$, which satisfy the Einstein-Maxwell equations i.e the system of section \ref{EinsteinMaxKG} with $\phi \equiv 0$ with  $\mathbb{R}_{+}^{*} \times \mathbb{S}^{2}$ initial data.
	
	We are interested in \underline{sub-extremal} Reissner-Nordstr\"{o}m black holes, which is expressed by the condition  $0<|e|<M$. 
	
	Define then for such $(e,M)$ : $$ r_+(M,e) = M+ \sqrt{M^2-e^2}>0,$$
		$$ r_-(M,e) = M- \sqrt{M^2-e^2}>0.$$
	
	The metric in the interior of the black hole can be written in coordinates as :

	\begin{equation} \label{RN}
	g_{RN}=\frac{\Omega^{2}_{RN}}{4}dt^{2}-4\Omega^{-2}_{RN}dr^{2}+r^{2}[ d\theta^{2}+\sin(\theta)^{2}d \psi^{2}],
	\end{equation}	\begin{equation} \label{OmegaRN}
	\Omega^{2}_{RN}(r):=-4(1-\frac{2M}{r}+\frac{e^2}{r^2}),
	\end{equation}

	Where $(r,t,\theta, \psi) \in (r_{-},r_{+}) \times \mathbb{R} \times  [0,\pi) \times [0, 2 \pi]$ .

	\subsubsection{$(u,v) $ coordinate system on Reissner-Nordstr\"{o}m background}
	
	We have seen in Section \ref{Coordinates} how to build any null coordinate $(u,v)$. Now that the metric is explicit, we would like to find such a $(u,v)$ system that is related to the variables $(r,t)$ appearing in equation \eqref{RN}.
	
	Define

	$$r^{*}=r+ \frac{1}{2 K_{+}} \log(r_{+}-r)+\frac{1}{2 K_{-}}  \log(r-r_{-}), $$

	where $ 2K_+(M,e)$ and  $ 2K_-(M,e)$, respectively called the surface gravity \footnote{For an physical explanation of the terminology, c.f \cite{Reall}.} of the event horizon and the surface gravity of the Cauchy horizon, are defined by \footnote{Note that  $K_-<0$ like in \cite{JonathanInstab} but unlike in \cite{JonathanStab}.} : 
	
	$$ K_+(M,e)= \frac{1}{ r_+^2}(M - \frac{e^2}{ r_+})= \frac{ r_+- r_-}{2 r_+^2}>0,$$
	
	$$ K_-(M,e)= \frac{1}{ r_-^2}(M - \frac{e^2}{ r_-})= \frac{ r_- - r_+}{2 r_-^2}<0.$$ 
	
	\begin{rmk}
		Note that if $\varpi=M$ and $Q=e$ then
		$K(r_+)= K_+(M,e)>0$ and $K(r_-)= K_-(M,e)<0$, where $K$ is defined in equation \eqref{Kdef}.
		
	\end{rmk}
	
	We then set $(u,v) \in \mathbb{R}  \times  \mathbb{R}  $ as :

	$$ v= \frac{1}{2}(r^{*}+t), \, u= \frac{1}{2}(r^{*}-t), $$
	
	and claim that equation \eqref{RN} can then be rewritten as : 
	
	$$ g_{RN} =- \frac{\Omega^{2}_{RN}}{2} (du  \otimes dv+dv  \otimes du)+r^{2}[ d\theta^{2}+\sin(\theta)^{2}d \psi^{2}].$$
	
	\subsubsection{Behaviour of $\Omega^2_{RN}$}
	
	We define \footnote{We could have defined in more generality the event horizon to be the past boundary of the black hole region and the Cauchy horizon the future boundary of the maximal globally hyperbolic development. Strictly speaking the Cauchy horizon is not part of the space-time but  can be attached as a double null boundary and we then consider the space-time as a manifold with corners.} the event horizon $\mathcal{H}^+=\{ u\equiv -\infty, v\in \mathbb{R} \}$, and the Cauchy horizon $\mathcal{CH}^+=\{ v\equiv +\infty, u \in \mathbb{R} \}$
	
	$\Omega^2_{RN}$ cancels on both $\mathcal{H}^+$ and $\mathcal{CH}^+$. A computation shows that : 
	
	$$ \Omega^{2}_{RN} \sim_{ r \rightarrow r_{+} } C_{e,M} e^{2K_{+}r^{*}}=C_{e,M} e^{2K_{+}\cdot (u+v)},$$
	
	and similarly that :

	$$ \Omega^{2}_{RN} \sim_{ r \rightarrow r_{-} } C'_{e,M} e^{2K_{-}r^{*}}= C'_{e,M} e^{2K_{-}\cdot (u+v)},$$
	
	for some $C_{e,M}>0$, $C_{e,M}'>0$.
	
	\begin{rmk}
		Notice that $\Omega^2_{RN}$ exhibits an exponential behaviour in $(u+v)$, exponentially increasing from 0 near the event horizon and exponentially decreasing to 0 near the Cauchy horizon.
	\end{rmk}
	
	Notice also that for $r$ bounded away from $r_{+}$ and $r_{-}$, $ \Omega^{2}_{RN}$ is upper and lower bounded. 
	
	\subsubsection{Kruskal coordinates $(U,V)$ and Eddington-Finkelstein coordinates $(U,v)$, $(u,V)$}
	
	From the previous section, one could fear that the metric could be singular across the horizons $\mathcal{H}^+$ and $\mathcal{CH}^+$. Actually it is not : like for the Scharwzchild's event horizon horizon, it suffices to define Kruskal-like coordinates $(U,V)$ from the $(u,v)$ coordinates as : 
	$$U:=\frac{1}{2K_{+}}e^{2K_{+}u}, $$ and $$
	V:=1-\frac{1}{2|K_{-}|}e^{2K_{-}v}.$$
	
	Note that $U$ and $V$ now range in $(U,V) \in [0, +\infty) \times  (-\infty, 1]$ and that 
	$\mathcal{H}^+ = \{U \equiv 0\}$ ; $\mathcal{CH}^+ = \{V \equiv 1\}$. \\
	
	We then write the metric in the Eddington-Finkelstein-type mixed $(U,v)$ coordinates as : 
	
	$$ g_{RN}:= - \frac{\Omega^{2}_{RN,H}}{2} (dU  \otimes dv+dv  \otimes dU)+r^{2}[ d\theta^{2}+\sin(\theta)^{2}d \psi^{2}].$$
	
	We find that $(U,v)$ is a regular coordinate system near the event horizon $\mathcal{H}^+$ :  
	
	$$\Omega^{2}_{RN,H}(U,v)= -\frac{1}{2K_{+}U}(1-\frac{2M}{r}+\frac{e^2}{r^{2}}) \rightarrow_{U \rightarrow 0  } C_{e,M} e^{2K_{+}v}.$$
	
	In $(u,V)$ coordinates we write now write the metric as : 
	
	$$ g_{RN}:= - \frac{\Omega^{2}_{RN,CH}}{2} (du  \otimes dV+dV  \otimes du)+r^{2}[ d\theta^{2}+\sin(\theta)^{2}d \psi^{2}].$$

	We then see that $(u,V)$ is a regular \footnote{Moreover, as mentioned in remark \ref{RNSCC} the metric is actually smoothly extendible beyond $\mathcal{CH}^+$, which would pose a problem for the strong cosmic censorship conjecture  but does not because Reissner-Nordstr\"{o}m is expected to be non generic.}
	coordinate system near the Cauchy horizon $\mathcal{CH}^+$ : 
	
	$$\Omega^{2}_{RN,CH}(u,V)= \frac{1}{2K_{-}(1-V)}(1-\frac{2M}{r}+\frac{e^2}{r^{2}}) \rightarrow_{V \rightarrow 1  } C_{e,M}' e^{-2K_{-}u}.$$

	\subsubsection{Constant quantities on Reissner-Nordstr\"{o}m}
	
	Since we consider the stability of a Reissner-Nordstr\"{o}m background under perturbation, it is useful to identify which quantities are zero on this fixed background : these are the ones that we can hope decay for in the non-linear perturbative setting with the Klein-Gordon field. 
	
	Reissner-Nordst\"{o}rm has four main qualitative features which distinguishes it from general dynamical solutions :
	
	\begin{enumerate}
		\item  Both the charge and the modified mass are fixed : 
		
		$$ \varpi \equiv M, $$
		
		$$ Q \equiv e. $$
		
		Hence $1-\mu = 1- \frac{2M}{r}+ \frac{e^2}{r^2}$.
		
		\item The metric is symmetric \footnote{Which is essentially equivalent to the fact that $\partial_t$ is a Killing vector field or that $\Omega^2_{RN}(r)$ is a sole function of $r$.} in $u$ and $v$  and in particular :  
		
		$$\partial_{u}r=\partial_{v}r,$$
		
		$$ \partial_{u}\log(\Omega^2_{RN})=\partial_{v}\log(\Omega^2_{RN})= \frac{2}{r^2}(M - \frac{e^2}{r})= 2K_{M,e}(r).$$
		
		\item The horizons are \underline{constant $r$} null hypersurfaces :  $$\mathcal{H}^+=\{ u\equiv -\infty, v\in \mathbb{R} \}=\{r \equiv r_{+}\},$$ $$\mathcal{CH}^{+}=\{ v\equiv +\infty, u \in \mathbb{R} \}=\{r\equiv r_{-}\}.$$ Hence $\partial_{v}r_{|\mathcal{H}^+} \equiv 0$ and  $\partial_{u}r_{|\mathcal{CH}^+} \equiv 0$ which is consistent with the following relation : 
		
		$$ \partial_{u}r=\partial_{v}r= 1- \frac{2M}{r}+ \frac{e^2}{r^2}.$$

		\item The event horizon $\mathcal{H}^+$ coincides with the apparent horizon $\mathcal{A}:= \{ \partial_v r =0\}$ so all the 2-spheres inside the black hole are trapped.
		
		This does not hold for general dynamical space-times where $\mathcal{A}$ is strictly in the future of $\mathcal{H}^+$. 
		
		However, in the perturbative regime, we can expect that $\mathcal{A}$ is not too far \footnote{Actually we can prove that $ 0 \leq \partial_v r_{|\mathcal{H}^+} \lesssim v^{-2s}$ if $|\partial_v \phi| \lesssim v^{-s}$.} from $\mathcal{H}^+$, c.f Appendix \ref{appendixapparent}.
		
	\end{enumerate}

	In the end, we can sum up all the relations by : 
	
	$$\partial_{u}r=\partial_{v}r=1-\frac{2M}{r}+\frac{e^2}{r^{2}}=-\frac{\Omega^{2}_{RN}}{4} \leq 0, $$
	
	which also means that :  
	
	$$ \kappa_{RN} = \iota_{RN} \equiv 1. $$

	\subsection{The Einstein-Maxwell-Klein-Gordon equations in null coordinates} \label{eqcoord}
	
	Finally, we express the Einstein-Maxwell-Klein-Gordon system in spherical symmetry in any (u,v) coordinates as in section \ref{Coordinates} and under the gauge choice for the potential \eqref{gaugeAv}, \eqref{gaugeAponctuelle}. 
	
	We start by the wave part of the Einstein equation :

	\begin{equation}\label{Radius}\partial_{u}\partial_{v}r =\frac{- \Omega^{2}}{4r}-\frac{\partial_{u}r\partial_{v}r}{r}
	+\frac{ \Omega^{2}}{4r^{3}} Q^2 +  \frac{m^{2}r }{4} \Omega^2 |\phi|^{2}  =-\frac{\Omega^2}{4}.2K +    \frac{m^{2}r }{4} \Omega^2 |\phi|^{2} , \end{equation}
	
	\begin{equation}\label{Omega}
	\partial_{u}\partial_{v} \log(\Omega^2)=-2\Re(D_{u} \phi \partial_{v}\bar{\phi})+\frac{ \Omega^{2}}{2r^{2}}+\frac{2\partial_{u}r\partial_{v}r}{r^{2}}- \frac{ \Omega^{2}}{r^{4}} Q^2,
	\end{equation}
	
	 the Raychaudhuri equations : 
	
	\begin{equation}\label{RaychU}\partial_{u}(\frac {\partial_{u}r}{\Omega^{2}})=\frac {-r}{\Omega^{2}}|  D_{u} \color{black}\phi|^{2}, \end{equation}
	
	\begin{equation} \label{RaychV}\partial_{v}(\frac {\partial_{v}r}{\Omega^{2}})=\frac {-r}{\Omega^{2}}|\partial_{v}\color{black}\phi|^{2},\end{equation}
	
	 the Klein-Gordon wave equation : 
	
	\begin{equation}\label{Field}
	\partial_{u}\partial_{v} \phi =-\frac{\partial_{u}\phi\partial_{v}r}{r}-\frac{\partial_{u}r \partial_{v}\phi}{r} +\frac{ q_{0}i \Omega^{2}}{4r^{2}}Q \phi
	-\frac{ m^{2}\Omega^{2}}{4}\phi- i q_{0} A_{u}\frac{\phi \partial_{v}r}{r}-i q_0 A_{u}\partial_{v}\phi,\end{equation}
	
	and the propagative part of Maxwell's equation : 
	
	\begin{equation} \label{chargeUEinstein}
	\partial_u Q = -q_0 r^2 \Im( \phi \overline{ D_u \phi}) .
	\end{equation}
	
	\begin{equation} \label{ChargeVEinstein}
	\partial_v Q = q_0 r^2 \Im( \phi \overline{\partial_v \phi}) .
	\end{equation}
	
	Also the existence of an electro-magnetic potential $A$ implies that :

	\begin{equation} \label{potentialEinstein}
	\partial_v A_u = \frac{-Q\Omega^2}{2r^2} .
	\end{equation}

	Now we can reformulate the former equations to put them in a form that is more convenient to use. 
	
	It is interesting to use \eqref{Radius}, \eqref{RaychU}, \eqref{RaychV}, \eqref{chargeUEinstein}, \eqref{ChargeVEinstein} to derive an equation for the modified mass :

	\begin{equation} \label{massUEinstein}
	\partial_u \varpi = \frac{r^2}{2\iota}| D_u\color{black} \phi |^2  + \frac{m^2}{2} r \partial_u r|\phi|^2 -  i \frac{q_0}{2} Q r\Im( \bar{\phi} D_u \phi),
	\end{equation}
		\begin{equation} \label{massVEinstein} 
	\partial_v \varpi = \frac{r^2}{2\kappa}| \partial_v \phi |^2 +  \frac{m^2}{2} r \partial_v r |\phi|^2+  i  \frac{q_0}{2}  Q r \Im( \bar{\phi} \partial_v \phi).
	\end{equation}
	
	Moreover, the following reformulation of \eqref{Radius} will be  useful :
	
	\begin{equation} \label{Radius2}
	\partial_v \log( | \partial_u r |) = \kappa (2K - r m^2 |\phi|^2).
	\end{equation} 
	
	\begin{rmk}
		Note that the left-hand-side, like $\kappa$ is invariant under u-coordinate changes.
	\end{rmk}
	
	We also reformulate \eqref{Omega} as : 	
	\begin{equation} \label{Omega2}
	\partial_{u}\partial_{v} \log(\Omega^2)=\kappa \partial_u (2K)-2\Re(D_{u} \phi \partial_{v}\bar{\phi})- \frac{2\kappa}{r^2}( \partial_u \varpi- \frac{\partial_u Q^2}{r})= \\
	\iota \partial_v (2K) -2\Re(D_{u} \phi \partial_{v}\bar{\phi})-  \frac{2\iota}{r^2}( \partial_v \varpi- \frac{\partial_v Q^2}{r}).
	\end{equation}
	
	We can also rewrite \eqref{Field} to control $|\partial_{v} \phi|$ more easily : 
	
	\begin{equation}\label{Field2}
	e^{-iq_0 \int_{u_{0}}^{u}A_u} \partial_{u}( e^{iq_0 \int_{u_{0}}^{u}A_u}\partial_{v} \phi) =-\frac{\partial_{v}r D_u\phi}{r}-\frac{\partial_{u}r \partial_{v}\phi}{r} +\frac{ q_{0}i \Omega^{2}}{4r^2}Q \phi
	-\frac{ m^{2}  \Omega^{2}}{4}\phi, \end{equation}
	
	or to control $|D_u \phi|$ more easily :  
	
	\begin{equation}\label{Field3}
	\partial_{v}(D_{u} \phi) =-\frac{\partial_{u}r \partial_{v}\phi}{r} -\frac{\partial_{v}r D_u\phi}{r}
	-\frac{ m^{2}  \Omega^{2}}{4}\phi. \end{equation}
	
	Finally we can also write the Raychaudhuri equations as : 
	
	\begin{equation}\label{RaychU2}\partial_{u}(\kappa^{-1})=\frac {4r}{\Omega^{2}}|  D_{u} \color{black}\phi|^{2}, \end{equation}
		\begin{equation} \label{RaychV2}\partial_{v}( \iota^{-1})=\frac {4r}{\Omega^{2}}|\partial_{v}\color{black}\phi|^{2}.\end{equation}

	\section{Precise statement of the main results} \label{results}
	
	\subsection{Preliminaries on characteristic initial value problem and coordinate choice}	
	
	Before stating the theorem, we want to demystify a little the framework used to define the gauges and the coordinate dependent objects. The context is the same as for \cite{Mihalis1} and  \cite{JonathanStab} , the only difference is the presence of the (dynamical) charge of the space-time $Q$.
	
	We want to phrase the characteristic initial value problem for the Einstein-Maxwell-Klein-Gordon system of section \ref{EinsteinMaxKG}. The reader familiar with the framework can skip this section. \\
	
	We first consider two connected and oriented smooth, 1-dimensional manifolds  $\mathcal{C}_{in} $ and $ \mathcal{C}_{out}$ -each with a boundary point (c.f Figure \ref{MihalisPenrose}.) 

	
	We can identify the surfaces at their boundary point to get $\mathcal{C}_{in} \cup_{\{p\}} \mathcal{C}_{out}$, on which we now want to build a $(U,v)$ \textbf{null} regular coordinate system. For this, we have four choices to make : 
	
	\begin{enumerate}
		
		\item Choosing an increasing \footnote{By increasing, we mean parallel to the orientation of the 1-dimensional surface.}  parametrization $U$ of $\mathcal{C}_{in}$.
		
		\item Choosing an increasing parametrization $v$ of $\mathcal{C}_{out}$.
		
		\item Choosing the $U$-coordinate  $U_0 \in \mathbb{R} \cup \{ \pm \infty\}$ of the intersection point p.
		
		\item  Choosing the $v$-coordinate  $v_0 \in \mathbb{R} \cup \{ \pm \infty\}$ of the intersection point p.
		
	\end{enumerate}
	
	In this coordinate system, $\mathcal{C}_{in}$ and $  \mathcal{C}_{out}$ can be written as : 
	
	$$ \mathcal{C}_{in} = \{ (U, v_0), U \in [U_0, U_{max}) \},$$
	
	$$ \mathcal{C}_{out} = \{ (U_0, v), v \in [v_0, v_{max}) \},$$
	
	with $U_{max}\in \mathbb{R} \cup \{ \pm \infty\}$,  $v_{max} \in \mathbb{R} \cup \{ \pm \infty\}$. \\
	

		
		


		As our initial data we shall consider $(r,\Omega^2_H,\phi,A)$  as follows : 
		
		$ (r,\phi)$ are $C^1$ scalar \footnote{It should be emphasized that $r$ and $\phi$ -like the metric $g$ will be later- are geometric quantities, namely they do not depend on the coordinate choice. However $\Omega^2_H$ \textbf{does} depend on the coordinate choice.} functions and $A$ a $C^1$ 1-form on $\mathcal{C}_{in} \cup_{\{p\}} \mathcal{C}_{out}$.

		$r$ and $\phi$ induce -in the $(U,v)$ coordinate system- some  functions on $ [v_0, v_{max}) \times \{U_0\} \cup \{v_0\} \times [U_0, U_{max})$ that we still call $r$ and $\phi$ by notation abuse. 
		
		$A$ induces a function $A_v$ on $ [v_0, v_{max}) \times \{u_0\}$ by $A_{|C_{out}} = A_v dv$
		and another function $A_U$ on $ \{v_0\} \times [U_0, U_{max})$ by   $A_{|C_{in}} = A_U dU$.

		The remaining part of the data will be a $C^1$ function $\Omega^2_H :  [v_0, v_{max}) \times \{u_0\} \cup \{v_0\} \times [U_0, U_{max}) \rightarrow \mathbb{R}_+^{*}$. We will use this later to build a metric of the form $g =  - \frac{\Omega^{2}_H}{2} (dU  \otimes dv+dv  \otimes dU)++r^{2}[ d\theta^{2}+\sin(\theta)^{2}d \psi^{2}]$.
		
		The prescription of $\Omega^2_H$ as above will be coordinate dependent.
		
		This coordinate dependent framework allows us to define the Raychaudhuri equations on the initial surfaces, seen as constraints for the characteristic initial value problem.
		
		However, they are still valid under any re-parametrization of $U$ or $v$ :  
		
			\begin{defn} [Raychaudhuri equations]		We say that the data $(r,\Omega^2_H, \phi, A)$ satisfy the Raychaudhuri  equations if on $\{v_0\} \times [U_0, U_{max})$ : 		$$\partial_{U}(\frac {\partial_{U}r}{\Omega^{2}_{H}})=\frac {-r}{\Omega^{2}_{H}}|  D_{U} \phi|^{2}. $$		And on $ [v_0, v_{max}) \times \{U_0\} $ : 	$$\partial_{v}(\frac {\partial_{v}r}{\Omega^{2}_{H}})=\frac {-r}{\Omega^{2}_{H}}|D_{v}\phi|^{2},$$		where $D$ depends on $A$ by $ D = \partial + i q_0 A$ as an operator on scalar functions.		\end{defn}

		   	We now want to talk of ``the solution'' - up to gauge transforms- of the Einstein-Maxwell-Klein-Gordon equations. To do so, we solve the partial differential equation system of section \ref{eqcoord} ``abstractly'' for some data $(r,\Omega^2,\phi,A)$. Since it is standard that the Raychaudhuri equations -once satisfied on the initial surfaces- are propagated, we see the solution actually satisfies the  Einstein-Maxwell-Klein-Gordon equations in their geometric form of section \ref{EinsteinMaxKG}.
	
	\begin{thm}[Characteristic initial value problem] \label{ExistenceMGHD}
		
		Let $\mathcal{C}_{in} $, $ \mathcal{C}_{out}$ be as before.

		We assume moreover that the data $(r,\Omega^2_H, \phi, A)$ are as before and satisfy the Raychaudhuri equations. Moreover we suppose that $r>0$. \\
		
		Then   there exists a unique $C^{1}$ maximal globally hyperbolic development $(M,g,\phi,F)$, spherically symmetric solution of Einstein-Maxwell-Klein-Gordon equations of section \ref{EinsteinMaxKG}  such that \begin{enumerate}
			
			\item $\mathcal{C}_{out}$ and $\mathcal{C}_{in}$ embed into $\mathcal{Q}=\mathcal{M}/SO(3)$ as null boundaries with respect to the metric $g$.
			
			\item     $D^{+}(\mathcal{C}_{in} \cup_{\{p\}} \mathcal{C}_{out} )\cap \mathcal{Q}=J^{+}({\{p\}})\cap \mathcal{Q}$ 
			
			where $D^{+}$ denotes the future domain of dependence and $J^{+}$ the causal future \footnote{For a definition c.f \cite{Reall}.}.
			
			\item  $(M,g,\phi,F)$ satisfy : 
			
			$$g =  - \frac{\Omega^{2}_H}{2} (dU  \otimes dv+dv  \otimes dU)+r^{2}[ d\theta^{2}+\sin(\theta)^{2}d \psi^{2}],$$
			
			$$F=dA.$$ And  $(r,\Omega^2_H,\phi, A)$ restrict on the initial surfaces to the value prescribed
			
			 by the initial data $(r,\Omega^2_H,\phi, A)_{|\mathcal{C}_{in} \cup_{\{p\}} \mathcal{C}_{out}}$.
			
			\item The equations in null coordinates of section \ref{eqcoord} are satisfied.

		\end{enumerate}

	\end{thm}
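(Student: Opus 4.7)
The plan is to reduce the geometric Einstein-Maxwell-Klein-Gordon system to a closed quasilinear PDE system in the null coordinates $(U,v)$, solve it locally by Picard iteration on a characteristic rectangle, propagate the Raychaudhuri constraints from the initial surfaces, and then $SO(3)$-extend to the $4$-manifold. Maximality is then standard.

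First I fix the gauge $A_v \equiv 0$ and $A_U(U, v_0) = 0$ from section \ref{Coordinates}, so that $A_U(U,v)$ is recovered by integrating \eqref{potentialEinstein} in $v$ once $(r, \Omega^2_H, Q)$ are known. The remaining unknowns $(r, \Omega^2_H, \phi, Q)$ satisfy the mixed equations \eqref{Radius}, \eqref{Omega}, \eqref{Field} and the charge transport equations \eqref{chargeUEinstein}, \eqref{ChargeVEinstein}, which I take as the reduced evolution system. Characteristic data for this system on $\mathcal{C}_{in} \cup_{\{p\}} \mathcal{C}_{out}$ is provided by the restriction of $(r, \Omega^2_H, \phi, A)$, together with the transverse first derivatives $\partial_v r|_{\mathcal{C}_{in}}$ and $\partial_U r|_{\mathcal{C}_{out}}$ fixed at the corner $p$ (say both equal to $0$) and propagated along the initial surfaces via the Raychaudhuri equations \eqref{RaychU2}, \eqref{RaychV2}. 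On a small double-null rectangle $[U_0, U_0 + \delta] \times [v_0, v_0 + \delta]$ on which $r$ remains bounded away from $0$, each mixed equation $\partial_U \partial_v X = G(\ldots)$ is equivalent to $X(U,v) = X(U_0, v) + X(U, v_0) - X(U_0, v_0) + \int_{U_0}^U\!\!\int_{v_0}^v G$, as used in the existence lemma of appendix \ref{appendixlemma}; together with the transport equations for $Q$ and $A_U$ this gives a fixed-point formulation to which a standard Picard contraction argument in $C^1$ applies for $\delta$ small, producing a unique $C^1$ local solution of the reduced system.

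The crux is constraint propagation: nothing in the reduced system enforces the Raychaudhuri relations \eqref{RaychU}, \eqref{RaychV} off the initial surfaces. Introducing
$$\mathcal{E}_U := \partial_U\!\left(\frac{\partial_U r}{\Omega^2_H}\right) + \frac{r}{\Omega^2_H}|D_U \phi|^2, \qquad \mathcal{E}_v := \partial_v\!\left(\frac{\partial_v r}{\Omega^2_H}\right) + \frac{r}{\Omega^2_H}|\partial_v \phi|^2,$$
one has $\mathcal{E}_U \equiv 0$ on $\mathcal{C}_{in}$ and $\mathcal{E}_v \equiv 0$ on $\mathcal{C}_{out}$ by the constraint assumption on the data. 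A direct computation, combining \eqref{Radius}, \eqref{Omega}, \eqref{Field} (and, for the charge and potential terms, \eqref{chargeUEinstein}--\eqref{potentialEinstein}), shows that $\partial_v \mathcal{E}_U$ is a bounded $C^0$-linear expression in $\mathcal{E}_U$ alone, and symmetrically $\partial_U \mathcal{E}_v$ in $\mathcal{E}_v$, so Gronwall forces $\mathcal{E}_U \equiv \mathcal{E}_v \equiv 0$ throughout the rectangle. This computation --- where the specific structure of the coupled system, via the twice-contracted Bianchi identity, enters in an essential way --- is the step I expect to be the main obstacle; once it is done, the remaining verifications are routine algebraic identities.

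Having established the reduced system with its propagated constraints, set $M := \mathcal{Q} \times \mathbb{S}^2$ with the warped-product metric $g = -\frac{\Omega^2_H}{2}(dU \otimes dv + dv \otimes dU) + r^2 d\sigma_{\mathbb{S}^2}$, extend $\phi$ by $SO(3)$-invariance and set $F = dA$. A direct check using the equations of section \ref{eqcoord}, together with the fact that $\mathcal{E}_U = \mathcal{E}_v = 0$, recovers the geometric system \eqref{1.1}--\eqref{5.1}; the compatibility of the two charge equations \eqref{chargeUEinstein}, \eqref{ChargeVEinstein} follows from commuting $\partial_U$ and $\partial_v$ applied to $Q$ and using \eqref{Field3}. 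The maximal globally hyperbolic development and its uniqueness then follow by the usual Zorn's lemma argument on the partially ordered set of spherically symmetric $C^1$ developments of the data ordered by isometric embedding, using local uniqueness in $(U,v)$ coordinates to patch overlapping local solutions and the causal property $D^+(\mathcal{C}_{in} \cup_{\{p\}} \mathcal{C}_{out}) \cap \mathcal{Q} = J^+(\{p\}) \cap \mathcal{Q}$ to identify the domain of the maximal extension.
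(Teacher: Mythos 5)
Your proposal is correct and follows essentially the same route the paper takes: the paper in fact offers no proof of Theorem \ref{ExistenceMGHD}, stating only that one solves the null-coordinate system of section \ref{eqcoord} ``abstractly,'' that propagation of the Raychaudhuri constraints once they hold on the initial surfaces is standard, and deferring the uniqueness discussion to \cite{Costa}; your Picard iteration on a characteristic rectangle, Gronwall constraint propagation, and Zorn-type maximality argument are the standard way to fill in exactly those steps. One slip worth correcting: the transverse derivatives $\partial_v r$ and $\partial_U r$ at the corner $p$ are not free data to be ``set to $0$'' --- they are forced to equal the tangential derivatives of the prescribed $r$ along $\mathcal{C}_{out}$ and $\mathcal{C}_{in}$ respectively --- but since your double-integral fixed-point formulation only uses the restrictions of the unknowns to the two initial surfaces, nothing in the argument actually depends on this parenthetical.
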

	
	For a more thorough discussion of the uniqueness problem in that framework, c.f \cite{Costa}.

	\subsection{The stability theorem}
	
	We can now formulate the main stability theorem. The main point is the presence of a Cauchy horizon, reflected by the form of the Penrose diagram, instead of space-like Schwarzschild-type singularity.

	\begin{thm}[Non-linear stability theorem] \label{Stabilitytheorem}
		
		Let $\mathcal{C}_{in} $, $ \mathcal{C}_{out}$ and $(r, \phi, \Omega^2_H, A)$ satisfy the assumptions of Theorem \ref{ExistenceMGHD}.
		
		Moreover, we will make the following geometric assumptions : 
		
		\begin{hyp}
			$\mathcal{C}_{out}$ is affine complete \footnote{We define affine completeness by the relation  $  \int_{v_0}^{v_{max}} \Omega^2_H(U_{0},v) dv = +\infty$. This is a \textbf{coordinate-independent} statement. }.	
		\end{hyp}
		\begin{hyp}
			$r>0$ is a strictly decreasing function on	$\mathcal{C}_{in}$ with respect to any increasing parametrization.
		\end{hyp}
		
		From now on we will denote $\mathcal{H}^+:=\mathcal{C}_{out}$ and call $\mathcal{H}^+$ the event horizon.

		For some constant $v_0 >0$ , we parametrize  $\mathcal{H}^+:=\mathcal{C}_{out} = \{ U \equiv 0 , v \geq v_0\}$ with a coordinate $v$  defined \footnote{It is then easy to see that \eqref{Radius} and assumption \ref{fieldevent} together with the affine completeness prove that $v_{max}= +\infty$.} by 
		
		\begin{equation} \label{gauge2}
		\kappa_{|\mathcal{H}^+} = (\frac{-\Omega^2_{H}(0,v)}{4\partial_{U}r(0, v)})_{|\mathcal{H}^+}\equiv 1,
		\end{equation}

		and for some $U_{max} >0$, we parametrize $\mathcal{C}_{in} = \{ v \equiv v_0 , 0 \leq  U \leq U_{max} , \}$  with a coordinate $U$ defined by 
		\begin{equation} \label{gauge1}
		(\partial_{U}r)_{|\mathcal{C}_{in}}(U, v_0) \equiv -1.
		\end{equation}
		
		We also make the following no-anti-trapped surfaces \footnote{Notice that this assumption together with \eqref{RaychU} proves that $\partial_U r <0$ everywhere on the space-time.} assumption : 
		
		\begin{hyp}
		$	\partial_U r(0,v)_{| \mathcal{H}^+} <0$
		\end{hyp}

		We assume the following decay on the field in $(U,v)$ coordinates  :  there exists $ C>0$ and $s>\frac{1}{2}$ such that 
		
		\begin{hyp} \label{fieldevent}

			$$|  \phi(0,v)| _{|\mathcal{H}^+} +| \partial_{v} \phi(0,v)| _{|\mathcal{H}^+} \leq  Cv^{-s},$$

		\end{hyp}
		
		\begin{hyp} \label{fieldUevent} \footnote{Notice that in the gauge \eqref{gaugeAponctuelle}, this is equivalent to saying $| \partial_{U} \phi| (U, v_0) \leq C. $}\label{fieldv0}
			$$| D_{U} \phi| (U, v_0) \leq C. $$

		\end{hyp}
		
		We also ask the following convergences towards infinity on the event horizon :

		\begin{hyp} \label{radiusevent}
			$$r_{|\mathcal{H}^+}(0,v) \rightarrow r_{\infty} $$ as $v  \rightarrow +\infty,$ 
			
			where $r_{\infty} >0$ is a constant.
		\end{hyp}
		
		\begin{hyp} \label{hypchargeevent}
			$$ 0< Q_{+}<r_{\infty}, $$ 
			
			where $Q_{+}:=\limsup_{v \rightarrow +\infty} |Q|_{|\mathcal{H}^+} $
		\end{hyp}

		We consider the unique $C^{1}$ maximal globally hyperbolic development $(M,g,\phi,F)$ of Theorem \ref{ExistenceMGHD}, 
		
		Then, after restriction to a small enough connected subset $ p \in \mathcal{C}_{in}' \subset \mathcal{C}_{in} $, i.e  $\mathcal{C}_{in}' = \{ v \equiv v_0 , 0 \leq  U \leq U_{s} , \}$ for $0 < U_{s} $ small enough,  
		$D^{+}(\mathcal{C}_{in}' \cup_{\{p\}} \mathcal{C}_{out} )\cap \mathcal{Q}$ has the Penrose diagram of Figure \ref{MihalisPenrose}. \\
		
		Moreover, if $s>1$, $(M,g,\phi,F)$ admits a continuous extension to the Cauchy horizon. 
		
		More precisely, we can attach a future null boundary $\mathcal{CH}^+:=\{ v \equiv +\infty , 0 \leq  U \leq U_{s}  \}$ to the space-time (M,g) such that $(g,\phi,F)$ each admits a continuous extension to the new space-time $\bar{M}:=M \cup \mathcal{CH}^+$ seen as a manifold with boundaries.
		
	\end{thm}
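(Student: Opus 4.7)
The plan is to adapt the bootstrap strategy of Luk--Oh \cite{JonathanStab}, which compares the unknowns to their Reissner-Nordström values along null characteristics, and to overcome the non-integrability that occurs when $\tfrac12 < s \le 1$ by propagating geometric combinations of the unknowns tuned to the exponential structure of $\Omega^{2}$, rather than pointwise differences. After local existence on a small characteristic rectangle $[0,U_s] \times [v_0,+\infty)$, I would subdivide it, shrinking $U_s$ if necessary, into a red-shift region $\mathcal{R}$ attached to $\mathcal{H}^+$ in which $r$ stays close to $r_\infty$, a no-shift region $\mathcal{N}$ where $r$ is bounded away from both horizon values, and a blue-shift region $\mathcal{B}$ in which $r$ approaches the eventual Cauchy horizon value $r_- > 0$. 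The interface curves are $r$-level sets whose positions are fixed at the end of the bootstrap.

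In $\mathcal{R}$ the key step is to replace the degenerate quantities $\partial_U r, \partial_v r$ (which generically only have slow polynomial decay) by $\Omega^{2}$: Assumption \ref{fieldevent} together with the Raychaudhuri equations \eqref{RaychU2}--\eqref{RaychV2} and the gauge $\kappa_{|\mathcal{H}^+} \equiv 1$ give smallness of $|\kappa - 1|$ and $|\iota - 1|$, so \eqref{Radius2} integrates to $\Omega^{2}(U,v) \simeq C(U)\, e^{2 K_+ (u+v)}$ up to lower-order terms. This exponential structure is the decisive improvement over \cite{JonathanStab}: integrating a quantity of size $v^{-2s}$ against $\Omega^{2}$ across the red-shift region no longer costs a power of $v$. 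With this in hand the wave equations \eqref{Field2}--\eqref{Field3}, together with the transports \eqref{massUEinstein}--\eqref{massVEinstein} and \eqref{chargeUEinstein}--\eqref{ChargeVEinstein}, propagate the polynomial decay of $\phi$, $D_U\phi$, $\partial_v\phi$, $Q - e$ and $\varpi - M$ throughout $\mathcal{R}$.

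The resulting estimates are then carried across $\mathcal{N}$ by continuity, $K$ being bounded away from zero there. In $\mathcal{B}$ the situation inverts: $K$ becomes negative, and I would propagate instead the two auxiliary quantities $\partial_U \log(\Omega^{2}) - 2K$ and $\partial_v \log(\Omega^{2}) - 2K$, each of which vanishes on the Reissner-Nordström background. Equation \eqref{Omega2}, fed by the red-shift bounds on $\phi$, $\varpi$ and $Q$, yields integrability in $v$ of $\partial_v\log(\Omega^2) - 2K$ and hence, after exponentiation, an upper bound $\Omega^{2}(U,v) \lesssim e^{2K_- (u+v)}$. Via \eqref{Radius} this forces $r$ to be monotonically decreasing in $v$ and bounded below by $r_-$ down to $v = +\infty$, and thus identifies the future boundary $\mathcal{CH}^+ = \{v = +\infty\}$ as the Cauchy horizon, giving the Penrose diagram of Figure \ref{MihalisPenrose}.

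For the continuous extension in the case $s > 1$, integrability of $v^{-s}$ ensures that $\phi$, $r$, $\partial_U r$ and the renormalised $A_U$ admit finite limits along each outgoing characteristic $\{U = \mathrm{const}\}$ as $v \to +\infty$. Passing to the regular Kruskal-type coordinate $V := 1 - (2|K_-|)^{-1} e^{2K_- v}$ near the Cauchy horizon, the metric coefficient $\Omega^{2} \cdot dv/dV$, the field $\phi$ and the curvature $F = dA$ then each extend continuously across $\{V = 1\}$. The hardest step, I expect, is the matching across the no-shift/blue-shift interface, where the sign of $K$ flips and $\varpi, Q$ are not close to $(M,e)$: one must show that the non-integrable error terms generated by \eqref{Omega2} and by the wave equations produce at most polynomial-in-$v$ growth that is then defeated by the exponential smallness of $\Omega^{2}$ coming from the blue-shift region, and the positions of the interface $r$-level sets must be chosen carefully so that the bootstrap constants actually close.
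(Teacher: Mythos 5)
Your proposal follows the same architecture as the paper — a red-shift/no-shift/blue-shift decomposition with bootstraps along characteristics, trading $\partial_U r,\partial_v r$ for $\Omega^2$ via the Raychaudhuri equations and the gauge $\kappa_{|\mathcal{H}^+}\equiv 1$, and propagating the renormalised quantities $\partial\log(\Omega^2)-2K$ — so the comparison is really about whether your intermediate claims survive in the slow-decay regime $\tfrac12<s\le 1$, which is the whole point of the theorem. Three of them do not. First, $|\iota-1|$ is \emph{not} small throughout the red-shift region: on $\mathcal{H}^+$ one has $\iota^{-1}=-4\lambda/\Omega^2\to-\infty$, since $\lambda\ge 0$ need not vanish while $\Omega^2\to 0$ as $u\to-\infty$. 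The paper can only control $1-\iota^{-1}$ on a \emph{late red-shift transition region} $\mathcal{LR}=\{C_0 v^{-q(s)}\le U\Omega^2_H(0,v)\le\delta\}$ (section \ref{redtransition}), and even there only at the non-integrable rate $v^{-(2s-1)}$ when $s\le 1$ (see \eqref{iotalateRS}); this is what feeds the bound on $\partial_u\log(\Omega^2)-2K$ used in all later regions, so it cannot be waved through as a consequence of Raychaudhuri plus the gauge.

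Second, for $s\le 1$ the ratio $\Omega^2_H(0,v)/e^{2K_+v}$ can grow like $v^{2-2s}$, so $\Omega^2\simeq C(U)e^{2K_+(u+v)}$ is false ``up to lower-order terms''; the paper must introduce the correction $h(v)$ of \eqref{hdef}, with $\Omega^2_H(0,v)=e^{2K_+(v+h(v))}$ and $|h'|\lesssim v^{1-2s}$, and foliate every region by level sets of $u+v+h(v)$ rather than $u+v$. Third, $\partial_v\log(\Omega^2)-2K_-$ is \emph{not} integrable in $v$ near the Cauchy horizon when $s<1$ (the paper only obtains $|\partial_v\log(\Omega^2_{CH})|\lesssim v^{1-2s}$ there, \eqref{partialvOmegaLB}); the exponential smallness of $\Omega^2$ comes instead from the sign information $\partial_v\log(\Omega^2)\le K_-<0$, which itself requires a separate contradiction argument at the future boundary of $\mathcal{N}$ (section \ref{Nfuture}) showing that $r$ actually drops below $e^2/M$ so that $K<0$ — monotonicity of $r$ alone does not give this. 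Finally, the interface between the moderate and strong blue-shift regimes must be placed at the logarithmic distance $\frac{2s}{2|K_-|}\log v$, close enough to the Cauchy horizon that $\Omega^2\lesssim v^{-2s}$ beyond it but not so close that $\kappa^{-1},\iota^{-1}$ have already blown up; your sketch correctly flags that the interfaces need tuning but does not identify this scale, and without it the bootstrap in the last region does not close.
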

	
	\begin{rmk} Because of \eqref{mu}, 
		\eqref{gauge2} is exactly equivalent\footnote{
			Notice that the gauge \ref{gauge2} is the same as \cite{Mihalis1} but slightly different from \cite{JonathanStab}, although it actually only differs from a multiplicative function of $v$ bounded above and below. } to : 
		\begin{equation} \label{gauge2hawking}
		\partial_v r_{|\mathcal{H}^+} = 1-\frac{2 \varpi}{r}+\frac{Q^2}{r^2}=1-\mu.
		\end{equation}	
	\end{rmk}
	
	\begin{rmk}  The present paper introduces the first stability result dealing with \textbf{all} the possible values of $m^2$ and $ q_0$.
		However the continuous extension statement when $s>1$ was already established in the work  \cite{Mihalis1} and \cite{JonathanStab} although stated in the chargeless case $q_0=0$ only. Some continuous extension results for the charged case have also been proved in \cite{Kommemi}. Notice (c.f section \ref{Price}) that the case $s>1$ should be relevant in our context only if the scalar field is massless and not too charged \footnote{Namely $m^2=0$ and $|q_0 e| < \frac{1}{2} $ with the notation of section \ref{Price}.} compared to the black hole.
	\end{rmk}
	
	\begin{rmk}Notice also that the assumptions are (almost) the same as those of  \cite{JonathanStab}, except for the strength of the decay rate, which was integrable unlike in the present paper. \\
	\end{rmk}

		In the rest of the paper, we will write $A \lesssim B $ if there exists a constant $\tilde{C}=\tilde{C}(C,Q_+,q_0,m^2,r_{\infty},s,v_0)$ such that $A \leq \tilde{C} B $. 
		
		If we need to specify this constant, we shall call it consistently $\tilde{C}$ when there are no ambiguities.
		
		We denote also $ A \sim B$ if $A \lesssim B$ and $B \lesssim A$.
		

	\subsection{The instability theorem}
	
	We can now phrase our instability theorem that relies very much on the non-linear stability claimed in the preceding section.
	
	\begin{thm} [Non-linear instability theorem] \label{Instabilitytheorem}
		Let $\mathcal{C}_{in} $, $ \mathcal{C}_{out}$ and $(r,\Omega^2_H, \phi,A)$ satisfying all the assumptions of Theorem \ref{Stabilitytheorem}  and in particular assumption \ref{fieldevent} with  $s > \frac{1} {2}$.
		
		We assume, using the same gauges as for Theorem \ref{Stabilitytheorem}, that the field in addition satisfies the following $L^2$ averaged polynomial lower-bound on the event horizon $C_{out}=\mathcal{H}^+$ : 
		
		\begin{hyp} \label{instabhyp}

	\begin{equation}		\label{instabapparent1}	v^{-p} \lesssim \int_{v}^{+\infty} |\partial_v \phi|^2_{|\mathcal{H}^+}(0,v')dv', \end{equation}
			
			for  $ 2s-1 \leq p < \min \{2s, 6s-3\}$.
			
		\end{hyp}

		Then for any $u \in \mathbb{R}$ negative enough, and for all $v$ large enough (depending on $u$), 
		
		\begin{equation} \label{instabilityresult}
		\int_{v}^{+\infty} |\partial_v \phi|^2(u,v')dv' \gtrsim   v^{-p}.
		\end{equation}

		In particular the following component of the curvature blows-up on the Cauchy horizon : 	
		
		$$ \limsup_{v \rightarrow +\infty} Ric( \Omega^{-2} \partial_v,\Omega^{-2} \partial_v)(u,v) = +\infty. $$
		
		Moreover for $s>1$,  $\phi \notin W^{1,2}_{loc}$ and the metric is not $C^1$ for the continuous extension constructed in Theorem \ref{Stabilitytheorem}.
		
	\end{thm}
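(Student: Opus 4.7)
The plan is to follow the strategy sketched in Section \ref{methodinstab}: transfer the $L^2$ averaged lower bound on $\partial_v\phi$ from $\mathcal{H}^+$ to a constant-$u$ hypersurface $\{u=u_0\}$ arbitrarily close to the Cauchy horizon, via an intermediate curve $\gamma'$ separating a red-shift region $\mathcal{R}$ adjacent to $\mathcal{H}^+$ from a blue-shift region $\mathcal{B}$ adjacent to $\mathcal{CH}^+$. The curve $\gamma'$ will be characterised by a polynomial rate $\Omega^2 \sim v^{-\sigma}$ for $\sigma$ to be chosen large, using the precise control of $\Omega^2$ obtained in the stability part.

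For the \textbf{red-shift step} ($\mathcal{H}^+ \to \gamma'$), the natural divergence identity now comes from the full stress-energy tensor $\mathbb{T}=\mathbb{T}^{EM}+\mathbb{T}^{KG}$, which is divergence-free (single approximate conservation law). I would contract $\mathbb{T}$ with the Kodama vector field $T$, whose null components are precisely $\tfrac{-4\partial_u r}{\Omega^2}$ and $\tfrac{-4\partial_v r}{\Omega^2}$ — the non-linear analogue of the Killing $\partial_t = \partial_v - \partial_u$ of Reissner-Nordstr\"om. Integrating the resulting energy current over the region bounded by $\mathcal{H}^+$, $\gamma'$, and $\mathcal{C}_{in}'$, I obtain a balance of fluxes plus a bulk error which vanishes on Reissner-Nordstr\"om. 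Using Theorem \ref{Stabilitytheorem}, the bulk error and the charge contributions with the wrong sign decay in $\mathcal{R}$ (where in addition the charge terms have a good sign, as noted in Section \ref{methodinstab}), while $\tfrac{-4\partial_u r}{\Omega^2}$ and $\tfrac{-4\partial_v r}{\Omega^2}$ are controlled up to $\gamma'$. Positivity of the energy (dominant energy condition, using $m^2\geq 0$) then allows one to lower-bound the flux of $\partial_v\phi$-type terms through $\gamma'$ by the flux through $\mathcal{H}^+$, giving
\begin{equation*}
\int_{\gamma'\cap\{v'\geq v\}} |\partial_v\phi|^2\, dv' \;\gtrsim\; \int_v^{+\infty} |\partial_v\phi|^2_{|\mathcal{H}^+}(0,v')\,dv' \;\gtrsim\; v^{-p}.
\end{equation*}

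For the \textbf{blue-shift step} ($\gamma' \to \{u=u_0\}$), a vector-field argument is no longer available because the Kodama coefficients are expected to blow up near $\mathcal{CH}^+$. Instead I would propagate $\partial_v\phi$ pointwise along constant $v$ using the reformulated wave equation \eqref{Field2}: the left-hand side is a $u$-derivative of $e^{iq_0\int A_u}\partial_v\phi$, and each source term on the right-hand side carries a factor of $\Omega^2$ or $\partial_u r$ that is integrable in $u$ over $\mathcal{B}$, provided $\gamma'$ was pushed sufficiently far in the future (i.e.\ $\sigma$ large enough). Gronwall-type integration along constant $v$ from $\gamma'$ to $\{u=u_0\}$ yields $|\partial_v\phi|(u_0,v) = |\partial_v\phi|_{\gamma'}(v)\cdot(1+o(1))$; squaring and integrating in $v$ gives exactly \eqref{instabilityresult}.

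Finally, the consequences are extracted as in \cite{JonathanStab}. From the Einstein equation $R_{vv}=\mathbb{T}_{vv}=2|\partial_v\phi|^2$ (using $A_v\equiv 0$), one has $Ric(\Omega^{-2}\partial_v,\Omega^{-2}\partial_v)=2\Omega^{-4}|\partial_v\phi|^2$; combining the averaged lower bound \eqref{instabilityresult} with the exponential decay $\Omega^2\sim e^{2K_-(u+v)}$ (so $\Omega^{-4}$ grows exponentially) forces the $\limsup$ to be $+\infty$ along $\{u=u_0\}$. For $s>1$, transferring the $L^2$ lower bound to a Kruskal-regular coordinate $V=1-\tfrac{1}{2|K_-|}e^{2K_-v}$ produces an $L^2$ divergence of $\partial_V\phi$ as $V\to 1$, hence $\phi\notin W^{1,2}_{loc}(\bar M)$, which rules out a $C^1$ extension of the metric constructed in Theorem \ref{Stabilitytheorem}. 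The main obstacle I expect is the precise \emph{placement of $\gamma'$}: it must be far enough in the future that the blue-shift propagation closes (integrable $\Omega^2$), yet close enough to $\mathcal{H}^+$ that the Kodama vector field remains non-degenerate and the charge-dependent error terms in the approximate conservation law keep the correct sign — a balance absent from \cite{KerrInstab}, whose linear analysis could push the limiting curve as far as desired.
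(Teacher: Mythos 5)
Your overall architecture coincides with the paper's: a Kodama-vector-field energy identity in the past of an intermediate curve $\gamma'$, pointwise propagation of $\partial_v\phi$ via \eqref{Field2} to its future, and the same extraction of the curvature and $W^{1,2}$ consequences. But two of your steps would fail as written, and both failures are tied to the part of the hypothesis you never use, namely $p<\min\{2s,6s-3\}$. The first is the placement of $\gamma'$: you characterise it by $\Omega^2\sim v^{-\sigma}$ with ``$\sigma$ to be chosen large'', which is the Luk--Sbierski philosophy, whereas the paper does the opposite. Its curve is $\gamma'=\{r-r_-=v^{1-2s+\eta}\}$, on which $\Omega^2\sim v^{1-2s+\eta}$ --- a \emph{slowly} decaying rate --- and the essential point is that $\gamma'$ lies in the early blue-shift region $\mathcal{EB}$, where $\kappa^{-1}$ and $\iota^{-1}$ are still uniformly bounded. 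If you push $\sigma$ large, $\gamma'$ enters the late blue-shift region where $\iota^{-1}$ blows up, and the flux $\mathbb{T}(T,n')$ through $\gamma'$ can no longer be compared to $\int_{\gamma'}|\partial_v\phi|^2\,dv'$ (this comparison, Lemma \ref{19} of the paper, uses precisely the boundedness of $\kappa^{-1},\iota^{-1}$ on $\gamma'$). You flag this tension at the end, but your construction resolves it in the wrong direction. Relatedly, the dangerous boundary term in the energy identity is not a charge term but $-\int\iota^{-1}|D_u\phi|^2$ on the constant-$v$ segment; it is controlled by $|D_u\phi|\lesssim\Omega^2v^{-s}$ in $\mathcal{R}$ and $|D_u\phi|\lesssim v^{-s}$ on the remaining $O(\log v)$ stretch, giving $O(v^{-2s}\log v)=o(v^{-p})$ --- this is where $p<2s$ enters, and your proposal never accounts for it.

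Second, the blue-shift propagation cannot give $|\partial_v\phi|(u_0,v)=|\partial_v\phi|_{\gamma'}(v)\cdot(1+o(1))$. The source terms in \eqref{Field2} involve $\phi$, $Q$ and $\Omega^2$ rather than $\partial_v\phi$ itself, so integration along constant $v$ yields an \emph{additive}, not multiplicative, error: one gets $|\partial_v\phi(u_{\gamma'}(v),v)|^2\lesssim|\partial_v\phi(u,v)|^2+v^{2-6s+4\eta}+\dots$, hence after integrating in $v$ a remainder $O(v^{3-6s+4\eta})$. This remainder is exactly why the theorem requires $p<6s-3$; a genuine $(1+o(1))$ estimate would make that restriction superfluous and is not what the equation gives. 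Without tracking this additive term your argument does not close for $p$ near the top of the admissible range and cannot explain why the range is what it is.
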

	
	\begin{rmk}
	This theorem is the very first instability result outside the uncharged and massless case. As explained in section \ref{methodinstab}, the methods of previous instability works do not apply here.
	\end{rmk}
	
	\begin{rmk}
		In view of the result of \cite{JonathanStab}, one can very reasonably hope that this curvature blow up leads to a \bm{$C^2$} \textbf{inextendibility of the metric} in an appropriate global setting \footnote{At least for two-ended black holes.}. The reason for this is that  $Ric( \Omega^{-2} \partial_v,\Omega^{-2} \partial_v)$ is a geometric quantity since $\Omega^{-2} \partial_v$ is a geodesic vector field. The only remaining argument is to extend the blow-up far from time-like infinity namely to get a global statement as opposed to perturbative. 
	\end{rmk}

	\section{Proof of the stability Theorem \ref{Stabilitytheorem}} \label{proofstab}
	
		We recall that we write $A \lesssim B $ if there exists a constant \footnote{This is equivalent to saying that $\tilde{C}$ will depend only on $q_0$,$m^2$, $v_0$, the initial data and on $(e,M)$ as defined in section \ref{convergence}. } $\tilde{C}=\tilde{C}(C,Q_+,q_0,m^2,r_{\infty},s,v_0)$ such that $A \leq \tilde{C} B $. 
		
		If we need to specify this constant, we shall call it consistently $\tilde{C}$ when they are no ambiguities.
		
		We denote also $ A \sim B$ if $A \lesssim B$ and $B \lesssim A$. \\
		
		When we write ``with respect to the parameters'', we actually mean ``with respect to $C$,$Q_+$,$q_0$,$m^2$,$r_{\infty}$ and $s$''. \\
		
		We shall use repetitively the following technique : if we are in a region where $|u| \leq D v$ where $D$ is a constant, then we can take $|u_s|$ large enough (equivalently $U_s$ small enough) so that for any $u \leq u_s$ and any function of $v$, $\epsilon(u,.)= o(1)$ where  $v \rightarrow +\infty$ and any positive number $\eta$ then $|\epsilon(u,v)| \leq \eta$ for all $|u| \leq D v$. When we do so, we write ``for $|u_s|$ large enough'' or equivalently in $(U,v)$ coordinates 
		``for $U_s$ small enough''.

	\subsection{Strategy of the proof}
	
	The main idea of the proof is to split the space-time into smaller regions where the red-shift and blue-shift effect manifest themselves as already done in \cite{Mihalis1} and \cite{JonathanStab} and to integrate along the characteristic for the wave equations.
	
	The main novelty is to deal with a non-integrable field decaying on $\mathcal{H}^{+}$ like $v^{-s}$ with $ s>\frac{1}{2}$ only. The reason why stability estimates still proceed is that the Raychaudhuri equation on $\mathcal{H}^{+}$ involve the square of the field of the order $v^{-2s}$ which is integrable. \\
	
	We will use five different regions : \begin{enumerate}
		
		\item The event horizon $\mathcal{H}^+:= \{U=0, v\geq v_0\}$ where we use crucially the Raychaudhuri equation and exhibit the right Reissner-Nordstr\"{o}m space-time to which our dynamical space-time is expected to converge at infinity. We find that $\Omega^2$ behaves likes $e^{2K_+ \cdot (u+v+h(v))}= 2K_+ U  e^{2K_+ \cdot (v+h(v))}$ where $h(v)=o(v)$.

		\item The red-shift region $\mathcal{R} =  \{ u+v+h(v) \leq -\Delta\}$: this is a large region where $\Omega^2$ is small enough and $|D_u \phi| \lesssim \Omega^2 v^{-s}$. This strong stability feature is the key to prove the estimates. Another important feature is that $\Omega^2$ can almost be written as a product $f(u) \cdot g(v)$ which simplifies most of the calculations. This comes from the fact that $\Omega^2_H(U,v)$ is almost $\Omega^2_H(0,v)$, up to a arbitrary small constant $e^{-\tilde{C} \Delta}$.
		
		\item The no-shift region $\mathcal{N}:= \{ -\Delta \leq u+v+h(v) \leq \Delta_N \}$ : the function of this small region is to allow $r$ to vary  from its event horizon limit value $r_+$ to its Cauchy horizon limit value $r_-$, up to arbitrarily small constants. The smallness of the region allows us to conserve the estimates of its past region $\mathcal{R}$ while initiating the blue-shift effect in its future. 
		
		\item The early blue-shift transition region \footnote{The idea to have a curve at a logarithmic distance from the no-shift region comes back -in a different form- to the early papers of Dafermos \cite{MihalisPHD}, \cite{Mihalis1}. } $\mathcal{EB}:=  \{\Delta_N \leq  u+v+h(v) \leq -\Delta'+
		\frac{2s}{2|K_-|} \log(v)\}$ : this small region is the first where the blue-shift happens and as a consequence the metric coefficients $\Omega^2(u,v)$ start to be small enough to facilitate the decay of propagating waves but do not decay too much so that we can still treat the problem as almost linear : in particular \footnote{Recall that $\kappa$ and $\iota$ were defined in \eqref{kappa} and \eqref{iota}.} $\kappa^{-1}$ and $\iota^{-1}$ stay bounded. 
		
		\item The late blue-shift region $\mathcal{LB}:=  \{-\Delta'+
		\frac{2s}{2|K_-|} \log(v) \leq  u+v+h(v) \}$ : this very large region exhibits the strongest blue-shift : the metric coefficients $\Omega^2(u,v)$ start from inverse polynomial decay but decrease  exponentially in $v$ near the Cauchy horizon. We use this smallness to prove decay for the propagation problem. However, we do not prove enough decay to get a continuous extension of the space-time in the case $s \leq 1$.
	\end{enumerate}
	
		\begin{figure}
			
			\begin{center}
				
				\includegraphics[width=65 mm, height=65 mm]{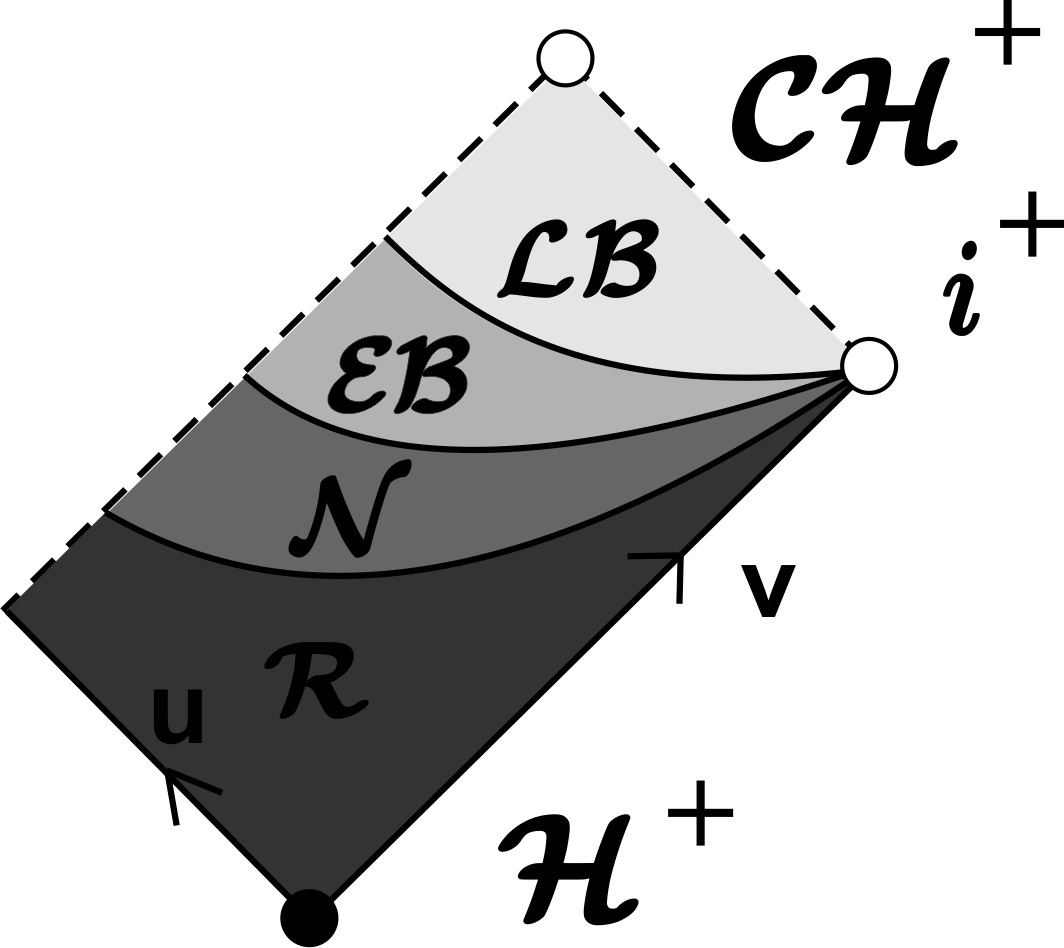}
				
			\end{center}
		
			\caption{Penrose diagram of the space-time $\mathcal{M}=\mathcal{R} \cup \mathcal{N} \cup \mathcal{EB} \cup \mathcal{LB} $ }
				\label{Figure3}
		\end{figure}

	The core of the proof is to control $\partial_v \log(\Omega^2)$ and $\partial_u \log(\Omega^2)$ and use Lemma \ref{calculuslemma} :  \\
	
	In $\mathcal{H}^+$ and $\mathcal{R}$, as a consequence of the red-shift effect, they are lower bounded by a strictly positive constant, which allows us to consider $\Omega^2$ as an increasing exponential in $u$ and as an increasing exponential in $v$,  avoiding the loss of one power when we integrate a polynomial decay. \\
	
	In $\mathcal{N}$, $\partial_v \log(\Omega^2)$ and $\partial_u \log(\Omega^2)$ change sign and can be close to $0$, but it does not matter for the decay of the scalar field because the region is small enough \footnote{More precisely the $u$ difference is bounded.}. \\
	
	In $\mathcal{EB}$ and  $\mathcal{LB}$, as a consequence of the blue-shift effect, they are upper bounded \footnote{Strictly speaking, we do not prove however that  $\partial_u \log(\Omega^2)$ is upper bounded in $\mathcal{LB}$ if $s\leq1$.} by a strictly negative constant, which allows us to consider $\Omega^2$ as a decreasing exponential in $u$ and as a decreasing exponential in $v$, which also avoids the loss of power when we integrate a polynomial decay.

	\subsection{A calculus lemma}
	
	We begin this proof section by a calculus lemma, which broadly says that integrating a polynomial decay -as expected for $\phi$- with a $\Omega^2$ or $\Omega^{-2}$ weight avoids to lose one power as we would otherwise. 
	
	\begin{lem} \label{calculuslemma}
		
		Let $q \geq 0$, $a=a(e,M,q_0,m^2,s)>0$ and $\gamma_1$ be a one-dimensional curve on which $|u| \approx v$ with $u_1(v)$ being the only $u$ such that $ (u,v) \in \gamma_1$ and $v_1(u)$ being the only $v$ such that $ (u,v) \in \gamma_1$.
		
		Then for any positive $C^1$ function $\Omega^2$, the following hold true :

		\begin{enumerate}

			\item \textbf{Red-shift bounds in $|u|$} : assume that for all $ u' \in [u_1(v),u]$, $\partial_u \log(\Omega^2)(u',v) >a$. Then : 
			
			\begin{equation*}
			\int_{u_1(v)}^u \Omega^2(u',v) |u'|^{-q} du' \lesssim \Omega^2(u,v) |u|^{-q},
			\end{equation*}

			\begin{equation*}
			\int_{u_1(v)}^u \Omega^{-2}(u',v) |u'|^{-q} du' \lesssim \Omega^{-2}(u_1(v),v)v^{-q}.
			\end{equation*}
			\item \textbf{Red-shift bounds in v} : assume that for all $ v' \in [v_1(u),v]$, $\partial_v \log(\Omega^2)(u',v) >a$. Then : 
						
						\begin{equation*}
						\int_{v_1(u)}^v\Omega^2(u,v') v'^{-q} dv' \lesssim \Omega^2(u,v) v^{-q}.
						\end{equation*}

						\begin{equation*}
						\int_{v_1(u)}^v \Omega^{-2}(u,v') v'^{-q} dv' \lesssim \Omega^{-2}(u,v_1(u))|u|^{-q}.
						\end{equation*}
						
				\item \textbf{Blue-shift bounds in $|u|$} : assume that for all $ u' \in [u_1(v),u]$, $\partial_u \log(\Omega^2)(u',v) <-a$. Then : 
							
							\begin{equation*}
							\int_{u_1(v)}^u \Omega^2(u',v) |u'|^{-q} du' \lesssim \Omega^{2}(u_1(v),v)v^{-q}.
							\end{equation*}

							\begin{equation*}
							\int_{u_1(v)}^u \Omega^{-2}(u',v) |u'|^{-q} du' \lesssim \Omega^{-2}(u,v) |u|^{-q}.
							\end{equation*}
							\item \textbf{Blue-shift bounds in v} : assume that for all $ v' \in [v_1(u),v]$, $\partial_v \log(\Omega^2)(u',v) < -a$. Then : 
							
							\begin{equation*}
							\int_{v_1(u)}^v\Omega^2(u,v') v'^{-q} dv' \lesssim \Omega^2(u,v_1(u)) |u|^{-q},
							\end{equation*}

							\begin{equation*}
							\int_{v_1(u)}^v \Omega^{-2}(u,v') v'^{-q} dv' \lesssim \Omega^{-2}(u,v)v^{-q}.
							\end{equation*}
						
		\end{enumerate}

	\end{lem}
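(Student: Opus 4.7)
My plan is to exploit the common structure of all eight inequalities. The hypotheses $\partial_u \log \Omega^2 > a > 0$ (or any of its analogues) force $\Omega^{\pm 2}$ to grow or decay exponentially along the integration variable with rate at least $a$. Integrating $\Omega^{\pm 2}(u',v)\, |u'|^{-q}$ over such an interval is therefore essentially a Laplace-type calculation: the mass of the integrand concentrates at the endpoint where the exponential factor is maximal, and the polynomial factor can be replaced by its value at that endpoint. By symmetry under $u \leftrightarrow v$ and under $(\Omega^2, a) \mapsto (\Omega^{-2}, -a)$, all eight bounds reduce to two prototype calculations, which I will carry out explicitly only for the red-shift case in $u$.

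For the first prototype ($\Omega^2$ weight), integrating the assumed differential inequality gives $\Omega^2(u',v) \leq \Omega^2(u,v)\, e^{-a(u-u')}$ for $u' \in [u_1(v), u]$. The change of variable $w = u - u' \geq 0$ converts $|u'|$ into $|u| + w$ (using $u \leq 0$ in the regions of interest), so $|u'|^{-q} \leq |u|^{-q}$ can be pulled out of the integral, leaving the pure exponential $\int_0^{\infty} e^{-aw}\, dw = 1/a$. This yields the claimed bound with constant depending only on $a$ and $q$.

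The second prototype ($\Omega^{-2}$ weight) is the delicate one, since the exponential now concentrates at $u' = u_1(v)$ where $|u'| \sim |u_1(v)| \sim v$, while $|u'|^{-q}$ could a priori blow up as $u'$ moves towards $0^-$. The plan is to split the integral at $u' - u_1(v) = |u_1(v)|/2$. On the ``near'' piece one has $|u'| \geq |u_1(v)|/2$, so $|u'|^{-q} \lesssim |u_1(v)|^{-q} \sim v^{-q}$ and the remaining exponential integral is again $O(1/a)$. On the ``far'' piece the exponential factor is at most $e^{-a|u_1(v)|/2}$, which is super-polynomially small in $v$ and absorbs any blow-up of $|u'|^{-q}$. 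Combining the two pieces produces the stated bound $\Omega^{-2}(u_1(v),v)\, v^{-q}$. The remaining six inequalities are obtained by identical templates after relabelling.

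The main obstacle will be the ``far'' piece in the dual bound, where one must check that the tail of the integration (where the polynomial weight could be large) is indeed crushed by the exponential decay. This uses crucially that $|u_1(v)| \sim v$, which comes from the hypothesis ``$|u| \approx v$ on $\gamma_1$'', together with the fact that $a$ is a fixed positive constant independent of $v$. Beyond this, the argument is essentially bookkeeping rather than conceptual.
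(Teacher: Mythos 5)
Your argument is correct, but it takes a genuinely different route from the paper's. The paper never derives a pointwise exponential bound on $\Omega^{\pm 2}$: it instead writes $\Omega^{-2}\leq \frac{1}{a}\,\Omega^{-2}\,\partial_u\log(\Omega^2) = -\frac{1}{a}\partial_u(\Omega^{-2})$, integrates by parts, and absorbs the resulting error term $\frac{q}{a}\int \Omega^{-2}|u'|^{-q-1}$ back into the left-hand side (legitimate because $|u'|^{-1}$ is uniformly small on the integration range), so that only the boundary term at the favourable endpoint survives. Your approach integrates the hypothesis to get $\Omega^{\pm2}(u',v)\leq \Omega^{\pm2}(u_\ast,v)\,e^{-a|u'-u_\ast|}$ with $u_\ast$ the concentration endpoint, then either pulls the polynomial weight out directly (when it is maximised at $u_\ast$) or splits the interval and crushes the far piece with the super-polynomially small exponential factor (when it is not). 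Both proofs rely on the same underlying facts --- $a$ is a fixed positive constant and the relevant coordinate is large, so that $|u_1(v)|\approx v$ --- and both give the stated constants depending only on $a$ and $q$. Your version is slightly more hands-on but arguably more transparent about where the mass of the integral sits; the paper's is shorter and avoids exponentiating. One small caution on your closing claim that the remaining six cases follow by ``identical templates after relabelling'': the $u\leftrightarrow v$ symmetry flips the monotonicity of the polynomial weight relative to the concentration endpoint (e.g.\ in the red-shift-in-$v$ bound with $\Omega^2$ weight, the weight $v'^{-q}$ is \emph{largest} away from the concentration point $v'=v$, so that case needs the splitting argument even though its $u$-counterpart did not). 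Since you have both templates available this is only a bookkeeping remark, not a gap.
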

	
\begin{proof}
	
	We will only prove one case when $\partial_u \log(\Omega^2)>a$, the others being similar.  For $u \geq u_1(v)$ : 
	
	$$ \int_{u_1(v)}^u \Omega^{-2}(u',v)|u'|^{-q} du' \leq \frac{1}{a} \int_{u_1(v)}^u \Omega^{-2}(u',v) \partial_u \log(\Omega^2)(u',v)|u'|^{-q} du'=-\frac{1}{a} \int_{u_1(v)}^u \partial_u (\Omega^{-2})(u',v)|u'|^{-q} du'.$$
	
	Then we integrate by parts to write : 
	
	$$ \int_{u_1(v)}^u \Omega^{-2}(u',v)|u'|^{-q} du'  \leq  \frac{q}{a} \int_{u_1(v)}^u  \Omega^{-2}(u',v) |u'|^{-q-1} du' + \frac{1}{a}  \Omega^{-2}(u_1(v),v) |u_1(v)|^{-q} - \frac{1}{a}  \Omega^{-2}(u,v) |u|^{-q}.$$
	
	Then clearly $\int_{u_1(v)}^u  \Omega^{-2}(u',v) |u'|^{-q-1} du' = o(\int_{u_1(v)}^u  \Omega^{-2}(u',v) |u'|^{-q} du')$ so the dominant term is the second, and $a$ depends on the parameters only, giving : 
	
		$$ \int_{u_1(v)}^u \Omega^{-2}(u',v)|u'|^{-q} du'  \lesssim  \Omega^{-2}(u_1(v),v) |u_1(v)|^{-q} .$$

\end{proof}

	\subsection{The event horizon }
	
	\subsubsection{Convergence at infinity towards a Reissner-Nordstr\"{o}m background} \label{convergence}
	
	\begin{prop}
		There exists constants $0<|e|<M$ such that on the event horizon $\mathcal{H}^+ = \{U=0, v \geq v_0 \}$ 
		
		\begin{equation} \label{massevent}
		| \varpi(0,v) - M | \lesssim v^{1-2s},
		\end{equation}
		\begin{equation} \label{chargeevent}
		| Q(0,v) - e | \lesssim v^{1-2s}.
		\end{equation}
		
		And moreover $r_\infty= r_+(M,e)$ where $r_\infty$ is as in hypothesis \ref{radiusevent} and 
		$$	 K(0,v)  \rightarrow K_+(M,e)>0, $$ as $v \rightarrow +\infty$.
		
	\end{prop}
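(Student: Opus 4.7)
The plan is to exploit the polynomial decay of $\phi$ on $\mathcal{H}^+$ (assumption \ref{fieldevent}) to integrate the evolution equations for $Q$ and $\varpi$ along $\mathcal{H}^+$, and then use the constraint $\partial_v r = 1-\mu$ implied by the gauge \eqref{gauge2} (cf. \eqref{gauge2hawking}) together with the finiteness of $r_\infty$ to identify $r_\infty$ with the outer root of the characteristic polynomial $r^2-2Mr+e^2=0$.

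First I would integrate \eqref{ChargeVEinstein}. Using $r$ bounded (assumption \ref{radiusevent}) and $|\phi|, |\partial_v \phi|_{|\mathcal{H}^+} \lesssim v^{-s}$, we get $|\partial_v Q|(0,v) \lesssim v^{-2s}$, which is integrable since $s > \tfrac{1}{2}$. So $Q(0,v)$ converges to some $e \in \mathbb{R}$ with $|Q(0,v)-e| \lesssim v^{1-2s}$ and $|e| \leq Q_+$. For the mass, equation \eqref{massVEinstein} with $\kappa_{|\mathcal{H}^+} \equiv 1$ gives
\[ \partial_v \varpi = \tfrac{r^2}{2}|\partial_v \phi|^2 + \tfrac{m^2}{2} r\, \partial_v r\, |\phi|^2 + i\tfrac{q_0}{2} Q r \Im(\bar\phi \partial_v \phi), \]
and inserting $\partial_v r = 1 - 2\varpi/r + Q^2/r^2$ together with the bounds on $\phi, Q, r$ yields $|\partial_v \varpi|(0,v) \lesssim v^{-2s}(1+|\varpi|)$. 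Since $v \mapsto v^{-2s}$ is integrable, a Grönwall argument bounds $\varpi$ on $\mathcal{H}^+$; bootstrapping back, the estimate sharpens to $|\partial_v \varpi|(0,v) \lesssim v^{-2s}$, so $\varpi(0,v) \to M$ for some $M \in \mathbb{R}$ with $|\varpi(0,v)-M| \lesssim v^{1-2s}$.

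Next I would identify $M$ and $e$. From $\partial_v r = 1-2\varpi/r + Q^2/r^2$ on $\mathcal{H}^+$ and the limits just established, $\partial_v r(0,v) \to 1-\tfrac{2M}{r_\infty}+\tfrac{e^2}{r_\infty^2}$. Since $r(0,v)$ is bounded, this limit must vanish, so $r_\infty$ is a root of $r^2-2Mr+e^2=0$; in particular, $M^2 \geq e^2$, and the two roots $r_{\pm}(M,e) = M \pm \sqrt{M^2-e^2}$ satisfy $r_+ r_- = e^2$. To select $r_\infty = r_+$ I invoke assumption \ref{hypchargeevent}: if instead $r_\infty = r_-$, then $e^2 = r_+ r_- \geq r_-^2 = r_\infty^2 > Q_+^2 \geq e^2$, a contradiction. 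Hence $r_\infty = r_+(M,e)$, and since the relation $r_+(M,e) = |e|$ forces $|e|=M$, the strict inequality $Q_+ < r_\infty$ yields $|e| < M$, i.e., strict sub-extremality. Finally, using \eqref{Kdef} with $r_+ r_- = e^2$, so $e^2/r_+ = r_-$ and $M - e^2/r_+ = \sqrt{M^2-e^2} > 0$, we get $K(0,v) \to (M-e^2/r_+)/r_+^2 = K_+(M,e) > 0$.

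The main obstacle is the self-reference in the mass equation: the nonlinear term involving $\partial_v r$ depends on $\varpi$ itself, so one cannot integrate naively. This is handled by the Grönwall step above, which is viable precisely because $s > \tfrac{1}{2}$ makes $v^{-2s}$ integrable; the remaining identification of $(r_\infty, M, e)$ is then forced by the constraint equation \eqref{gauge2hawking} and assumption \ref{hypchargeevent}.
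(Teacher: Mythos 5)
Your argument is correct and follows the paper's proof in all but one step: the treatment of the self-referential term $\tfrac{m^2}{2}r\,\partial_v r\,|\phi|^2$ in \eqref{massVEinstein}. The paper disposes of it by integration by parts, writing $\int_v^{+\infty} r\partial_v r|\phi|^2\,dv' = -\tfrac{r^2}{2}|\phi|^2(0,v)-\int_v^{+\infty}r^2\Re(\bar\phi\,\partial_v\phi)\,dv'$, which eliminates $\partial_v r$ (hence $\varpi$) entirely and gives the $v^{1-2s}$ bound in one stroke; you instead substitute the constraint $\partial_v r=1-2\varpi/r+Q^2/r^2$ from \eqref{gauge2hawking} and close a Gr\"{o}nwall loop using the integrability of $v^{-2s}$. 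Both are valid; the paper's route is shorter and needs no a priori bound on $\varpi$, while yours is more mechanical and would survive even if the integrand could not be rewritten as an exact derivative plus controlled terms. The remainder (convergence of $Q$ via \eqref{ChargeVEinstein}, vanishing of the limit of $\partial_v r$, exclusion of $r_\infty=r_-$ via assumption \ref{hypchargeevent}, and the limit of $K$) coincides with the paper's argument; the only points you leave implicit, both easy, are that $M>0$ (needed to pass from $|e|\leq|M|$ to $|e|\leq M$, and immediate since otherwise $r_+= M+\sqrt{M^2-e^2}\leq 0$ contradicts $r_\infty>0$) and that $e\neq 0$ (since $|e|=\limsup_{v\to+\infty}|Q|=Q_+>0$ by assumption \ref{hypchargeevent}).
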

	
	\begin{proof}
		
		First we use \eqref{ChargeVEinstein} together with the decay of assumption \ref{fieldevent} and the boundedness of $r$ to get the existence of $ e\in \mathbb{R}$ such that \eqref{chargeevent} holds. In particular $Q$ is bounded. Moreover, due to assumption \ref{hypchargeevent}, $e \neq 0$.

		For the mass, notice that by integration by parts and the decay of assumption \ref{fieldevent} : 
		
		$$| \int_{v}^{+\infty} r \partial_v r |\phi|^2 dv'| = |-\int_{v}^{+\infty} r^2 \Re (\bar{\phi}  \partial_v \phi) dv'-\frac{r^2}{2} |\phi|^2(0,v)| \lesssim v^{1-2s}.$$

		Therefore - the other terms being easier in \eqref{massVEinstein}-  by using gauge \eqref{gauge2} and assumption  \ref{fieldevent}, together with the boundedness of $r$, we prove that there exists $M \in \mathbb{R}$ such that \eqref{massevent} holds.

		Gauge \eqref{gauge2} then gives the following convergence when $v$ tends to $+\infty$ on $\mathcal{H}^+$ : 
		
		$$ \partial_v r =  1- \mu =  1- \frac{2 \varpi}{r}+ \frac{Q^2}{r^2}  \rightarrow  1- \frac{2 M}{r_\infty}+ \frac{e^2}{r_\infty^2}:= l .$$
		
		Since $r$ admits a limit at infinity, $l=0$ so $r_\infty$ is a strictly positive root of the polynomial $x^2-2Mx+e^2$ hence : 
		
		$$ r_\infty = M \pm \sqrt{M^2-e^2} ,$$
		
		$$ |e| \leq  |M|,$$
		
		$$ 0<M.$$
		
		We then use assumption \ref{hypchargeevent} to rule out the case  $r_\infty = M  -\sqrt{M^2-e^2}$ since $r_-(M,e) \leq |e|$ for all  $0<|e| \leq  M$
		
		Assumption \ref{hypchargeevent} also gives the sub-extremality condition $|e| < M $.
		
		
		The last claim follows from the definition of $K$ and the fact that for all $0<|e| <  |M|$	, $ M - \frac{e}{r_+(M,e)}>0$.

	\end{proof}
	
	Now that $M$ and $e$ are known, we shall denote $K_+$ instead of $K_+(M,e)$ and $K_-$ instead of $K_-(M,e)$.
	
	We know the Reissner-Nordstr\"{o}m background -indexed by $(M,e)$- towards which our space-time converges at infinity and we can define the null coordinates $u$ and $V$ in the spirit of section \ref{RNsolution} - given that the $(U,v)$ coordinates are already defined by the statement of Theorem \ref{Stabilitytheorem} - : 
	
	\begin{defn}
		Recalling that $(U,v) \in [0, U_{s}] \times  [v_0, +\infty]$ , we define $u \in [-\infty, u_s]$ by the relation : 
		$$U:=\frac{1}{2K_{+}}e^{2K_{+}u} ,$$
		
		and $V \in [V_0,1]$ by : 
		$$V:=1-\frac{1}{2|K_{-}|}e^{2K_{-}v}.$$ 	\end{defn}
		
		We write the metric \footnote{C.f section \ref{Coordinates} for a definition.} on $\mathcal{Q}$ in these different coordinates systems as : 
		
		$$  g_{\mathcal{Q}} = - \frac{\Omega^{2}}{2} (du  \otimes dv+dv  \otimes du) = - \frac{\Omega^{2}_{H}}{2} (dU  \otimes dv+dv  \otimes dU) = - \frac{\Omega^{2}_{CH}}{2} (du  \otimes dV+dV  \otimes du).$$ 
		
		Notice that : 
		
		$$ 2K_+ U \Omega_H^2(U,v) = \Omega^2(u,v) =  2|K_-| (1-V) \Omega_{CH}^2(u,V).$$
		
		We will also define $\nu_{H}:= \partial_U r$.
			Notice that $\nu_H <0$ everywhere on the space-time. This is because it is strictly negative on $\mathcal{H}^{+}$ -due to the no anti-trapped surface assumption- therefore so is $\frac{\nu_H}{\Omega^2_H}$ and this quantity is decreasing in $U$ due to \eqref{RaychU}. \\
			
			Now that the parameters $(M,e)$ are determined, we  translate the notation $\lesssim $ :   $A \lesssim B $ means that there exists a constant  $\tilde{C}=\tilde{C}(C,e,q_0,m^2,M,s,v_0)$ such that $A \leq \tilde{C} B $.

\subsubsection{Reduction to the case where $K$ is lower bounded on the event horizon.} \label{reduction}
	
	In order to use the red-shift effect in all its strength near the event horizon, we have to prove that $K$ is close enough to its limit value -the surface gravity $K_+$- and in particular is lower bounded by a strictly positive constant on the event horizon.
	
	To do so, we need to be far away in the future, i.e to consider large $v$.
	
	We are going to prove that for $v_0'=v_0'(C,e,M,q_0, m^2,s)$ large enough -with the assumptions of Theorem \ref{Stabilitytheorem}-   bounds of the following form are still true :
	
	$$ |D_U \phi(U,v_0')| \lesssim D(v_0') ,$$
	
	$$ |\partial_U r(U,v_0') | \gtrsim 1.$$
	
	In the second step, we restart our problem, replacing $v_0$ by $v_0'$ in the hypothesis of Theorem \ref{Stabilitytheorem} - in particular $C_{in}$ is redefined to be $C_{in} = \{ v \equiv v_0', 0 \leq U \leq U_s \}$ and \eqref{gaugeAponctuelle}, \eqref{gauge1} are true on $ v \equiv v_0'$ instead.
	
	This is can be done introducing a new coordinate system $(U',v)$ with $\partial_{U'} r(U',v_0')  = -1$. This can only multiply the bound for $D_{U'} \phi(U',v_0') $ by a constant. Notice that $|D_{U'} \phi(U,v_0')|$ is not modified by any gauge transform on $A$. After this section, we will abuse notation and still call $(U,v)$ this new coordinate system $(U',v)$.
	
	We now take $v_0'=v_0'(C,e,M,q_0, m^2,s)$ to be large enough so that $2K-2K_+ + r m^2 |\phi|^2$ is arbitrarily close to 0.
	
	To be able to do it, we must use \footnote{This essentially boils down to an easy local existence theorem.} the Einstein-Maxwell-Klein-Gordon equations on the space-time rectangle $[0, U_s] \times [v_0, v_0']$ which is the object of the following lemma : 
	
	\begin{lem} \label{reductionlemma}
Under the same hypothesis than before and for $v_0' > v_0$, if $U_s$ is sufficiently small there exists a constant $D>0$ depending on $C,e,M,q_0, m^2,s,v_0$ and $v_0'$ such that 

	\begin{equation} 
 	 |\partial_U r(U,v_0') |^{-1}+|D_U \phi(U,v_0')| \leq D .
	\end{equation}
	
	Therefore, for any $\eta>0$ independent \footnote{We insist that $\eta$ must be a numerical constant that do \textbf{not} depend on \textbf{any} of the $C,e,M,q_0, m^2,v_0$ or $v_0'$.} of any parameter, there exists a $v_0'>0$ such that 
	
	$$|D_U \phi(U,v_0')| \lesssim C ,$$
	
	and for all $v \geq v_0'$ : 
	
		$$ |2K(0,v)-2K_+| \leq \eta K_+ ,$$
					$$ r m^2 |\phi|^2(0,v) \leq \eta K_+.$$
	\end{lem}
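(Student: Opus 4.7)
The lemma decomposes into two essentially independent tasks. First, an asymptotic analysis on $\mathcal{H}^{+}$ alone, which selects $v_0'$ large enough so that $|2K-2K_+|$ and $rm^2|\phi|^2$ are below $\eta K_+$. Second, a local well-posedness argument on the compact null rectangle $[0,U_s]\times[v_0,v_0']$, which produces the quantitative bound $|\partial_U r(U,v_0')|^{-1}+|D_U\phi(U,v_0')|\leq D$ for $U_s$ small enough. A rescaling of $U$ at the end yields the final form $|D_U\phi|\lesssim C$ in the gauge $\partial_{U'}r(U',v_0')\equiv -1$.

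\textbf{Asymptotics on }$\mathcal{H}^+$. The preceding proposition gives $\varpi(0,v)\to M$, $Q(0,v)\to e$ and $r(0,v)\to r_+(M,e)$ as $v\to+\infty$, so the defining relation \eqref{Kdef} yields $2K(0,v)\to 2K_+$. The bound $|\phi(0,v)|\lesssim v^{-s}$ from assumption \ref{fieldevent}, together with the boundedness of $r$ on $\mathcal{H}^+$, gives $rm^2|\phi|^2(0,v)\to 0$. Both quantities are therefore eventually bounded by $\eta K_+$, and one may pick $v_0'=v_0'(\eta,C,e,M,q_0,m^2,s,v_0)$ depending only on $\eta$ and the parameters.

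\textbf{Local existence and rescaling.} With $v_0'$ now fixed, I would run a standard double-null bootstrap on $[0,U_s]\times[v_0,v_0']$ for the system of section \ref{eqcoord} in the unknowns $(\Omega_H^2, r, Q, \phi, A_U, \partial_U r, \partial_v r, D_U \phi, \partial_v \phi)$. The data on $\mathcal{C}_{in}\cup (\mathcal{H}^+\cap\{v\leq v_0'\})$ are uniformly bounded; the transport equations \eqref{Radius2}, \eqref{RaychU}, \eqref{Field3}, \eqref{potentialEinstein} together with Gronwall close the bootstrap for $U_s$ small. In particular, Gronwall applied to \eqref{Field3} with datum $|D_U\phi(U,v_0)|\leq C$ produces $|D_U\phi(U,v_0')|\leq D\cdot C$, while integrating \eqref{Radius2} from $v_0$ to $v_0'$ (noting $\partial_U r(U,v_0)=-1$ by \eqref{gauge1}) controls both $|\partial_U r(U,v_0')|$ and $|\partial_U r(U,v_0')|^{-1}$. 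A rescaling $dU'=-\partial_U r(U,v_0')\,dU$ restores the gauge $\partial_{U'}r(U',v_0')\equiv -1$ and only multiplies $D_U\phi$ by the bounded factor $(\partial_U r)^{-1}$; all extra constants may be absorbed into the $\lesssim$ of the reduced problem, which is by convention allowed to depend on the new basepoint $v_0'$.

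\textbf{Main obstacle.} The only non-routine point is keeping $\partial_U r$ bounded away from $0$ on the whole rectangle: its vanishing would both obstruct the rescaling in $U$ and destroy the bound $|\partial_U r|^{-1}\leq D$. This is precisely why one uses \eqref{Radius2}, which is a \emph{linear} transport equation in $v$ for $\log|\partial_U r|$ whose right-hand side $\kappa(2K-rm^2|\phi|^2)$ is uniformly bounded on the compact rectangle under the bootstrap hypotheses; integration from $v_0$ then forces $|\partial_U r|$ to remain comparable to $1$. The remaining ingredients are entirely standard local existence theory in a double-null gauge.
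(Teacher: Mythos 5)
Your proposal is correct and follows essentially the same route as the paper's own proof in Appendix C: a bootstrap/Gr\"{o}nwall argument on the compact rectangle $[0,U_s]\times[v_0,v_0']$ for the null transport equations, with the bound on $|\partial_U r|^{-1}$ coming from the linear equation \eqref{Radius2} (equivalently $\partial_v \log(-r\nu_H)$, whose right-hand side is bounded because $0\leq\kappa\leq 1$), followed by the choice of $v_0'$ from the event-horizon asymptotics of section \ref{convergence}. The only differences are organizational: the paper bootstraps just $|\nu_H|$ and $|Q|+|\phi|+|\partial_v\phi|$ and derives the remaining bounds, and it places the $U$-rescaling in the discussion preceding the lemma rather than inside its proof.
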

	
	The proof, which is not difficult, is deferred to Appendix C.

In what follows, we will not refer to $v_0'$ any longer, and when we will write $v_0$ in the rest of the paper, we actually mean $v_0'$.
	
	\subsubsection{Main bounds on the event horizon} \label{boundsEH}

	\begin{prop} The following bounds hold on the event horizon :

		\begin{equation} \label{lambdaevent}
		0 \leq \lambda = 1- \mu \lesssim v^{-2s} ,    
		\end{equation} 		
		\begin{equation} \label{radiuseventprop}
		0 \leq r_+ -r(0,v)  \lesssim v^{1-2s}  ,   
		\end{equation}

		\begin{equation}\label{partialvomegaevent}
		|	\partial_v \log(\Omega^2_{H})(0,v)- 2K(0,v) | \lesssim v^{-2s}
	,	\end{equation}	
		\begin{equation}\label{partialuomegaeventprop}
		|	\partial_U \log(\Omega^2_{H})|(0,v)  \lesssim \Omega^2_{H}(0,v),
		\end{equation}
		
		\begin{equation}\label{fieldeventUhorizon}
		|	\partial_U \phi|(0,v)  \lesssim \Omega^2_{H}(0,v) v^{-s}
		\end{equation}		
		\begin{equation}\label{potentialeventhorizon}
		|	A_U|(0,v)  \lesssim \Omega^2_{H}(0,v).
		\end{equation}

		Moreover there exists a fixed function $h(v)$  such that : 
		
		\begin{equation} \label{hdef}
		\Omega^2_{H}(0,v) = -4 \nu_{H} (0,v)=e^{2K_{+}(v+h(v))},
		\end{equation} 
		with

		\begin{equation} \label{hderiv}
		|\partial_v h(v)| \lesssim  v^{1-2s}.
		\end{equation}

	\end{prop}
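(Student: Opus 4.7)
The plan is to derive all these bounds on $\mathcal{H}^+$ in an order that exploits the simplifications coming from the gauge $\kappa_{|\mathcal{H}^+}\equiv 1$, which yields $\Omega^2_H = -4\nu_H$ and $\lambda = 1-\mu$ pointwise on the event horizon. The bound \eqref{partialvomegaevent} is the natural starting point: specialising equation \eqref{Radius2} to $\mathcal{H}^+$ with $\kappa=1$ gives $\partial_v\log|\nu_H| = 2K - r m^2|\phi|^2$, and since $\log\Omega^2_H = \log 4 + \log|\nu_H|$ on $\mathcal{H}^+$, the $v^{-2s}$ bound is immediate from hypothesis \ref{fieldevent} and the boundedness of $r$. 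Because $2K(0,v)\to 2K_+>0$ by the convergence results of the previous subsection, this in turn guarantees $\partial_v\log(\Omega^2_H)(0,v)\geq K_+$ for $v\geq v_0$, so $\Omega^2_H(0,\cdot)$ grows at least exponentially and the red-shift estimates of Lemma \ref{calculuslemma} become available.

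To get \eqref{lambdaevent}, I would exploit \eqref{RaychV} restricted to $\mathcal{H}^+$, which reads $\partial_v(\lambda/\Omega^2_H) = -r|\partial_v\phi|^2/\Omega^2_H \leq 0$. Since $\lambda\to 0$ (from the convergence of $r,\varpi,Q$ to $r_+,M,e$) and $\Omega^2_H\to+\infty$, the ratio $\lambda/\Omega^2_H$ is a non-negative decreasing function vanishing at infinity, which proves both $\lambda\geq 0$ and the identity
\[
\lambda(0,v) = \Omega^2_H(0,v)\int_v^{+\infty}\frac{r|\partial_v\phi|^2}{\Omega^2_H}(0,v')\,dv'.
\]
The integrand being $\lesssim v'^{-2s}/\Omega^2_H(0,v')$, the red-shift-in-$v$ bound of Lemma \ref{calculuslemma} yields an extra $\Omega^2_H(0,v)^{-1}v^{-2s}$ factor, hence $\lambda\lesssim v^{-2s}$; integrating $\lambda\geq 0$ from $v$ to $+\infty$ then gives \eqref{radiuseventprop}. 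Setting $h(v):=\log(\Omega^2_H(0,v))/(2K_+)-v$, the bound \eqref{hderiv} follows from $2K_+\partial_v h = (2K-2K_+)+O(v^{-2s})$ combined with the Taylor expansion $|2K-2K_+|\lesssim |\varpi-M|+|Q-e|+|r-r_+|\lesssim v^{1-2s}$ from \eqref{massevent}, \eqref{chargeevent} and \eqref{radiuseventprop}.

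With the exponential growth of $\Omega^2_H(0,\cdot)$ in hand, the remaining estimates reduce to linear ODE integrations. The potential bound \eqref{potentialeventhorizon} follows from integrating $\partial_v A_U=-Q\Omega^2_H/(2r^2)$ (from \eqref{potentialEinstein}) between $v_0$ and $v$ with initial value $A_U(0,v_0)=0$ from the gauge \eqref{gaugeAponctuelle}, applying Lemma \ref{calculuslemma} to obtain $\int_{v_0}^v\Omega^2_H(0,v')\,dv'\lesssim \Omega^2_H(0,v)$. Next, \eqref{Field3} on $\mathcal{H}^+$ reads as a linear ODE in $v$ for $D_U\phi(0,v)$ with forcing of size $\Omega^2_H v^{-s}$ and linear coefficient $-\lambda/r$ whose integral is finite (this is where $s>\tfrac{1}{2}$ is essential, since only then is $\lambda\lesssim v^{-2s}$ integrable); the integrating factor is therefore bounded, and Lemma \ref{calculuslemma} applied to the forcing, combined with the initial bound $|D_U\phi|(0,v_0)\lesssim 1$ from Lemma \ref{reductionlemma}, gives $|D_U\phi|(0,v)\lesssim \Omega^2_H(0,v)v^{-s}$. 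The bound \eqref{fieldeventUhorizon} then follows from $|\partial_U\phi|\leq |D_U\phi|+|q_0||A_U||\phi|$ together with \eqref{potentialeventhorizon} and hypothesis \ref{fieldevent}.

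Finally, \eqref{partialuomegaeventprop} is obtained by integrating the wave equation \eqref{Omega} in $v$ along $\mathcal{H}^+$: each of its four right-hand-side terms $-2\Re(D_U\phi\,\partial_v\bar\phi)$, $\Omega^2_H/(2r^2)$, $2\nu_H\lambda/r^2$, $-\Omega^2_H Q^2/r^4$ is bounded pointwise by $\Omega^2_H(0,v)$ using the already-established estimates and $\nu_H = -\Omega^2_H/4$; a last application of Lemma \ref{calculuslemma}, together with the finite initial value $\partial_U\log(\Omega^2_H)(0,v_0)$ (which is regular in $U$ from the characteristic setup), yields the claim. The main technical obstacle is the Raychaudhuri step for \eqref{lambdaevent}: one must combine the monotonicity of $\lambda/\Omega^2_H$ with Lemma \ref{calculuslemma} to avoid the loss of a power of $v$ that a naive integration would produce in the non-integrable range $\tfrac{1}{2}<s\leq 1$, which is precisely the case where the sharper $v^{-2s}$ decay of $\lambda$ (rather than $v^{1-2s}$) is crucial for the subsequent ODE arguments and for the propagation of decay into the red-shift region.
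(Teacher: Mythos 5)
Your proof is correct and follows essentially the same route as the paper: \eqref{Radius2} with $\kappa_{|\mathcal{H}^+}\equiv 1$ for \eqref{partialvomegaevent}, the Raychaudhuri equation integrated from $v=+\infty$ together with Lemma \ref{calculuslemma} for \eqref{lambdaevent} and \eqref{radiuseventprop}, and linear integrations of \eqref{potentialEinstein}, \eqref{Field3} and \eqref{Omega} for the remaining bounds. The only deviation is the step $\lambda/\Omega^2_H\to 0$: you deduce it directly from $\lambda\to 0$ (already available from the convergence proposition via $1-\mu\to 1-\tfrac{2M}{r_+}+\tfrac{e^2}{r_+^2}=0$) and the exponential growth of $\Omega^2_H$, whereas the paper proves it self-containedly by integrating $\lambda-l\,\Omega^2_H$ twice and showing any nonzero limit $l$ would contradict the growth of $\int\Omega^2_H$ for $\delta_0<\tfrac13$ --- both arguments are valid.
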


	\begin{proof}

		We use \eqref{Radius2} and gauge \eqref{gauge2} to write : 
		
		\begin{equation} \label{Event}
		\partial_v \log(\Omega^2_{H})=\partial_v \log(- \nu_{H}) =  2K- rm^2 |\phi|^2.
		\end{equation} 
		
		\eqref{partialvomegaevent} then follows directly from assumption \ref{fieldevent}.
		
		We first prove that $$\frac{\lambda}{\Omega^2_H}(v=+\infty) =0.$$
		
		Let $0<\delta_0<1$ suitably small enough to be chosen later, independently of all the parameters. 
		
		Then, by section \ref{reduction} we are allowed to assume that :
		
		$$ |2K-rm^2 |\phi|^2-2K_+| \leq  2\delta_0 K_+.$$

		Then, we integrate \eqref{Event} on $[v_0,v]$ to get : 
	
		$$ e^{ 2K_+(1 -\delta_0)v} \lesssim \Omega^2_H(0,v) \lesssim  e^{ 2K_+(1 +\delta_0)v} .$$
		
		Using \eqref{RaychV} written as $\partial_v  ( \frac{\lambda}{\Omega^2_H})= \frac{-r}{\Omega^2_H}|\partial_v \phi |^2$, we get that 
		
		$$ |\partial_v (\frac{\lambda}{\Omega^2_H})| \lesssim e^{ -2K_+ (1-\delta_0)v}v^{-2s},$$
		
		which is integrable. Therefore $\frac{\lambda}{\Omega^2_H}$ admits a limit $l \in \mathbb{R}$ when $v \rightarrow +\infty$. Integrating \footnote{Recall that $\int_{v}^{+\infty}e^{ -2K_+ (1-\delta_0)v'}v'^{-2s}dv'\lesssim e^{ -2K_+ (1-\delta_0)v}v^{-2s}$. Similarly, $\int_{v_0}^{v}e^{  4K_+ \delta_0v'}v'^{-2s}dv'\lesssim e^{ 4K_+ \delta_0v}v^{-2s}$. } on $[v,+\infty]$, we get after multiplication by $\Omega^2_H(0,v)$  : 
		
		$$ |\lambda - l \Omega^2_H | \lesssim e^{ 4K_+\delta_0 v}v^{-2s}.$$
		
		Integrating again and using the boundedness of $r$, we get after absorbing the $r$ difference in $e^{ 4K_+\delta_0v}v^{-2s}$ 
		
		$$ |l  \int_{v_0}^{v} \Omega^2_H|  \lesssim e^{  4K_+\delta_0v}v^{-2s}.$$
		
		Hence, using the lower bound for $ \Omega^2_H$ : 
		
		$$ |l|  e^{ 2K_+(1-\delta_0) v} \lesssim  e^{4K_+\delta_0 v}v^{-2s}.$$
		
		If $ \delta_0 < \frac{1}{3}$ , it  proves that $l=0$. Since $\partial_v (\frac{\lambda}{\Omega^2_H}) \leq 0$, we have that $$ \lambda \geq 0. $$



Using \eqref{Event} and the earlier section \ref{reduction}, we are allowed to assume that : 
		$$ \partial_v \log(\Omega^2_{H}) \geq K_+ >0. $$
		
		Therefore using a variant of Lemma \ref{calculuslemma} on $[v,+\infty]$ :
		
		$$0 \leq  \lambda(0,v) = \Omega^2_H(0,v) \int_{v}^{+\infty} \frac{r |\partial_v \phi|^2}{\Omega^2_H(0,v')}dv' \lesssim  \Omega^2_H(0,v) \int_{v}^{+\infty} \frac{ |v'|^{-2s}}{\Omega^2_H(0,v')}dv'\lesssim v^{-2s}.$$
		
		Therefore we proved \eqref{lambdaevent} and \eqref{radiuseventprop}. It also gives -using \eqref{massevent} and \eqref{chargeevent}- :  
		
		$$ | 2K(U,v) -2K_+(M,e) | \lesssim v^{1-2s}, $$
		
		and therefore giving \eqref{hderiv} from \eqref{partialvomegaevent}.
		
		\eqref{potentialeventhorizon} follows from \eqref{gaugeAponctuelle} and \eqref{potentialEinstein} written as $\partial_v A_U = -\frac{Q \Omega^2_H(0,v)}{2r^2}$ ,using Lemma \ref{calculuslemma} with $q=0$.
		
		From then it is easy to use \eqref{Field3}, the gauge \eqref{gauge2} and the decay of $\phi$ and  $\partial_v \phi$ to establish \eqref{fieldeventUhorizon}.

		Now writing \eqref{Omega} as 
		
		$$
		|\partial_{v} \partial_{U} \log(\Omega^2_{H})|=|-2 \Re (D_{U} \phi \partial_{v}\phi)+\frac{ \Omega^{2}_H}{2r^{2}}+\frac{2\partial_{U}r\partial_{v}r}{r^{2}}- \frac{ \Omega^{2}_H}{r^{4}} Q^2| \lesssim \Omega^2_H(0,v)
		$$                     
		
		gives immediately \eqref{partialuomegaeventprop} after integration.

	\end{proof}

	\subsection{The red-shift region} \label{redshift}
	
	We define for $  \delta >0 $ suitably small to be chosen later, the red-shift region as : 
	
	$$\mathcal{R}:= \{ U\Omega^2_{H}(0,v) \leq \delta\} = \{ u+v+h(v) \leq \frac{ \log(2K_+\delta)}{2K_+}:=-\Delta\}.$$
	
	In this region, we expect that $\Omega^2$ will be exponentially growing in $u+v$ while still remaining very small as it is the case for Reissner-Nordstr\"{o}m , which is a manifestation of the red-shift effect. 
	
	However already on the event horizon $\frac{\Omega^2_H(0,v)}{e^{2K_+v}}$ may be unbounded \footnote{This quantity may grow like $v^{2-2s}$. If $s>1$ like in \cite{JonathanStab}, this problem does not exist so $\mathcal{R}$ can be defined using $e^{2K_+ \cdot (
			u+v)}$ directly .} so we decide to set 
	
	$e^{2K_+ \cdot (
		u+v+h(v))} = 2K_+ U \Omega^2_H(0,v)$ to be small instead of $e^{2K_+ \cdot (
		u+v)}$.
	
	The most emblematic consequence of the red-shift effect - and the main difficulty- is the bound for the field $|D_u \phi| \lesssim \Omega^2 v^{-s}$ from which we derive the others.

	\subsubsection{Main bounds on the red-shift region}

	\begin{prop} \label{boundsRS}
		
		We have the following control \footnote{Note that \eqref{phivRS}, \eqref{phiURS} and \eqref{ARS} also give $|D_U \phi(U,v)| \lesssim U \Omega^2_H(0,v).$ } on the field and the potential on $\mathcal{R}$ : 
		\begin{equation} \label{phivRS}
		|\phi|+|\partial_v \phi| \lesssim v^{-s},
		\end{equation}			\begin{equation} \label{phiURS}
		|\partial_U \phi| \lesssim \Omega^2_H(0,v) v^{-s},
		\end{equation}
		\begin{equation} \label{ARS}
		|A_U | \lesssim \Omega^2_H(0,v).
		\end{equation}
		\\
		
		We also have : 
		\begin{equation} \label{OmegaPropRedshift}
		| \log(\Omega^2(u,v))- 2K_+ \cdot (u+v+h(v))|	=		 |\log(\frac{\Omega^2_H(U,v)}{\Omega^2_H(0,v)})|\lesssim U\Omega^2_H(0,v),
		\end{equation}
		\begin{equation} \label{kappaRedshiftprop}
		0 \leq 1-\kappa(U,v) \lesssim \Omega^2(U,v) v^{-2s},
		\end{equation}
		
		\begin{equation} \label{partialuRSOmegaprop}
		|\partial_U \log(\Omega^2_H)(U,v) |  \lesssim \Omega^2_H(0,v), 
		\end{equation}
		\begin{equation} \label{partialvRSOmegaprop}
		|\partial_v \log(\Omega^2)(U,v)-2K(U,v) |  \lesssim v^{-2s},
		\end{equation}

		\begin{equation} \label{rRedShiftprop}0 \leq r_+-r(U,v) \lesssim \Omega^2+ v^{1-2s} ,
		\end{equation}
		\begin{equation} \label{QRedShiftprop} |Q(U,v)-e| \lesssim  v^{1-2s} ,
		\end{equation}
		\begin{equation} \label{MRedShiftprop} |\varpi(U,v)-M| \lesssim  v^{1-2s} ,
		\end{equation}
		\begin{equation} \label{KRedShiftprop} |2K(U,v)-2K_+| \lesssim  \Omega^2+   v^{1-2s}.
		\end{equation}
		
	\end{prop}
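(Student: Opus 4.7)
The proof will be a bootstrap argument on subregions $\mathcal{R}(U_1) := \mathcal{R} \cap \{U \leq U_1\}$. I let $\mathcal{U} \subset [0, U_s]$ be the set of $U_1$ for which weakened versions of \eqref{phivRS}--\eqref{KRedShiftprop} (say with all constants doubled) hold throughout $\mathcal{R}(U_1)$. Then $\mathcal{U}$ is nonempty at $U_1 = 0$ by the event-horizon estimates of Section \ref{boundsEH}, closed by continuity, and the substance lies in openness, which amounts to \emph{recovering} the original constants $\tilde{C}$ from the weakened bootstrap. This recovery is made possible by the fact that all error terms pick up either a factor $\delta$, coming from $U\Omega^2_H(0,v) \leq \delta$ in $\mathcal{R}$, or a factor $v_0^{-(2s-1)}$; choosing $\delta$ small and $v_0$ large (both independent of $U_1$) absorbs these errors, and $U_s$ is then fixed at the end.

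To actually close the bootstrap I would recover the eleven estimates in roughly the following order. First, integrate \eqref{ChargeVEinstein} and \eqref{massVEinstein} in $v$ from the horizon values \eqref{chargeevent}, \eqref{massevent}, using the same integration by parts as in Section \ref{boundsEH} to convert the non-integrable boundary term $r^2|\phi|^2$ into an integrable bulk remainder $\lesssim v^{-2s}$; this yields \eqref{QRedShiftprop}, \eqref{MRedShiftprop}, and \eqref{KRedShiftprop} follows from the definition of $K$ and \eqref{rRedShiftprop}. Next, \eqref{RaychV2} together with the red-shift $v$-branch of Lemma \ref{calculuslemma}, valid because $\partial_v\log\Omega^2_H(0,v) \geq K_+$ by \eqref{partialvomegaevent} combined with Section \ref{reduction}, yields $0 \leq 1 - \kappa \lesssim \Omega^2 v^{-2s}$, hence \eqref{kappaRedshiftprop}; integrating $\nu_H \sim -\Omega^2_H/4$ once in $U$ then gives \eqref{rRedShiftprop}. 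Integrating \eqref{Omega2} in $v$ from the horizon bound \eqref{partialuomegaeventprop} produces \eqref{partialuRSOmegaprop} via another application of the same calculus lemma, and one further $U$-integration yields \eqref{OmegaPropRedshift}; the bound \eqref{partialvRSOmegaprop} is then read directly off \eqref{Radius2}. Finally \eqref{ARS} comes from integrating \eqref{potentialEinstein} in $v$ from the gauge \eqref{gaugeAponctuelle}, \eqref{phiURS} from integrating \eqref{Field3} in $v$ from the initial bound of hypothesis \ref{fieldUevent}, and \eqref{phivRS} from integrating \eqref{Field2} in $U$ over an interval of length $\leq \delta/\Omega^2_H(0,v)$ against a right-hand side that is pointwise $\lesssim v^{-s}$.

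The main difficulty will be recovering \eqref{phiURS}, the estimate $|D_U \phi| \lesssim \Omega^2_H(0,v)\, v^{-s}$. The right-hand side of \eqref{Field3} contains the term $-\nu_H \partial_v \phi /r$, which under the bootstrap is of size $\Omega^2_H(0,v)\,v^{-s}$; straightforward $v$-integration from $v_0$ would therefore lose an unacceptable factor $v^{1-s}$ when $s > 1/2$. This is precisely the scenario for which Lemma \ref{calculuslemma} is tailored: in $\mathcal{R}$, $\Omega^2_H(0,v) \sim e^{2K_+(v + h(v))}$ with $\partial_v \log \Omega^2_H(0,v) \geq K_+$, so the lemma's red-shift $v$-branch delivers $\int_{v_0}^{v} \Omega^2_H(0,v') v'^{-s} \,dv' \lesssim \Omega^2_H(0,v)\, v^{-s}$ with no loss of powers of $v$, which is exactly what is needed to close the bootstrap. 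A further minor subtlety is the phase $e^{iq_0 \int A_U \,dU}$ appearing in \eqref{Field2}; since $|A_U| \lesssim \Omega^2_H(0,v)$ and the $U$-interval has length $\leq \delta/\Omega^2_H(0,v)$, this phase is uniformly bounded and plays essentially no role in the estimate. The same red-shift absorption mechanism, applied in both $U$ and $v$ directions, is what improves every other constant on the list.
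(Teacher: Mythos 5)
Your overall strategy --- a bootstrap on $\mathcal{R}$ closed by the red-shift branch of Lemma \ref{calculuslemma}, with $|D_U\phi|\lesssim \Omega^2_H(0,v)v^{-s}$ identified as the crux --- is the same as the paper's, and the absorption mechanism you describe for the $-\nu_H\partial_v\phi/r$ source term is exactly right. But two of your steps would not go through as written. First, you propose to ``integrate \eqref{ChargeVEinstein} and \eqref{massVEinstein} in $v$ from the horizon values'': a $v$-integration at fixed $U>0$ starts on $\mathcal{C}_{in}=\{v=v_0\}$, not on $\mathcal{H}^+=\{U=0\}$, so the horizon bounds \eqref{chargeevent}, \eqref{massevent} are not available as data for it. To import the horizon values into the bulk one must integrate the $U$-equations \eqref{chargeUEinstein}, \eqref{massUEinstein} in $U$ from $U=0$ --- which is what the paper does, and which costs only a factor $U\Omega^2_H(0,v)\leq\delta$ under the bootstrap, with no integration by parts needed. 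The same direction confusion affects your derivation of \eqref{partialuRSOmegaprop}: the data for the $v$-integration of \eqref{Omega} lives on $\mathcal{C}_{in}$, where $|\partial_U\log\Omega^2_H|(U,v_0)=r|D_U\phi|^2(U,v_0)\lesssim 1$ by \eqref{RaychU} and gauge \eqref{gauge1}, not on the horizon where \eqref{partialuomegaeventprop} holds. Also, \eqref{partialvRSOmegaprop} cannot be ``read directly off \eqref{Radius2}'': that identity controls $\partial_v\log(-\nu_H)$, which differs from $\partial_v\log\Omega^2$ by $\partial_v\log\kappa$, a quantity your pointwise bound on $1-\kappa$ does not control; the paper instead obtains \eqref{partialvRSOmegaprop} by $U$-integrating \eqref{Omega2} from the horizon bound \eqref{partialvomegaevent}.

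Second, and more substantively, your closure of the $D_U\phi$ bootstrap by direct $v$-integration of \eqref{Field3} leaves unaddressed the self-coupling term $-\lambda D_U\phi/r$: after the calculus-lemma absorption its contribution is $\tilde{C}\,\sup_{\mathcal{R}}|\lambda|\cdot D\,\Omega^2_H(0,v)v^{-s}$, i.e.\ proportional to the bootstrap constant $D$ itself, so the argument closes only if $\sup_{\mathcal{R}}|\lambda|$ is small. This is true --- $\lambda=\kappa(1-\mu)$ and $r$ stays within $O(\delta+v_0^{1-2s})$ of $r_+$ throughout $\mathcal{R}$ --- but it must be supplied as part of the bootstrap, and it is the one place where more than the exponential behaviour of $\Omega^2_H$ is being used. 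The paper sidesteps this entirely by renormalizing: it propagates $e^{av}rD_U\phi/\nu_H$ via \eqref{Fieldshift}, in which the $\lambda D_U\phi$ terms cancel exactly against $\partial_v\log(-\nu_H)=\kappa(2K-rm^2|\phi|^2)$ and the surviving self-coupling carries the damping coefficient $a-\kappa(2K-rm^2|\phi|^2)\leq -K_+/4$, so Gr\"{o}nwall closes the estimate with no smallness of $\lambda$ required. Your route can be completed, but only after these repairs.
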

	
	\begin{proof}
		We bootstrap \footnote{For an introduction to bootstrap methods, c.f chapter 1 of \cite{Tao}.} the following estimates \footnote{Notice that bootstrap \eqref{B3} and \eqref{B4} combined give $\Omega^2_H(U,v) \leq 4 \Omega^2_H(0,v).$} in $\mathcal{R}$ : 
		\begin{equation} \label{B2}
		|\phi|+|\partial_v \phi| \leq 4C v^{-s},
		\end{equation}
				\begin{equation} \label{Bnew}
			|D_U \phi|	 \leq D \Omega^2_H(0,v) v^{-s},
				\end{equation}
						\begin{equation} \label{B3}
		-\nu_{H}(U,v) \leq \Omega^2_{H}(0,v),
		\end{equation}
			\begin{equation} \label{B4}
		\frac{1}{2} \leq \kappa \leq 1,
		\end{equation}
			\begin{equation} \label{B5}
		|Q-e| \leq 4 \bar{C}  v^{1-2s}.
		\end{equation}

		Where $\bar{C}$ is the constant of estimate \eqref{chargeevent} and $D$ is a large enough constant -independent of $\delta$- to be chosen later. Recall also that $C$ is defined in the statement of Theorem \ref{Stabilitytheorem}.
		
			We can first write \eqref{massUEinstein} using 
			 bootstraps \eqref{B2}, \eqref{Bnew}, \eqref{B3}, \eqref{B4}   as : 
			
			$$ |\partial_U \varpi | \lesssim (  D^2 |\lambda| + 1) \Omega^2_H(0,v) v^{-2s}.$$

			Using \eqref{Radius}, it is not difficult to prove that $|\lambda|$ is bounded hence after integrating in $U$ :
			
			\begin{equation} \label{massnew}
			| \varpi(U,v) - \varpi(0,v)| \lesssim D^2 \delta v^{-2s}.
			\end{equation}
			
			Then it gives \eqref{MRedShiftprop} , using the bound on the event horizon with $\delta$ small enough with respect to $D$ notably.
	
	Similarly we get :
	\begin{equation} \label{chargenew}
		| Q(U,v) -Q(0,v)| \lesssim D \delta v^{-2s},
	\end{equation}		
		which proves \eqref{QRedShiftprop} and closes bootstrap \eqref{B5} for $\delta$ small enough.  \\
		
			We now write \eqref{potentialEinstein} as : 
			
			$$ \partial_v A_U = -\frac{2Q\kappa }{r} \nu_H (U,v).$$
			
			Then bootstraps \eqref{B3}, \eqref{B4} and \eqref{B5} give 
			
			$$ |\partial_v A_U| \lesssim  \Omega^2_H (0,v).$$
			
			Hence with gauge \eqref{gaugeAponctuelle} and the bound on the event horizon \eqref{hdef}, \eqref{hderiv}, we use Lemma \ref{calculuslemma} with $q=0$ to get \eqref{ARS}  :

			$$	|A_U | \lesssim \Omega^2_H(0,v).$$
		
		Now using the last equation we get with bootstrap \eqref{B2} and \eqref{Bnew}	: 
		$$ |\partial_U \phi | \lesssim D \Omega^2_H(0,v) v^{-s}.$$
	We can then  integrate  to get : 
		
		\begin{equation} \label{phinew}
| \phi(U,v)-\phi(0,v)| \lesssim D \delta v^{-s}.
		\end{equation}
		
		which implies that for $\delta$ small enough : 
		
		$$ |\phi| \leq 2 C  v^{-s}.$$
		\\

		Let $ 0<a$ be a constant suitably chosen later. We can rewrite \eqref{Field} together with \eqref{Radius} as : 
		
		\begin{equation} \label{Fieldshift}
		\partial_v ( e^{av} r\frac{D_U \phi}{\nu_H}) =  \left( a- \kappa( 2K- rm^2 |\phi|^2) \right) e^{av} r\frac{D_U \phi}{\nu_H} - e^{av}\partial_v \phi +\kappa e^{av} r m^2 \phi .
		\end{equation}
		 
		We first need to prove that $K$ is lower bounded in $\mathcal{R}$. The bootstrap \eqref{B3} gives : 
		
		$$ 0 \leq r(0,v) -r(U,v) \leq \delta .$$
		
		
		
		Then, making use of \eqref{massnew} and \eqref{chargenew}, we write : 
		
		$$ |K(U,v)-K_+| \leq |K(U,v)-K(0,v)|+|K(0,v)-K_+| \lesssim (1+D+D^2) \delta + |K(0,v)-K_+|. $$
		
	We then recall that the discussion of section \ref{reduction} allows us to consider that $ |K(0,v)-K_+| \leq \eta K_+$ and also that $r m^2|\phi|^2(0,v) < \eta K_+$ for any $\eta$ not depending on the parameters. Hence for $\delta$ small enough, we can assume that 
	
	$$ 2K(U,v)-rm^2 |\phi|^2(U,v)> K_+.$$
		
		Choosing say $0<a < \frac{K_{+}}{4}$ gives with  bootstrap \eqref{B4} that 
		$a- \kappa( 2K-rm^2 |\phi|^2) \leq -\frac{K_{+}}{4}$

		We then use the Gr\"{o}nwall Lemma combined with the boundedness of  bootstrap \eqref{B4}, the lower boundedness of $r$, the decay of bootstrap \eqref{B2}  and assumption \eqref{fieldv0} with gauge \eqref{gauge1} for the initial condition to get :  
		
		$$
		| r\frac{ D_U \phi }{\nu_{H}}| \lesssim v^{-s}+e^{a(v_0-v)}\lesssim v^{-s}.$$
		
	It also closes\footnote{We used that $r$ is bounded below by a constant depending of $v_0$ and the parameters for $\delta$ small enough.} bootstrap \eqref{Bnew} if $D$ is large enough compared \footnote{In particular, $D$ is taken large enough independently of $\delta$, hence taking $\delta$ small enough compared to $D$ was licit and boiled down to taking $\delta$ small enough compared to the parameters.} to the constant that arises which depends on $C,e,M,q_0,m^2,s,v_0$ only and proves : 
		
		\begin{equation} \label{Duphi}
		|D_U \phi |(U,v)\lesssim \Omega^2_H(0,v) v^{-s}.
		\end{equation}
		\begin{equation} 
		|\partial_U \varpi|(U,v)+ 	|\partial_U Q^2|(U,v)\lesssim \Omega^2_H(0,v) v^{-2s}.
		\end{equation} 
		
		Using the preceding bounds on $\phi$ and $A_U$, we get \eqref{phiURS} :
		
		$$
		| \partial_U \phi | \lesssim |\nu_H | v^{-s}.$$

		Now using \eqref{ARS}, we can write \eqref{Field} as :  
		$$ |\partial_U (\partial_v \phi) | \lesssim |\partial_U \phi| + \Omega^2_H( |\phi|+ |\partial_v \phi|)  \lesssim -\nu_H(U,v) v^{-s}.$$
		
		Hence by \eqref{B3}, bootstrap \eqref{B2} is validated for $\delta$ small enough.

	 Recall from section \ref{reduction} that we established that everywhere on the space-time : 
		$$ 0 \leq \kappa \leq 1 .$$

		Writing \eqref{RaychU} in (U,v) coordinates, we get -using \eqref{Duphi}-  :

		$$
		|\partial_U  \log(\kappa)| = \frac{r}{-\nu_H}|D_{U}\phi|^{2} \lesssim |\nu_{H}| v^{-2s}.$$

		Using bootstrap \eqref{B3} we get the amelioration :

		\begin{equation} \label{kappaRedshift}
		0 \leq 1-\kappa \lesssim U \Omega^2_H(0,v) v^{-2s}.
		\end{equation}
		
		Hence bootstrap \eqref{B4} is validated for $\delta$ small enough. \\

		Now we write \eqref{Omega} as : 
		
		$$ |\partial_v \partial_U \log(\Omega^2_H) | \lesssim |\nu_H| \leq \Omega^2_{H}(0,v) = e^{2K_+( v + h(v))}.$$
		
		Hence we establish \eqref{partialuRSOmegaprop} using Lemma \ref{calculuslemma} and \eqref{hderiv}  : 
		
		$$ |\partial_U \log(\Omega^2_H)(U,v) | \lesssim |\partial_U \log(\Omega^2_H) |(U,v_0)+ \int_{v_0}^{v} e^{2K_+( v' + h(v'))}dv' \lesssim \frac{1}{K_+} e^{2K_+ (v + h(v))} \lesssim \Omega^2_H(0,v) ,$$
		
		where we used that on $C_{v_0}$ and due to \eqref{RaychU} ,  assumption  \ref{fieldv0} and gauge \eqref{gauge1} :   $$|\partial_U\log(\Omega^2_H) |(U,v_0)  =r |D_U \phi|^2(U,v_0) \lesssim 1 .$$  
		
		Hence we establish \eqref{OmegaPropRedshift}, that we write with a constant $\tilde{C}>0$ as : 
		
		$$ \label{OmegaRedshift}
		e^{-\tilde{C}U\Omega^2_H(0,v)} \leq \frac{\Omega^2_H(U,v)}{\Omega^2_H(0,v)}\leq  e^{\tilde{C}U\Omega^2_H(0,v)},
		$$
		
		and in particular : 
		
		$$
		e^{-\tilde{C}\delta} \leq \frac{\Omega^2_H(U,v)}{\Omega^2_H(0,v)}\leq  e^{\tilde{C}\delta},
		$$
		
		which together with \eqref{kappaRedshift} closes bootstrap \eqref{B3} for $\delta$ small enough .
		It  gives\footnote{Notice that $\delta$ small enough is to be understood as $\delta \leq \epsilon(C,e,M,q_0,m^2,s,v_0)$ with $\epsilon$ small enough.} also \eqref{kappaRedshiftprop}.

		Moreover we have the more precise estimate : 
		
		$$
	e^{-\tilde{C}\delta} \leq \frac{-4\nu_H(U,v)}{\Omega^2_H(0,v)}\leq  	\frac{ e^{\tilde{C}\delta}}{1-\tilde{C'}\delta v^{-2s}}.
		$$
		
		We get the more refined bound \eqref{rRedShiftprop} on $r$, using \eqref{radiuseventprop} : 
		
		$$
		0 \leq r_+ -r(U,v) \leq \frac{1}{4}e^{\tilde{C}\delta}U\Omega^2_H(0,v) + \tilde{C} v^{1-2s}  \lesssim \Omega^2(U,v)+ v^{1-2s} .$$

		As a consequence of \eqref{rRedShiftprop}, \eqref{QRedShiftprop} and \eqref{MRedShiftprop} we get \eqref{KRedShiftprop}. \\

		Finally we can rewrite \eqref{Omega2} in $(U,v)$ coordinates and using our estimates we get : 
		
		$$
		|\partial_U (\partial_v \log(\Omega^2)-2K) | \lesssim  |\kappa-1| |\partial_U (2K)| + |D_U \phi| |\partial_v \phi| + |\partial_U \varpi|+|\partial_U Q| \lesssim \Omega^2_H(0,v) v^{-2s}.
		$$
		
		Hence with \eqref{partialvomegaevent}, we prove \eqref{partialvRSOmegaprop}.

	\end{proof}

	\subsubsection{Control of $\iota$ in the late red-shift transition region} \label{redtransition}
	
	Notice that in Proposition \ref{boundsRS}, we have an estimate for $1-\kappa$ but nothing for the $v$-analogue $1-\iota$. This is because $\iota^{-1}$ blows-up in general near the event horizon where $1-\iota^{-1}(0,v)=+\infty$.
	
	It is important to get a bound for  $1-\iota$ as it will give control of $\partial_u \log(\Omega^2)-2K$, in the same manner $1-\kappa$ bounds in $\mathcal{R}$ gave control of  $\partial_v \log(\Omega^2)-2K$. \\
	 
Still we will show that we can control $1-\iota$ on  a subset\footnote{$C_0$ is chosen such that $C_0 v_0^{-q(s)} < \delta$.} of $\mathcal{R}$ defined as $$\mathcal{LR} := \{ C_0 v^{-q(s)} \leq  U \Omega^2_H(0,v) \leq \delta \},$$  where $q(s) = 1_{ \{s \leq 1 \}          }+s 1_{  \{s > 1 \}}$  and we call this subset the late red-shift transition region. 
	
	The name transition simply comes from the fact we aim at bounding $\partial_u \log(\Omega^2)-2K$ instead of $\partial_u \log(\Omega^2)-2K_+=\partial_u \log(\Omega^2_H)$ so there is a transition from $2K_+$ to $2K$. \\
	
	Notice that in this region $|u| \sim v$.

	\begin{prop} In  $\mathcal{LR} := \{ C_0 v^{-q(s)} \leq  U \Omega^2_H(0,v) \leq \delta \}$, we have the following estimates  : 
		
		\begin{equation} \label{iotalateRS}
		|1- \iota^{-1}| \lesssim v^{-p(s)},
		\end{equation}
		
		\begin{equation} \label{partialuomegalateRS}
		|\partial_u \log( \Omega^2) -2K| \lesssim v^{-p(s)},
		\end{equation}

		where  \footnote{The behaviour is different for $s>1$ but still gives integrability when $s>1$ and non-integrability if $s\leq 1$. } $p(s) = (2s-1) 1_{ \{s \leq 1 \}          }+s 1_{  \{s > 1 \}}$.
	\end{prop}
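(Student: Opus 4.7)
The plan is to establish both bounds via a joint analysis of a first-order ODE for $\iota^{-1}$ in $u$. Differentiating $\iota^{-1}=-4\lambda/\Omega^2$ and using $\partial_u\lambda=-\tfrac{\Omega^2}{4}(2K-rm^2|\phi|^2)$ from \eqref{Radius} yields
\[
\partial_u(\iota^{-1}\Omega^2)=\Omega^2(2K-rm^2|\phi|^2),
\]
or equivalently $\partial_u\iota^{-1}=(2K-rm^2|\phi|^2)-\iota^{-1}\partial_u\log\Omega^2$. Integrating from the event horizon, where $\iota^{-1}\Omega^2\to -4\lambda(0,v)_{|\mathcal{H}^+}\lesssim v^{-2s}$ by \eqref{lambdaevent}, and using the identity $\int_{-\infty}^u \Omega^2\,\partial_{u'}\log\Omega^2\,du'=\Omega^2(u,v)$, I rewrite this as
\[
(\iota^{-1}-1)\Omega^2(u,v)=-4\lambda(0,v)-\int_{-\infty}^u \Omega^2(u',v)\bigl[(\partial_{u'}\log\Omega^2-2K)+rm^2|\phi|^2\bigr]\,du'.
\]

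The heart of the matter is to bound the integrand $\partial_u\log\Omega^2-2K$ at the sharp rate $v^{1-2s}$. Individually, $|2K-2K_+|\lesssim \Omega^2+v^{1-2s}$ from \eqref{KRedShiftprop} and $|\partial_u\log\Omega^2-2K_+|=|2K_+U\partial_U\log\Omega^2_H|\lesssim U\Omega^2_H(0,v)\sim \Omega^2$ from \eqref{partialuRSOmegaprop} are each only $O(\Omega^2)$, which at the outer edge of $\mathcal{LR}$ saturates at the constant $\delta$. However, the $O(\Omega^2)$ pieces cancel: on Reissner-Nordstr\"{o}m the algebraic identity $2K_{M,e}(r)-2K_+=2K_+U\partial_U\log\Omega^2_{H,RN}$ holds pointwise, so $2K-\partial_u\log\Omega^2$ vanishes on the background. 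Expanding $2K$, $\partial_u\log\Omega^2$ and $r$ about their RN values at the same coordinate $(u,v)$, and invoking the decay $|\varpi-M|,|Q-e|\lesssim v^{1-2s}$ from \eqref{MRedShiftprop}--\eqref{QRedShiftprop} together with a refinement of \eqref{rRedShiftprop} separating the RN profile of $r$ from its perturbative deviation, isolates a residual of size
\[
|\partial_u\log\Omega^2-2K|\lesssim v^{1-2s}.
\]
This gives \eqref{partialuomegalateRS} at once since $v^{1-2s}\leq v^{-p(s)}$ in both cases $s\leq 1$ and $s>1$.

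Inserted into the integral representation above, together with Lemma \ref{calculuslemma} (red-shift in $|u|$) applied to $\int_{-\infty}^u\Omega^2\,du'\lesssim \Omega^2(u,v)$, this yields
\[
|(\iota^{-1}-1)\Omega^2(u,v)|\lesssim v^{-2s}+v^{1-2s}\Omega^2(u,v).
\]
Dividing by $\Omega^2(u,v)\gtrsim v^{-q(s)}$, the defining lower bound of $\mathcal{LR}$, gives
$|1-\iota^{-1}|\lesssim v^{q(s)-2s}+v^{1-2s}\lesssim v^{-p(s)}$,
using $q(s)-2s=-p(s)$ in both regimes and $1-2s\leq -p(s)$, which establishes \eqref{iotalateRS}.

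The main obstacle is precisely the extraction of the RN cancellation described above. A naive triangle-inequality bound using the separate estimates of Proposition \ref{boundsRS} only yields $|\partial_u\log\Omega^2-2K|\lesssim \Omega^2+v^{1-2s}$, producing a final error $\lesssim \delta+v^{-p(s)}$ that fails to decay at the outer boundary of $\mathcal{LR}$ where $U\Omega^2_H(0,v)\sim\delta$. Leveraging the background identity to isolate the true perturbative size of the difference is what turns this into the sharp $v^{1-2s}$ estimate, and this is the only delicate step: the remainder of the argument is a direct ODE integration and a division by the $\mathcal{LR}$ lower bound on $\Omega^2$.
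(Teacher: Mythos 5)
Your ODE setup is exactly the paper's: you integrate $\partial_u(\Omega^2+4\lambda)=\Omega^2(\partial_u\log\Omega^2-2K+rm^2|\phi|^2)$ from the event horizon and divide by $\Omega^2$, and you correctly diagnose that the crude bound $|\partial_u\log\Omega^2-2K|\lesssim\Omega^2+v^{1-2s}$ (which is all that Proposition \ref{boundsRS} provides) leaves an error of size $\Omega^2(u,v)$, which is only $\sim\delta$ at the outer edge of $\mathcal{LR}$. The problem is your proposed fix. The claim that ``the $O(\Omega^2)$ pieces cancel'' against the Reissner--Nordstr\"om background, yielding the pointwise bound $|\partial_u\log\Omega^2-2K|\lesssim v^{1-2s}$ throughout $\mathcal{R}$, is asserted but not proved, and it cannot be extracted from the estimates available. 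To make that cancellation quantitative you would need to control $r-r_{RN}$, $\Omega^2-\Omega^2_{RN}$ and $\partial_U\log\Omega^2_H-\partial_U\log\Omega^2_{H,RN}$ \emph{at the same coordinates} to order $v^{1-2s}$; but \eqref{rRedShiftprop} only gives $|r-r_+|\lesssim\Omega^2+v^{1-2s}$ with no statement that the $\Omega^2$ part matches the background profile, \eqref{partialuRSOmegaprop} only gives $|\partial_U\log\Omega^2_H|\lesssim\Omega^2_H(0,v)$ with an $O(1)$ contribution from the free data $\partial_U\log\Omega^2_H(U,v_0)=-r|D_U\phi|^2(U,v_0)$ on $\mathcal{C}_{in}$, and the identification of ``the same coordinates'' is itself obstructed by the dynamical shift $h(v)$. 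This difference-with-the-background strategy is precisely what the paper states fails for $s\leq1$ because $r-r_{RN}$ and $\Omega^2-\Omega^2_{RN}$ are not controllable at integrable order; relying on it here is a genuine gap, not a routine computation. (Note also that for $s>1$ the paper only obtains $v^{-p(s)}=v^{-s}$, strictly weaker than your claimed $v^{1-2s}$, which is further evidence the sharp cancellation is not available.)

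The paper's actual route avoids this entirely and is worth internalizing: it first applies your integral identity only on the \emph{past boundary} $\gamma_{\mathcal{LR}}=\{U\Omega^2_H(0,v)=C_0v^{-q(s)}\}$, where $\Omega^2\sim v^{-q(s)}\leq v^{-p(s)}$, so the crude bound already gives $|1-\iota^{-1}|_{\gamma_{\mathcal{LR}}}\lesssim v^{-q(s)}+v^{1-2s}+v^{q(s)-2s}\lesssim v^{-p(s)}$. It then propagates this \emph{forward in $v$} across $\mathcal{LR}$ using the Raychaudhuri equation \eqref{RaychV}, the increment $\int\Omega^{-2}v'^{-2s}dv'$ being controlled by the red-shift case of Lemma \ref{calculuslemma} and evaluating to $\Omega^{-2}(u,v_{\gamma_{\mathcal{LR}}}(u))v_{\gamma_{\mathcal{LR}}}(u)^{-2s}\sim v^{q(s)-2s}=v^{-p(s)}$. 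Only afterwards does it obtain \eqref{partialuomegalateRS}, by integrating $\partial_v(\partial_u\log\Omega^2-2K)$ via \eqref{Omega2} with the just-established bound on $\iota$ as input --- the reverse of your order. If you want to salvage your write-up, replace the unproven cancellation step with this two-stage (boundary estimate, then $v$-propagation) argument.
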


	\begin{proof}
		
		Use \eqref{Radius} to write : 
		
		$$ \partial_u (\Omega^2 + 4 \lambda)= \Omega^2 ( \partial_u \log(\Omega^2)-2K+ rm^2|\phi|^2).$$
		
		We can integrate from the event horizon for $u' \in (-\infty, u ]$ to get : 
		
		$$ |\Omega^2 + 4 \lambda|(u,v) \lesssim |\lambda|(-\infty,v)_{|\mathcal{H}^+}+ \int_{-\infty}^{u} \Omega^2(u',v) | r m^2 |\phi|^2+  \partial_u \log(\Omega^2)-2K| du'. $$
		
		Notice that \eqref{partialuRSOmegaprop}- thanks to \eqref{OmegaPropRedshift}- can be alternatively written as 
		
		$$ |\partial_u  \log(\Omega^2)(u,v) -2K_+ | \lesssim U\Omega^2_{H}(0,v)  \sim \Omega^2(u,v). $$ 
		
		In particular if $\delta$ is chosen to be small enough, $\partial_u  \log(\Omega^2) >K_+ $.
		
		Moreover, \eqref{phivRS} and \eqref{KRedShiftprop} give : 
		
		$$ |\Omega^2 + 4 \lambda|\lesssim |\lambda|_{|\mathcal{H}^+}+ \int_{-\infty}^{u} \Omega^2(u',v)(\Omega^2(u',v) + v^{1-2s}) du'= |\lambda|_{|\mathcal{H}^+}+ \int_{-\infty}^{u} \Omega^4(u',v)du'+v^{1-2s}\int_{-\infty}^{u} \Omega^2(u',v) du'.$$
		
		We then divide by $\partial_u  \log(\Omega^2)$ which is lower bounded to use Lemma \ref{calculuslemma} and with \eqref{lambdaevent} we get \footnote{Recall that $\Omega^2(u,v)= 2K_+ U\Omega^2_H(U,v)$.}
		
		$$ |\Omega^2 + 4 \lambda|(u,v) \lesssim \Omega^2(u,v)(\Omega^2(u,v) + v^{1-2s})+ v^{-2s}.$$
		
		Therefore -dividing by $\Omega^2$- on the past boundary of  $\mathcal{LR}$ defined as $\gamma_{\mathcal{LR}} := \{ U \Omega^2_H(0,v)= C v^{-q(s)}\}$ we get 
		
		$$ |1 - \iota^{-1}|_{\gamma_{\mathcal{LR}}} \lesssim v^{-q(s)}+ v^{1-2s}+v^{q(s)-2s} \lesssim  v^{-p(s)}.$$
		
		We then integrate \eqref{RaychV} from $\gamma_{\mathcal{LR}}$ i.e on $[v_{\gamma_{\mathcal{LR}}}(u),v]$, using \eqref{phivRS} : 
		
		$$ |1 - \iota^{-1}|(u,v) \leq |1 - \iota^{-1}|(u, v_{\gamma_{\mathcal{LR}}}(u)) +\int_{v_{\gamma_{\mathcal{LR}}}(u)}^{v} \Omega^{-2}(u,v') v'^{-2s} dv'.$$
		
		Thanks to \eqref{partialvRSOmegaprop} and for $|u_s|$ large enough,  $\partial_v \log(\Omega^2) > K_+$ hence using Lemma \ref{calculuslemma} : 
		
		$$\int_{v_{\gamma_{\mathcal{LR}}}(u)}^{v} \Omega^{-2}(u,v') v'^{-2s} dv' \lesssim \Omega^{-2}(u,v_{\gamma_{\mathcal{LR}}}(u)) v_{\gamma_{\mathcal{LR}}}(u)^{-2s} \lesssim v^{q(s)-2s},$$
		
		where we have used in the last inequality that in this region $v_{\gamma_{\mathcal{LR}}}(u) \sim |u| \sim v $.
		
		Hence \eqref{iotalateRS} is proved : 
		
		$$ |1 - \iota^{-1}| \lesssim v^{-p(s)}+v^{q(s)-2s} \lesssim v^{-p(s)}.$$
		
		Notice that because of \eqref{phivRS} and the boundedness \footnote{Since $v \sim |u|$ in this region we can take $|u_s|$ to be large enough so that -say- $|1-\iota^{-1}| \leq 0.01$.} of $\iota^{-1}$ we have : 
		
		$$ |\partial_v \varpi |+|\partial_v Q^2 | \lesssim v^{-2s},$$
			$$ |\partial_v (2K) | \lesssim \Omega^2 +v^{-2s}.$$
		
		Hence using $\eqref{Omega2}$ and the red-shift region main bounds we get : 
		
		$$ | \partial_v ( \partial_u \log(\Omega^2)-2K)| \lesssim (\Omega^2+v^{-2s} )v^{-p(s)}+\Omega^{2}v^{-2s} + \iota v^{-2s} \lesssim v^{-2s}+ \Omega^2v^{-p(s)}.$$
		
		Integrating using that $\partial_v \log(\Omega^2) > K_+$ and Lemma \ref{calculuslemma} gives \eqref{partialuomegalateRS}, after noticing that : 
		
		$$ | \partial_u  \log(
		\Omega^2)(u,v_{\gamma_{\mathcal{LR}}}(u))-2K(u,v_{\gamma_{\mathcal{LR}}}(u))| \lesssim \Omega^2(u,v_{\gamma_{\mathcal{LR}}}(u))+ |2K-2K_+|(u,v_{\gamma_{\mathcal{LR}}}(u)) \lesssim v^{-q(s)}+v^{1-2s}.$$

	\end{proof}
	

	\subsection{The no-shift region} \label{noshift}

	We now define the no-shift region as : $$\mathcal{N}:= \bigcup\limits_{k=1}^{N}\mathcal{N}_k \; ,$$
	
	where $$\mathcal{N}_k:= \{\Delta_{k-1} :=-\Delta + (k-1) \epsilon\leq  u+v+h(v) \leq \Delta_k:=-\Delta + k \epsilon \} \; ,$$
	
	 $\epsilon>0$ small enough and $N \in \mathbb{N}$ large enough are to be chosen\footnote{Later, we will first choose $\epsilon$ small compared to $C,e,M,q_0,m^2$ and $\delta$ in this section. Once $\epsilon$ is chosen and small enough, we will choose $N \epsilon $ large enough compared to $C,e,M,q_0,m$ and $\delta$ in the next section.  } later.
	
	We take the convention that $\mathcal{N}_0= \gamma_{-\Delta}$ is the past boundary of $\mathcal{N}$. \\

	This is the region where the transition between the red-shift effect and the blue-shift effect occurs : $2K$ goes from positive values for $r$ close to $r_+$ towards negative values for $r$ close to $r_-$. 
	
	Since the derivatives of $\log(\Omega^2)$ are broadly $2K$ which changes sign hence cancels,  we cannot use the technique arising from Lemma \ref{calculuslemma} as before.
	
	Moreover, we cannot hope for any decay of  $\Omega^2$ that is small on the past and future boundary but is only bounded in between. 
	
	However, this region is easy because the $u+v+h(v)$ difference is finite so that essentially, we do not lose the bounds proved in the red-shift region. 

There are two difficulties : the first is to prove decay for the wave equations. We do it by splitting $\mathcal{N}$ into small enough pieces which allows us to close the bootstrapped bounds.

	The second and main difficulty is to prove that the blue-shift indeed appears, i.e that $r$ is decreasing enough so that it reaches $ M -\frac{e^2}{r} <0 $ i.e $K_{M,e}(r)<0$, giving also $K<0$. \\
	
	Note that in $\mathcal{N}$ : $|u| \sim v $, due to \eqref{hderiv} which gives $h(v) = o(v)$.
	
	We will denote for $ 0 \leq k \leq N$ :  $\gamma_{k} := \{ u+v+h(v) = \Delta_k\}$.  We also denote $u_{k}(v)$ the unique $u$ such that $(u_{k}(v),v) \in \gamma_{k}$. We define similarly  $v_{k}(u)$.
	
	\subsubsection{The main estimates in the no-shift region} \label{mainNS}
	
	This is the first part where we address the propagation of the bounds established in the past sections.
	
	Since $\Delta$ is now fixed definitively, we define the new notation : $A \precsim B $ if there exists a constant $\bar{C}=\bar{C}(\Delta)$ such that $A \lesssim \bar{C} B $. 
	
	If we need to specify this constant, we shall call it consistently $\bar{C}$ when there are no ambiguities.
	
	\begin{prop} \label{propannexe}
		
		For small enough $\epsilon>0$ , we have : 
		
		 the following control on the field and the potential on $\mathcal{N}$ : 
		\begin{equation} \label{phivNSprop}
		|\phi|+|\partial_v \phi| \precsim 2^N v^{-s},
		\end{equation}	\begin{equation} \label{phiUNSprop}
		|D_u \phi| \precsim  2^N |u|^{-s} \sim 2^N v^{-s} ,
		\end{equation}
		\begin{equation} \label{ANSprop}
		|A_u | \lesssim (N+1) \delta .
		\end{equation}
		\\
		
		and we also have \footnote{Being a bit more careful, we can prove an improved version of \eqref{QNS2} and \eqref{MNS2} without the $4^N$ factor.} : 
		\begin{equation} \label{OmegaPropNSprop}
		|\log\Omega^2(u,v)-\log(-4(1-\frac{2M}{r}+\frac{e^2}{r^2}))|\precsim 4^N v^{1-2s} ,
		\end{equation}  
		\begin{equation} \label{kappaNStpropprop}
		0 \leq 1-\kappa \precsim  5^N v^{-2s},
		\end{equation}						\begin{equation} \label{iotaNStpropprop}
		|1-\iota| \precsim 5^N  v^{-p(s)},
		\end{equation}
		\begin{equation} \label{partialuNSOmegaprop2}
		|\partial_u \log(\Omega^2)-2K |  \precsim  5^N v^{-p(s)},
		\end{equation}
		\begin{equation} \label{partialvNSOmegaprop2}
		|\partial_v \log(\Omega^2)-2K |  \precsim  5^N v^{-2s},
		\end{equation}

		\begin{equation} \label{QNS2} |Q(u,v)-e| \precsim  4^N v^{1-2s}. 
		\end{equation}
		\begin{equation} \label{MNS2} |\varpi(u,v)-M| \precsim  4^N v^{1-2s} .
		\end{equation}

	\end{prop}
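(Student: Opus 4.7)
The plan is to prove Proposition \ref{propannexe} by iterating estimates across the sub-regions $\mathcal{N}_k$ for $k=1,\dots,N$, using the bounds on the past boundary $\gamma_0 = \gamma_{-\Delta}$ supplied by Proposition \ref{boundsRS} together with the late red-shift transition estimates \eqref{iotalateRS}--\eqref{partialuomegalateRS} for $\iota$ and $\partial_u \log\Omega^2 - 2K$ (both of which are only of size $v^{-p(s)}$, not $v^{-2s}$, at $\gamma_0$). The structural feature we exploit is that each $\mathcal{N}_k$ has width $\epsilon$ in the coordinate $u+v+h(v)$ with $\epsilon$ small, to be fixed depending only on $C,e,M,q_0,m^2,s,v_0,\delta$ (and \textbf{not} on $N$). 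Together with $h(v)=o(v)$ from \eqref{hderiv}, this gives $v\sim|u|\sim v_{k-1}(u)\sim u_{k-1}(v)$ throughout $\mathcal{N}_k$, and integrals in $u$ or $v$ across $\mathcal{N}_k$ have length $O(\epsilon)$.

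Within each $\mathcal{N}_k$ the scheme is a standard bootstrap of the bounds of the proposition, but with the exponential prefactors $2^N, 4^N, 5^N$ replaced by $2^k, 4^k, 5^k$. The key observation is that $\Omega^2$ has no useful decay in $\mathcal{N}$ but is uniformly bounded above and below (this follows inductively from the control on $\partial_u\log\Omega^2$ and $\partial_v\log\Omega^2$ and the value inherited from $\gamma_{k-1}$). Consequently one simply integrates: the wave equations \eqref{Field2}, \eqref{Field3} for $\partial_v\phi$ and $D_u\phi$, combined with a Gr\"{o}nwall argument on the $O(\epsilon)$-width strip, yield \eqref{phivNSprop} and \eqref{phiUNSprop} with prefactor $2^k$ provided $\epsilon$ is small enough that the accumulated factor per step is at most $2$; the Raychaudhuri equations \eqref{RaychU2}, \eqref{RaychV2} and the bilinear bounds on $|\partial_v\phi|^2, |D_u\phi|^2 \lesssim 4^{k-1} v^{-2s}$ propagate \eqref{kappaNStpropprop}--\eqref{iotaNStpropprop}; the equation \eqref{Omega2} for $\partial_u\partial_v\log(\Omega^2)$ propagates \eqref{partialuNSOmegaprop2}--\eqref{partialvNSOmegaprop2}; \eqref{massUEinstein}--\eqref{ChargeVEinstein} propagate \eqref{QNS2}--\eqref{MNS2}; finally \eqref{OmegaPropNSprop} follows from \eqref{mu} once $\kappa$, $\iota$, $\varpi$ and $Q$ are controlled, after integrating $\partial_v \log\Omega^2 - \partial_v \log|1-\mu|$ across $\mathcal{N}_k$. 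The gauge estimate \eqref{ANSprop} follows by integrating \eqref{potentialEinstein} through each $\mathcal{N}_k$: since $\Omega^2$ is uniformly bounded and the $v$-extent is $O(\epsilon) = O(1)$, $|A_u|$ grows additively by an amount at most $\delta$ at each step, giving the linear $(N+1)\delta$ dependence instead of a geometric one.

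The main obstacle is to keep the prefactor geometric (giving $2^N,4^N,5^N$ rather than blow-up) when coupling the bootstraps. Specifically the dangerous terms in \eqref{Field2}, \eqref{Field3}, \eqref{Omega2} and \eqref{massUEinstein}--\eqref{ChargeVEinstein} are bilinear in $\phi$ and its derivatives; they must be bounded using the bootstrapped $v^{-s}$ decay so that the product furnishes the integrable $v^{-2s}$ decay needed to close the bounds for $\varpi$, $Q$, $\kappa$, $\iota$, and the derivatives of $\log\Omega^2$. The other subtlety is that $\iota-1$ and $\partial_u\log\Omega^2 - 2K$ enter $\mathcal{N}$ only at the weaker rate $v^{-p(s)}$ (from the late red-shift transition), so one must verify that the transport equations inherited from \eqref{RaychV2} and \eqref{Omega2} for these quantities preserve this weaker rate rather than demand $v^{-2s}$ decay of the source; this forces the prefactor $5^k$ (strictly larger than $4^k$) to absorb the coupling with $\partial_u(2K)$ in \eqref{Omega2}. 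Once these two couplings are handled, the induction on $k$ closes by choosing $\epsilon$ small enough (independent of $N$) so that one iteration multiplies each prefactor by the correct bounded constant.
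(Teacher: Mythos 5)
Your overall scheme --- induction over the $N$ strips $\mathcal{N}_k$ of $(u+v+h(v))$-width $\epsilon$, a bootstrap inside each strip, geometric prefactors $2^k,4^k,5^k$ replacing $2^N,4^N,5^N$, and $\epsilon$ chosen small independently of $N$ --- is exactly the paper's (Appendix B), and the equations you invoke at each step are the right ones.

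There is, however, one load-bearing assertion that is false as stated: that $\Omega^2$ is ``uniformly bounded above and below'' on $\mathcal{N}$. The upper bound $\Omega^2\lesssim\Omega^2_{max}(M,e)$ is uniform, but the lower bound cannot be: on the future boundary $\gamma_N$ one has $\Omega^2\precsim e^{-K_*N\epsilon}$ (estimate \eqref{OmegaNSfuture}), and this smallness is essential since $N\epsilon$ is subsequently taken large precisely to initiate the blue-shift. What is actually true, and what the paper tracks, is $|\log\Omega^2|\leq C_0+k\,K_{max}\,\epsilon$ on $\mathcal{N}_k$, i.e.\ $\Omega^{-2}\precsim e^{2K_{max}k\epsilon}$, a lower bound degrading geometrically in $k$. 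This matters because the Raychaudhuri equation \eqref{RaychU2} carries an explicit $\Omega^{-2}$ weight: the increment to $|1-\kappa|$ across $\mathcal{N}_k$ contributed by $\int\frac{4r}{\Omega^2}|D_u\phi|^2\,du$ is of size $4^k\,e^{2K_{max}k\epsilon}\,\epsilon\,v^{-2s}=(4e^{2K_{max}\epsilon})^k\,\epsilon\,v^{-2s}$, and it is the choice of $\epsilon$ so that $4e^{2K_{max}\epsilon}\leq 5$ that produces the $5^k$ --- not, as you suggest, the coupling with $\partial_u(2K)$ in \eqref{Omega2} or the weaker $v^{-p(s)}$ rate of $\iota-1$ and $\partial_u\log\Omega^2-2K$ (the latter rate is simply inherited from the late red-shift transition region and propagated without further loss). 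With a genuinely uniform lower bound on $\Omega^2$ your bookkeeping would give no reason for the $\kappa,\iota$ prefactor to exceed $4^k$. The repair is minor --- replace the uniform lower bound by the linearly degrading $\log\Omega^2$ bound and absorb the factor $e^{2K_{max}k\epsilon}$ into the geometric constant --- but as written the justification of the $\kappa$ and $\iota$ step does not close.
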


	The proof essentially relies on a partition of  $\mathcal{N}$ into sub-regions with small $u+v+h(v)$ difference, in the style of the methods of \cite{Mihalis1} and \cite{JonathanStab}. Since the proof does not present so many original ideas, we put it in Appendix B for the sake of completeness.

	\subsubsection{Estimates on the future boundary of the no-shift region} \label{Nfuture}
	
	We now address the second difficulty :  we need to have $K<0$ at some point to initiate the blue-shift effect,  get $\Omega^2$ small on the future boundary and therefore $r$ close to $r_{-}$. To do that, we use a simple contradiction argument.
	
	\begin{prop}
		There exists a constant $K_*>0$, independent of $N$ and $\epsilon$  such that, for $u \leq u_s$ : 
		
		\begin{equation} \label{OmegaNSfuture} \Omega^2_{|\gamma_{N}} \precsim  e^{-K_* N \epsilon},
		\end{equation}
		\begin{equation} \label{rNSfuture} |r _{|\gamma_{N}}-r_-| \precsim e^{-K_* N \epsilon},
		\end{equation}
		\begin{equation} \label{KNSfuture} |2K_{|\gamma_{N}}-2K_-| \precsim e^{-K_* N \epsilon}.
		\end{equation}
	\end{prop}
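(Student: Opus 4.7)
The plan is to combine the estimates of Proposition~\ref{propannexe} with an analysis of the Reissner-Nordstr\"{o}m ODE governing $r$ along constant-$v$ slices, using a contradiction argument as hinted at in the paper. To leading order, by taking $|u_s|$ large enough the error terms $4^N v^{1-2s}$ and $5^N v^{-p(s)}$ from Proposition~\ref{propannexe} are absorbed into arbitrarily small constants, and the dynamics in $\mathcal{N}$ reduce to
\begin{equation*}
\Omega^2 \approx \Omega^2_{RN}(r) = \frac{4(r_+-r)(r-r_-)}{r^2}, \qquad \partial_u r \approx -\frac{\Omega^2_{RN}(r)}{4}, \qquad 2K \approx 2K_{M,e}(r),
\end{equation*}
together with $\iota,\kappa\approx 1$, $\varpi \approx M$, $Q \approx e$. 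On a constant-$v$ slice this is a perturbed autonomous ODE for $r$ whose equilibrium at $r_-$ is attractive with linearized rate $2|K_-|$, and whose equilibrium at $r_+$ is repulsive with linearized rate $2K_+$.

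First I show, by contradiction, that $r_{|\gamma_N}$ is forced close to $r_-$ once $N\epsilon$ is large enough (depending only on background parameters). Fix $\eta>0$ small. Suppose $r_{|\gamma_N} > r_-+\eta$; then since $\partial_u r<0$ and $\partial_v r<0$ on $\mathcal{N}$, by monotonicity $r \geq r_-+\eta$ throughout the past of $\gamma_N$ in $\mathcal{N}$. Consider a fixed constant-$v$ slice from $\gamma_0$ to $\gamma_N$, and split it into a near-$r_+$ phase ($r_+ - r \leq \eta_1$) and a bulk phase ($r\in[r_-+\eta, r_+-\eta_1]$). In the near-$r_+$ phase, the linearized equation $\partial_u r \approx -2K_+(r_+-r)$ yields exponential escape $r_+-r \sim \delta\, e^{2K_+(u-u_0)}$, so $r$ exits the phase in $u$-duration $T_1 \lesssim \log(\eta_1/\delta)$. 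In the bulk, $\Omega^2_{RN}(r)\ge c(\eta,\eta_1)>0$ gives $\partial_u r\le -c(\eta,\eta_1)/8$, hence linear decrease in $u$. Therefore $r$ must drop below $r_-+\eta$ within total $u$-time $T_*(\eta) := T_1 + 8(r_+-r_-)/c(\eta,\eta_1)$, a constant depending only on background parameters, $\delta$ and $\eta$. This contradicts the hypothesis whenever $N\epsilon > T_*(\eta)$.

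Once $r$ has dropped below $r_-+\eta$ with $\eta$ small enough that $2K_{M,e}(r_-+\eta)\le -|K_-|$, the blue-shift effect takes over. Let $u_\eta(v)$ denote the $u$-value at which the constant-$v$ slice first reaches $r = r_-+\eta$; by the previous step, $u_N(v)-u_\eta(v) \geq N\epsilon - T_*(\eta)$. Using $\partial_u\log\Omega^2\approx 2K\le -|K_-|$ along this slice and integrating,
\begin{equation*}
\log\Omega^2_{|\gamma_N}(v)\leq \log\Omega^2(u_\eta(v),v) - |K_-|\,(N\epsilon - T_*(\eta)),
\end{equation*}
so $\Omega^2_{|\gamma_N}\precsim e^{-K_*\, N\epsilon}$ with $K_* := |K_-|$ and the prefactor $\Omega^2(u_\eta(v),v)\cdot e^{|K_-| T_*(\eta)}$ absorbed into $\precsim$. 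This is \eqref{OmegaNSfuture}; then \eqref{rNSfuture} and \eqref{KNSfuture} follow from the near-$r_-$ expansions $\Omega^2_{RN}(r)\approx 8|K_-|(r-r_-)$ and $|2K_{M,e}(r)-2K_-|\lesssim |r-r_-|$.

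\textbf{Main obstacle.} The principal subtlety is the near-$r_+$ segment in the contradiction step: $\partial_u r$ is only of order $\delta$ there, so a naive linear-decrease argument fails, and one must exploit the exponential escape rate $2K_+$ from the linearization at $r_+$. A secondary technical point is that the error terms in Proposition~\ref{propannexe} grow as $5^N$, so one must first fix $N$ and $\epsilon$ (with $N\epsilon > T_*(\eta)$) and only then take $|u_s|$ large enough so that these errors remain negligible for this fixed $N$; the final constants then depend on $\Delta,\eta$ and background, but not on $N\epsilon$, ensuring that $K_*$ is independent of $N$ and $\epsilon$ as required.
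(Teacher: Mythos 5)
Your proposal is correct and follows essentially the same route as the paper: a contradiction-plus-monotonicity argument showing that $r$ drops to within any fixed $\eta$ of $r_-$ after a $u+v$-range $T_*(\eta)$ independent of $N\epsilon$, followed by integration of $\partial\log\Omega^2\approx 2K\le -K_*<0$ over the remaining length $\gtrsim N\epsilon-T_*$, and recovery of \eqref{rNSfuture} and \eqref{KNSfuture} from \eqref{OmegaNSfuture} via the algebraic identity \eqref{mu}. The only cosmetic differences are that the paper integrates $\partial_v\log\Omega^2$ rather than $\partial_u\log\Omega^2$, and that your separate near-$r_+$ exponential-escape phase is unnecessary: since $\Omega^2\sim\delta$ on the past boundary $\gamma_0$ of $\mathcal{N}$ and $\Omega^2\approx\Omega^2_{RN}(r)$ there, one already has $r_+-r\gtrsim\delta$ on entering $\mathcal{N}$, which persists by monotonicity of $r$, so $-\nu\gtrsim c(\Delta)\,\eta$ throughout the region where $r\ge r_-+\eta$ and the linear-decrease argument alone yields the contradiction.
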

	
	\begin{proof}
		
		We will start by the following lemma, proved by contradiction  : 
		
		\begin{lem}
			For all $\delta_*>0$, there exists $0<\Delta_{*} $ large enough so that $r < r_-(e,M)+\delta_*$ on $\gamma_{\Delta_{*}} \cap \{ u \leq u_s \}$.
		\end{lem}
		
		\begin{proof} 
			
			By contradiction, take a $\delta_*>0$ such that for all $0<\Delta_{*} $, there exists $u \leq u_s$ such that on $\gamma_{\Delta_{*}}$, $$r(u,v_{\Delta_{*}}(u)) \geq r_- +\delta_*$$ 
			
			
			Then because $\lambda, \nu <0$,  for all $u_{0}(v_{\Delta_{*}}(u)) \leq u' \leq u$ we have : 
			
			\begin{equation} \label{monotonicity}
r_-+ \delta_* \leq r(u',v_{\Delta_{*}}(u)) \leq r_+-\delta. 
			\end{equation}

			Using \eqref{OmegaPropNSprop} and \eqref{kappaNStpropprop}, we see that for $|u_s|$ large enough, there exists a constant $\bar{C}>0$ depending on $\Delta$ only such that for all $u_{0}(v_{\Delta_{*}}(u)) \leq u' \leq u$ 
			
			$$ -\nu(u',v_{\Delta_{*}}(u)) \gtrsim \frac{\bar{C}} {\delta_{*}} 
			 . $$

		Then we can integrate in $u'$ from $\gamma_{0}$ to  $\gamma_{\Delta_{*}}$ : 
			
			$$ r(u_{0}(v_{\Delta_{*}}(u)),v_{\Delta_{*}}(u)) -r(u,v_{\Delta_{*}}(u)) \geq \frac{\bar{C}}{\delta_{*}} (u-u_{0}(v_{\Delta_{*}}(u)) )= \frac{\bar{C}}{\delta_{*}} \Delta_*. $$

Hence, using \eqref{monotonicity} : 

$$ r_+ -\delta \geq r_- + \delta_*+ \frac{\bar{C}}{\delta_{*}} \Delta_*. $$

			So at fixed $\delta_*$, we can take $\Delta_*$ large enough so that the inequality is absurd. Therefore the lemma is proved. 
			
		\end{proof}

		Now, since $r_-(e,M) < \frac{e^2} {M}$, we choose a $\delta_*$ such that $0<\delta_* < \frac{e^2} {M}-r_-(e,M)$ and pick a $\Delta_*$ such that $r<  r_- + \delta_*$ on $\gamma_{\Delta_{*}}$.
		
		Then, because $\nu, \lambda <0$, $r<  r_- + \delta_*$ as well in the future of $\gamma_{\Delta_{*}}$.
		
		Therefore there exists \footnote{Notice that if $r<\frac{e^2}{M}$ then $K_{M,e}(r)<0$.} $K_{*}>0$ depending on $(e,M)$ only, such that  on $\Delta_* \leq u+v+h(v) \leq \Delta_N$ :  $$K(u,v) <-K_{*},$$-where we used again  \eqref{QNS2}, \eqref{MNS2} with $|u_s|$ large enough. 
		
		So from  \eqref{partialvNSOmegaprop2}, we see that \eqref{OmegaNSfuture} is true :

		$$ \Omega^2_{|\gamma_{N}}\lesssim  \Omega^2_{|\gamma_{\Delta_*}}  e^{K_* \Delta_*}  e^{-K_* (-\Delta+N \epsilon)} \precsim e^{-K_* N \epsilon}.$$
		
		Then recalling from \eqref{mu} that $$\frac{1}{r^2} \left( r - (\varpi+\sqrt{\varpi^2-Q^2})\right) \left(r - (\varpi-\sqrt{\varpi^2-Q^2}) \right) =1-\mu= \frac{-4\lambda \nu}{\Omega^2}=\frac{-\Omega^2}{4 \iota \kappa},$$ we prove that,  thanks to \eqref{kappaNStpropprop} , \eqref{iotaNStpropprop} and \eqref{QNS2} , \eqref{MNS2} : 
		
		$$ |r_{|\gamma_{\Delta'}} - (\varpi-\sqrt{\varpi^2-Q^2})| \precsim \frac{ e^{-K_* N \epsilon}}{|r_{|\gamma_{\Delta'}} - (\varpi+\sqrt{\varpi^2-Q^2})|} \precsim \frac{ e^{-K_* N \epsilon}}{|r_{|\gamma_{\Delta'}} - r_+|- \tilde{C} v^{1-2s}}.  $$
		
		Then , since the monotonicity of $r$ ensures that $r$ is uniformly bounded away from $r_+$ on $\gamma_{\Delta'}$ and using \eqref{QNS2} and \eqref{MNS2} again on the left-hand-side, we get \eqref{rNSfuture} and \eqref{KNSfuture} for $|u_s|$ large enough.
		

	\end{proof}

	\subsection{The early blue-shift transition region} \label{transition}

	We define the early blue-shift transition region : 
	
	$$\mathcal{EB}:=  \{\Delta_N \leq  u+v+h(v) \leq -\Delta'+
	\frac{2s}{2|K_-|} \log(v)\} \; ,$$ 
	
	where $|u_s|$ is large enough so that $v_0+h(v_0)-\frac{2s}{2|K_-|} \log(v_0) < |u_s| -\Delta'  $ and $\Delta'$ is a large \footnote{Compared to $N$, $\epsilon$, $\Delta$ and the initial data.} constant to be chosen later. 
	
	We will denote\footnote{A similar curve has been first introduced by Dafermos in \cite{Mihalis1}.} $\gamma:= \{u+v+h(v) = -\Delta'+
	\frac{2s}{2|K_-|} \log(v)\}$, the future boundary of $\mathcal{EB}$.
	
	Similarly to the region of section \eqref{redtransition}, the goal in $\mathcal{EB}$ is to obtain bounds for $\partial_v \log(\Omega^2)-2K_-$ and $\partial_u \log(\Omega^2)-2K_-$ on the future boundary instead of $\partial_v \log(\Omega^2)-2K$ and $\partial_u \log(\Omega^2)-2K$. For this to be true, we need to prove that the blue-shift in this region is strong enough, in particular we need $|r-r_-| \lesssim |u|^{1-2s} \sim v^{1-2s}$ close enough to the future boundary \footnote{Actually this bound is already attained in the future of the curve $u+v+h(v) =
		\frac{2s-1}{2|K_-|} \log(v)$ and in fact, one cannot get better in general. Note that this last curve is very close to $ \gamma'$ exhibited in the instability section. }. 
	
	This region exhibits enough blue-shift so that there is a good decay of the interesting quantities, but not too much so that $\kappa^{-1}$ and $\iota^{-1}$ are still under control. Moreover, the size of the region is small enough -of the order of $\log(v)$- so that we do not lose too much the control proved in the previous sections- but the decay of the metric coefficients has started and will be strong enough in the future to make the wave propagation decay easier to prove. \\

	Note that in $\mathcal{EB}$ again : $|u| \sim v $.

	We define the new notation : $A \lessapprox B $ if there exists a constant $\hat{C}=\hat{C}(N,\epsilon)$ such that $A \precsim \hat{C} B $. We denote $A \approx B$ if $A \lessapprox B $ and $ B \lessapprox A $.

		If we need to specify this constant, we will call it consistently $\hat{C}$ when there are no ambiguities.
	
	\begin{prop} \label{transitionBS}
		
	For $N\epsilon$ large enough,	we have 
	
	the following control on the field  on $\mathcal{EB}$ : 
		
		\begin{equation} \label{phiTransition}
		|\phi|\lessapprox  v^{-s} \log(v),
		\end{equation}	
		\begin{equation} \label{phiVTransition}
		|\partial_v \phi| \lessapprox v^{-s},
		\end{equation}	\begin{equation} \label{phiUTransition}
		|D_u \phi| \lessapprox  |u|^{-s} \approx v^{-s} .
		\end{equation}

		and we also have : 
		\begin{equation} \label{Omegatransition}
		|\log\Omega^2(u,v)-2K_- \cdot (u+v+h(v))|\lesssim \Delta e^{-2K_+ \Delta} \sim \delta |\log(\delta)| ,
		\end{equation}

		\begin{equation} \label{kappatransition}
		0 \leq 1-\kappa \leq \frac{1}{3},
		\end{equation}			\begin{equation} \label{iotatransition}
		|1-\iota|\leq \frac{1}{3},
		\end{equation}
		\begin{equation} \label{partialuOmegatransition}
		|\partial_u \log(\Omega^2)-2K |  \lessapprox   v^{-p(s)}\log(v)^3 ,
		\end{equation}
		\begin{equation} \label{partialvOmegatransition}
		|\partial_v \log(\Omega^2)-2K |  \lessapprox   v^{-2s}\log(v)^3 ,
		\end{equation}

		\begin{equation} \label{QTrans} |Q(u,v)-e| \lessapprox  v^{1-2s} ,
			\end{equation}
		\begin{equation} \label{MTrans} |\varpi(u,v)-M| \lessapprox v^{1-2s} .
		\end{equation}
		
		Moreover, on the future boundary $\gamma$ we have : 
		
		\begin{equation} \label{lambdagamma}
		|\lambda(u_{\gamma}(v),v)| \lessapprox e^{2|K_-| \Delta'} v^{-2s},
		\end{equation}
				\begin{equation} \label{nugamma}
		|\nu(u,v_{\gamma}(u))| \lessapprox e^{2|K_-| \Delta'} |u|^{-2s},
		\end{equation}
		
			\begin{equation} \label{rgamma}
			|r(u_{\gamma}(v),v)-r_-(M,e)| \lessapprox  e^{2|K_-| \Delta'}v^{1-2s},
			\end{equation}

		\begin{equation} \label{partialvOmegagamma}
		|\partial_v \log(\Omega^2_{CH})(u_{\gamma}(v),v)| \lessapprox v^{1-2s},
		\end{equation}
				\begin{equation} \label{Omegagamma}
		\Omega^2(u_{\gamma}(v),v)  \lessapprox e^{2|K_-| \Delta'} v^{-2s}.
		\end{equation}

	\end{prop}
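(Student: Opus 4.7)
The plan is to run a bootstrap argument in $\mathcal{EB}$, taking as initial data the bounds on $\gamma_N$ from Proposition \ref{propannexe} and from \eqref{OmegaNSfuture}--\eqref{KNSfuture}, and then propagate them across the region. The key structural feature to exploit is that $K < 0$ uniformly once $r$ is close enough to $r_-$ (by \eqref{KNSfuture}), so the region is genuinely a blue-shift region: once $N\epsilon$ is large enough, we will have $\partial_v \log(\Omega^2) < -|K_-|/2$ and $\partial_u \log(\Omega^2) < -|K_-|/2$ throughout $\mathcal{EB}$, which triggers the blue-shift branch of Lemma \ref{calculuslemma}.

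First I would set up the bootstrap with the estimates \eqref{phiTransition}--\eqref{MTrans} as assumptions (with somewhat larger constants) and close them in the following order. (i) Use the Raychaudhuri equations \eqref{RaychU2}--\eqref{RaychV2}: integrating $\partial_v(\iota^{-1}) = 4r|\partial_v\phi|^2/\Omega^2$ from $\gamma_N$, and using $\Omega^{2} \gtrsim e^{2K_-\cdot(u+v+h(v))}$ from the bootstrap on $\partial_v \log \Omega^2$, together with the blue-shift branch of Lemma \ref{calculuslemma} applied to the integrand $\Omega^{-2}(v')^{-2s}$, gives $|\iota^{-1}-\iota^{-1}_{|\gamma_N}| \lessapprox e^{2|K_-|\Delta'} v^{-2s}$, which closes \eqref{iotatransition} provided $\Delta' \leq 2|K_-|^{-1}s\log v - O(1)$ (automatic inside $\mathcal{EB}$). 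The same argument in the other direction gives \eqref{kappatransition}. (ii) Propagate $Q$ and $\varpi$ using \eqref{chargeUEinstein}--\eqref{ChargeVEinstein} and \eqref{massUEinstein}--\eqref{massVEinstein}, with the field bootstraps giving \eqref{QTrans}--\eqref{MTrans}. (iii) For $\phi$ and its derivatives, write \eqref{Field2} in a form that absorbs the $\partial_u r/r$ factor via an integrating factor; integrating from $\gamma_N$ in $u$ and using Lemma \ref{calculuslemma} on the $\Omega^2\phi$ source term yields $|\partial_v \phi| \lessapprox v^{-s}$, and the extra $\log v$ in \eqref{phiTransition} comes from integrating $|\partial_v \phi| \lessapprox v^{-s}$ across $\mathcal{EB}$, whose $v$-extent is $\frac{\log v}{2|K_-|}$ at fixed $u$. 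The bound on $D_u \phi$ is obtained symmetrically from \eqref{Field3}.

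Next I would refine the control of $\Omega^2$. From \eqref{Omega2} one has the schematic relation $\partial_v(\partial_u \log \Omega^2 - 2K) = O(|\kappa -1||\partial_u(2K)| + |D_u\phi||\partial_v\phi| + |\partial_v\varpi| + |\partial_v Q|)$, and the bootstraps make each term of size at most $v^{-2s}\log^2 v$ up to $\Omega^2$-weights. Integrating from $\gamma_N$ and applying the blue-shift form of Lemma \ref{calculuslemma} — here the logarithmic loss comes from integrating $v^{-2s}$ times $\Omega^{-2}$ across the region, combined with the extent of $\mathcal{EB}$ — yields \eqref{partialuOmegatransition}, and the symmetric argument yields \eqref{partialvOmegatransition}. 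Integrating the latter in $v$ from $\gamma_N$ and using \eqref{OmegaNSfuture} together with \eqref{KNSfuture} gives \eqref{Omegatransition}; the $\delta|\log\delta|$ error encodes the accumulated drift of $K$ away from $K_-$, controlled by $|r-r_-|$ which itself is bounded through \eqref{mu}.

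Finally, for the boundary estimates on $\gamma$, I would specialize the propagated bounds to the defining relation $u+v+h(v) = -\Delta' + \frac{s}{|K_-|}\log v$. Since $\Omega^2 \sim e^{2K_-(u+v+h(v))}$, the choice of $\gamma$ gives exactly $\Omega^2|_\gamma \lessapprox e^{2|K_-|\Delta'} v^{-2s}$, which is \eqref{Omegagamma}. The Raychaudhuri bound $\lambda = \Omega^2(1-\mu)/(-4\nu)$ combined with $(1-\mu) = -\Omega^2/(4\iota\kappa)$ or, more directly, integrating \eqref{RaychV} from $\gamma_N$ and using the refined form of $\Omega^2$, yields \eqref{lambdagamma}; the symmetric computation gives \eqref{nugamma}. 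Then integrating $|\lambda| \lessapprox e^{2|K_-|\Delta'}v^{-2s}$ in $v$ from $\gamma_N$ (where $|r-r_-| \precsim e^{-K_* N\epsilon}$), or alternatively extracting it from \eqref{mu} together with the sharp bounds on $\varpi$, $Q$ and $\Omega^2$, gives \eqref{rgamma}. The $\Omega^2_{CH}$ bound \eqref{partialvOmegagamma} follows from \eqref{partialvOmegatransition} after the change of coordinates $v \mapsto V$, since $\partial_v \log \Omega^2_{CH} = \partial_v \log \Omega^2 - 2K_-$ and the $v^{-2s}\log^3 v$ bound is improved to $v^{1-2s}$ using that $|2K - 2K_-| \lessapprox v^{1-2s}$ on $\gamma$ (a consequence of \eqref{rgamma}, \eqref{QTrans}, \eqref{MTrans}).

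The main obstacle I expect is to correctly track the competition between the blue-shift exponential factor $e^{2|K_-|\Delta'}$ (which grows as we push $\gamma$ further into the future) and the polynomial $v^{-2s}$ gain: the choice of $\gamma$ at $u+v+h(v) \sim \frac{s}{|K_-|}\log v$ is exactly calibrated so that $\Omega^2|_\gamma \sim v^{-2s}$, matching the decay of $|\lambda|_{|\mathcal{H}^+}|$, and losing even a logarithmic factor in this balance would obstruct the propagation into $\mathcal{LB}$ in the next region. The delicate step is therefore the sharp version of \eqref{partialvOmegatransition}, where one must avoid integrating $v^{-2s}$ against $\Omega^{-2}$ naively; the blue-shift branch of Lemma \ref{calculuslemma}, together with the fact that the forcing terms in \eqref{Omega2} genuinely carry an extra $\Omega^2$ factor, is what makes the estimate close with only a polylogarithmic loss rather than a polynomial one.
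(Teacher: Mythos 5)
Your overall architecture matches the paper's: a bootstrap launched from $\gamma_N$ with the smallness $\Omega^2_{|\gamma_N}\precsim e^{-K_*N\epsilon}$ and $|2K-2K_-|$ small, the blue-shift branch of Lemma \ref{calculuslemma}, the $O(\log v)$ extent of $\mathcal{EB}$ producing the $\log$ losses in \eqref{phiTransition} and \eqref{partialuOmegatransition}--\eqref{partialvOmegatransition}, and the closure of the $\kappa,\iota$ bootstraps by evaluating $\Omega^{-2}v^{-2s}\lesssim e^{-2|K_-|\Delta'}$ on the future boundary (note, though, that the requirement is that $\Delta'$ be \emph{large}, so that $e^{-2|K_-|\Delta'}$ is small; your parenthetical constraint on $\Delta'$ has the inequality pointing the wrong way, and your claimed bound $e^{2|K_-|\Delta'}v^{-2s}$ for $|\iota^{-1}-\iota^{-1}_{|\gamma_N}|$ should be $\Omega^{-2}v^{-2s}\lesssim e^{-2|K_-|\Delta'}$).

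The genuine gap is in step (iii), the closure of the first-order field bootstraps \eqref{phiVTransition} and \eqref{phiUTransition} at the rate $v^{-s}$ \emph{without} a logarithm. The zeroth-order sources in \eqref{Field2} and \eqref{Field3} are of the form $\Omega^2\phi$ (the mass term and the charge term), and the only a priori control of $\phi$ in $\mathcal{EB}$ is $|\phi|\lessapprox v^{-s}\log v$, obtained by integrating $|\partial_v\phi|$ over a $v$-interval of length $\sim\log v$. Feeding this into Lemma \ref{calculuslemma} yields a contribution $\lesssim e^{-K_*N\epsilon}\,v^{-s}\log v$ to $|\partial_v\phi|$ and $|D_u\phi|$, which for fixed $N\epsilon$ eventually dominates $C_\Delta 2^N v^{-s}$ as $v\to\infty$, so the bootstrap as you have set it up does not close. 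The paper circumvents this by first proving the \emph{weighted, log-free} estimate $\Omega^{2\eta}|\phi|\lesssim C_\eta v^{-s}$ for any $\eta>0$, via the differential inequality $\partial_v(\Omega^{2\eta}|\phi|)\leq \Omega^{2\eta}|\partial_v\phi|$ (valid because $\partial_v\log\Omega^2\leq K_-<0$ makes the extra term negative), and then writes $\Omega^2|\phi|=\Omega^{2-2\eta}\cdot\Omega^{2\eta}|\phi|\lesssim C_\eta\,\Omega^{2-2\eta}v^{-s}$, to which Lemma \ref{calculuslemma} applies with no log loss. Some such device (or, alternatively, tracking $|\phi|\lesssim v^{-s}\bigl(1+(u+v+h(v)-\Delta_N)\bigr)$ and beating the linear growth with the exponential decay of $\Omega^2$ in $u+v+h(v)$) is indispensable here; your write-up neither identifies the problem nor supplies a fix, and in fact locates the "delicate step" elsewhere. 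A smaller inaccuracy: the $\log^3 v$ in \eqref{partialuOmegatransition}--\eqref{partialvOmegatransition} does not come from an $\Omega^{-2}$-weighted integral (there is no such weight in \eqref{Omega2}); it comes from integrating sources of size $v^{-2s}\log^2 v$ over the $O(\log v)$ extent of the region.
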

	
	\begin{proof}
		
		First we take \footnote{This can be assumed by section \eqref{reduction} but is really not a restriction, we simply  write $|\log(2+ |v|)|$ instead of $ \log(v)$.} $v_0 \geq 2$ so that $1  \lesssim |\log(v)|= \log(v)$.

		We make the following bootstrap assumptions :  
		
		\begin{equation} \label{VfieldBoot}
		|\partial_v \phi| \leq 4 C_{\Delta} 2^N v^{-s},\end{equation}
		\begin{equation} \label{UfieldBoot}
		|D_u \phi| \leq 4 C_{\Delta} 2^N v^{-s},\end{equation}
		\begin{equation} \label{kappabootTranst}
		|1-\kappa| \leq \frac{1}{2},
		\end{equation}
		\begin{equation} \label{iotabootTranst}
		|1-\iota| \leq \frac{1}{2},
		\end{equation}
		\begin{equation} \label{partialuOmegaTransBoot}
		\partial_u  \log(\Omega^2) \leq K_-<0,
		\end{equation}
		\begin{equation} \label{partialvOmegaTransBoot}
		\partial_v \log( \Omega^2) \leq K_-<0.
		\end{equation}
		
		For a constant $C_{\Delta}$ such that  $|\partial_v \phi| \leq  C_{\Delta} 2^N v^{-s}$  and  $|D_u \phi| \leq  C_{\Delta} 2^N v^{-s}$ are true initially on the past boundary $\gamma_N := \{u+v+h(v) = \Delta_N\}$, using the estimates of $\mathcal{N}$.
		
		An immediate consequence of bootstrap \eqref{partialuOmegaTransBoot}, \eqref{partialvOmegaTransBoot} and the boundedness of $\Omega^2$ in $\mathcal{N}$ (c.f Appendix \ref{appendixproof}) is the existence of a constant $\Omega^2_{max}(M,e)>0$ such that $$ \Omega^2 \leq \Omega^2_{max}(M,e).$$ \\
		
		We now want to prove a decay on $\Omega ^{\eta}\phi$ for $\eta$ arbitrarily small. 
		
			Let $\eta>0$. We write : 
			
			$$ \partial_v ( \Omega^{2\eta} \phi) = \eta \cdot \partial_v \log( \Omega^2) \cdot \Omega^{2\eta} \phi+\Omega^{2\eta} \partial_v \phi.$$
			
			Then, because of bootstraps \eqref{VfieldBoot}, \eqref{partialvOmegaTransBoot} we have  
			
			$$ \partial_v ( \Omega^{4\eta} |\phi|^2) = 2\eta \cdot \partial_v \log( \Omega^2) \cdot \Omega^{4\eta} |\phi|^2+2\Omega^{4\eta} \Re(\partial_v \phi \bar{\phi}) \leq 8 C_{\Delta} 2^N v^{-s} \Omega^{4\eta}|\phi|, $$
			
			which implies : 
			
			$$ \partial_v ( \Omega^{2\eta} |\phi|) \leq 4 C_{\Delta} 2^N v^{-s} \Omega^{2\eta} v^{-s}. $$
			
			Then it is enough to integrate using \eqref{partialvOmegaTransBoot} and Lemma \ref{calculuslemma}, the bound on the previous region and the fact that $|\Omega^{2\eta}(u,v_{\gamma}(u)) \phi(u,v_{\gamma}(u))| \lesssim |u|^{-s}$ to get  : 
			
			\begin{equation}\label{phiEB}
			 \Omega^{2\eta}|\phi|\lesssim  C_{\eta} |u|^{-s} \sim  C_{\eta} v^{-s}  . \end{equation}
		
		Using \eqref{Field3} together with bootstraps \eqref{VfieldBoot}, \eqref{UfieldBoot}, \eqref{kappabootTranst}, \eqref{iotabootTranst} and \eqref{phiEB} we show that for all $0<\eta<1$ : 
		
		$$ |\partial_v (D_u \phi) | \lesssim (1+ C_{\eta}) C_\Delta 2^N \Omega^{2-2\eta} v^{-s}.$$
		
		We can take $\eta= \frac{1}{2}$.
		
		Integrating using \eqref{partialvOmegaTransBoot} with Lemma \ref{calculuslemma} and $|u| \sim v$ gives : 
		
		$$ |D_u \phi | \leq  C_\Delta 2^N  v^{-s}+ \bar{C} 2^N \Omega_{|\gamma_N}(u, v_N(u))  v^{-s} \leq C_\Delta 2^N  v^{-s}+ \tilde{C}\bar{C} 2^N e^{-\frac{K_*}{2} N\epsilon}  v^{-s}.$$
		
		Therefore, we can choose $N\epsilon$ large enough compared to $\Delta$ and parameters so that 
		
		$\tilde{C} \bar{C} 2^N e^{-K_0 N\epsilon} \leq C_\Delta 2^N$ which closes bootstrap \eqref{UfieldBoot}.
		
		Bootstrap \eqref{VfieldBoot} is validated similarly, using \eqref{partialuOmegaTransBoot}, \eqref{Field2} and the boundedness of $Q$. \\
		
		Notice that bootstrap \eqref{UfieldBoot} and \eqref{partialuOmegaTransBoot} used together with Lemma \ref{calculuslemma} give : 
		
		$$ \int_{u_N(v)}^{u} \frac{|D_u \phi|^2}{\Omega^2}(u',v) du' \precsim 4^N  \frac{v^{-2s}}{\Omega^2(u,v)}.$$

		We integrate \eqref{RaychU} on $ [u_N(v),u]$ and multiply by $\Omega^2$ to get, using the bounds from the past : 
		
		\begin{equation} \label{nuOmegaTrans}
		|4 \nu + \Omega^2 |(u,v) \lessapprox   v^{-2s}.
		\end{equation} 
		
		Similarly with  \eqref{RaychV} :

		\begin{equation} \label{lambdaOmegaTrans}
		|4 \lambda + \Omega^2 |(u,v) \lessapprox   v^{-2s} + \Omega^2 v^{-p(s)} \lesssim v^{-p(s)}.
		\end{equation} 
		
		Integrating bootstrap \eqref{VfieldBoot} over $[v_{\gamma}(u),v]$ whose size is at most $ \tilde{C} \log(v)$, we get \eqref{phiTransition} :
		
		$$| \phi| \lesssim   C_{\Delta,N} v^{-s} \log(v).$$
		
		From this, we get : 
		
		\begin{equation} \label{partialQ}
		|\partial_u Q|+  |\partial_v Q| \lesssim  C_{\Delta,N}^2 v^{-2s} \cdot \log(v) \end{equation}  
		
		And we can integrate to get \eqref{QTrans}. The main contribution comes from the past since   $v^{-2s}\log(v)=o( v^{1-2s})$ so for $|u_s|$ large enough : 
		
		$$ |Q-e|\precsim  4^N  v^{1-2s} .$$
		
		Using this together with bootstrap \eqref{kappabootTranst}, \eqref{iotabootTranst} and equations \eqref{massUEinstein}, \eqref{massVEinstein} we get : 
		
		\begin{equation} \label{partialM}
		|\partial_u  \varpi|+  |\partial_v \varpi| \lesssim  C_{\Delta,N}^2 v^{-2s} \log(v)^2 .\end{equation} 
		
		We also integrate to get \eqref{MTrans} : 
		
		$$ |\varpi-M| \precsim  4^N v^{1-2s} .$$

		Notice that under our bootstrap assumptions we have -using \eqref{partialQ} and \eqref{partialM}- : 
		
		$$ |(\kappa-1) \partial_u (2K) | \lesssim |4 \nu + \Omega^2 |+ C_{\Delta,N}^2v^{-2s} \log(v)^2 .$$
		
		Now integrating \eqref{Omega2} in $u$ and remembering that $|u-u_N(v)|+|v-v_N(u)| \lesssim \log(v)$ , we get \eqref{partialvOmegatransition} as

		$$|\partial_v \log(\Omega^2)-2K |  \lesssim  C_{\Delta,N}^2   v^{-2s}\log(v)^3 .$$
		
Where we used \eqref{nuOmegaTrans}.		Similarly, using \eqref{lambdaOmegaTrans} we prove \eqref{partialuOmegatransition} :
		
		$$|\partial_u \log(\Omega^2)-2K |  \lesssim  C_{\Delta,N}^2  v^{-p(s)}\log(v)^3 .$$
		
		Notice that with \eqref{QTrans}, \eqref{MTrans} and bootstrap \eqref{kappabootTranst}, \eqref{iotabootTranst} used with \eqref{mu} we have, for  $|u_s|$ large enough \footnote{$|u_s|$ is taken large enough to annihilate the dependence in $N$ and $\Delta$ of $C_{\Delta,N}^2   v^{1-2s}$. } and using the precedent section : 
		
		$$ |2K_- -2K | \lesssim \Omega^2 +  C_{\Delta} 4^N   v^{1-2s} \lesssim \Omega^2_{|\gamma_N} \precsim e^{-K_* N\epsilon}.$$

		Hence  if $N\epsilon$ is large enough and $|u_s|$ is large enough, bootstrap \eqref{partialuOmegaTransBoot} and \eqref{partialvOmegaTransBoot} are validated. \\
		
		Notice that since $\log(v)v^{1-2s}= o(1)$, we still have :
		
		$$v-v_N(u)= u+v+h(v)-\Delta_N+o(1).$$

		From what precedes, we know that :
		
	\begin{equation} \label{newEB}
|\partial_v \log(\Omega^2_{CH}) | =|\partial_v \log(\Omega^2)-2K_- |  \lesssim  \Omega^2+ C_{\Delta,N}^2   (v^{-2s}\log(v)^3+v^{1-2s}). 
	\end{equation}
		
		Hence we can integrate from $v_N(u)$ to $v$, using the upper bound \eqref{partialvOmegaTransBoot} with Lemma \ref{calculuslemma}  and the bounds from the past :

		$$  	| \log(\Omega^2)-2K_- \cdot (u+v+h(v)) |  \lesssim  (\log(\Omega^2(u, v_N(u)))-2K_-\Delta_N)+\Omega^2(u, v_N(u))+ C_{\Delta,N}^2   (v^{-2s}\log(v)^3+v^{1-2s}) \cdot \log(v).$$
		
		With what precedes, we see that  $$  \Omega^2(u, v_N(u))+ C_{\Delta,N}^2   (v^{-2s}\log(v)^3+v^{1-2s})\log(v) \precsim e^{-K_* N \epsilon}.$$
		
		Hence to get \eqref{Omegatransition}, we choose $N\epsilon$ large enough compared to $\delta$ and the initial data. \\

		Now we have proved that
		
		$$	\Omega^2 \approx e^{2K_-(u+v+h(v))}.$$
		
		It proves \eqref{Omegagamma}. Using \eqref{newEB}, we get \eqref{partialvOmegagamma}.
		
		Notice that it also proves \eqref{lambdagamma}, \eqref{nugamma} using \eqref{nuOmegaTrans} and \eqref{lambdaOmegaTrans}.
		
		Then, dividing \eqref{nuOmegaTrans} by $\Omega^2$ we get : 
		
		$$ |\kappa^{-1}-1| \lessapprox e^{2|K_-|(u+v+h(v))} v^{-2s} \lesssim e^{-2|K_-| \Delta'} .$$
		
		Hence for $\Delta'$ large enough compared to $N$, $\epsilon$, $\Delta$ and the initial data, we close bootstrap \eqref{kappabootTranst} and prove \eqref{kappatransition} with
		
		$$  |\kappa^{-1}-1| \leq \frac{1}{4}.$$
		
		Similarly using \eqref{lambdaOmegaTrans}, we get : 
		
		$$ |\iota^{-1}-1| \lessapprox e^{2|K_-| (u+v+h(v))} v^{-2s} + v^{-p(s)} ,$$
		
		which closes \eqref{iotabootTranst} and proves \eqref{iotatransition}, for $|u_s|$ large enough.

		
		Finally \eqref{kappatransition}, \eqref{iotatransition} and \eqref{Omegagamma} give -using \eqref{mu}- that

		$$|\left(r(u_{\gamma}(v),v) - (\varpi+\sqrt{\varpi^2-Q^2}\right)\left(r(u_{\gamma}(v),v) - (\varpi-\sqrt{\varpi^2-Q^2})\right) | \leq  \tilde{C} e^{-2|K_-| \Delta'} v^{-2s}.$$
		
		Then using \eqref{QTrans} and \eqref{MTrans} with the same type of argument as in section \ref{Nfuture} -notably that $r$ is far away from $r_+(M,e)=M+\sqrt{M^2-e^2}$- , we get \eqref{rgamma} : 
		
		$$ |r(u_{\gamma}(v),v)-r_-(M,e)| \lessapprox  e^{2|K_-| \Delta'}v^{1-2s}.$$

	\end{proof}
	
	\subsection{The late blue-shift region} \label{LB}
	
	We then define the late blue-shift region :
	
	$$\mathcal{LB}:=  \{-\Delta'+
	\frac{2s}{2|K_-|} \log(v) \leq  u+v+h(v) \} \; .$$ 
	
	This large region is where the essential of the blue-shift occurs : $\Omega^2$ goes from a polynomial decay in $v$ on the past boundary to an exponential decay in $v$.

	In this region, $\kappa^{-1}$ and $\iota^{-1}$ are expected to blow-up \footnote{Indeed, we prove in the instability part that $\iota^{-1}$ blows up identically on the Cauchy horizon, for $u \leq u_s$. } exponentially near the Cauchy horizon if the initial bound on the field is sharp so we cannot trade $\lambda$ and $\nu$   -which decay no better than what \eqref{lambdagamma} and \eqref{nugamma} suggest- for $\Omega^2$ which decays exponentially.
	
	However, there is enough decay of $\Omega^2$, $\nu$ and $\lambda$ on the past boundary $\gamma$ so that we can prove decay for the scalar field with \eqref{Field} using a bootstrap method. 
	
	In $\mathcal{LB}$, we will not prove decay for $\phi$ and $D_u \phi$ -due to $|u| \lesssim v$ only- and we do not know if $-\partial_u \log(\Omega^2)$ is lower bounded like before if $s\leq1$.
	
	Nevertheless, we can still prove that $-\partial_v \log(\Omega^2)$ is lower bounded which will allow us to prove most of the estimates. \\

	We now recapitulate the constants choice : we have chosen $\Delta$ large enough  depending on $C,e,M,q_0,m^2,v_0$ in \ref{redshift}, then $\epsilon$ small enough depending on $\Delta$ and  $C,e,M,q_0,m^2,v_0$ in \ref{noshift}, then $N\epsilon$ large enough depending on $\Delta$ and  $C,e,M,q_0,m^2,v_0$ in \ref{transition} and finally $\Delta'$ large enough depending on $N$, $\epsilon$, $\Delta$ and  $C,e,M,q_0,m^2,v_0$ also in \ref{transition}.
	
	This been said, we can consider that all the constants mentioned above depend on $C,e,M,q_0,m^2,v_0$ so we are going to write again  $A \lesssim B$ if there exists a $\tilde{D}$ depending on these constants such that $A \leq \tilde{D} B$.
	
	\begin{prop} \label{LBprop}
		
		We have the following estimates in $\mathcal{LB}$ : 
		
		For all $\eta>0$, there exists $C_{\eta}>0$ : 
		
		\begin{equation} \label{phiLB}
		\Omega^{2\eta}|\phi|\lesssim  C_{\eta} v^{-s} ,
		\end{equation}	
		\begin{equation} \label{QLB}  \Omega^{2\eta}|Q-e| \lesssim  C_{\eta} v ^{1-2s}.
		\end{equation}

		And 
		\begin{equation} \label{phiVLB}
		|\partial_v \phi| \lesssim v^{-s},
		\end{equation}		\begin{equation} \label{partialvOmegaLB}
		|\partial_v \log(\Omega^2_{CH}) |  \lesssim |u|^{1-s}v^{-s}1_{ \{s>1\}} +   v^{1-2s}1_{ \{s<1\}}  + 1_{ \{s=1\}} \log(v) v^{-1},
		\end{equation}					
		\begin{equation} \label{lambdaLB}
		0<	 -\lambda \lesssim  v^{-2s},
		\end{equation}
		\begin{equation} \label{nuLB}
		0< -\nu \lesssim |u|^{-2s}.
		\end{equation}

		Moreover if $s>1$ we have :

		\begin{equation} \label{A}
		|D_u \phi| \lesssim  |u|^{-s} ,
		\end{equation}
		\begin{equation} \label{B}
		|\partial_u \log(\Omega^2_{CH})-2K_- |  \lesssim   |u|^{1-2s},
		\end{equation}
		\begin{equation} \label{C}
		|\partial_v Q |  \lesssim   |u|^{1-s}v^{-s},
		\end{equation}
		\begin{equation} \label{D}
		|\partial_u Q|  \lesssim   |u|^{1-2s}.
		\end{equation}
		
	\end{prop}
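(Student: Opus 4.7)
The plan is to run a single bootstrap argument throughout $\mathcal{LB}$, starting from the boundary estimates on $\gamma$ inherited from Proposition \ref{transitionBS}, and propagating forward using Lemma \ref{calculuslemma} in its blue-shift form. The guiding idea, as in Proposition \ref{transitionBS}, is that $\Omega^2$ must decay like an exponentially small function so that $\partial_v\log(\Omega^2) \leq \tfrac{3}{2}K_- < 0$ can be bootstrapped; once this is in hand, weighted versions of $\phi$, $Q$, $\lambda$ and (for $s>1$) $\nu$, $D_u\phi$ can be integrated against negative powers of $\Omega^2$ without paying the usual factor of $v$.

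First I would impose the bootstrap assumptions $\partial_v\log(\Omega^2) \leq \tfrac{3}{2}K_-$, $|\partial_v\phi| \leq K v^{-s}$ and $-\lambda \leq K v^{-2s}$ (with $K$ large enough that they hold on $\gamma$ by \eqref{lambdagamma}, \eqref{Omegagamma}, \eqref{partialvOmegagamma}), and for $s>1$ the analogous bootstraps for $D_u\phi$ and $\nu$. To obtain \eqref{phiLB}, I would compute $\partial_v(\Omega^{2\eta}|\phi|^2) = 2\eta\,\partial_v\log(\Omega^2)\,\Omega^{2\eta}|\phi|^2 + 2\Omega^{2\eta}\Re(\partial_v\phi\,\bar\phi)$ and integrate in $v$ from $\gamma$, using the blue-shift bootstrap to absorb the first term; the computation mirrors the one carried out in the proof of Proposition \ref{transitionBS}. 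The bound \eqref{QLB} follows in the same way from \eqref{ChargeVEinstein}. For \eqref{lambdaLB}, I would integrate \eqref{RaychV2} from $\gamma$, getting $|\iota^{-1}(u,v) - \iota^{-1}(u,v_\gamma(u))| \lesssim \int_{v_\gamma(u)}^{v}\Omega^{-2}(u,v')v'^{-2s}\,dv'$, and then apply the blue-shift $v$-bound of Lemma \ref{calculuslemma} to obtain $\iota^{-1}(u,v) \lesssim \Omega^{-2}(u,v)v^{-2s}$; multiplication by $\Omega^2/4$ closes the $\lambda$ bootstrap. The $\partial_v\phi$ bootstrap is then closed by integrating the renormalised Klein-Gordon equation \eqref{Field2} in $u$ from $\gamma$: each source term on the right-hand side contains either $\lambda$, $\nu$, $\Omega^2$, or $\Omega^2\phi$, all of which have already been controlled. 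Finally, \eqref{partialvOmegaLB} comes from integrating the $\iota\partial_v(2K)$ version of \eqref{Omega2}, where the $\partial_v\varpi$ and $\partial_v Q^2$ contributions are computed from \eqref{massVEinstein}, \eqref{ChargeVEinstein} and the weighted $\phi$ bound \eqref{phiLB} (with $\eta$ taken small enough); the three different regimes for $s$ in the exponent arise precisely because the factor $|u|^{1-s}$ appearing after integrating by parts in $v$ is or is not integrable depending on whether $s>1$, $s=1$ or $s<1$.

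For the $s>1$ bounds, I would repeat the above scheme in the $u$-direction: integrate \eqref{RaychU2} to obtain $\kappa^{-1} \lesssim \Omega^{-2}(u,v)|u|^{-2s}$ and therefore \eqref{nuLB}; integrate \eqref{Field3} in $v$ from $\gamma$ to propagate $D_u\phi$; then \eqref{C}, \eqref{D} follow from \eqref{chargeUEinstein}, \eqref{ChargeVEinstein} combined with the weighted $\phi$-bound, integrability of $|u|^{1-s}$ being exactly what is needed to close the $\partial_u\log(\Omega^2_{CH})$ estimate \eqref{B} via \eqref{Omega2}.

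The main obstacle is the bootstrap of $\partial_v\log(\Omega^2) < \tfrac{3}{2}K_-$, because $\mathcal{LB}$ is the first region where $|u|$ no longer grows with $v$ (near the Cauchy horizon $u$ is bounded while $v\to\infty$), so the cheap trick $|u|\sim v$ used in $\mathcal{EB}$ and the red-shift regions is unavailable; the error terms in \eqref{Omega2} produce factors of $|u|^{1-s}$ that must be dominated by the polynomial decay of $\lambda$, $\nu$ and the exponential decay of $\Omega^2$ without losing a net positive power of $v$. This forces the use of the blue-shift Lemma \ref{calculuslemma} at every stage and explains why $D_u\phi$ and $\partial_u\log(\Omega^2_{CH})$ can only be closed when $s>1$, where $|u|^{1-s}$ is integrable in $u$ and gives additional room. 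Choosing $\Delta'$ and $|u_s|$ large enough at the end ensures that all strict inequalities in the bootstraps are improved and the argument closes.
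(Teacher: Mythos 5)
Your overall architecture (bootstrap $\partial_v\log\Omega^2 < K_- < 0$, $|\partial_v\phi|\lesssim v^{-s}$, $|\lambda|\lesssim v^{-2s}$; weighted bounds on $\phi$ and $Q$; closing the $\partial_v\phi$ bootstrap by integrating \eqref{Field2} in $u$ from $\gamma$) matches the paper's, but there is a genuine gap in the chain for $\tfrac12<s\le 1$, which is precisely the regime this paper is written for. You place the bounds on $\nu$ and $D_u\phi$ among the $s>1$ items and propose to get them from the $u$-Raychaudhuri equation \eqref{RaychU2}. But \eqref{nuLB} is asserted for all $s>\tfrac12$ (and is used afterwards to show $r_{CH}(u)$ is bounded below, i.e.\ to obtain the Penrose diagram), and, more importantly, a bound on $D_u\phi$ valid for all $s$ is indispensable to close your own $\partial_v\phi$ bootstrap: the right-hand side of \eqref{Field2} contains the term $\lambda D_u\phi/r$, which is not of the form ``$\lambda$, $\nu$, $\Omega^2$ or $\Omega^2\phi$'' alone. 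The $u$-Raychaudhuri route is also unavailable here, because in $\mathcal{LB}$ there is no lower bound on $-\partial_u\log\Omega^2$ when $s\le 1$ (the paper explicitly flags this), so Lemma \ref{calculuslemma} cannot be applied in the $u$-direction. The paper instead integrates the wave equation for $r$, \eqref{Radius}, and then \eqref{Field3}, \emph{in $v$} from $\gamma$ --- using only the $v$-direction blue-shift bound --- to obtain $|\nu|\lesssim|u|^{-2s}$ and $|D_u\phi|\lesssim|u|^{-s}+|u|^{-2s}b(u,v)$ for every $s>\tfrac12$, and only then feeds these into \eqref{Field2}.

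Two further points need repair. First, closing the $\lambda$ bootstrap from $-4\lambda\le \iota^{-1}(u,v_\gamma(u))\,\Omega^2+Cv^{-2s}$ requires $\Omega^2\le C_\gamma v^{-2s}$ with $C_\gamma$ \emph{independent} of the bootstrap constant; the only such bound implicit in your write-up is $\Omega^2\le -6\lambda$ (monotonicity of $\iota^{-1}$), which reinserts the bootstrap constant with a factor larger than one and therefore does not close. The paper obtains the independent bound by showing $\partial_u\Omega^2\le 0$ in the past of the auxiliary curve $\gamma_{\mathcal V}=\{u+v+h(v)=v/2\}$ (where $|u|\sim v$ still holds), so that $\Omega^2(u,v)\le\Omega^2(u_{\gamma}(v),v)\lesssim v^{-2s}$, and by the exponential decay $\Omega^2\lesssim e^{K_-v/5}$ to the future of $\gamma_{\mathcal V}$. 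Second, this same splitting of $\mathcal{LB}$ along $\gamma_{\mathcal V}$ is what converts the $|u|^{-s}$ bounds coming out of the various integrations into the stated $v^{-s}$ bounds (e.g.\ in \eqref{phiLB}): near the Cauchy horizon $|u|$ stays bounded while $v\to\infty$, so $|u|^{-s}\not\lesssim v^{-s}$, and one must use the exponential smallness of $\Omega^{2\eta}$ in $\mathcal V$ to absorb the growth. You correctly identify this difficulty in your last paragraph but do not supply the mechanism that resolves it.
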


	\begin{proof}   
		We make the following bootstrap assumptions :  
		
		\begin{equation} \label{VfieldBootVic}
		|r\partial_v \phi| \leq 2 \check{C} v^{-s},\end{equation}
		\begin{equation} \label{lambdaboot}
		|\lambda| \leq 2\check{D} v^{-2s},\end{equation}
		\begin{equation} \label{partialvOmegaBootVic}
		\partial_v \log( \Omega^2) \leq K_-,
		\end{equation}
		
		for $\check{C}>0$ chosen so that on the past boundary $\gamma$ we have : $ |r\partial_v \phi| \leq  \check{C} v^{-s}$ and  $\check{D}$ is a large enough constant to be chosen later such that $|\lambda| \leq  \check{D} v^{-2s}$ on $\gamma$.
		
		Notice that because of \eqref{RaychV}, $\iota$ decreases in $v$ so by the previous bound on $\gamma$ we can write : 
		
		\begin{equation} \label{Omegapolyn}
		\Omega^2 \leq -6 \lambda \leq 12 \check{D} v^{-2s}.
		\end{equation}
		
		For the proof, we introduce a curve $\gamma_{\mathcal{V}}:= \{ u+v+h(v) = \frac{v}{2}\}$ whose future domain  $\mathcal{V}=\{ u+v+h(v) \geq \frac{v}{2}\}$ is called the vicinity of the Cauchy horizon. \\

		We start to integrate \eqref{partialvOmegaBootVic} to get, using the bounds on the previous region and choosing $|u_s|$ large enough so that $\log(\Omega^2)\leq 0$ : 
		
		$$ \log(\Omega^2)(u,v) \leq  \log(\Omega^2)(u,v_{\gamma}(u))+K_-(v-v_{\gamma}(u))\leq K_-(v-v_{\gamma}(u)),$$
		
		and since $v_{\gamma}(u) \leq -\frac{3}{2}u$ for $u_s$ negative enough, we get :

		$$\Omega^2 \leq e^{K_-(v -\frac{3}{2}|u|)}.$$
		
		Notice that on $\mathcal{V}$, $|u| \leq \frac{v}{2}+h(v)$ hence $v-\frac{3|u|}{2} \geq \frac{v}{4}+o(v)$ so in $\mathcal{V}$ for $|u_s|$ large enough \footnote{Of course this bound is far from sharp : actually for all $\epsilon_0>0$, there exists a region sufficiently close to the Cauchy horizon so that $\Omega^2 \lesssim e^{(2K_-+\epsilon_0)v}
		$. We will not need such a sharp bound. } since $ K_-<0$ :
		       
		\begin{equation} \label{Omegaexp}
		\Omega^2 \lesssim e^{\frac{K_-}{5}v} .\end{equation}
		
		The following lemma will prove \eqref{phiLB} and \eqref{QLB} :

		\begin{lem} \label{lemmafinal}
			
			Assuming the bootstraps stated above, we have the following estimates in $\mathcal{LB}$ : for all $\eta>0$, there exists $C_{\eta}>0$ such that : 
			
			\begin{equation} \label{philem}
			\Omega^{2\eta}|\phi|\lesssim  C_{\eta} v^{-s} ,
			\end{equation}	
			\begin{equation} \label{Qlem}  \Omega^{2\eta}|Q-e| \lesssim  C_{\eta} v^{1-2s}.
			\end{equation}

		\end{lem}
		\begin{proof}

			Let $\eta>0$. We write : 
			
			$$ \partial_v ( \Omega^{2\eta} \phi) = \eta \cdot \partial_v \log( \Omega^2) \cdot \Omega^{2\eta} \phi+\Omega^{2\eta} \partial_v \phi.$$
			
			Then, because of bootstraps \eqref{VfieldBootVic}, \eqref{partialvOmegaBootVic} we have  
			
			$$ \partial_v ( \Omega^{4\eta} |\phi|^2) = 2\eta \cdot \partial_v \log( \Omega^2) \cdot \Omega^{4\eta} |\phi|^2+2\Omega^{4\eta} \Re(\partial_v \phi \bar{\phi}) \leq \frac{4 \check{C} } {r}v^{-s} \Omega^{4\eta}|\phi|, $$
			
			which implies : 
			
			$$ \partial_v ( \Omega^{2\eta} |\phi|) \leq \frac{2 \check{C} } {r} \Omega^{2\eta} v^{-s}. $$
			
			Then it is enough to integrate using \eqref{partialvOmegaBootVic} and Lemma \ref{calculuslemma}, the bound on the previous region and the fact that
			
			$$|\Omega^{2\eta}(u,v_{\gamma}(u)) \phi(u,v_{\gamma}(u))| \lesssim |u|^{-s}$$ to get  : 
			
			$$ \Omega^{2\eta}|\phi|\lesssim  C_{\eta} |u|^{-s}. $$
			
			Now in the past of $\gamma_{\mathcal{V}}$, $|u| \sim v$ so 
			\eqref{philem} is true. 
			
			In $\mathcal{V}$, we can integrate \eqref{VfieldBootVic} to get $ |\phi| \lesssim |u|^{1-s}1_{ \{s>1\}} + v^{1-s}1_{ \{s<1\}}+\log(v)1_{ \{s=1\}}$ but the exponential decay of $\Omega^2$ in $v$ from \eqref{Omegaexp} is stronger than this potential growth for $|u_s|$ large enough, so that  \eqref{philem} is true also. 
			
			We use the same technique to get \eqref{Qlem}, using \eqref{philem}, bootstrap \eqref{VfieldBootVic} and \eqref{ChargeVEinstein}. \\
			
		\end{proof}

		Now we can use \eqref{Radius} and what precedes to write : 
		
		$$ |\partial_v (r\nu)| \lesssim \Omega^2 + C_{\eta}\Omega^{2-2\eta}v^{1-2s}.$$
		
		Integrating, choosing $\eta$ small enough and using \eqref{partialvOmegaBootVic} with Lemma \ref{calculuslemma} and the bounds on the former region we prove \eqref{nuLB} : 
		
		$$ |\nu| \lesssim |u|^{-2s}.$$

		Then we can use \eqref{Field3}, \eqref{nuLB} and \eqref{VfieldBootVic} to get : 
		
		$$ |\partial_v (r D_u \phi)| \lesssim |u|^{-2s}v^{-s}+ C_{\eta}\Omega^{2-2\eta}v^{-s}.$$
		
		Integrating, choosing $\eta$ small enough and using \eqref{partialvOmegaBootVic} with Lemma \ref{calculuslemma} to absorb of the $C_{\eta}\Omega^{2-2\eta}v^{-s}$ term in $|u|^{-s}$ , we get : 
		
		\begin{equation} \label{DULB}
		|D_u \phi | \lesssim |u|^{-s}+|u|^{-2s}b(u,v),
		\end{equation}
		
		with $b(u,v):= v^{1-s} 1_{ \{s<1 \}} + |u|^{1-s} 1_{ \{s>1 \}}+\log(v) 1_{ \{s=1 \}} $. \\

		We can then use \eqref{Field2} and bootstrap \eqref{lambdaboot} to get :
		
		$$|\partial_{u}( e^{iq_0 \int_{u_{0}}^{u}A_u}r\partial_{v} \phi)| \lesssim  \check{D}v^{-2s} |u|^{-s} + \check{D}v^{-2s} |u|^{-2s}b(u,v)+C_{\eta}\Omega^{2-2\eta}v^{-s}.$$
		
		Integrating on $[u_{\gamma}(v),u]$ and taking the absolute value we get : 
		
		$$ |r\partial_{v} \phi| \leq \check{C} v^{-s} + \tilde{C}( \check{D}v^{-s}b(u,v)+ \check{D}|u|^{1-2s}  v^{-s}b(u,v) + v^{1- \frac{2s}{1-\eta}}                          ) v^{-s},$$
		
		where we used that $\Omega^{2-2\eta} \lesssim v^{- \frac{2s}{1-\eta}}$ because of \eqref{Omegapolyn} and $|u-u_{\gamma}(v)| \lesssim v$.
		
		Noticing that  $v^{-s}b(u,v)=o(1)$ when $ v \rightarrow +\infty$, uniformly in $u$  and $v^{1- \frac{2s}{1-\eta}}=o(1)$ for $\eta$ small enough, we can close bootstrap \eqref{VfieldBootVic} for $|u_s|$ large enough \footnote{Notice that $\check{D}$ is absorbed by the decay and does not play any role.}. \\
		
		Now in the past of $\gamma_{\mathcal{V}}$, we can prove, using $v \sim |u|$, the bounds proved before, \eqref{Omega} and arguments similar to those of section \eqref{transition} that :  
		
		$$ |\partial_u \log(\Omega^2) -2K_- | \lesssim v^{1-2s}.$$
		
		Hence $\partial_u \Omega^2 \leq 0$ for $|u_s|$ large enough so -denoting $C_{\gamma}$ the constant appearing in estimate \eqref{Omegagamma}- we have : 
		
		$$ \Omega^2 (u,v) \leq  \Omega^2 (u_{\gamma}(v),v) \leq  C_{\gamma}v^{-2s}.$$
		
		Moreover the exponential decay of \eqref{Omegaexp} makes $\Omega^2 (u,v)\leq C_{\gamma}v^{-2s} $ also true for $|u_s|$ large enough in $\mathcal{V}$
		
		Now we integrate \eqref{RaychV}, using \eqref{partialvOmegaBootVic} and the bound \eqref{iotatransition} to get : 
		
		$$ 4|\lambda|  \leq \frac{3}{2} \Omega^2 + \tilde{C}v^{-2s}.$$
		
		So for $4\check{D}>\frac{3}{2}C_{\gamma}+  \tilde{C}$, bootstrap \eqref{lambdaboot} is validated. \\
		
		Now using the preceding bounds, we get \footnote{Recall that $\partial_v \log(\Omega^2_{CH}) = \partial_v \log(\Omega^2)-2K_-$.} : 
		
		$$ |\partial_u \partial_v \log(\Omega^2_{CH})| \lesssim |u|^{-s} v^{-s} + |u|^{-2s} v^{-s}b(u,v)+v^{-2s}+\Omega^{2-2\eta}v^{1-2s} .$$

		We can integrate and -using similar methods than before- for $\eta$ small enough we get \eqref{partialvOmegaLB}, which also closes bootstrap \eqref{partialvOmegaBootVic} for $ |u_s|$ large enough : 
		
		$$ | \partial_v \log(\Omega^2_{CH})| \lesssim b(u,v)v^{-s}. $$
		
	Where we used that $v^{1-2s}=O(v^{-s}b(u,v))$.	To prove \eqref{A}, \eqref{B}, \eqref{C}, \eqref{D}, it is enough to use the equations, \eqref{DULB} and the fact that $b(u,v) = |u|^{1-s}$ when $s>1$, similarly to what was done in the past regions.

	\end{proof}

	Then we finish the proof of Theorem \ref{Stabilitytheorem} : from \eqref{lambdaLB} and \eqref{nuLB} , it is clear the $r$ admits a continuous limit $r_{CH}(u)$ when $v$ tends to $+\infty$ and that $r_{CH}(u)\rightarrow r_-(M,e) $ when $|u|$ tend to $+\infty$.
	
	This is because we can integrate from $\gamma$ as : 
	
$$	r_{CH}(u) = r(u,v_{\gamma}(u)) +\int_{v_{\gamma}(u)}^{+\infty} \lambda(u,v')dv'=  r(u,v_{\gamma}(u)) + O(|u|^{1-2s}).$$

Where we used \eqref{lambdaLB} and $v_{\gamma}(u) \sim |u|$ . Then \eqref{rgamma} proves the claim. \\

	Moreover, we see that\footnote{The fact the $\nu$ admits a continuous limit when $v$ tends to $+\infty$ follows easily from the estimates.} $|\nu_{\mathcal{CH}^+}(u)| \lesssim |u|^{-2s}$ is integrable, therefore $r_{CH}(u)$ is lower bounded for $|u_s|$ large enough. Hence the space-time admits the claimed Penrose diagram for $|u_s|$ large enough.
	
	Moreover if $s>1$, $v^{1-2s}$ and $v^{-s}$ are integrable in $v$ so we can use the estimates of the last proposition and the argument from Proposition 8.14 of \cite{JonathanStab} to get a continuous extension of the space-time.

	\section{Proof of the instability Theorem \ref{Instabilitytheorem}} \label{proofinstab}
	
	\subsection{Recalling the stability estimates}
	
	Before starting the proof of Theorem \ref{Instabilitytheorem}, we recall the stability estimates -established in the proof of Theorem \ref{Stabilitytheorem}- that are necessary to prove the instability argument. Notice that they are valid in this framework because all the hypothesis of Theorem \ref{Stabilitytheorem} are present in the hypothesis of Theorem \ref{Instabilitytheorem}.

	First we recall the different regions :  
	
	\begin{enumerate}

			\item The event horizon $\mathcal{H}^+ = \{ u \equiv -\infty, v \geq v_0 \}$
			
			\item The red-shift region $\mathcal{R} =  \{ u+v+h(v) \leq -\Delta\}$. 
			
			\item The no-shift region $\mathcal{N}:= \{ -\Delta \leq u+v+h(v) \leq \Delta_N \}$ 
			
			\item The early blue-shift transition region $\mathcal{EB}:=  \{\Delta_N \leq  u+v+h(v) \leq -\Delta'+
			\frac{2s}{2|K_-|} \log(v)\}$ 
			
			\item The late blue-shift region $\mathcal{LB}:=  \{-\Delta'+
			\frac{2s}{2|K_-|} \log(v) \leq  u+v+h(v) \}$ composed of the past of $\gamma_{\mathcal{V}}:= \{ u+v+h(v) = \frac{v}{2}\}$ and its future called $\mathcal{V}=\{ u+v+h(v) \geq \frac{v}{2}\}$.

		\end{enumerate}
		
		Then we gather the different bounds from section \ref{proofstab} that we will use in this section : 
		
		\begin{enumerate}
		
		\item on $\mathcal{H}^+$, we know that : 
		
		\begin{equation} \label{lambdapositive}
		\lambda \geq 0 .
		\end{equation}

\item We have the following estimates :  in $\mathcal{R}$,

		\begin{equation} \label{DUR}
		|D_u \phi |(u,v) \lesssim \Omega^2(u,v) v^{-s} .\end{equation}
\item In $\mathcal{N} \cup \mathcal{EB}$ :	
		
		\begin{equation} \label{DUB}
		|D_u \phi |(u,v) \lesssim  v^{-s}, \end{equation}
		\begin{equation} \label{iotaB}
		0< \iota^{-1} \sim 1 ,\end{equation}
					\begin{equation} \label{kappaB}
			0<	\kappa^{-1} \sim 1. \end{equation}

	\item In $\mathcal{R} \cup \mathcal{N} \cup \mathcal{EB}$ : 	
	
	\begin{equation} \label{masschargeI}
	|\varpi-M| +|Q-e| \lesssim v^{1-2s}.
	\end{equation}		
	
	\item In $ \mathcal{EB}$ :	
					
					\begin{equation} \label{OmegaEB}\Omega^2 \sim e^{2K_- (u+v+h(v))	},		\end{equation}
									\begin{equation} \label{rboundedaway}
				|r-r_+| \gtrsim 1	.					\end{equation}

	\item In $\mathcal{EB} \cup \mathcal{LB}$ : 
	
	\begin{equation} \label{Omegadecroissantv}
				\partial_v  \log(\Omega^2) < K_- <0,\end{equation}
				 				\begin{equation} \label{DULBI}
											|D_u \phi |(u,v) \lesssim |u|^{-s}+|u|^{-2s}b(u,v),
				\end{equation}
				
					\begin{equation} \label{OmegaV}
					| \partial_v \log(\Omega^2_{CH}) |(u,v) \lesssim v^{-s}b(u,v),
					\end{equation}
				
					with $b(u,v):= 1_{ \{s>1 \}}|u|^{1-s}+ v^{1-s} 1_{ \{s<1 \}}+\log(v) 1_{ \{s=1 \}}$.
					
					For all $\epsilon_0>0$, there exists a constant  $C_{\epsilon_0}>0$ such that : 
					
						\begin{equation} \label{phiLBI}
							\Omega^{2\epsilon_0}|\phi|\lesssim  C_{\epsilon_0} v^{-s}, 
							\end{equation}	
							\begin{equation} \label{QLBI}  \Omega^{2\epsilon_0}|Q-e| \lesssim  C_{\epsilon_0} v ^{1-2s},
							\end{equation}
							\begin{equation} \label{lambdaELB}
|\lambda| \lesssim \Omega^2 +v ^{-2s}.
							\end{equation}
							
					\item In $\mathcal{EB} \cup \mathcal{LB}-\mathcal{V}$ : 
				\begin{equation} \label{Omegadecroissantu}
																				\partial_u \log( \Omega^2) < K_-<0. \end{equation}

\item In $\mathcal{LB}$ : 
					\begin{equation} \label{OmegapolynI}
						\Omega^2 \lesssim  v^{-2s},
						\end{equation}
												\begin{equation} \label{lambdaLBI}
												|\lambda| \lesssim  v^{-2s}.
												\end{equation}
\item In $\mathcal{V}$ : 
						\begin{equation} \label{OmegaexpI}
								\Omega^2 \lesssim e^{\frac{K_-}{5}v}. \end{equation}
								
			\end{enumerate}
			
			\begin{figure}
				
				\begin{center}
					
					\includegraphics[width=65 mm, height=65 mm]{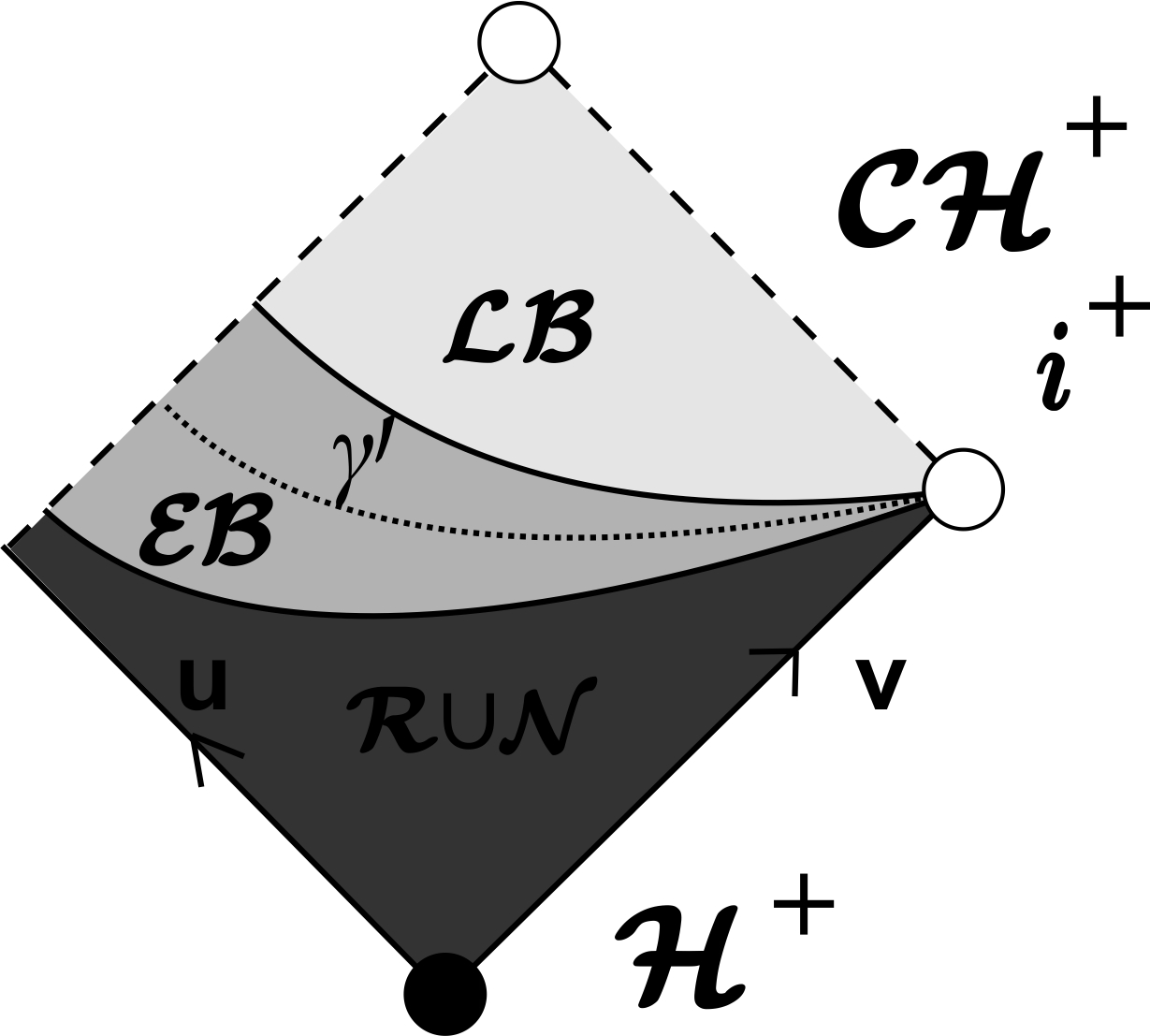}
					
				\end{center}

				\caption{Penrose diagram of the space-time $\mathcal{M}=\mathcal{R} \cup \mathcal{N} \cup \mathcal{EB} \cup \mathcal{LB} $ with the inclusion of $\gamma'$ .}
								\label{Love4}
			\end{figure}
	
	\subsection{Reduction to the proof of \eqref{instabilityresult}}
	
	In this section, we want to highlight that the polynomial lower bound  \eqref{instabilityresult} for the derivative of $\phi$ transversally to the Cauchy horizon is enough to establish all the other claims of Theorem \ref{Instabilitytheorem}
	
The blow-up of the curvature  follows directly from \ref{instabilityresult} as first highlighted in the pioneering work \footnote{Note that \cite{JonathanStab} proves more : in a appropriate global setting, they manage to prove the $C^{2}$ inextendibility of the metric. } \cite{JonathanStab} : indeed we can consider
	
	$$ Ric( \Omega^{-2} \partial_v,\Omega^{-2} \partial_v)= \Omega^{-4} |\partial_v \phi|^2 .$$ 
	
\eqref{instabilityresult} then gives that  : 
	
	$$ \limsup_{v \rightarrow +\infty} Ric( \Omega^{-2} \partial_v,\Omega^{-2} \partial_v)(u,v) = +\infty, $$
	
	using for instance the exponential lower bound for $\Omega^{-4}$ given by \eqref{OmegaexpI} in $\mathcal{V}$. \\

	
	If $s>1$, we consider the continuous extension $\bar{M}$ and the future boundary null $\mathcal{CH}^+:=\{ V \equiv 1 ,  0 \leq  U \leq U_{0}  \}$ mentioned in the statement of Theorem \ref{Stabilitytheorem}. 
	
	Notice that \eqref{OmegaV} proves in that case that $\partial_v \log(\Omega^2_{CH})(u,.)$ is integrable in $v$ hence $(u,V)$ is a regular coordinate system across the extension : in particular $\Omega^2_{CH}>0$ on $\mathcal{CH}^+$.

	If $\mathcal{U}$ is a neighbourhood in $\bar{M}$ with compact closure -in particular with a finite range of $u$- of a point $p \in \mathcal{CH}^+$, and $\phi$ is a spherically symmetric function, its $W^{1,2}_{\mathcal{U}}$ norm can be expressed in $(u,V)$ and $(u,v)$ coordinates - as developed in \cite{JonathanInstab}- as  : 
	
	\begin{equation}
	\| \phi \|^2 _{W^{1,2}(\mathcal{U})}= \int_{\mathcal{U}}\left( |\partial_V \phi|^2 +|\partial_u \phi|^2+|\phi|^2 \right)du dV \sim \int_{\mathcal{U}}\left( \Omega^{-2}|\partial_v \phi|^2 +\Omega^{2}(|\partial_u \phi|^2+|\phi|^2) \right) du dv.
	\end{equation}
	
	Since  $\mathcal{U}$ is a neighbourhood of $p$, consider the smaller neighbourhood $\mathcal{U}':=  \mathcal{U} \cap  \mathcal{V} $.
	
	Then, using the fact from \eqref{Omegadecroissantv} that $\partial_v \Omega^2 \leq 0$ : 
	
	$$\| \phi \| ^2_{W^{1,2}(\mathcal{U})} \gtrsim  \int_{\mathcal{U}'} \Omega^{-2}|\partial_v \phi|^2  du' dv' .$$
	
We can then use \eqref{OmegaexpI} -valid in $\mathcal{U}'$- with \eqref{instabilityresult} to get $$\| \phi \| _{W^{1,2}(\mathcal{U})}= +\infty,$$ i.e
	
	$$ \phi  \notin W^{1,2}_{loc}.$$ \\
	
	Now we want to prove that the continuous extension to $\mathcal{CH}^+$ of Theorem \ref{Stabilitytheorem} is not $C^1$.
	
	 We integrate \eqref{RaychV} on $[v_{\gamma_{\mathcal{V}}}(u), v]$. Using that  $\iota^{-1} \geq 0$ we get : 
	
	$$ \iota^{-1}(u,v) \geq  \iota^{-1}(u,v_{\gamma_{\mathcal{V}}}(u)) + \int_{v_{\gamma_{\mathcal{V}}}(u)}^{v} \frac{4r}{\Omega^2} |\partial_v \phi|^2 (u,v')dv' \gtrsim  \int_{v_{\gamma_{\mathcal{V}}}(u)}^{v} \frac{|\partial_v \phi|^2(u',v')}{\Omega^2}  dv'. $$ 
	
	Which means using the same argument as a few lines above that for all $u \leq u_s$ and when $v \rightarrow +\infty $ : 
	
	$$ \iota^{-1}(u,v) = \frac{-4\lambda}{\Omega^2}\rightarrow +\infty .$$
	
	And since $\iota^{-1}$ is unchanged for the coordinate system $(u,V)$ that is regular near the Cauchy horizon, i.e the system allowing for the continuous extension, it proves that the metric is not\footnote{More precisely, $|\partial_V r|$ blows up identically on $\mathcal{CH}^+$ because $\Omega^2_{CH}>0$ and $\iota^{-1}$ blows up.} $C^1$ in the continuous extension of Theorem \ref{Stabilitytheorem} for $s>1$.
	
	\subsection{Strategy to prove \eqref{instabilityresult} }
	
	This time we split the space-time into two sub-regions only, namely the past and the future of the curve $\gamma':= \{ r-r_-= v^{1-2s+\eta}\}$ for a well-chosen $0<\eta< 2s-1$ small enough. This  curve is similar to $\gamma$ introduced in section \ref{transition} -although it has a different power-, we will see that is is comparable near infinity to  $ \{u+v+h(v) = \Delta_N +
	\frac{2s-1-\eta}{2|K_-|} \log(v)\}$. 
	
	For the sake of comparison, as we will see  $\gamma'$ lies entirely in the early blue-shift transition region $\mathcal{EB}$ for $|u_s|$ large enough  c.f Figure \ref{Love4}. The key use of this property is that $\kappa^{-1}$ and $\iota^{-1}$ are still bounded in $\mathcal{EB}$. \\
	
	Since only the averaged - opposed to pointwise- lower bound of hypothesis \ref{instabhyp} is available, we use a vector field method in the past of $\gamma'$ with the Kodama vector field $T:= \kappa^{-1} \partial_v - \iota^{-1}\partial_u$ which is the geometric  analog of the Killing vector field $\partial_t$ on Reisser-Nordstr\"{o}m. However notice that unlike $\partial_t$ on Reisser-Nordstr\"{o}m, $T$ is not a Killing vector field in general i.e $\Pi^{(T)} \neq 0$.
	
	The study of $T$ is particularly relevant for two reasons : first there is no bulk term when we contract the deformation tensor  $\Pi^{(T)}_{\mu \nu} := \nabla_{(\mu}T_{\nu )}$ of $T$ with the stress-energy tensor $\mathbb{T}=\mathbb{T}_{EM}+\mathbb{T}_{KG}$ : $\Pi^{(T)}_{\mu \nu}\mathbb{T}^{\mu \nu}=0$.
	
	Despite $\Pi^{(T)} \neq 0$, this is remarkable that we still get an \textbf{exact conservation law} \footnote{This can be interpreted as the conservation of the Hawking mass.}, that we want to integrate.
	
	Second, the good control  of $\kappa^{-1}$ and $\iota^{-1}$ allows us to capture $|\partial_v \phi|$ appropriately. In particular on the event horizon $\mathcal{H}^+$, we see $\int_{\mathcal{H}^+} |\partial_v \phi|^2$ in gauge \eqref{gauge2}  which is exactly the term for which we have a lower bound that we want to propagate. The other terms, notably crossed terms,  either enjoy a stronger decay or have a favourable sign. \\
	
	In the future of $\gamma'$, we simply use the propagation equation \eqref{Field2} and integrate along the $u$ characteristic taking advantage on the upper bound\footnote{This is actually where the remainder term $O(v^{3-6s+4\eta})$ of Lemma \ref{20} comes from.}  $\Omega^2 \lesssim v^{-2s}$ on $\gamma'$, using similar techniques to that of section \ref{LB}. The key point is that the energy flux on $\gamma'$ is controlled by the integral of $|\partial_v \phi|^2$ on $\gamma'$. This is due to the fact that $\kappa^{-1}$ and $\iota^{-1}$ are bounded on $\gamma'$ and also that $\gamma'$ is rather symmetric in $u$ and $v$ apart from the term $v^{1-2s+\eta}$ which decays sufficiently \footnote{This is actually where the remainder term $O(v^{-2s})$ of Lemma \ref{19} comes from.}. This symmetry avoids to consider terms of the form $\kappa^{-1}-\iota^{-1}$ which are bounded but do not a priori decay.

	\subsection{Up to the blue-shift region : the past of $\gamma'$}
	
	We will use the same notations as in the stability part. 
	
	Moreover, for $v \geq v_0$, we introduce  $\gamma'_{v} := \{  (u',v') \in \gamma', v' \geq v  \}$ and denote $u_{\gamma'}(v)$ the unique $u$ such that $(u,v) \in \gamma'$ and $\mathcal{H}^+_v := \mathcal{H}^+ \cap \{  v' \geq v  \}$ . $n'$ denotes the future directed unit normal of $\gamma'$.
	
	$vol$ is the standard volume form induced by the metric, and is written in $(u,v)$ coordinates as $$vol = \Omega^2 r^2 \sin(\theta) du dv d\theta d\psi,$$ where $(\theta,\psi)$ are the standard coordinates on $\mathbb{S}^2$.
	
	We also define the Kodama vector field $T:= \kappa^{-1} \partial_v -  \iota^{-1} \partial_u$.  
	
	\begin{prop} \label{instab1}
		Under the hypothesis of Theorem \ref{Instabilitytheorem} and for $v$ large enough, we have : 
		
		\begin{equation}\int_{\gamma'_v} \mathbb{T}(T,n') vol(n',.) \gtrsim   v^{-p}.
		\end{equation}

	\end{prop}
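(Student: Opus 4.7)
The plan is to exploit the exact conservation law associated to the Kodama vector field $T = \kappa^{-1}\partial_v - \iota^{-1}\partial_u$: in spherical symmetry the identity $\Pi^{(T)}_{\mu\nu}\mathbb{T}^{\mu\nu} = 0$ holds---it is equivalent to the compatibility of the two evolution equations \eqref{massUEinstein}--\eqref{massVEinstein} for $\varpi$---so that the current $J^\mu := \mathbb{T}^{\mu\nu}T_\nu$ satisfies $\nabla_\mu J^\mu = 0$. I will apply the divergence theorem on the truncated region $\mathcal{D}_v^{v_\star}$ bounded by $\mathcal{H}^+_{v,v_\star}$, $\gamma'_{v,v_\star}$, and the two constant-$v$ ingoing null slices $\{v'=v\}$, $\{v'=v_\star\}$ joining them, obtaining
\begin{equation*}
\int_{\gamma'_{v,v_\star}}\mathbb{T}(T,n')\,vol(n',\cdot)
 \;=\; \int_{\mathcal{H}^+_{v,v_\star}}\mathbb{T}(T,n_{\mathcal{H}^+})\,vol \;+\; \mathcal{E}_v \;-\; \mathcal{E}_{v_\star},
\end{equation*}
where $\mathcal{E}_v$, $\mathcal{E}_{v_\star}$ are the transverse fluxes. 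Sending $v_\star\to+\infty$, the exponential decay \eqref{OmegaexpI} of $\Omega^2$ in $\mathcal{V}$ together with the polynomial bounds \eqref{lambdaLBI}, \eqref{OmegapolynI} force $\mathcal{E}_{v_\star}\to 0$, and the problem reduces to proving that the horizon flux is $\gtrsim v^{-p}$ while $|\mathcal{E}_v| = o(v^{-p})$.

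For the horizon flux, gauge \eqref{gauge2} gives $\kappa_{|\mathcal{H}^+}\equiv 1$ and $\Omega^2 \to 0$ as $u \to -\infty$ in $(u,v)$ coordinates. A direct computation from \eqref{1.1}--\eqref{5.1} yields the components $\mathbb{T}_{vv} = 2|\partial_v\phi|^2$ and $\mathbb{T}_{uv} = \tfrac{\Omega^2}{2}\bigl(m^2|\phi|^2 + Q^2/r^4\bigr)$, so that via $\mathbb{T}(T,\partial_v) = \kappa^{-1}\mathbb{T}_{vv} - \iota^{-1}\mathbb{T}_{uv}$ only the $|\partial_v\phi|^2$-piece survives on $\mathcal{H}^+$ (the mass and electromagnetic cross terms vanish because of the factor $\Omega^2$). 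Including the spherical area $r_+^2$ this gives
\begin{equation*}
\int_{\mathcal{H}^+_{v,v_\star}}\mathbb{T}(T,n_{\mathcal{H}^+})\,vol \;\sim\; r_+^2 \int_v^{v_\star}|\partial_v\phi|^2_{|\mathcal{H}^+}(0,v')\,dv' \;\gtrsim\; v^{-p}
\end{equation*}
for $v_\star$ large enough, by Assumption \ref{instabhyp}.

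The main obstacle is to show $\mathcal{E}_v = o(v^{-p})$. In $(U,v)$ coordinates on the slice $\{v'=v\}\cap\mathcal{D}_v$, the integrand of $\mathcal{E}_v$ splits into a scalar piece controlled by $\iota^{-1} r^2|D_u\phi|^2$ and a non-decaying electromagnetic piece $\kappa^{-1}\Omega^2 Q^2/(2r^2)$. I will first observe that, for $|u_s|$ large, the curve $\gamma'$ lies entirely inside $\mathcal{EB}$ (since $r - r_- = v^{1-2s+\eta}$ together with \eqref{rgamma}--\eqref{Omegagamma} places $\gamma'$ strictly in the past of the future boundary of $\mathcal{EB}$), and then split the $U$-integration across $\mathcal{R}\cup\mathcal{N}\cup\mathcal{EB}$. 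In $\mathcal{R}$, the improved bound \eqref{DUR}, $|D_u\phi|\lesssim \Omega^2 v^{-s}$, supplies an extra factor $\Omega^4$; combined with the boundedness \eqref{iotalateRS} of $\iota^{-1}$ in the late red-shift transition region and the exponential smallness of $U$ there, the contribution is negligible. In $\mathcal{N}$ the $U$-extent is fixed and small, and all quantities are controlled by Proposition \ref{propannexe}. In $\mathcal{EB}$, $\iota^{-1}$ and $\kappa^{-1}$ are uniformly bounded by \eqref{iotaB}--\eqref{kappaB} and $\Omega^2\lesssim v^{-(2s-1-\eta)}$ on $\gamma'$ by \eqref{Omegagamma}; the delicate part is the non-decaying $e^2/r$ piece of the electromagnetic term, which will be handled by a telescoping structure using the Raychaudhuri equation \eqref{RaychU2} and the near-conservation \eqref{chargeUEinstein} of $Q$, absorbing the dominant $e^2/r$ contribution into boundary data on $\gamma'$ (already appearing on the left-hand side of the identity) and leaving only remainders that are quadratic or cubic in the small quantities $|Q-e|$, $|\phi|$ and $|D_u\phi|$. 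The quantitative range $2s-1\le p<\min\{2s,6s-3\}$ in Assumption \ref{instabhyp} is precisely what is needed for these remainders to beat the main lower bound $v^{-p}$, yielding the claim.
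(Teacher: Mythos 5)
Your overall skeleton is the paper's: the Kodama vector field $T=\kappa^{-1}\partial_v-\iota^{-1}\partial_u$, the exact conservation law $\Pi^{(T)}_{\mu\nu}\mathbb{T}^{\mu\nu}=0$, and the divergence theorem on the region bounded by $\mathcal{H}^+_v$, $\gamma'_v$ and the ingoing slice $\{v'=v\}$, with the horizon flux supplying the lower bound via Assumption \ref{instabhyp} and the slice $\{v'=v\}$ treated as an error. Your handling of the $|D_u\phi|^2$ term on that slice (extra $\Omega^4$ from \eqref{DUR} in $\mathcal{R}$, boundedness of $\iota^{-1}$ and $|u_{\gamma'}(v)-u_{\mathcal{R}}(v)|\lesssim\log v$ afterwards, so a contribution $o(v^{-p})$ since $p<2s$) is the paper's argument.

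There is, however, a genuine gap in what you single out as ``the main obstacle'': the electromagnetic (and mass) cross term on $\{v'=v\}$. You treat $\kappa^{-1}\mathbb{T}_{uv}$ as a non-decaying term that must be absorbed by an unspecified ``telescoping structure'' using \eqref{RaychU2} and \eqref{chargeUEinstein}; no such argument is actually given, and it is not needed. Writing $\kappa^{-1}=-4\nu/\Omega^2$ and $\mathbb{T}_{uv}=\tfrac{\Omega^2}{2}\bigl(m^2|\phi|^2+Q^2/r^4\bigr)$, this contribution equals $-2\nu\bigl(m^2|\phi|^2+Q^2/r^4\bigr)\geq 0$ because $\nu<0$ everywhere (no-anti-trapped-surfaces plus Raychaudhuri); since it enters the identity with the same orientation as the horizon flux, it only helps the lower bound and is simply discarded. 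This sign observation is the crux of why the Kodama argument works in the past of $\gamma'$ at all, and your proposal does not contain it. Relatedly, your claim that the cross terms ``vanish'' on $\mathcal{H}^+$ because of the factor $\Omega^2$ is incorrect: the $\Omega^2$ in $\mathbb{T}_{uv}$ cancels against the $\Omega^{-2}$ in $\iota^{-1}$, leaving $2\lambda\bigl(m^2|\phi|^2+Q^2/r^4\bigr)$, which is generically nonzero; what saves you is again a sign, namely $\lambda_{|\mathcal{H}^+}\geq 0$ from \eqref{lambdapositive}. Finally, only $p<2s$ is used in this proposition; the constraint $p<6s-3$ is needed in the subsequent propagation to the future of $\gamma'$, not here.
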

	
	\begin{proof} We state the following lemma, which is proven using elementary calculus only : 
		
		\begin{lem}	
			\begin{equation} \label{Tuu}
			\mathbb{T}_{u u} = 2|D_u \phi| ^2,
			\end{equation}
						\begin{equation} \label{Tvv}
			\mathbb{T}_{v v} = 2|\partial_v \phi| ^2,
			\end{equation}
					\begin{equation} \label{Tuv}
			\mathbb{T}_{u v} = \frac{\Omega^2}{2}(m^2 |\phi|^2 + \frac{Q^2}{r^4}),
			\end{equation}
			
				\begin{equation} \label{Piuu}
				\Pi^{(T)}_{u u} = -2r |D_u \phi| ^2,
				\end{equation}
			\begin{equation} \label{Pivv}
			\Pi^{(T)}_{v v } = 2r |\partial_v \phi| ^2,
			\end{equation}
			\begin{equation} \label{Piuv}
			\Pi^{(T)}_{u v } =	\Pi^{(T)}_{\theta \theta}=	\Pi^{(T)}_{\psi \psi} = 0.
			\end{equation}
			
		\end{lem}
		
		As a consequence, we see that $\Pi^{(T)}_{\mu \nu}\mathbb{T}^{\mu \nu}=0$.
		
		Using the precedent lemma and applying the divergence theorem \footnote{This is the classical use of the vector field method : the key point being that $\mathbb{T}(\phi,F)$ is divergence-free because $(\phi,F)$ is a solution to the Maxwell-Klein-Gordon equations.} to the region delimited by $\mathcal{H}^+_v$, $\gamma'_v$ and $ \{ v'=v, u \leq u_{\gamma'}(v) \} $ we get : 
		
		\begin{equation}
		\int_{\gamma'_v} \mathbb{T}(T,n')vol(n',.) \gtrsim \int_{\mathcal{H}^+_v} |\partial_v \phi|^2+4\lambda Q^2+4\lambda |\phi|^2 + \int_{v'=v, u \leq u_{\gamma'}(v)}-\iota^{-1} |D_u \phi|^2-4\nu Q^2-4 \nu |\phi|^2.
		\end{equation} 
		
		Now notice that $\lambda_{|\mathcal{H}^+} \geq 0$ as proved in \ref{boundsEH}, and $\nu \leq 0$ so all the terms in the right hand side are non-negative, except $-\iota^{-1} |D_u \phi|^2$. For this one, we write : 
		
		$$\int_{-\infty}^{ u_{\gamma'}(v)}\iota^{-1}(u',v) |D_u \phi|^2 (u',v)du'= \int_{-\infty}^{ u_{\mathcal{R}}(v)}\iota^{-1}(u',v) |D_u \phi|^2 (u',v)du'+\int_{u_{\mathcal{R}}(v)}^{ u_{\gamma'}(v)}\iota^{-1}(u',v) |D_u \phi|^2 (u',v)du',$$
		
		where $u_{\mathcal{R}}(v)$ is the unique $u$ such that $u+v+h(v)=-\Delta$ i.e $(u,v)$ belongs to the future boundary of $\mathcal{R}$.
		
		The first term can be bounded using \eqref{DUR} : 
		
		$$ |\int_{-\infty}^{ u_{\mathcal{R}}(v)}\iota^{-1}(u',v) |D_u \phi|^2 (u',v)du'| \leq  v^{-2s}|\int_{-\infty}^{ u_{\mathcal{R}}(v)} 4\lambda \Omega^2 du' |\lesssim  v^{-2s}.$$
		
		The second term using \eqref{DUB}, \eqref{iotaB} and  $|u_{\gamma'}(v)-u_{\mathcal{R}}(v)| \lesssim \log(v) $ : 
		
		$$ |\int_{u_{\mathcal{R}}(v)}^{ u_{\gamma'}(v)}\iota^{-1}(u',v) |D_u \phi|^2 (u',v)du'| \lesssim  v^{-2s} \log(v).$$
		
		To sum up since $p <2s $ ,  it proves that $\int_{-\infty}^{ u_{\gamma'}(v)}\iota^{-1}(u',v) |D_u \phi|^2 (u',v)du'= o(v^{-p})$ hence, as claimed 
		
		$$\int_{\gamma'_v} \mathbb{T}(T,n') vol(n',.) \gtrsim   v^{-p}.$$
		
	\end{proof}
	
	Before moving to the next section, we will need to localise $\gamma'$ with respect to the regions of the stability part to be able to use the stability estimates. This is done by the following lemma : 
	
	\begin{lem}
		For $|u_s|$  large enough and $\eta>0$ small enough, $\gamma':= \{ r-r_-= v^{1-2s+\eta}\} \subset \mathcal{EB}$.
		
		Moreover we have : 
		
		\begin{equation} \label{Omegagamma'}
		\Omega^2 (u_{\gamma'}(v),v) \sim v^{1-2s+\eta}.
		\end{equation}
	\end{lem}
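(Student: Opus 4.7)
My plan is to locate $\gamma'$ relative to the past boundary $\gamma_N$ and the future boundary $\gamma$ of the region $\mathcal{EB}$, and then to read off the size of $\Omega^2$ on $\gamma'$ from the geometric identity \eqref{mu}.

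First I observe that $\nu<0$ holds everywhere (the no-anti-trapped-surface condition propagated by \eqref{RaychU}), and $\lambda<0$ throughout $\mathcal{EB}$ (established after the no-shift region). Hence $r$ is strictly monotone-decreasing in both null directions on $\mathcal{EB}\cup\mathcal{LB}$, which ensures that for each fixed large $v$ the equation $r(u,v)-r_- = v^{1-2s+\eta}$ has at most one solution $u_{\gamma'}(v)$. To show this solution lies in $\mathcal{EB}$, I compare $v^{1-2s+\eta}$ to the values of $r-r_-$ on the two boundaries of $\mathcal{EB}$. On the past boundary $\gamma_N$, \eqref{rNSfuture} yields $r_{|\gamma_N}-r_- \gtrsim 1$ (namely $\sim e^{-K_*N\epsilon}$, a positive constant once $N,\epsilon$ have been fixed in the stability argument); since $\eta<2s-1$, we have $v^{1-2s+\eta}=o(1)$, so $v^{1-2s+\eta}<r_{|\gamma_N}-r_-$ for $|u_s|$ large enough. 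On the future boundary $\gamma$, \eqref{rgamma} gives $r_{|\gamma}-r_- \lesssim e^{2|K_-|\Delta'}v^{1-2s}$, which is $o(v^{1-2s+\eta})$ because $\eta>0$, so $v^{1-2s+\eta}>r_{|\gamma}-r_-$ for $|u_s|$ large enough. Monotonicity of $r(\cdot,v)$ then sandwiches the point $(u_{\gamma'}(v),v)$ strictly between the corresponding points of $\gamma_N$ and $\gamma$, giving $\gamma'\subset\mathcal{EB}$.

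For the size of $\Omega^2$ on $\gamma'$, I use $\Omega^2 = -4\kappa\iota(1-\mu)$ from \eqref{mu}. Since $\kappa,\iota\sim 1$ on $\mathcal{EB}$ by \eqref{kappatransition} and \eqref{iotatransition}, it suffices to estimate $1-\mu$. Writing $1-\mu = 1-2\varpi/r+Q^2/r^2$ and using $|\varpi-M|+|Q-e|\lesssim v^{1-2s}$ from \eqref{QTrans} and \eqref{MTrans}, together with the Taylor expansion of $(r-r_+)(r-r_-)/r^2$ around $r=r_-$ (which vanishes to first order with slope $(r_--r_+)/r_-^2 = 2K_-$), I obtain on $\gamma'$
\[
1-\mu = 2K_-(r-r_-) + O\bigl((r-r_-)^2\bigr) + O(v^{1-2s}).
\]
Substituting $r-r_-=v^{1-2s+\eta}$ shows $1-\mu\sim -v^{1-2s+\eta}$, and therefore $\Omega^2(u_{\gamma'}(v),v)\sim v^{1-2s+\eta}$, which is \eqref{Omegagamma'}.

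The only mildly delicate point is arranging that the linear-in-$(r-r_-)$ term dominates both the quadratic correction and the $O(v^{1-2s})$ error. The quadratic correction is of order $v^{2-4s+2\eta}$, which is smaller than $v^{1-2s+\eta}$ iff $\eta<2s-1$, and the $v^{1-2s}$ error is absorbed precisely because $\eta>0$. Thus the two constraints $0<\eta<2s-1$ (already imposed) suffice, and there is no further smallness requirement on $\eta$ beyond taking $|u_s|$ large.
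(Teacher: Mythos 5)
Your argument is essentially the paper's: both proofs hinge on the identity \eqref{mu} rewritten as $(r-r_+)(r-r_-)=r^2(1-\mu)+2r(\varpi-M)-(Q^2-e^2)$ together with $\kappa,\iota\sim 1$ and $|\varpi-M|+|Q-e|\lesssim v^{1-2s}$, first to sandwich $\gamma'$ between the two boundaries of $\mathcal{EB}$ by monotonicity of $r$ in $u$, and then, read in the other direction, to convert $r-r_-=v^{1-2s+\eta}$ into $\Omega^2\sim v^{1-2s+\eta}$. The only differences are cosmetic: you compare with the future boundary $\gamma$ via \eqref{rgamma} where the paper compares with the intermediate curve $\gamma_{2s-1}:=\{u+v+h(v)=\frac{2s-1}{2|K_-|}\log v\}$, and you make the Taylor expansion around $r=r_-$ explicit. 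Your treatment of the second half (dominance of the linear term over the quadratic and $O(v^{1-2s})$ corrections under $0<\eta<2s-1$) is correct and complete.

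One step is mis-justified, though the conclusion is true. You cite \eqref{rNSfuture} for the lower bound $r_{|\gamma_N}-r_-\gtrsim 1$, but \eqref{rNSfuture} is an \emph{upper} bound, $|r_{|\gamma_N}-r_-|\precsim e^{-K_*N\epsilon}$; it gives no lower bound and does not even by itself guarantee $r>r_-$ on $\gamma_N$. The needed positive lower bound has to be extracted from the same identity: on $\gamma_N$ one has $\Omega^2\sim e^{2K_-\Delta_N}$ (a fixed positive constant, by \eqref{Omegatransition}), hence $(r-r_+)(r-r_-)=\frac{-r^2\Omega^2}{4\iota\kappa}+O(v^{1-2s})$ is bounded away from $0$ from below in absolute value, and since $|r-r_+|\gtrsim 1$ by \eqref{rboundedaway} this forces $r-r_-\gtrsim 1$ there. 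This is exactly how the paper argues, and with this one-line patch your proof is complete.
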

	\begin{proof}
		
		Using \eqref{mu}, we can write : 
		
		$$ (r-r_-)(r-r_+) = \frac{r^2 \Omega^2}{4 \iota \kappa} -2r (\varpi-M) + Q^2-e^2.$$

		As a consequence of this equation and \eqref{iotaB}, \eqref{kappaB}, \eqref{masschargeI}, \eqref{OmegaEB} and \eqref{rboundedaway}  - all valid in  $\mathcal{EB}$- we get that  $|r-r_-|\lesssim v^{1-2s}$ on  $\gamma_{2s-1}:=\{u+v+h(v)= \frac{2s-1}{2|K_-|} \log(v) \}$ so , since  $\nu \leq 0$, $\gamma'$ lies in the past of $\gamma_{2s-1}$ for $|u_s|$ large enough.
		
		Using the same equation as above, we prove easily, still using \eqref{masschargeI} that on $\gamma_N=\{u+v+h(v)=\Delta_N\}$ and for $|u_s|$ large enough, 
		
		$$ r-r_- \gtrsim 1.$$
		
		Hence, because $\nu \leq 0$, it is clear that $\gamma'$ lies in the future of $\gamma_N$, providing $2s-1-\eta>0$. 
		
		We conclude by noticing that the intersection of  the future of $\gamma_N$ and the past of $\gamma_{2s-1}$ is included in $\mathcal{EB}$ for $|u_s|$ large enough.
		
		The last claim \eqref{Omegagamma'} follows from using the above equality  in the other way around : there exists  $\tilde{C}>0 $ such that : 
		
		$$ \Omega^2 = \tilde{C} |r-r_-| + O(v^{1-2s}) ,$$
		
		where we used the remarks mentioned earlier in the proof.

	\end{proof}
	
	\subsection{Towards the Cauchy horizon : the future of $\gamma'$}
	
	We now want to propagate our lower bounds to the future of $\gamma'$. To circumvent the lack of decay of $Q$ and $\varpi$ near the Cauchy horizon, we do not use a vector field method any more but a more classical integration along the constant $v$ characteristic, as it was done in the stability part.
	
	Given the bound of Proposition \ref{instab1}, and since $p < \min\{2s,6s-3\}$, it will be enough to prove the following  
	
	\begin{prop} The following lower bound for  $\partial_v \phi$ near the Cauchy horizon is true : 
		
		\begin{equation}
		\int_{\gamma'_v} \mathbb{T}(T,n') vol(n',.) \lesssim  \int_{v}^{+\infty} |\partial_v \phi|^2(u_0,v')dv' +O(v^{-2s})+O(v^{3-6s+4\eta}).
		\end{equation}
		
	\end{prop}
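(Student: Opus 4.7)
The plan is to upper-bound the flux of the Kodama current across $\gamma'_v$ by the pure $|\partial_v\phi|^2$ integral along the constant-$u$ line $\{u=u_0\}$, up to the two claimed error terms. I would proceed in two steps: first convert the flux integral on $\gamma'_v$ into an integral of $|\partial_v\phi|^2$ on $\gamma'_v$ modulo small errors, and then propagate $|\partial_v\phi|$ pointwise along constant-$v'$ lines from $u_{\gamma'}(v')$ to $u_0$ using the wave equation \eqref{Field2}.

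For the first step, I would expand $\mathbb{T}(T,n')\,vol(n', \cdot)$ on $\gamma'_v$ using the explicit formulas for $\mathbb{T}_{\mu\nu}$ from the lemma together with $T = \kappa^{-1}\partial_v - \iota^{-1}\partial_u$. The result is a sum of three types of contributions, proportional (after contracting with the oriented normal and the induced measure on $\gamma'$) to $\kappa^{-1} |\partial_v\phi|^2$, $\iota^{-1} |D_u\phi|^2$, and $(\kappa^{-1}+\iota^{-1})\,\Omega^2(m^2 |\phi|^2 + Q^2/r^4)$. By the localization lemma $\gamma' \subset \mathcal{EB}$, so \eqref{iotaB} and \eqref{kappaB} give $\kappa^{-1}, \iota^{-1} \sim 1$ uniformly on $\gamma'_v$. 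Using \eqref{DULBI} one has $|D_u\phi|^2 \lesssim |u|^{-2s} \sim v^{-2s}$ on $\gamma'$ (since $|u|\sim v$ there), which after integration produces the first error $O(v^{-2s})$. For the potential contribution, \eqref{Omegagamma'} gives $\Omega^2 \sim v^{1-2s+\eta}$ on $\gamma'$, and combining this with the weighted decay \eqref{phiLBI} of $\phi$ and \eqref{QLBI} of $Q-e$, together with the length of $\gamma'_v$ in the induced measure, yields the second error $O(v^{3-6s+4\eta})$; the parameter $\eta>0$ will be made small enough that the exponent stays below $-p$.

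For the second step, I would phase-modulate the derivative: setting $\Phi_v := e^{iq_0 \int_{u_0}^{u} A_u\,du'} \partial_v\phi$, equation \eqref{Field2} yields $\partial_u \Phi_v = e^{iq_0 \int A_u} \cdot \mathrm{RHS}$, hence $|\partial_u |\Phi_v|| \leq |\mathrm{RHS}|$ and $|\Phi_v| = |\partial_v\phi|$. Integrating along $\{v'=\mathrm{const}\}$ from $u_{\gamma'}(v')$ to $u_0$ and squaring via Cauchy--Schwarz gives $|\partial_v\phi|^2(u_{\gamma'}(v'),v') \lesssim |\partial_v\phi|^2(u_0,v') + |u_0 - u_{\gamma'}(v')| \int_{u_{\gamma'}(v')}^{u_0} |\mathrm{RHS}|^2 \, du'$. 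The four terms in $\mathrm{RHS}$ are $\lambda D_u\phi / r$, $\nu \partial_v\phi/r$, $\Omega^2 Q \phi / r^2$ and $m^2 \Omega^2 \phi/4$; the segment $[u_{\gamma'}(v'), u_0]$ lies entirely in $\mathcal{EB}\cup \mathcal{LB}$ where the stability estimates \eqref{nuLB}, \eqref{lambdaLBI}, \eqref{OmegapolynI}, \eqref{phiLBI}, \eqref{QLBI}, \eqref{DULBI} apply. Carrying out the bounds term by term, using $|u_0 - u_{\gamma'}(v')| \lesssim v'$, and integrating in $v' \in [v,\infty)$ (applying Fubini where needed) produces a total contribution that is again absorbed in $O(v^{-2s}) + O(v^{3-6s+4\eta})$, closing the proposition.

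The main obstacle is handling the $\Omega^2 |\phi|$ contributions to $\mathrm{RHS}$: $|\phi|$ does not decay pointwise in $\mathcal{LB}$, so one must absorb $\Omega^2$ into the weighted estimate \eqref{phiLBI}, for instance writing $\Omega^2 |\phi| = \Omega^{2(1-\epsilon_0)} \cdot (\Omega^{2\epsilon_0}|\phi|)$ with $\epsilon_0>0$ small, and then using $\Omega^2 \lesssim v^{-2s}$ from \eqref{OmegapolynI} to handle the first factor; the small weight $\epsilon_0$ is what dictates the precise shape of the error. A second bookkeeping task is to choose $\eta$ strictly smaller than $(6s-3-p)/4$, which is possible precisely because Assumption \ref{instabhyp} imposes $p < 6s-3$, while $p < 2s$ controls the first error term; combining these two choices ensures both error terms are $o(v^{-p})$, so that the lower bound $v^{-p}$ from Proposition \ref{instab1} indeed propagates to the constant-$u_0$ line as stated in \eqref{instabilityresult}.
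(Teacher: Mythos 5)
Your two-step architecture (identify the flux through $\gamma'$ with $\int_{\gamma'_v}|\partial_v\phi|^2$, then propagate pointwise to a constant-$u$ line via \eqref{Field2}) is the same as the paper's, but your first step as set up would fail, and it fails exactly at the point that motivates the whole construction. When you contract $\mathbb{T}(T,\cdot)$ with the normal of $\gamma'$, the $\mathbb{T}_{uv}$ component does \emph{not} carry a coefficient of size $\kappa^{-1}+\iota^{-1}\sim 1$. Writing $\gamma'=\{f=0\}$ with $f=r-r_--v^{1-2s+\eta}$, one computes $df^{\#}=\left(\tfrac{\iota^{-1}}{2}-2\Omega^{-2}(2s-1-\eta)v^{-2s+\eta}\right)\partial_u+\tfrac{\kappa^{-1}}{2}\partial_v$, and since $T=\kappa^{-1}\partial_v-\iota^{-1}\partial_u$ is the \emph{antisymmetric} combination, the coefficient of $\mathbb{T}_{uv}$ in $\mathbb{T}(T,df^{\#})$ is $-\iota^{-1}\kappa^{-1}/2+\kappa^{-1}\iota^{-1}/2-2\kappa^{-1}\Omega^{-2}(2s-1-\eta)v^{-2s+\eta}$: the order-one part cancels exactly and what survives is \emph{negative}, hence discardable in an upper bound. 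This matters because $\mathbb{T}_{uv}=\tfrac{\Omega^2}{2}(m^2|\phi|^2+Q^2/r^4)$ contains $Q^2/r^4\to e^2/r_-^4\neq 0$, which does not decay; the weighted decay of $Q-e$ is of no help since the term is $Q^2$, not $Q^2-e^2$. With your claimed $O(1)$ coefficient the potential contribution would be of order $\int_v^{+\infty}\Omega^2(u_{\gamma'}(v'),v')dv'\sim v^{2-2s+\eta}$, which is never $o(v^{-p})$, and the lower bound cannot be propagated. The same sign bookkeeping is needed for $\mathbb{T}_{uu}$: its order-one coefficient is $-\iota^{-2}/2<0$ (discarded), and only the small piece $2\iota^{-1}\Omega^{-2}(2s-1-\eta)v^{-2s+\eta}\mathbb{T}_{uu}=O(v^{-2s-1})$ remains, which integrates to the claimed $O(v^{-2s})$; an $O(1)$ coefficient would instead give $\int_v^{+\infty}v'^{-2s}dv'\sim v^{1-2s}$, not $o(v^{-p})$ for $p\geq 2s-1$ (and, incidentally, not the $O(v^{-2s})$ you assert).

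Two further problems in your second step. First, you must propagate $e^{iq_0\int A_u}r\partial_v\phi$ rather than $e^{iq_0\int A_u}\partial_v\phi$: with your $\Phi_v$ the right-hand side retains $\nu\partial_v\phi/r$, whose $L^1$ norm along a constant-$v'$ characteristic is only $O(v'^{-s})$ (since $\int_{-\infty}^{u_0}|u'|^{-2s}du'=O(1)$), which after squaring and integrating in $v'$ yields an error $O(v^{1-2s})$, again not $o(v^{-p})$; multiplying by $r$ cancels this term identically against $\partial_u r\,\partial_v\phi$, which is why the paper renormalizes. Second, the Cauchy--Schwarz with the full length $|u_0-u_{\gamma'}(v')|\lesssim v'$ loses a power of $v'$: the right-hand side is exponentially concentrated near $\gamma'$ in $u$, so one should bound $\int|\mathrm{RHS}|\,du'$ directly via Lemma \ref{calculuslemma} and then square, obtaining $v^{2-6s+4\eta}$ pointwise and $v^{3-6s+4\eta}$ after the $v'$-integration; your version yields $v^{4-6s+2\eta}$, which is too large for $p\in[6s-4,6s-3)$ when $s>3/4$. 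These two are repairable, but the sign structure of the flux on $\gamma'$ in the first step is the essential missing idea.
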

	
	\begin{proof}
		
		The proof will be decomposed into two steps : the first one is expressed by the following lemma : we identify $\mathbb{T}(T,n')$ in terms of the scalar field using the decay of $\Omega^2$ and the control of $\kappa^{-1}$ :
		
		\begin{lem} \label{19}  The following estimate is true : 
			\begin{equation}
			\int_{\gamma'_v} \mathbb{T}(T,n') vol(n',.) \lesssim  \int_{\gamma'_v} |\partial_v \phi|^2(u_{\gamma'}(v'),v')dv' +O(v^{-2s}).
			\end{equation}
		\end{lem}
		
		\begin{proof}
			We now write $ \gamma'= f^{-1} \{0\}$ where $f(u,v):= r(u,v)-r_--v^{1-2s+\eta}$.
			
			Using $g^{u v}= -2 \Omega^{-2}$ , we can write for $0<\eta< 2s-1$ : 
			
			$$ df^{\#} =  (\frac{\iota^{-1}}{2} -2 \Omega^{-2}(2s-1-\eta) v^{-2s+\eta}) \partial_u + \frac{\kappa^{-1}}{2}\partial_v.$$
			
			Using the definition of $T$, we can derive : 
			
			$$ \mathbb{T}(T,df^{\#}) = \frac{\kappa^{-2}}{2}T_{v v}- \frac{\iota^{-2}}{2}T_{u u }- \frac{2(2s-1-\eta)\kappa^{-1} v^{-2s+\eta} }{ \Omega^{2}}T_{u v}+\frac{2(2s-1-\eta)\iota^{-1} v^{-2s+\eta} }{ \Omega^{2}}T_{u u}.$$
			
			Now notice that the second and third term are negative if $2s-1-\eta>0$, which can be arranged for $\eta$ sufficiently small.
			
			For the fourth, notice that on  $\gamma'$ :
			
	$$	\frac{\iota^{-1} v^{-2s+\eta} }{ \Omega^{2}}T_{u u}=		\frac{\iota^{-1} v^{-2s+\eta} }{ \Omega^{2}}|D_u \phi|^2 = O(v^{-2s-1}),$$
	
	where we used \eqref{DUB}, \eqref{Omegagamma'} and the fact that $\iota^{-1}$ is bounded on $\gamma'$ by \eqref{iotaB}.
	
	This gives, recalling that $T_{v v}= 2|\partial_v \phi|^2$ and that $\kappa^{-2}$ is bounded on $\gamma'$ by \eqref{kappaB} :

	$$ \mathbb{T}(T,df^{\#}) \lesssim |\partial_v \phi|^2 + O(v^{-2s-1}).$$
	
	Now, an elementary computation gives that there exists a bounded function $w$ such that : 
	
	$$ \frac{1}{\sqrt{-g(df^{\#},df^{\#})}} vol(n',.) = w(u,v) dv d\sigma_{\mathcal{S}^{2}}. $$
	
Noticing that $ n'=\frac{df^{\#}}{\sqrt{-g(df^{\#},df^{\#})}}$, we integrate  $\mathbb{T}(T, n')vol(n',.)$ on $\gamma'_v$ which gives the claimed lemma.
			
		\end{proof}

		Now we want to propagate point-wise using \eqref{Field2} and then integrate :
		
		\begin{lem} \label{20}  The following estimate is true for all $u \leq u_s $ : 
			\begin{equation}
			\int_{\gamma'_v} |\partial_v \phi|^2(u_{\gamma'}(v'),v')dv' \lesssim \int_{v}^{+\infty} |\partial_v \phi|^2(u,v')dv' + O(v^{3-6s+4\eta})+o(v^{-2s}).
			\end{equation}
		\end{lem}
		
		\begin{proof} We now place ourselves in the future of $\gamma'_v$, a region that lies in $\mathcal{EB} \cup \mathcal{LB} $.
			
			We use \eqref{Field2} to get -after adding and subtracting a $\Omega^2 e^2 |\phi|$ term- : 
			
			$$ |\partial_u (e^{i q_0 \int_{u_0}^{u}A_u(u',v)du'} r \partial_v \phi)| \lesssim |\lambda| |D_u \phi| + \Omega^2 ((m^2+e^2) |\phi| + |Q^2-e^2||\phi| ).$$
			
		We now deal with each term separately. 	To the future of $\gamma'_v$, included in $\mathcal{EB} \cup \mathcal{LB}$ we use \eqref{lambdaELB} : 
		
		$$ |\lambda| \lesssim \Omega^2 +v^{-2s}. $$

			We can also use in the same region the estimate \eqref{DULBI} : 	$$
			|D_u \phi | \lesssim |u|^{-s}+|u|^{-2s}b(u,v),
$$
			
			with $b(u,v):=  |u|^{1-s} 1_{ \{s>1 \}}+  v^{1-s} 1_{ \{s<1 \}}+ \log(v) 1_{ \{s=1 \}}$.
			
			All put together, we get : 
			
			$$ |\lambda| |D_u \phi|(u,v) \lesssim \Omega^2 |u|^{-s}+ \Omega^2 |u|^{-2s}b(u,v)+v^{-2s} |u|^{-s}+v^{-2s}|u|^{-2s}b(u,v) .$$
			
			We start by the third and fourth terms : 
			
			$$ \int_{u_{\gamma'}(v)}^{u} \left(v^{-2s} |u'|^{-s}+v^{-2s}|u'|^{-2s}b(u',v)\right) du'\lesssim v^{-2s}b(u,v).$$
			
			The first and second terms are more complicated : at fixed $v$ we have to split between the part of $[u_{\gamma'}(v),u]$ that is in $\mathcal{EB}$ : $[u_{\gamma'}(v),u_{\gamma}(v)]$ and the one that is in $\mathcal{LB}$ : $[u_{\gamma}(v),u]$.

			For $[u_{\gamma}(v),u]$, we use \eqref{OmegapolynI} :

			$$\int_{u_{\gamma}(v)}^{u} \left(\Omega^2(u',v) |u'|^{-s}+ \Omega^2(u',v) |u'|^{-2s}b(u',v)\right)du' \lesssim  \int_{u_{\gamma}(v)}^{u} \left( v^{-2s} |u'|^{-s}+  v^{-2s} |u'|^{-2s}b(u',v)\right)du' \lesssim v^{-2s} b(u,v). $$
			
				On $[u_{\gamma'}(v),u_{\gamma}(v)]$, we use  \eqref{Omegadecroissantu} the strictly negative lower bound on $\partial_u \log(\Omega^2)$  with Lemma \ref{calculuslemma} to get that :  
			
			\begin{align*} &	\int_{u_{\gamma'}(v)}^{u_{\gamma}(v)} \left(\Omega^2(u',v) |u'|^{-s}+ \Omega^2(u',v) |u'|^{-2s}b(u',v)\right)du' \lesssim \\ &  \Omega^2(u_{\gamma'}(v),v) |u_{\gamma'}(v)|^{-s}  + \Omega^2(u_{\gamma'}(v),v)  
			 |u_{\gamma'}(v)|^{-2s}b(u_{\gamma'}(v),v)   \lesssim v^{1-3s+\eta},		\end{align*}
			
			where we used in the last inequality that $v^{1-4s+\eta}b(u_{\gamma'}(v),v)=o(v^{1-3s+\eta})$. \\
			
				To estimate $\Omega^2 ( (m^2+e^2)|\phi| + |\phi||Q^2-e^2|)$, we use a similar technique, splitting $[u_{\gamma'(v)},u] $ into $[u_{\gamma'(v)},u_{\gamma_{\mathcal{V}}}(v)] \cup [u_{\gamma_{\mathcal{V}}}(v),u]$ where $\gamma_{\mathcal{V}}$ is defined in section \ref{LB} as the past boundary of $\mathcal{V}$.

			Using estimates \footnote{These bounds are not strictly speaking stated in $\mathcal{EB}$ in the stability part  but they are an easy consequence of the estimates. }  \eqref{phiLBI}, \eqref{QLBI}   in $\mathcal{EB}$ together with calculus Lemma \ref{calculuslemma} and \eqref{Omegadecroissantu}, we prove that -choosing $\epsilon_0= \frac{\eta}{2s-1-\eta}$- : 
			
			$$ \int_{u_{\gamma'(v)}}^{u_{\gamma_{\mathcal{V}}}(v)} \Omega^2(u',v) ( (m^2+e^2)|\phi|(u',v) + |\phi|(u',v)|Q^2-e^2|(u',v))du' \lesssim v^{1-3s + 2\eta}.$$

Using \eqref{OmegaexpI} in $\mathcal{V}$ and $|\phi|+ |Q-e|	\lesssim b(u,v)$ gives a negligible contribution on $[u_{\gamma_{\mathcal{V}}}(v),u]$, because $\Omega^2$ is exponentially decreasing, which proves : 

	$$ \int_{u_{\gamma'(v)}}^{u} \Omega^2(u',v) ( |\phi|(u',v) + |\phi|(u',v)|Q^2-e^2|(u',v))du' \lesssim v^{1-3s + 2\eta}.$$

			Now we can use that $v^{-2s}b(u,v)= v^{-2s}|u|^{1-s}1_{\{s>1\}}+o(v^{1-3s + 2\eta})$ if $\eta$ is small enough, combine all the estimates and integrate the first equation :
			
				$$ |e^{i q_0 \int_{u_0}^{u}A_u(u',v)du'} r \partial_v \phi(u,v)-e^{i q_0 \int_{u_0}^{u_{\gamma'}(v)}A_u(u',v)du'} r \partial_v \phi(u_{\gamma'}(v),v)| \lesssim v^{1-3s + 2\eta}+v^{-2s}|u|^{1-s}1_{\{s>1\}}.$$
				
				Making the difference, using upper and lower bounds for $r$ and squaring, we get : 
				
		$$ |  \partial_v \phi(u_{\gamma}(v),v)|^2 \lesssim |\partial_v \phi(u,v) |^2 +v^{2-6s + 4\eta}+v^{-4s}|u|^{2-2s}1_{\{s>1\}}.$$
			
			To conclude, it is enough to integrate the last estimate on $[v, +\infty]$ and noticing that $v^{1-4s}|u|^{2-2s}1_{\{s>1\}} = o(v^{-2s})$.
			
		\end{proof}
		
		The combination of the two lemmas proves the proposition after choosing $\eta$ small enough so that $p< 6s-3-4\eta$.

	\end{proof}
	
	\appendix 
	\section{A localisation of the apparent horizon $\mathcal{A}$} \label{appendixapparent}
	
	As a straightforward by-product of our framework, we prove that in a non-linear setting, the apparent horizon\footnote{Indeed $\mathcal{A}$ coincides with $\{\lambda=0\}$ on the whole space-time in our coordinate choice. This is because $\lambda$ becomes strictly negative while $\kappa^{-1} \approx 1$.} $\mathcal{A}:= \{ \partial_v r=0\}$  cannot be too far or too close of the event horizon if the decay of the perturbation is upper and lower bounded.
	
	\begin{prop} We keep the same hypothesis as for Theorem \ref{Stabilitytheorem}. $h$ is defined in equation \eqref{hdef}.
		
		We assume the following on the event horizon $\mathcal{H}^+$ : 
		
		\begin{hyp} \label{apparenthyp}
			\begin{equation} \label{instabapparent2}
			C' v^{-p-1} \leq \Omega_{H}^2(0,v) \int_{v}^{+\infty} \frac{|\partial_v \phi|^2(0,v')}{ \Omega_{H}^2(0,v')}dv',
			\end{equation}  
			
			\begin{equation} \label{stabapparent}
			 |\partial_v \phi|(0,v') \leq C v^{-s}, 
			\end{equation}

			for  $ 2s-1 \leq  p$ and $C, C'>0$.
		\end{hyp}
		
		Then there exists constants $C_+>0, C_->0$ such that 
		
		$$ \mathcal{A} \subset \{ C_{-} v^{-p-1} \leq \Omega^2(u,v) \leq C_{+} v^{-2s} \} = \{-\tilde{C} -\frac{(p+1)}{2K_+}\log(v) \leq u+v+h(v) \leq -\frac{2s}{2K_+}\log(v)+ \tilde{C} \}.$$

	\end{prop}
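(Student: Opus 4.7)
The plan is to reduce the localization of $\mathcal{A}=\{\lambda=0\}$ to a computation of $\lambda$ along lines of constant $v$, starting from the event horizon. First I would establish sharp two-sided bounds for the restriction $\lambda_{|\mathcal{H}^+}(0,v)$. The Raychaudhuri equation \eqref{RaychV} in the gauge \eqref{gauge2}, integrated from $v$ to $+\infty$ and using that $\lambda/\Omega^2_H \to 0$ at $v=+\infty$ (proven in section \ref{boundsEH}), gives the identity
\begin{equation*}
\lambda(0,v) \;=\; \Omega^2_H(0,v) \int_{v}^{+\infty} \frac{r\,|\partial_v\phi|^2(0,v')}{\Omega^2_H(0,v')}\,dv'.
\end{equation*}
Using $r\sim r_+$ together with hypothesis \eqref{stabapparent} and the red-shift calculus Lemma \ref{calculuslemma} (since $\partial_v\log\Omega^2_H\geq K_+$ on $\mathcal{H}^+$ by the reduction of section \ref{reduction}), the upper bound $\lambda(0,v)\lesssim v^{-2s}$ follows. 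The lower bound $\lambda(0,v)\gtrsim v^{-p-1}$ is immediate from hypothesis \eqref{instabapparent2} and the lower bound on $r$.

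Next I would propagate these bounds into the interior. Since the upper bound forces $\Omega^2$ to be arbitrarily small on any hypothetical apparent horizon point with $v$ large, such a point automatically lies in the red-shift region $\mathcal{R}$ for $|u_s|$ large enough; there the stability estimates \eqref{OmegaPropRedshift} and \eqref{KRedShiftprop} are available. Using equation \eqref{Radius}, in the form $\partial_u\lambda = -\tfrac{\Omega^2}{2}K + \tfrac{m^2 r}{4}\Omega^2|\phi|^2$, together with $K=K_++O(\Omega^2+v^{1-2s})$ and $|\phi|\lesssim v^{-s}$ from Proposition \ref{boundsRS}, I integrate along a constant-$v$ slice from the event horizon: exploiting $\Omega^2(u,v)\sim e^{2K_+(u+v+h(v))}$ we have $\int_{-\infty}^{u}\Omega^2(u',v)\,du' = \Omega^2(u,v)/(2K_+)\cdot(1+o(1))$, so that
\begin{equation*}
\lambda(u,v) \;=\; \lambda(0,v) \;-\; \tfrac{1}{4}\,\Omega^2(u,v)\,(1+o(1)) \;+\; O\!\bigl(\Omega^2\,v^{1-2s}\bigr).
\end{equation*}
On $\mathcal{A}$ the left-hand side vanishes, and since the remainder is of smaller order than both $\lambda(0,v)$ and $\Omega^2$ for $v$ large (using $p\geq 2s-1$ so $v^{1-2s}\ll v^{-(p+1)}$ up to a power), this yields $\Omega^2 \sim 4\lambda(0,v)$ on $\mathcal{A}$.

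Combining with step one gives $C_-v^{-p-1}\leq\Omega^2(u,v)\leq C_+v^{-2s}$ on $\mathcal{A}$. The conversion to the $u+v+h(v)$ characterization is then just taking logarithms and applying the asymptotic $\log\Omega^2 = 2K_+(u+v+h(v)) + O(1)$ valid throughout $\mathcal{R}$ by \eqref{OmegaPropRedshift} together with \eqref{hdef}.

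The main obstacle is the second step: controlling the error in the approximation $\lambda(u,v)\approx \lambda(0,v)-\Omega^2/4$ carefully enough that it is negligible compared to $v^{-p-1}$ on the lower side. This requires that the curvature-type error $\Omega^2 v^{1-2s}$ (coming from $K-K_+$ and the $m^2|\phi|^2$ contribution) is dominated by the leading term $\Omega^2\sim v^{-p-1}$ on $\mathcal{A}$, which is exactly where the assumption $p\geq 2s-1$ is used; a parallel check is needed to ensure $\lambda(0,v)\gtrsim v^{-p-1}$ indeed dominates any error built up from integrating $(K-K_+)\Omega^2$ on $(-\infty,u]$. All of this is quantitative and follows from the stability estimates already proved, but it must be tracked with care.
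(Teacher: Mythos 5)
Your proposal is correct and follows essentially the same route as the paper: two-sided bounds on $\lambda_{|\mathcal{H}^+}$ from the Raychaudhuri equation \eqref{RaychV} and the hypotheses, then integration of $\partial_u \lambda$ via \eqref{Radius} across the red-shift region using the sign and size of $2K - m^2 r|\phi|^2$ and the exponential structure of $\Omega^2$, yielding $\Omega^2 \sim \lambda(0,v)$ on $\mathcal{A}$ (the paper uses the cruder sandwich $K_+ < 2K - m^2 r|\phi|^2 < 3K_+$ instead of your asymptotic expansion, which spares it any error-term bookkeeping). One small slip: your parenthetical claim that $v^{1-2s} \ll v^{-(p+1)}$ is false under $p \geq 2s-1$ (the inequality goes the other way); it is also unnecessary, since the remainder carries a factor of $\Omega^2$ and is therefore automatically $o(\Omega^2) = o(\lambda(0,v))$ on $\mathcal{A}$ because $s > \tfrac12$.
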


	\begin{rmk}
		Notice that because of the exponential growth of $\Omega^2_{H}$ established in section \ref{boundsEH}, assumption \ref{apparenthyp} is consistent with the conjectured tail of the $field$ as formulated in Price's law of conjecture \ref{Priceconj}.
	\end{rmk}
	
		\begin{rmk}
			Notice that if $\phi$ does not become constant near infinity on the event horizon, $\mathcal{A}$ is strictly to the future of $\mathcal{H}^+$. This is in particular true \footnote{Notice that an upper bound that assures that $\phi$ tends to $0$, like that of hypothesis \ref{fieldevent}, is enough to reduce the problem to either the trivial case $\phi \equiv 0$ where $\mathcal{A}$= $\mathcal{H}^+$ or the case where $\mathcal{A}$ asymptotically approaches time-like infinity.} if one assumes a lower bound on $\partial_v \phi$ ,like \eqref{instabapparent1} or \eqref{instabapparent2}.
			Coupled with \eqref{stabapparent}, it proves that $\mathcal{A}$ must asymptotically approach time-like infinity. 
		\end{rmk}
	
	\begin{proof}
		
		Using assumption \eqref{apparenthyp} and \eqref{RaychV} on the event horizon and recalling \eqref{lambdaevent}, we get : 
		
		$$ v^{-p-1} \lesssim \lambda \lesssim v^{-2s}.$$

		We can rewrite \eqref{Radius} in $(U,v)$ coordinates as : 
		
		$$ \partial_U \lambda = \frac{- \Omega^2_H}{4} ( 2K-m^{2}r |\phi|^{2}). $$
		
		Using  \eqref{phivRS}, \ref{KRedShiftprop} and section \ref{reduction}, there exists $\delta>0$ small enough so that $K_+<2K-m^{2}r |\phi|^{2}<3K_+$ in $\mathcal{R}$.
		
		We can then integrate between the event horizon and the apparent horizon for $U \in [0, U_{\mathcal{A}}] $ to get : 
		
		$$ v^{-p-1} \lesssim \frac{4}{3K_+} \lambda(0,v)< \int_{0}^{U_{\mathcal{A}}} \Omega^2_{H}(U',v)dU' < \frac{4}{K_+} \lambda(0,v) \lesssim v^{-2s}.$$

		Then we can use \eqref{OmegaPropRedshift} to prove that : 
		
		$$ \int_{0}^{U_{\mathcal{A}}} \Omega^2_{H}(U',v)dU' \sim  \Omega^2(U_{\mathcal{A}},v) .$$
		
		Which gives the result.

	\end{proof}

	\section{Proof of Proposition \ref{propannexe} of section \ref{mainNS}} 
	\label{appendixproof}
	
	We recall proposition \ref{propannexe} for convenience :

	\begin{prop2}
		
	For small enough $\epsilon>0$, we have : 
		\begin{equation} \label{phivNS}
		|\phi|+|\partial_v \phi|+|D_u \phi|  \precsim 2^N v^{-s},
		\end{equation}			\begin{equation} \label{ANS}
		|A_u | \lesssim (N+1) \delta,
		\end{equation}
		
		we also have : 
		\begin{equation} \label{OmegaPropNS}
					|\log\Omega^2(u,v)-\log(-4(1-\frac{2M}{r}+\frac{e^2}{r^2}))|\precsim 4^N v^{1-2s} ,
		\end{equation}  
		
		\begin{equation} \label{kappaNStprop}
		0 \leq 1-\kappa \precsim  5^N v^{-2s},
		\end{equation}						\begin{equation} \label{iotaNStprop}
		|1-\iota| \precsim   5^N v^{-p(s)},
		\end{equation}
		\begin{equation} \label{partialuNSOmegaprop}
		|\partial_u \log(\Omega^2)-2K |  \precsim  5^N  v^{-p(s)},
		\end{equation}
		\begin{equation} \label{partialvNSOmegaprop}
		|\partial_v \log(\Omega^2)-2K |  \precsim 5^N  v^{-2s},
		\end{equation}

		\begin{equation} \label{QNS} |Q(u,v)-e| \lesssim 4^N  v^{1-2s} ,
		\end{equation}
		\begin{equation} \label{MNS} |\varpi(u,v)-M| \lesssim 4^N v^{1-2s} .
		\end{equation}

	\end{prop2}

	\begin{proof} 
		We want to prove by induction on $k$ the following estimates on $\mathcal{N}_k :=\{ u+v+h(v)= -\Delta+k\epsilon \}$ :

		\begin{equation} \label{phiboot}
		|\phi|+|\partial_v \phi| \leq D_k v^{-s},
		\end{equation}	\begin{equation} \label{phiUNSi}
		|r D_u \phi| \leq  D_k v^{-s} ,
		\end{equation}
		\begin{equation} \label{ANSi}
		|A_u | \lesssim A_k,
		\end{equation}

		\begin{equation} \label{OmegaPropNSi}
		|\log\Omega^2(u,v)| \leq C_k,
		\end{equation}  
			\begin{equation} \label{Omegaborne}
				\Omega^2 \leq \frac{3}{2} \Omega^2_{max}(M,e),
				\end{equation}

		\begin{equation} \label{kappaNStpropi}
		0 \leq 1-\kappa \lesssim E_k v^{-2s},
		\end{equation}						\begin{equation} \label{iotaNStpropi}
		|1-\iota| \lesssim E_k  v^{-p(s)},
		\end{equation}
		\begin{equation} \label{partialuNSOmegapropi}
		|\partial_u \log(\Omega^2)-2K | \lesssim E_k  v^{-p(s)},
		\end{equation}
		\begin{equation} \label{partialvNSOmegapropi}
		|\partial_v \log(\Omega^2)-2K | \lesssim E_k v^{-2s},
		\end{equation}

		\begin{equation} \label{QNSi} |Q(u,v)-e| \lesssim D_k^2 v^{1-2s}, 
		\end{equation}
		\begin{equation} \label{MNSi} |\varpi(u,v)-M| \lesssim D_k^2  v^{1-2s} ,
		\end{equation}

		with $D_k= 2D_{k-1}$, $E_k= 5E_{k-1}$ , $C_k= C_{k-1}+K_{max}\epsilon $, $A_k = (k+1) \delta$ and $K_{max}$ depending on $(e,M)$ only. $\Omega^2_{max}(M,e)$ is defined as : $$\Omega^2_{max}(M,e):= 4(\frac{M^2}{e^2}-1) = \sup_{r \in [r_-(M,e) ,r_+(M,e)]} 4|1-\frac{2M}{r}+\frac{e^2}{r^2}|.$$ \\

		The initialization of the induction comes directly from the bounds of proposition \eqref{boundsRS}, after choosing $D_0$, $E_0$ and $A_0$ consistently. Notice that $A_0 \lesssim \delta$.

		
		
		
		

		Supposing the bounds are established for $\mathcal{N}_{k-1}$, we bootstrap  the following on  $\mathcal{N}_k$  : 
		
		\begin{equation} \label{OmegaNoshift}
		\Omega^2 \leq 2\Omega^2_{max}(M,e),
		\end{equation}
		\begin{equation} \label{VfieldNS}
		|\phi|+ |\partial_v \phi| \leq2 D_k v^{-s},\end{equation}
		\begin{equation} \label{kappabootNS}
		|1-\kappa| \leq 2E_k v^{-2s},
		\end{equation}
		\begin{equation} \label{iotabootNS}
		|1-\iota| \leq 2E_k v^{-2s}.
		\end{equation}

		Notice that because $\nu<0$ and $\lambda_{|\mathcal{H}^+} \geq 0$, we have $r \leq r_+$ everywhere. 
		
		We first use \eqref{VfieldNS} to prove \eqref{QNSi} with \eqref{ChargeVEinstein} : $$|Q(u,v)-e| \lesssim   D_k^2 v^{1-2s}. $$
		
		Then we can use  \eqref{OmegaNoshift} with \eqref{iotabootNS} to prove that $|\lambda|$ is bounded, \eqref{VfieldNS}, \eqref{kappabootNS} to prove \eqref{MNSi} with \eqref{massVEinstein} for $|u_s|$ large enough : 
		
		$$|\varpi(u,v)-M| \lesssim  D_k^2 v^{1-2s} .$$
		
		Notice that since $ \Omega^2 = -4 \iota \kappa (1-\frac{2\varpi}{r}+\frac{Q^2}{r^2})$ -as seen in equation \eqref{mu}- we have -forming the differences $\varpi-M$ and $Q^2-e^2$ and using \eqref{kappabootNS}, \eqref{iotabootNS} for $|u_s|$ large enough :
		
		$$ 0 \leq -(1-\frac{2\varpi}{r}+\frac{Q^2}{r^2})= -(1-\frac{2M}{r}+\frac{e^2}{r^2}) +O( D_k^2 v^{1-2s}).$$
		
		So -since $r_-$ cancels $(1-\frac{2M}{r}+\frac{e^2}{r^2})$-  for all $\eta>0$, there exists $|u_s|(\eta)$ large enough so that $ r_- -\eta < r$. For $\eta>0$ small enough, it can be easily shown that the supremum on $[r_{-}-\eta,r_+]$ is attained on $[r_{-},r_+]$ :  
		
		$$\Omega^2_{max}(M,e) = \sup_{r \in [r_-(M,e)- \eta ,r_+(M,e)]} 4|1-\frac{2M}{r}+\frac{e^2}{r^2}|.$$
		
		Since $ |\Omega^2| \leq 4|1-\frac{2M}{r}+\frac{e^2}{r^2}|+ \tilde{C}v^{1-2s}$, bootstrap \eqref{OmegaNoshift} is validated for $|u_s|$ large enough and proves \eqref{Omegaborne}. \\

		Moreover, with the same technique using \eqref{mu}, \eqref{QNSi}, \eqref{MNSi} and bootstrap \eqref{kappabootNS}, \eqref{iotabootNS} we can prove \eqref{OmegaPropNS}, choosing $|u_s|$ large enough.
		
		\eqref{QNSi}, \eqref{MNSi} also prove that :  \begin{equation} \label{KNSannex}
		| 2K(u,v) - 2K_{M,e}(r(u,v))| \lesssim D_k^2 v^{1-2s}.
		\end{equation} 
		
		Using the same argument as in the red-shift region and \eqref{iotabootNS} we get -for $|u_s|$ large enough- : 
		
		\begin{equation} \label{Kpartialvnoshift}
		|\partial_v (2K) | \lesssim \Omega^2 \leq 2\Omega^2_{max}.
		\end{equation} \\

		We denote $v_i=v_i(u)$ the unique $v$ such that $u+v+h(v)= \Delta_i$.
		
		Notice that from \eqref{hderiv} : 
		
		$$ |h(v_{k-1})-h(v)| \lesssim v_{k-1}^{1-2s} |v- v_{k-1}| \approx |u|^{1-2s} |v- v_{k-1}| \approx v^{1-2s} |v- v_{k-1}|.$$
		
		Hence , because $u+v+h(v)- \Delta_{k-1}  \leq \epsilon$ is bounded : 
		
		\begin{equation} \label{h(v)}
		v- v_{k-1} = \frac{u+v+h(v) - \Delta_{k-1}}{1+ O(v^{1-2s})} = u+v+h(v) - \Delta_{k-1}+ O(v^{1-2s})
		\end{equation}  \\
		
		We use \eqref{potentialEinstein} and \eqref{OmegaNoshift} to get : 
		
		$$ | \partial_v A_u | \lesssim 1 .$$
		
		Hence by induction, we get \eqref{ANSi} with
		
		$$  |A_u| \leq A_{k-1}+ \tilde{C} \epsilon \leq A_{k},$$
		
		after choosing $\epsilon$ small enough compared to $\delta$. \\
		
		Then we use \eqref{Field3} with \eqref{OmegaNoshift}, \eqref{VfieldNS}, \eqref{kappabootNS} to get : 
		
		$$ | rD_u \phi | \leq D_{k-1}|u|^{-s} + \tilde{C} D_k  \epsilon v^{-s}.$$
		
		Hence for $\epsilon$ small enough compared to $(C,e,M,q_0,m)$, we get \eqref{phiUNSi}.

		Using \eqref{Field2} and the same type of argument, we close bootstrap \eqref{VfieldNS} and get \eqref{phiboot} after integrating $\partial_v \phi$ on a $\epsilon$-small region. \\

		We can then use bootstrap \eqref{OmegaNoshift}, \eqref{VfieldNS} and \eqref{kappabootNS} and notice that $|\partial_u \varpi| +|\partial_u Q^2| \lesssim D_k^2 v^{-2s}$  to get from \eqref{Omega2}: 
		
		\begin{equation} \label{remarkableVNoShift}
		|\partial_v \log(\Omega^2)-2K| \lesssim (E_{k}+D_k^2) v^{-2s} \lesssim E_{k} v^{-2s}.
		\end{equation}
		
		
		
		
		


		
		
		
		
		
		
		

		Then, because of the discussion above, $\frac{r_-(e,M)}{2}<r <r_+(e,M)$ therefore there exists $K_{max}=K_{max}(e,M)>0$ such that $|K| < K_{max}$.

		Using \eqref{remarkableVNoShift} and the induction hypothesis, we prove \eqref{OmegaPropNSi} and get -choosing $|u_s|$ large enough- : 
		$$ \Omega^{-2} \precsim   e^{2K_{max} k \epsilon}.$$

		Hence from \eqref{RaychU2} and \eqref{phiUNSi} we get : 
		
		$$ | \kappa -1 | \leq E_{k-1} + \bar{C} D_k^2 e^{2K_{max} k \epsilon}\epsilon = E_{k-1} + \bar{C} D_0^2 e^{ (\log(4) + 2K_{max}\epsilon) k }\epsilon .$$
		
		We proceed in two times : first with choose $\epsilon$ small enough so that $\log(4) + 2K_{max}\epsilon \leq \log(5)$. We get : 
		
		$$ | \kappa -1 | \leq E_{k-1} + \bar{C} D_k^2 e^{2K_{max} k \epsilon}\epsilon = E_0 5^{k-1} + \bar{C} D_0^2 5^k \epsilon. $$
		
		Than we can choose $\epsilon$ even smaller so that bootstrap \eqref{kappabootNS} is validated. \eqref{kappaNStpropi}, and \eqref{partialvNSOmegapropi} are proved simultaneously, using \eqref{Omega2} for \eqref{partialvNSOmegapropi} similarly to what was done before.

		Symmetrically in $v$ we use the same methods to close bootstrap \eqref{iotabootNS} and to prove \eqref{iotaNStpropi}, \eqref{partialuNSOmegapropi}.

		The induction is then proved and the estimates of the proposition follow directly.

	\end{proof}

\section{Proof of Lemma \ref{reductionlemma} of section \ref{reduction}}  \label{appendixlemma}

We recall Lemma \ref{reductionlemma} : 

	\begin{lems} 
		Under the same hypothesis than before and for $v_0' > v_0$, if $U_s$ is sufficiently small there exists a constant $D>0$ depending on $C,e,M,q_0, m^2,s,v_0$ and $v_0'$ such that

		\begin{equation} \label{DUnew}
		|D_U \phi(U,v_0')| \leq D .
		\end{equation}
		
		\begin{equation} \label{DURnew}
		|\partial_U r(U,v_0') |^{-1} \leq D .
		\end{equation}
		
		Therefore, for any $\eta>0$ independent \footnote{We insist that $\eta$ must be a numerical constant that do \textbf{not} depend on \textbf{any} of the $C,e,M,q_0, m^2,v_0$ or $v_0'$.} of any parameter, there exists a $v_0'>0$ such that 
		
		$$ |D_U \phi(U,v_0')| \lesssim C ,$$
		
		and for all $v \geq v_0'$ : 
		
		$$ |2K(0,v)-2K_+| \leq \eta K_+ ,$$
		$$ r m^2 |\phi|^2(0,v) \leq \eta K_+.$$
	\end{lems}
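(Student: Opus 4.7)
The plan is to view the statement as a standard local-existence / continuous-dependence result on the compact characteristic rectangle $\mathcal{R} := [0,U_s]\times[v_0,v_0']$, with $v_0'$ a fixed but arbitrary finite number. Because $v_0'-v_0$ is fixed and finite, constants are permitted to depend on $v_0'$, and the smallness of $U_s$ (also allowed to depend on $v_0'$) will absorb every Gr\"onwall-type factor produced by the argument; no sharp estimate is needed.

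Concretely, I would set up a bootstrap on $\mathcal{R}$ for the family
$$\Omega_H^2,\quad |\phi|+|\partial_v\phi|,\quad |D_U\phi|,\quad -\partial_U r,\quad |\varpi-M|+|Q-e|,$$
each kept in a small neighbourhood of its value on the bi-characteristic initial surface $\{U=0\}\cup\{v=v_0\}$. The traces on $\{U=0\}$ are controlled by assumption \ref{fieldevent} together with the asymptotic convergences $r(0,v)\to r_+$, $\varpi(0,v)\to M$, $Q(0,v)\to e$ that were established in Section \ref{convergence} \emph{independently of the present lemma}, using only the event-horizon equations, hypothesis \ref{radiusevent}, \ref{hypchargeevent} and the initial decay of $\phi$. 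The traces on $\{v=v_0\}$ are furnished by hypothesis \ref{fieldv0}, gauge \eqref{gauge1} (so that $-\partial_U r$ is initially $1$), and the prescribed positivity of $\Omega_H^2$. I would then propagate each of these bounds by integrating \eqref{Radius}, \eqref{RaychU}, \eqref{Omega}, \eqref{Field2}, \eqref{Field3}, \eqref{potentialEinstein} and \eqref{ChargeVEinstein} along the $\partial_U$ or $\partial_v$ characteristics, closing the bootstrap once $U_s$ is taken small enough in terms of $v_0'$. This produces the desired constant $D=D(C,e,M,q_0,m^2,s,v_0,v_0')$ with $|\partial_U r(U,v_0')|^{-1}+|D_U\phi(U,v_0')|\le D$.

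For the second (consequence) part I would first perform the change of variable from $U$ to $U'$ along $\{v=v_0'\}$ defined by $dU' = -\partial_U r(U,v_0')\,dU$, so that $\partial_{U'} r(U',v_0')\equiv -1$ matches gauge \eqref{gauge1} with $v_0\leftarrow v_0'$; by the previous step this change is well defined with bounded Jacobian, and $|D_{U'}\phi| = |D_U\phi|/|\partial_U r|$ remains controlled. The linearity of \eqref{Field3} in $D_U\phi$ guarantees that the resulting bound is of the form constant $\times C$, i.e.\ $|D_{U'}\phi(U',v_0')|\lesssim C$ in the notation of the paper. The $\eta$-smallness then follows immediately from the Section \ref{convergence} convergences: $2K(0,v)\to 2K_+$ and $|\phi(0,v)|\le Cv^{-s}\to 0$ force $|2K(0,v)-2K_+|$ and $rm^2|\phi|^2(0,v)$ to tend to $0$, so it suffices to take $v_0'$ large in terms of $\eta$ and of the parameters. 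The main delicacy, really the only one, is a bookkeeping issue: one must verify that the convergences invoked on $\mathcal{H}^+$ do not themselves make use of the reduction one is trying to establish; with the ordering above (Section \ref{convergence}, then the bootstrap on $\mathcal{R}$, then the change of gauge) this is transparent, and the whole argument reduces to a short local well-posedness exercise.
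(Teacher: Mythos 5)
Your proposal is correct and follows essentially the same route as the paper's Appendix C proof: a bootstrap on the compact characteristic rectangle $[0,U_s]\times[v_0,v_0']$, closed by integrating the null-form equations (\ref{Radius}, \ref{RaychU}, \ref{Field2}, \ref{Field3}, \ref{potentialEinstein}, \ref{ChargeVEinstein}) and taking $U_s$ small in terms of $v_0'$, followed by the gauge change $dU'=-\partial_U r\,dU$ and an appeal to the event-horizon convergences of Section \ref{convergence}, which are indeed established beforehand and independently. The only differences are cosmetic (the paper bootstraps $|\nu_H|$ and $|Q|+|\phi|+|\partial_v\phi|$ and derives the rest, using $0\le\kappa\le1$ from Raychaudhuri to control $\Omega_H^2$ by $-4\nu_H$ rather than bootstrapping $\Omega_H^2$ directly).
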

	
		\begin{proof} We will make the following bootstrap assumptions : 
			
			\begin{equation} \label{B1reduc}
			|\nu_H| \leq B_1
			\end{equation}
			\begin{equation} \label{B2reduc}
			|Q| +  |\phi|+ |\partial_v \phi| \leq B_2
			\end{equation}
			
			The set of points such that the bootstraps are valid is non empty because of the hypothesis of Theorem \ref{Stabilitytheorem}, for $B_2$ large enough with respect to $C$, $e$ and $v_0$ and $B_1 >1$.

			Notice that with our hypothesis $r(0,v)>0$ and since $[v_0,v_0']$ is a compact, it is clear that $r(0,v)$ is upper and lower bounded by strictly positive constants that depend on $v_0$ and $v_0'$.
			
			If we integrate \eqref{B1reduc} for $U_s$ small enough compared to $B_1$, we see that the same conclusion holds true for $r(U,v)$ on the whole rectangle $[0, U_s] \times [v_0, v_0']$ . We write $0< r_{min} <r < r_{max}$. \\
			
			Then, notice that $\kappa(0,v) \equiv 1$ and the positive right hand side of \eqref{RaychU2} give that $0 \leq \kappa \leq 1$ everywhere on the space-time namely $\Omega_H^2 \leq -4 \nu_H$.
			
			Then we write \eqref{Radius} as 
			
			$$ | \partial_v ( \log(-r\nu_H)) \leq \frac{1}{r}(1+ \frac{Q^2}{r^2})+ m^2 r |\phi|^2 \leq \frac{1}{r_{min}}+ \frac{B_2^2}{r^3_{min}}+ m^2 r_{max} B_2^2    . $$
			
			We can then integrate in $v$ and use gauge \eqref{gauge1}  to get : 
			
			 $$ |\log(-\nu_H)| \leq |\log(\frac{r_{max}}{r_{min}})|+ ( \frac{1}{r_{min}}+ \frac{B_2^2}{r^3_{min}}+ m^2 r_{max} B_2^2)(v_0'-v_0).$$
			 
			 This closes bootstrap \eqref{B1reduc} for $B_1$ large enough with respect to $B_2$, $v_0$, $v_0'$ and the parameters and proves \eqref{DURnew}.\\
			
			Now we want to bound $\lambda$ : to do so we write \eqref{Radius} as : 
			
			$$ | \partial_U( r \lambda)| \leq -\nu_H (1 + \frac{Q^2}{r^2}+ m^2 r^2 |\phi|^2) \leq B_1 (1+\frac{B_2^2}{r^2_{min}}+ m^2 r^2_{max} B_2^2 ).$$
			
			Now notice that on the compact $[v_0, v_0']$, $|\lambda|(0,v) \leq \lambda_{max}$ where $\lambda_{max}$ depends on $v_0$, $ v_0'$ and the parameters.
			
			Then we can integrate the previous equation and take $U_s$ small enough to get everywhere : 
			
			$$|\lambda|(U,v) \leq 2\lambda_{max}.$$

			Now we write \eqref{Field3} as
			
			$$ | \partial_v (r D_U \phi)| \leq -\nu_H( m^2 |\phi| + |\partial_v \phi|) \leq (m^2+1) B_1 B_2 .$$
			
			Then we integrate and use assumption \ref{fieldUevent} and the bounds on $r$ to get : 
			
			\begin{equation} \label{DUreduc}|D_U \phi| \leq \frac{r_{max}}{r_{min}} C + \frac{(v_0'-v_0)(m^2+1)B_1 B_2}{r_{min}}. \end{equation}
			
			Now we use gauge \footnote{Note that it does not matter whether the gauge is on $v \equiv v_0$ or $v \equiv v_0'$ : we simply integrate from the curve where $A_u=0$.} \eqref{gaugeAponctuelle} to integrate \eqref{potentialEinstein} : 
			
			$$ |A_U| \leq \frac{2B_1 B_2}{ r_{min}^2}(v_0'-v_0).$$
			
			This, with bootstrap \eqref{B2reduc} and \eqref{DUreduc} gives : 
			
			$$ |\partial_U \phi| \leq \frac{r_{max}}{r_{min}} C + \frac{(v_0'-v_0)B_1 B_2}{r_{min}} [(m^2+1)+ \frac{2 q_0 B_2}{ r_{min}}].$$
			
			It now suffices to integrate for $U_s$ small enough to close the $\phi$ part of bootstrap \eqref{B2reduc}.
			
			The $Q$ part of	bootstrap \eqref{B2reduc} is validated when we integrate \eqref{chargeUEinstein} using \eqref{DUreduc}.
			
			For the $\partial_v \phi$ part, we write \eqref{Field2} as :
			
			$$ |\partial_{U}( e^{iq_0 \int_{U_{0}}^{U}A_U}r\partial_{v} \phi)| \leq |\lambda| |D_U \phi| + |\nu_H| (r m^2 |\phi| + \frac{q_0 |Q| |\phi|}{r}) $$
			
			Then, from all the bounds that precedes it is clear that we can integrate on $ [0,U] $ and close the $\partial_v \phi$ part of bootstrap \eqref{B2reduc} if we chose $U_s$ small enough. 
			
			Notice that $B_2$ can be chosen to depend on $C$, $v_0$ and $e$ only. Hence $B_1$ can be chosen to depend on $v_0$, $v_0'$ and the parameters only.
			
			In the end both bootstraps are validated. \\
			
			Notice that \eqref{DUreduc} gives actually \eqref{DUnew} now that the bootstraps assumptions are proved.
			
			From the last section \ref{convergence},  $2K(0,v)	-2K_+ \rightarrow 0$ when $v \rightarrow +\infty$  and from the hypothesis \ref{fieldevent} and the boundedness \footnote{Indeed $r$ converges to $r_{\infty}$ when $v$ tends to $+\infty$ and the interval $[v_0,+\infty]$ is lower bounded.} of $r$  we know that $rm^2 |\phi|^2\rightarrow 0$ when $v \rightarrow +\infty$. We can write $\max \{ |2K(0,v)	-2K_+| ,rm^2 |\phi|^2  \} = K_+\epsilon(v)$ and $\epsilon(v) \rightarrow 0$ when $v \rightarrow +\infty$.
			
			Therefore for all $\eta>0$ -independent of all the other constants- there exists $v_0'$ -depending only on the parameters and $v_0$ such that for all $v' \geq v_0'$, $|\epsilon(v')| \leq \eta$.
			
			Therefore, combining with \eqref{DUnew}, the lemma is proven.

		\end{proof}

\end{document}